\newtheorem*{theorem*}{Theorem}
\newtheorem{theorem}{Theorem}
\newtheorem*{definition*}{Definition}
\newtheorem{definition}{Definition}
\newtheorem*{lemma*}{Lemma}
\newtheorem{lemma}{Lemma}
\newtheorem*{claim*}{Claim}
\newtheorem{claim}{Claim}
\newtheorem*{corollary*}{Corollary}
\newtheorem{corollary}{Corollary}
\newtheorem{remark}{Remark}
\newtheorem{fact}[theorem]{Fact}
\newtheorem*{fact*}{Fact}
\newtheorem{observation}{Observation}
\newtheorem{question}{Open Question}
\begin{document}
\title{Spanning Tree Covers for Path-Separable Graphs: Trading Stretch for Size}

\author{Michael Elkin\footnote{Ben-Gurion University of the Negev, Beer-Sheva, Israel. elkinm@bgu.ac.il}, Idan Shabat\footnote{Ben-Gurion University of the Negev, Beer-Sheva, Israel. shabati@post.bgu.ac.il}{\let\thefootnote\relax\footnote{{Supported by Lynn and William Frankel Center for Computer Sciences and ISF grant 3413/25.}}
}
}
\date{}
\maketitle

\pagenumbering{gobble}

\begin{abstract}

Given a graph $G=(V,E)$, a collection $\mathcal{T}$ of spanning trees of $G$ is called a \textit{spanning tree cover} of size $|\mathcal{T}|$ and stretch $\alpha$ for $G$ if for every vertex pair $u,v\in V$ there exists a tree $T_{uv}\in\mathcal{T}$, such that
\[d_{T_{uv}}(u,v)\leq\alpha\cdot d_G(u,v)~.\]

Spanning tree covers were introduced in the pioneering work of Gupta et al. \cite{GKR04}, that showed that $p$-path-separable graphs admit stretch-$3$ spanning tree covers with size $O(p\log n)$. Many subsequent papers focused on a relaxed notion of \textit{non-spanning tree covers}, in which the trees are required to be dominating, but may use edges that do not belong to the graph. In particular, Bartal et al. \cite{BFN22} devised a construction of non-spanning tree covers with stretch $1+\epsilon$ and size $O(p\cdot\frac{\log^2n}{\epsilon^2})$. Recently, Chang et al. \cite{CCLMST23,CCLMST24} devised a non-spanning tree cover for $K_r$-minor-free graphs, with stretch $1+\epsilon$ and size $2^{\frac{1}{\epsilon}r^{O(r)}}$, and an exact (stretch-$1$) spanning tree cover for \textit{unweighted} $K_r$-minor-free graphs with size $r^{O(diam(G))}$. However, the problem of devising spanning tree covers with stretch smaller than $3$ and small size for general $p$-path-separable graphs, or even for general $K_r$-minor-free graphs, remained open.

We show that $p$-path-separable graphs admit \textit{spanning} tree covers with stretch $1+\epsilon$ and size $O(p\cdot\frac{\log^2n}{\epsilon})$, improving the state-of-the-art construction of \cite{BFN22} that provides only \textit{non-spanning} tree covers. Moreover, we demonstrate that one can trade stretch for size, and devise spanning tree covers with stretch $O(k\log\log p)$ and size $O(kp^{\frac{1}{k}}\cdot\log^{2}n)$ for strongly $p$-path-separable graphs. We also provide a tradeoff for weakly path-separable graphs. In the particularly important special case of $K_r$-minor-free graphs, we devise spanning tree covers with stretch $O(k\log\log r)$ and size $O(kr^{2+\frac{1}{k}}\cdot\log^{2}n)$. Currently, the best-known bound on $p$ for $K_r$-minor-free graphs is $p=r^{4602}$ \cite{GSW25,AG06}, and thus, for $r=\Omega(\log n)$ and $k\geq2$, this size bound is much smaller than the bound $O(p\cdot\frac{\log^2n}{\epsilon})=O(r^{4602}\cdot\frac{\log^2n}{\epsilon})$ that we obtain for stretch $1+\epsilon$. In addition, we devise improved constructions of non-spanning tree covers that exhibit stretch-size tradeoffs.

Our new spanning tree covers give rise to significantly improved path-reporting distance oracles (PRDOs) and routing schemes. In particular, we devise the first stretch-$(1+\epsilon)$ PRDO for $K_r$-minor-free graphs. Its size is $r^{O(1)}\cdot\frac{\log^2n}{\epsilon}\cdot n$. We also devise a stretch-$O(k)$ PRDO with size $\tilde{O}(r^{2+\frac{1}{k}}\cdot n)$.

\end{abstract}

\newpage
\tableofcontents
\newpage

\pagenumbering{arabic}
\setcounter{page}{1}

\section{Introduction}

\subsection{Tree Covers with Small Stretch} \label{sec:IntSmallStretchTreeCovers}

Given an undirected weighted $n$-vertex graph $G=(V,E)$, a \textit{spanning tree cover} $\mathcal{T}=\{T_1,T_2,...,T_q\}$ for $G$ with \textit{stretch} $\alpha$ and \textit{size} $q=|\mathcal{T}|$ (for some integer $q\geq1$ and real $\alpha\geq1$) is a collection of spanning trees $T_i$ of $G$ that satisfy the following property: for every vertex pair $u,v\in V$ there exists a tree $T_i$ in $\mathcal{T}$ such that 
\[d_{T_i}(u,v)\leq\alpha\cdot d_G(u,v)~,\]
i.e., the $u$-$v$ distance in the tree $T_i$ is at most $\alpha$ times the $u$-$v$ distance in $G$.

This notion was introduced in the seminal paper by Gupta et al. \cite{GKR04}. They showed that $n$-vertex graphs with balanced recursive (vertex) separators of size $s(n)$ admit exact spanning tree covers (i.e., with stretch $1$) with size $O(\sum_{i=0}^{\log n}s(\frac{n}{2^i}))$. In particular, in planar graphs, this yields exact spanning tree covers of size $O(\sqrt{n})$. They also devised spanning tree covers with stretch $3$ and size $O(p(n)\cdot\log n)$ for graphs with balanced \textit{path} separators of size $p(n)$ (see Definition \ref{def:PathSeparable}).

The motivation of \cite{GKR04} for studying spanning tree covers was MPLS routing. Their work spurred a lot of subsequent research on this fundamental metric structure, and it was thereafter extensively explored, both for its own sake and for its numerous important algorithmic applications. These applications include (in addition to MPLS routing) path-reporting distance oracles, compact routing, distance labeling schemes, graph spanners and network design \cite{GKT03,GHR06,MN06,AGGM06,CGMZ16,ACEFN20,BFN22,CCLMST23,CCLMST24,CCLST25,CTX25}. Much of this work considered a relaxation called \textit{non-spanning tree covers}, in which the trees in $\mathcal{T}$ are not required to be spanning trees of the input graph $G=(V,E)$, but rather can use edges $(u,v)\notin E$, with weight $w(u,v)\geq d_G(u,v)$. One may also allow these trees to use Steiner vertices (that were not present in $G$). If this is the case, one typically requires the trees to be \textit{hierarchically separated} (shortly, HSTs; see Definition \ref{def:HST}), and then the tree cover is called an \textit{HST cover}.

For some applications of tree covers, non-spanning tree covers suffice. In particular, this is the case for non-path-reporting distance oracles (henceforth, DOs), for distance labeling schemes, and for metric spanners. This is, however, not the case for many other important applications of tree covers, such as path-reporting distance oracles (henceforth, PRDOs), compact routing schemes, graph spanners and MPLS routing. The importance of devising spanning tree covers, as opposed to non-spanning ones, was recently articulated by Chang et al. \cite{CCLST25}, who devised spanning tree covers for graphs whose induced metric is doubling. They also elaborate on the inherent difficulty of building spanning tree covers, as opposed to non-spanning ones (see Section 1.1 of \cite{CCLST25}).

A landmark non-spanning tree cover for general graphs was devised by Mendel and Naor \cite{MN06} (see also \cite{BLMN03} for an earlier weaker construction). They showed that for any integer parameter $k=1,2,...$, there exists an HST cover with stretch $O(k)$ and size $O(kn^{\frac{1}{k}})$. The leading constant in the $O$-notation of the stretch was improved to $12e$ by \cite{NT12}.\footnote{\cite{NT12} shows that for any $n$-point metric $X$ and a parameter $k=1,2,...$, there exists a subset $Y\subseteq X$ of size $\Omega(n^{1-\frac{1}{k}})$ that embeds into an ultrametric with distortion at most $2ek$. To obtain an HST cover, one needs to apply Theorem 4.1 of \cite{MN06} (see also Theorem \ref{thm:ExtendingUM}), that extends an HST for $Y$ to provide small stretch for all pairs in $X\times Y$. This extension blows up the stretch by an additional factor of $6$.} The HST cover of \cite{MN06,NT12} is a \textit{Ramsey} tree cover, meaning that for every vertex $v$ there is a tree $T_v$ in the cover that provides stretch $O(k)$ for all vertex pairs $(v,u)$ (that involve $v$). Abraham et al. \cite{ACEFN20} devised a spanning analogue of this construction: they showed that for any $n$-vertex graph $G=(V,E)$ and a parameter $k=1,2,...$, there exists a spanning Ramsey tree cover with stretch $O(k\log\log n)$ and size $O(kn^{\frac{1}{k}})$.

Particularly intensive recent research effort focused on building tree covers for graphs that exclude a fixed minor $K_r$, for an integer parameter $r$ \cite{BFN22,CCLMST23,CCLMST24}. Abraham and Gavoille \cite{AG06} showed that $K_r$-minor-free graphs admit balanced path separators of size
\begin{equation} \label{eq:AGPathSeparator}
    p(r)=O(r^2\cdot RS(r)\cdot(r^2+RS(r)))~,
\end{equation}
where $RS(r)$ is a very quickly growing function from Robertson-Seymour theorem \cite{RS03}. Upper bounds on $RS(r)$ were recently dramatically improved by \cite{KTW20,GSW25}, and currently the best-known estimate is $RS(r)=O(r^{2300})$ \cite{GSW25}, i.e., $p(r)=O(r^{4602})$. We denote these exponents by $c_{GSW}=2300$ and $c_{AG}=4602$. In the opposite direction, Diot and Gavoille \cite{DG10} showed that $p$-path-separable graphs exclude a $K_r$-minor, for some $r=r(p)$. To the best of our understanding, this dependency is also quite high, i.e., $r(p)$ is at least $p^{c_{GSW}}$.

Bartal et al. \cite{BFN22} showed that $p$-path-separable graphs (i.e., graphs that admit a balanced recursive separator formed by $p$ paths) admit non-spanning tree covers with stretch $1+\epsilon$ and size $O(p\cdot\frac{\log^2n}{\epsilon^2})$. They also devised another construction of non-spanning tree covers for $K_r$-minor-free graphs with stretch $O(r^4)$ and size $2^{O(r)}$. These tree covers of \cite{BFN22} are not Ramsey tree covers, and for a good reason -- based on a result by \cite{BLMN03}, Bartal et al. \cite{BFN22} showed that there are series-parallel $n$-vertex graphs for which any stretch-$k$ Ramsey tree cover requires $n^{\Omega(\frac{1}{k\log k})}$ trees. Chang et al. \cite{CCLMST23} devised a construction of exact (stretch-$1$) spanning tree covers for \textit{unweighted} $K_r$-minor-free graphs with size $r^{O(diam(G))}$, where $diam(G)$ stands for the diameter of the input graph $G$. In a follow-up paper \cite{CCLMST24}, the same authors devised a non-spanning tree cover with stretch $1+\epsilon$ and size $2^{\frac{1}{\epsilon}r^{O(r)}}$ for weighted $K_r$-minor-free graphs.

To summarize, currently the only \textit{spanning} tree cover known for weighted $p$-path-separable graphs is the tree cover of \cite{GKR04} (from quarter a century ago!) with stretch $3$ and size $O(p\log n)$. This leads to the following question.

\begin{question} \label{question:SpanTreeCoversForPathSeparable}
    Do $p$-path-separable graphs admit \textbf{spanning} tree covers with stretch smaller than $3$ and $O(p)\cdot\log^{O(1)}n$ trees? In particular, do $K_r$-minor-free graphs admit such spanning tree covers of size $p(r)\cdot\log^{O(1)}n$, for some function $p(\cdot)$?
\end{question}

To the best of our knowledge, currently no spanning tree cover with stretch $1+\epsilon$ and size $\log^{O(1)}n$ is known even for \textit{planar} graphs.

We answer Question \ref{question:SpanTreeCoversForPathSeparable} in the affirmative, and show that $p$-path-separable graphs admit spanning tree covers with stretch $1+\epsilon$, for an arbitrarily small $\epsilon>0$, and $O(p\cdot\frac{\log^2n}{\epsilon})$ trees. Note that this bound even improves (by a factor of $\frac{1}{\epsilon}$) over the bound of \cite{BFN22}; recall that the latter construction provides \textit{non-spanning} tree covers, while our construction provides spanning ones. In particular, our result implies that $K_r$-minor-free graphs admit spanning tree covers with stretch $1+\epsilon$ and size $r^{O(1)}\cdot\frac{\log^2n}{\epsilon}$. For planar graphs, $r=O(1)$, i.e., the size bound is $O(\frac{\log^2n}{\epsilon})$. See Table \ref{table:SmallStretchTreeCovers} for a concise summary of previous and new results.

\begin{table}[htbp]
\begin{tabular}{|l|l|l|l|l|l|}
\hline
\textbf{Graph Family}  & \textbf{Stretch} & \textbf{Number of} & \textbf{Span-} & \textbf{Remarks} & \textbf{Source} \\
 & & \textbf{Trees} & \textbf{-ning?} &  &  \\ \hline
%\multirow{10}{*}{General graphs} 
% & $2k-1$ & $n$ & YES & Avg. overlap & \cite{TZ01} \\
% &  &  &  & $O(kn^{\frac{1}{k}})$ &  \\ \cline{2-6}
% & $128k$ & $O(kn^{\frac{1}{k}})$ & NO & Avg. overlap & \cite{MN06} \\
% &  &  &  & $O(n^{\frac{1}{k}})$ &  \\ \cline{2-6}
% & $O(k\log\log n)$ & $O(kn^{\frac{1}{k}})$ & YES & Ramsey & \cite{ACEFN20} \\ \cline{2-6} 
% & $O(n^{\frac{1}{k}}\log^{1-\frac{1}{k}}n)$ & $k$ & NO & Ramsey & \cite{BFN22} \\ \cline{2-6} 
% & $O(n^{\frac{1}{k}}\log^{1-\frac{1}{k}}n\cdot$ & $k$ & YES & Ramsey & \cite{BFN22} \\ 
% & $\log\log n)$ &  &  &  &  \\ \cline{2-6}
% & $O(k\log\log n)$ & $O(kn^{\frac{1}{k}})$ & YES & Avg. overlap & \textbf{New} \\ 
% &  &  &  & $O(n^{\frac{1}{k}})$ &  \\ \hline
%\multirow{4}{*}{\begin{tabular}[c]{@{}l@{}}Graphs w. recursive\\ separators of size $s$\\ or treewidth $t$\end{tabular}} 
% & $1$ & $O(s\log n)$ & YES &  & \cite{GKR04} \\ \cline{2-6} 
% & $1+\epsilon$ & $2^{\frac{1}{\epsilon}t^{O(t)}}$ & NO &  & \cite{CCLMST24}\footnote{As graphs with treewidth at most $t$ are $K_{t+2}$-minor-free, this result of \cite{CCLMST24} improves upon \cite{CCLMST23}.} \\ \cline{2-6} 
% & $O(k)$ & $O(k s^{\frac{1}{k}}\cdot\log n)$ & NO &  & \textbf{New} \\ \cline{2-6} 
% & $O(k\log\log s)$ & $O(k s^{\frac{1}{k}}\cdot\log n)$ & YES &  & \textbf{New} \\ \hline 
\multirow{3}{*}{\begin{tabular}[c]{@{}l@{}}$p$-path-separable %or\\ $(\ell,\pi)$-path-separable
\\ graphs\end{tabular}} 
 & $3$ & $O(p\log n)$ & YES &  & \cite{GKR04} \\ \cline{2-6} 
 & $1+\epsilon$ & $O(p\cdot\frac{\log^2n}{\epsilon^2})$ & NO &  & \cite{BFN22} \\ \cline{2-6}
 & $\boldsymbol{1+\epsilon}$ & $\boldsymbol{O(p\cdot\frac{\log^2n}{\epsilon})}$ & \textbf{YES} &  & \textbf{New} \\ \cline{2-6} \hline
 %& $O(k\log\log\pi)$ & $O(k\ell\pi^{\frac{1}{k}}\cdot\log^3n)$ & YES &  & \textbf{New} \\ \cline{2-6} 
 %& $O(k\log\log\pi)$ & $O(k\ell\pi^{\frac{1}{k}}\cdot\log^2n)$ & YES & Tree-like & \textbf{New} \\ 
 %&  &  &  & separators &  \\ \hline
%\multirow{3}{*}{Planar graphs} 
% & $1$ & $O(\sqrt{n})$ & YES &  & \cite{GKR04} \\ \cline{2-6} 
% & $1+\epsilon$ & $O\left(\frac{1}{\epsilon^3}\log\frac{1}{\epsilon}\right)$ & NO &  & \cite{CCLMST23} \\ \cline{2-6}
% & $1+\epsilon$ & $O\left(\frac{\log^2n}{\epsilon^2}\right)$ & YES &  & \textbf{New} \\ \hline
\multirow{5}{*}{\begin{tabular}[c]{@{}l@{}}$K_r$-minor-free\\ graphs\end{tabular}}  
 & $1$ & $r^{O(diam(G))}$ & YES & Unweighted & \cite{CCLMST23} \\ 
 &  &  &  & graphs &  \\ \cline{2-6}
 & $1+\epsilon$ & $2^{\frac{1}{\epsilon}r^{O(r)}}$ & NO &  & \cite{CCLMST24} \\ \cline{2-6} 
 & $1+\epsilon$ & $r^{O(1)}\cdot\frac{\log^2n}{\epsilon^2}$ & NO &  & \cite{BFN22} \\ \cline{2-6}
 & $\boldsymbol{1+\epsilon}$ & $\boldsymbol{r^{O(1)}\cdot\frac{\log^2n}{\epsilon}}$ & \textbf{YES} &  & \textbf{New} \\ \cline{2-6} \hline
% & $\boldsymbol{O(k\log\log r)}$ & $\boldsymbol{O(kr^{2+\frac{1}{k}}\cdot\log^2n)}$ & \textbf{YES} &  & \textbf{New} \\ \hline
%\multirow{2}{*}{\begin{tabular}[c]{@{}l@{}}Graphs w. bounded \\ genus $g$ \end{tabular}}  
% & $1+\epsilon$ & $O\left(g\cdot\frac{\log^2n}{\epsilon^2}\right)$ & YES & & \textbf{New} \\ \cline{2-6}
% & $O(k\log\log g)$ & $O(kg^{\frac{1}{k}}\log^2n)$ & YES &  & \textbf{New} \\ \hline
\end{tabular}
\caption{A concise summary of previous and our results for tree covers for $p$-path-separable and $K_r$-minor-free graphs in the regime of small stretch (at most $3$). See Section \ref{sec:IntLargeStretchTreeCovers} and Table \ref{table:LargeStretchTreeCovers} for results with larger stretch.
}
\label{table:SmallStretchTreeCovers}
\end{table}

\subsection{Tree Covers that Trade Stretch for Size} \label{sec:IntLargeStretchTreeCovers}

In many of the previous studies of tree covers in $K_r$-minor-free graphs, the parameter $r$ is assumed to be constant \cite{BFN22,CCLMST23,CCLMST24}. Similarly, the parameter $p$ for $p$-path-separable graphs is often implicitly assumed to be small. As a result, many of the previous bounds feature quite high dependencies on these parameters. Most notably, this is the case for the non-spanning tree cover for $K_r$-minor-free graphs with stretch $1+\epsilon$ and size $2^{\frac{1}{\epsilon}r^{O(r)}}$ of \cite{CCLMST24}. This bound becomes meaningless already for $r=\Omega(\frac{\log\log n}{\log^{(3)}n})$ or $\epsilon=O(\frac{1}{\log n})$. This is also the case (for $r=\Omega(\log n)$) for the non-spanning tree cover of \cite{BFN22} for the same graph family, with stretch $O(r^4)$ and size $2^{O(r)}$. Note also that the size of all previous tree covers for $p$-path-separable graphs \cite{GKR04,BFN22} grows at least linearly with $p$.

To summarize, for $r=\Omega(\log n)$, currently there are only two meaningful bounds available: the spanning tree cover of \cite{GKR04} with stretch $3$ and size $O(p\log n)=O(r^{c_{AG}}\cdot\log n)$, and the non-spanning tree cover of \cite{BFN22} with stretch $1+\epsilon$ and size $O(p\cdot\frac{\log^2n}{\epsilon^2})=O(r^{c_{AG}}\cdot\frac{\log^2n}{\epsilon^2})$. Our bound from Section \ref{sec:IntSmallStretchTreeCovers} provides a \textit{spanning} tree cover with slightly better size than that of \cite{BFN22}. However, its dependence on $p$ is still $\Omega(p)$, and as a result, its dependence on $r$ suffers from the same very high exponent as in \cite{BFN22}. We note that these large dependencies on $p$ and $r$ show up also in applications of tree covers. In particular, the query times of DOs that employ these covers are also $\Omega(p)$ and $\Omega(r^{c_{AG}})$, and their size is $\Omega(n\cdot p)$ and $\Omega(n\cdot r^{c_{AG}})$, respectively. The following problem arises.

\begin{question} \label{question:LargeStretchSmallSize}
    Can one trade stretch for size and obtain spanning and non-spanning tree covers for $p$-path-separable graphs with \textbf{sublinear} size in $p$?
\end{question}

Our main technical contribution is an affirmative answer to this question. Our results in this context depend on the type of the path separator that the graph family at hand admits. We say that a graph $G$ admits a \textit{strong path separator} of size $p(\cdot)$, if any induced sub-graph $G[U]=(U,E[U])$, $U\subseteq V$, contains $p=p(|U|)$ shortest paths $Q^1,Q^2,...,Q^p$ in $G[U]$, whose removal decomposes the graph  into connected components of size at most $\frac{1}{2}|U|$. If all these paths also have a common endpoint $x$ and are all contained in a shortest paths tree $T_x$ rooted at $x$, then the separator is called a strong \textit{tree-like} (shortly, \textit{STL}) path separator. Such separators were introduced in seminal papers of Thorup \cite{Tho04} and Klein \cite{Kle02}, that showed that planar graphs admit STL path separators with $p=3$. One can also deduce from \cite{GHT84,Dji85,Dji88,AD96} that graphs of genus $g$ admit STL path separators of size $O(g)$.

Abraham and Gavoille \cite{AG06} showed that $K_r$-minor-free graphs admit \textit{weak} tree-like path separators of size $p(n)$, given by (\ref{eq:AGPathSeparator}). For a graph $G$, a union $P_0\cup P_1\cup\cdots\cup P_{\ell-1}$ of $\ell$ sets, where each $P_i$ is a union of at most $p_i$ shortest paths in $H_i=G\setminus\bigcup_{j<i}P_j$, is called a \textit{(weak) $(\ell,\pi)$-path separator}, if $p_i\leq\pi$ for every $0\leq i\leq\ell-1$. The size of the path separator is defined as $p=\sum_{i=0}^{\ell-1}p_i\leq\ell\cdot\pi$. An $(\ell,\pi)$-path separator is called \textit{tree-like}, if for every $0\leq i\leq\ell-1$, the paths in $P_i$ are all contained in a shortest paths tree $T$ in $H_i$, whose root serves as their common endpoint.

Our constructions are parametrized by a parameter $k=1,2,...$. We show that strongly tree-like $p$-path-separable graphs admit spanning (respectively, non-spanning) tree covers with stretch $O(k\log\log p)$ (resp., $O(k)$) and size $O(kp^{\frac{1}{k}}\cdot\log^2n)$. For \textit{$s$-vertex-separable graphs}, i.e., graphs that admit (balanced recursive) \textit{vertex} separators of size $s=s(n)$ (see Definition \ref{def:RecursiveSeparators}), and for graphs with treewidth $s$, this bound improves to $O(k\cdot s^{\frac{1}{k}}\cdot\log n)$. These results provide non-spanning tree covers with a constant stretch (and spanning ones with stretch $O(\log\log p)$) with \textit{sublinear size in $p$}. In particular, they are applicable to graphs with bounded genus $g$, providing stretch-$O(k\log\log g)$ spanning (respectively, stretch-$O(k)$ non-spanning) tree covers with size $O(k\cdot g^{\frac{1}{k}}\cdot\log^2n)$.

More generally, for tree-like $(\ell,\pi)$-path-separable graphs, we provide spanning (respectively, non-spanning) tree covers with stretch $O(k\log\log\pi)$ (resp., $O(k)$) and size $O(k\cdot\ell\cdot\pi^{\frac{1}{k}}\cdot\log^2n)$. In a more general case of not necessarily tree-like separators, we provide a bound of $O(k\cdot\ell\cdot\pi^{\frac{1}{k}}\cdot\log^3n)$ on the size. In the particularly important special case of $K_r$-minor-free graphs, this implies stretch-$O(k\log\log r)$ for spanning tree covers (and stretch-$O(k)$ for non-spanning ones) with size $O(k\cdot r^{2+\frac{1}{k}}\cdot\log^2n)$. A variant of our construction of non-spanning tree covers provides HST covers with the same properties. We summarize these results in Table \ref{table:LargeStretchTreeCovers}.

\begin{table}[htbp]
\begin{tabular}{|l|l|l|l|l|l|}
\hline
\textbf{Graph Family}  & \textbf{Stretch} & \textbf{Number of} & \textbf{Span-} & \textbf{Remarks} & \textbf{Source} \\
 & & \textbf{Trees} & \textbf{-ning?} &  &  \\ \hline
\multirow{3}{*}{\begin{tabular}[c]{@{}l@{}}$s$-vertex-separable\\ graphs \\ %(or graphs with treewidth $t$)
\end{tabular}} 
 & $1$ & $O(s\log n)$ & YES &  & \cite{GKR04} \\ \cline{2-6} 
 %& $1+\epsilon$ & $2^{\frac{1}{\epsilon}t^{O(t)}}$ & NO &  & \cite{CCLMST24} \\ \cline{2-6} 
 & $\boldsymbol{O(k)}$ & $\boldsymbol{O(k s^{\frac{1}{k}}\cdot\log n)}$ & \textbf{NO} &  & \textbf{New} \\ \cline{2-6} 
 & $\boldsymbol{O(k\log\log s)}$ & $\boldsymbol{O(k s^{\frac{1}{k}}\cdot\log n)}$ & \textbf{YES} &  & \textbf{New} \\ \hline 
\multirow{3}{*}{\begin{tabular}[c]{@{}l@{}}$p$-path-separable
\\ graphs\end{tabular}} 
 & $1+\epsilon$ & $O(p\cdot\frac{\log^2n}{\epsilon^2})$ & NO &  & \cite{BFN22} \\ \cline{2-6}
 & $\boldsymbol{1+\epsilon}$ & $\boldsymbol{O(p\cdot\frac{\log^2n}{\epsilon})}$ & \textbf{YES} &  & \textbf{New} \\ \cline{2-6}
 & $3$ & $O(p\log n)$ & YES &  & \cite{GKR04} \\ \cline{2-6} 
 \hline
 \multirow{2}{*}{\begin{tabular}[c]{@{}l@{}}Tree-like $(\ell,\pi)$-path\\-separable graphs\end{tabular}} 
 & $\boldsymbol{O(k\log\log\pi)}$ & $\boldsymbol{O(k\ell\pi^{\frac{1}{k}}\cdot\log^2n)}$ & \textbf{YES} &  & \textbf{New} \\ \cline{2-6} 
 & $\boldsymbol{O(k)}$ & $\boldsymbol{O(k\ell\pi^{\frac{1}{k}}\cdot\log^2n)}$ & \textbf{NO} & & \textbf{New} \\ 
 \hline
%\multirow{3}{*}{Planar graphs} 
% & $1$ & $O(\sqrt{n})$ & YES &  & \cite{GKR04} \\ \cline{2-6} 
% & $1+\epsilon$ & $O\left(\frac{1}{\epsilon^3}\log\frac{1}{\epsilon}\right)$ & NO &  & \cite{CCLMST23} \\ \cline{2-6}
% & $1+\epsilon$ & $O\left(\frac{\log^2n}{\epsilon^2}\right)$ & YES &  & \textbf{New} \\ \hline
\multirow{5}{*}{\begin{tabular}[c]{@{}l@{}}$K_r$-minor-free\\ graphs\end{tabular}}  
 %& $1$ & $r^{O(diam(G))}$ & YES & Unweighted & \cite{CCLMST23} \\ 
 %&  &  &  & graphs &  \\ \cline{2-6}
 & $1+\epsilon$ & $2^{\frac{1}{\epsilon}r^{O(r)}}$ & NO &  & \cite{CCLMST24} \\ \cline{2-6} 
 & $1+\epsilon$ & $r^{c_{AG}}\cdot\frac{\log^2n}{\epsilon^2}$ & NO &  & \cite{BFN22} \\ \cline{2-6}
 & $\boldsymbol{1+\epsilon}$ & $\boldsymbol{r^{c_{AG}}\cdot\frac{\log^2n}{\epsilon}}$ & \textbf{YES} &  & \textbf{New} \\ \cline{2-6} 
 & $\boldsymbol{O(k\log\log r)}$ & $\boldsymbol{O(kr^{2+\frac{1}{k}}\cdot\log^2n)}$ & \textbf{YES} &  & \textbf{New} \\ \cline{2-6}
 & $\boldsymbol{O(k)}$ & $\boldsymbol{O(kr^{2+\frac{1}{k}}\cdot\log^2n)}$ & \textbf{NO} &  & \textbf{New} \\ \hline
\multirow{4}{*}{\begin{tabular}[c]{@{}l@{}}Graphs w. bounded \\ genus $g$ \end{tabular}}  
 & $1+\epsilon$ & $O\left(g\cdot\frac{\log^2n}{\epsilon^2}\right)$ & NO & & \cite{BFN22} \\ \cline{2-6}
 & $\boldsymbol{1+\epsilon}$ & $\boldsymbol{O\left(g\cdot\frac{\log^2n}{\epsilon}\right)}$ & \textbf{YES} & & \textbf{New} \\ \cline{2-6}
 & $\boldsymbol{O(k\log\log g)}$ & $\boldsymbol{O(kg^{\frac{1}{k}}\log^2n)}$ & \textbf{YES} &  & \textbf{New} \\ \cline{2-6}
 & $\boldsymbol{O(k)}$ & $\boldsymbol{O(kg^{\frac{1}{k}}\log^2n)}$ & \textbf{NO} &  & \textbf{New} \\\hline
\end{tabular}
%\captionsetup{font=small}
\caption{A concise summary of previous and new bounds for spanning and non-spanning tree covers that trade stretch for size, for vertex-separable or path-separable graphs. 
%For $s$-vertex-separable graphs, the actual number of trees in our new tree covers is $O(k\cdot\sum_{i=0}^{\log n}s(\frac{n}{2^i})^{\frac{1}{k}})$.
%In our spanning tree covers with stretch $1+\epsilon$, we assume for simplicity that $\epsilon\geq\frac{1}{\text{poly}(n)}$. Otherwise, one of the $\log n$ factors in the number of trees becomes $\log n+\log\frac{1}{\epsilon}$.
Throughout this paper, $c_{AG}=4602$ is the currently best-known estimate (\cite{GSW25}) on the exponent of $r$ in the path separator by \cite{AG06}, for $K_r$-minor-free graphs.
The new results for tree-like path-separable graphs hold also for (not necessarily tree-like) path-separable graphs, with an additional factor of $\log n$ in the number of trees. 
Results for strongly (tree-like) path-separable graphs are obtained by substituting $\ell=1$.}
\label{table:LargeStretchTreeCovers}
\end{table}

\subsection{Applications} \label{sec:IntApplications}

As was mentioned above, our new constructions of tree covers directly give rise to numerous applications. We next outline the applications to path-reporting distance oracles (henceforth, PRDOs). In Appendix \ref{app:DLandRS} we elaborate on applications to distance labeling and routing schemes.

A \textit{PRDO} for a graph $G=(V,E)$ is a compact data structure, that given a pair of vertices $(u,v)\in V^2$ (called \textit{query}), quickly returns an $\alpha$-approximate shortest $u$-$v$ path $P_{u,v}$ in $G$ (for a parameter $\alpha\geq1$ called the \textit{stretch}). The prime parameters of PRDOs are stretch, size (number of RAM words) and query time\footnote{Typically, the query time is of the form $q(n,\alpha)+O(|P_{u,v}|)$, where $|P_{u,v}|$ stands for the number of edges in the returned path $P_{u,v}$.}.

PRDOs for general graphs were extensively studied \cite{TZ01,WN13,ENWN16,EP15,NS23,ES23,CZ24}. In the regime of constant (or small) $k$, the PRDO of Thorup and Zwick, in conjunction with an improvement by \cite{WN13}, provides a nearly-optimal stretch $2k-1$, size $O(kn^{1+\frac{1}{k}})$ and query time $O(\log k)$. A landmark PRDO for planar graphs with size $\tilde{O}_\epsilon(n)$, stretch $1+\epsilon$ and constant query time was provided by Thorup \cite{Tho04}, and independently by Klein \cite{Kle02}. The result extends directly to strongly tree-like $p$-path-separable graphs, but the query time and the size become $O(p\cdot\frac{\log n}{\epsilon})$ and $O(np\cdot\frac{\log n}{\epsilon})$, respectively. Abraham and Gavoille \cite{AG06} extended the non-path-reporting DO of \cite{Tho04,Kle02} %(which provides distance estimates, but not approximate shortest paths) 
to (weakly) $p$-path-separable graphs, yielding a $(1+\epsilon)$-approximate DO with the aforementioned query time and size. However, to the best of our knowledge, no \textit{path-reporting} distance oracle with stretch $1+\epsilon$ for weakly $p$-path-separable graphs, or for $K_r$-minor-free graphs, is currently known.

\begin{question}
    Does there exist a \textbf{path-reporting} distance oracle with stretch smaller than $3$ for \textbf{weakly} $p$-path-separable graphs, or for $K_r$-minor-free graphs?
\end{question}

The spanning tree cover of \cite{GKR04}, that provides stretch $3$ and size $O(p\log n)$ for weakly $p$-path-separable graphs, gives rise directly to a PRDO with the same stretch and \textit{size overhead} (i.e., size divided by $n$), and query time equal to the size overhead. Kawabarayashi et al. \cite{KKS11} devised linear-size non-path-reporting DOs for $K_r$-minor-free graphs with stretch $1+\epsilon$ and query time $O(r^{2c_{AG}}\cdot\frac{\log^2}{\epsilon^2})$.\footnote{The dependency of the query time on $r$ in the result of \cite{KKS11} is unspecified. In the proof of their Theorem 4 they indicate that it is $O(q^2)$, where $q=O(\frac{\log n}{\epsilon})$ is a parameter that depends at least linearly on the size of the path separator of \cite{AG06}, i.e., it is $\approx r^{c_{AG}}$.} Our spanning tree cover for this graph family yields a PRDO with stretch $1+\epsilon$, size overhead $O(p\cdot\frac{\log^2n}{\epsilon})$ and query time $O(p\cdot\frac{\log n}{\epsilon})$. To obtain size overhead and query time which are \textit{sublinear} in $p$, one can use our spanning tree covers with stretch $O(k\log\log p)$. In the case of strongly tree-like $p$-path-separable graphs, they yield a PRDO with size overhead $O(kp^{\frac{1}{k}}\cdot\log^2n)$ and query time $O(kp^{\frac{1}{k}}\cdot\log n)$. If the separator is strong, but not tree-like, a variant of our construction provides bounds of $O(kp^{\frac{1}{k}}\cdot\log^3n)$ and $O(kp^{\frac{1}{k}}\cdot\log^2n)$ for size overhead and query time, respectively.

We also devise a different PRDO, which uses our non-spanning tree covers in conjunction with a path-reporting pairwise $(1+\epsilon)$-preserver by Elkin and Shabat \cite{ES23}. It enables us to improve the stretch to $O(k)$, while retaining similar size overhead and query time. Specifically, the size overhead grows by a factor of $\gamma(k,p,n)=(\frac{k\log n}{\log p})^{O(1)}=\log^{O(1)}n$, while the query time stays the same. In the regime where $p=n^{\Omega(1)}$ (which is the focus of this paper), $\gamma(k,p,n)=k^{O(1)}$. These results are applicable to genus $g$ graphs, with $p$ replaced by $g$.

Our spanning tree cover with stretch $O(k\log\log\pi)$ for weakly tree-like $(\ell,\pi)$-path-separable graphs (with a total of $p$ separator paths) similarly yields a PRDO with size overhead and query time $O(k\ell\pi^{\frac{1}{k}}\cdot\log^2n)$ and $O(\ell\cdot\log n)\cdot (\log\pi\cdot\log n+k\pi^{\frac{1}{k}}\cdot\log\log\pi)$, respectively. For $K_r$-minor-free graphs, the bounds are $O(kr^{2+\frac{1}{k}}\cdot\log^2n)$ and $O(r^2\cdot\log n)\cdot (\log r\cdot\log n+kr^{\frac{1}{k}}\cdot\log\log r)$. If the separator is not tree-like, these bounds grow by an additional factor of $O(\log n)$. Similarly to the case of strongly path-separable graphs, here too we devise a more involved PRDO that has stretch $O(k)$, at the expense of increasing the size overhead by a factor of $\gamma(k,p,n)=\log^{O(1)}n$. See Table \ref{table:PRDOs} for a concise summary of new and existing results about PRDOs.

\begin{table}[htbp]
\begin{tabular}{|l|l|l|l|l|}
\hline
\textbf{Graph}  & \textbf{Stretch} & \textbf{Size} & \textbf{Query} & \textbf{Source} \\
\textbf{Family} & & \textbf{Overhead} & \textbf{Time} &  \\ 
 \hline
\multirow{4}{*}{\begin{tabular}[c]{@{}l@{}}$s$-vertex-separable\\ graphs  (or graphs \\ with treewidth $t$)\end{tabular}} 
 & $1$ & $O(s\log n)$ & $O(s\log n)$ & \cite{GKR04} \\ \cline{2-5} 
 & $\boldsymbol{O(k\log\log s)}$ & $\boldsymbol{O(k s^{\frac{1}{k}}\cdot\log n)}$ & $\boldsymbol{O(k s^{\frac{1}{k}}\cdot\log n)}$ & \textbf{New} \\ \cline{2-5}
 & $\boldsymbol{O(k)}$ & $\boldsymbol{O(k s^{\frac{1}{k}}\cdot\log n}$ & $\boldsymbol{O(k s^{\frac{1}{k}}\cdot\log n)}$ & \textbf{New} \\
 &  & $\boldsymbol{\cdot\gamma(k,s,n))}$ &  &  \\
 \hline 
\multirow{3}{*}{\begin{tabular}[c]{@{}l@{}}$p$-path-separable
\\ graphs\end{tabular}} 
 & $3$ & $O(p\log n)$ & $O(p\log n)$ & \cite{GKR04} \\ \cline{2-5}
 & $1+\epsilon$ & $O(p\cdot\frac{\log n}{\epsilon})$ \;\;\;\textit{(STL)} & $O(p\cdot\frac{\log n}{\epsilon})$ \;\;\;\textit{(STL)} & \cite{Tho04} \\ \cline{2-5}
 & $\boldsymbol{1+\epsilon}$ & $\boldsymbol{O(p\cdot\frac{\log^2n}{\epsilon})}$ & $\boldsymbol{O(p\cdot\frac{\log n}{\epsilon})}$ & \textbf{New} \\ \cline{2-5}
 \hline
 \multirow{4}{*}{\begin{tabular}[c]{@{}l@{}}Tree-like $(\ell,\pi)$-path\\-separable graphs\end{tabular}} 
 & $\boldsymbol{O(k\log\log\pi)}$ & $\boldsymbol{O(k\ell\pi^{\frac{1}{k}}\cdot\log^2n)}$ & $\boldsymbol{\ell\cdot O(\log\pi\cdot\log^2n}+$ & \textbf{New} \\
 &&& $\boldsymbol{k\pi^{\frac{1}{k}}\cdot\log n\cdot\log\log\pi)}$ & \\ \cline{2-5} 
 & $\boldsymbol{O(k)}$ & $\boldsymbol{O(k\ell\pi^{\frac{1}{k}}\cdot\log^2n}$ & $\boldsymbol{\ell\cdot O(\log\pi\cdot\log^2n}+$ & \textbf{New} \\
 & & $\boldsymbol{\cdot\gamma(k,\pi,n))}$ & $\boldsymbol{k\pi^{\frac{1}{k}}\cdot\log n\cdot\log\log\pi)}$ &  \\
 \hline
\multirow{8}{*}{\begin{tabular}[c]{@{}l@{}}$K_r$-minor-free\\ graphs\end{tabular}}  
 & $\boldsymbol{1+\epsilon}$ & $\boldsymbol{O(r^{c_{AG}}\cdot\frac{\log^2n}{\epsilon})}$ & $\boldsymbol{O(r^{c_{AG}}\cdot\frac{\log n}{\epsilon})}$ & \textbf{New} \\ \cline{2-5}
 & $3$ & $O(r^{c_{AG}}\log n)$ & $O(r^{c_{AG}}\log n)$ & \cite{GKR04} \\ \cline{2-5}
 & $O(1)$ & $2^{O(r)}\cdot\log\Lambda$ & $O(\log\log\Lambda)$ & \cite{Fil25} \\ \cline{2-5}
 & $O(r)$ & $O(r^2\cdot\log\Lambda)$ & $O(\log\log\Lambda)$ & \cite{Fil25} \\ \cline{2-5} 
 & $\boldsymbol{O(k\log\log r)}$ & $\boldsymbol{O(kr^{2+\frac{1}{k}}\cdot\log^2n)}$ & $\boldsymbol{r^2\cdot O(\log r\cdot\log^2n+}$ & \textbf{New} \\ 
 &&& $\boldsymbol{kr^{\frac{1}{k}}\cdot\log n\cdot\log\log r)}$ & \\ \cline{2-5}
 & $\boldsymbol{O(k)}$ & $\boldsymbol{O(kr^{2+\frac{1}{k}}\cdot\log^2n}$ & $\boldsymbol{r^2\cdot O(\log r\cdot\log^2n+}$ & \textbf{New} \\
 &  & $\boldsymbol{\cdot\gamma(k,r,n))}$ & $\boldsymbol{kr^{\frac{1}{k}}\cdot\log n\cdot\log\log r)}$ &  \\ \hline
 \multirow{5}{*}{\begin{tabular}[c]{@{}l@{}}Genus-$g$ graphs\end{tabular}}  
 & $\boldsymbol{1+\epsilon}$ & $\boldsymbol{O(g\cdot\frac{\log^2n}{\epsilon})}$ & $\boldsymbol{O(g\cdot\frac{\log n}{\epsilon})}$ & \textbf{New} \\ \cline{2-5}
 & $\boldsymbol{O(k\log\log g)}$ & $\boldsymbol{O(kg^{\frac{1}{k}}\cdot\log^2n)}$ & $\boldsymbol{O(\log g\cdot\log^2n+}$ & \textbf{New} \\ 
 &&& $\boldsymbol{kg^{\frac{1}{k}}\cdot\log n\cdot\log\log g)}$ & \\ \cline{2-5}
 & $\boldsymbol{O(k)}$ & $\boldsymbol{O(kg^{\frac{1}{k}}\cdot\log^2n}$ & $\boldsymbol{O(\log g\cdot\log^2n+}$ & \textbf{New} \\
 &  & $\boldsymbol{\cdot\gamma(k,g,n))}$ & $\boldsymbol{kg^{\frac{1}{k}}\cdot\log n\cdot\log\log g)}$ &  \\ \hline

\end{tabular}
%\captionsetup{font=small}
\caption{A concise summary of new and existing results on PRDOs for vertex-separable and path-separable graphs with $n$ vertices and aspect ratio $\Lambda$, for any choice of an integer parameter $k\geq1$ and real parameter $\epsilon\in(0,1]$. 
The \textit{size overhead} of a PRDO is its size, divided by $n$.
%For $s$-vertex separable graphs, the actual size overhead and query time in our new PRDOs is $O(k\cdot\sum_{i=0}^{\log n}s(\frac{n}{2^i})^{\frac{1}{k}})$ (with an additional factor of $\gamma(k,s,n)=(\frac{k\log n}{\log s})^{O(1)}$ in the case of the size overhead of the PRDO with stretch $O(k)$).
For the PRDO of \cite{Tho04}, the marking \textit{(STL)} stands for \textit{strongly tree-like}, as this result holds only for strongly tree-like $p$-path-separable graphs.
The new results for tree-like path-separable graphs holds also for (not necessarily tree-like) path-separable graphs, with an additional factor of $\log n$ in the size overhead and query time.
%In our PRDOs with stretch $1+\epsilon$, we assume for simplicity that $\epsilon\geq\frac{1}{\text{poly}(n)}$. Otherwise, one of the $\log n$ factors in the size overhead and query time becomes $\log n+\log\frac{1}{\epsilon}$.
Note that results for \textit{strongly} (tree-like) path-separable graphs are obtained by substituting $\ell=1$.
The results of \cite{Fil25} are implicit there: we use $\log\Lambda$ levels of their strong neighborhood covers with stretch $O(r)$ (respectively, $O(1)$) and size $O(r^2)$ (resp., $2^{O(r)}$).
%Throughout the paper, $c_{AG}=4602$ is the currently best-known estimate (\cite{GSW25}) on the exponent of $r$ in the path separator by \cite{AG06}, for $K_r$-minor-free graphs. 
The terms $\gamma(k,s,n)$, $\gamma(k,\pi,n)$ and $\gamma(k,r,n)$ are always at most $\log^{O(1)}n$. Moreover, if the second parameter is $n^{\Omega(1)}$, then these terms are at most $k^{O(1)}$.}
\label{table:PRDOs}
\end{table}

\subsection{A Technical Overview} \label{sec:TechnicalOverview}

A natural starting point for studying tree covers for path-separable graphs is the seminal construction of distance oracles (DOs) for planar graphs by \cite{Tho04,Kle02}. There are, in fact, two constructions in \cite{Tho04}, one of a non-path-reporting DO and another of a PRDO. Both generalize to strongly tree-like $p$-path-separable graphs.\footnote{Strictly speaking, there are parts of Thorup's construction \cite{Tho04} (such as the analysis of frames) that are specifically tailored to planar graphs. Though these parts enable to shave logarithmic factors from the size and the query time, they are not crucial for the scheme to work in general.} The non-path-reporting DO of \cite{Tho04} creates $t=O(\frac{1}{\epsilon})$ landmarks $\lambda^j_1(v),\lambda^j_2(v),...,\lambda^j_t(v)$ for every vertex $v$ and every path $Q^j$, $j\in\{1,2,...,p\}$, of the path-separator. One then argues that storing only the distances $\{d_G(v,\lambda^j_i(v))\;|\;1\leq i\leq t, 1\leq j\leq p\}$ from each vertex to each of its landmarks, and from each landmark to one endpoint of $Q^j$, enables to efficiently recover $(1+\epsilon)$-approximate distances for all vertex pairs. To provide the stretch bound, one shows that for every vertex pair $u,v$, whose shortest path $P_{u,v}$ traverses a separator path $Q^j$, there exists a pair of landmarks $\lambda^j_{uv}(u)$ and $\lambda^j_{uv}(v)$ of $u$ and $v$, respectively, both on $Q^j$, such that the path $u$-$\lambda^j_{uv}(u)$-$\lambda^j_{uv}(v)$-$v$ provides a good approximation for the shortest path $P_{u,v}$.

There are a number of problems that one encounters when trying to convert this DO to a tree cover. One of them is that these landmarks are \textit{not universal}, i.e., for a pair of distinct nearby vertices $u\neq v$, their respective sets of landmarks (on the same path $Q$ of the separator) may well be disjoint. As a result, building SPTs rooted at all landmarks produces far too many trees. Another problem is that this construction is not path-reporting, i.e., there is little hope that it can be converted into a \textit{spanning} tree cover. In particular, consider a vertex $v$, a landmark $\lambda(v)$ of $v$ on a separator path $Q$, and a vertex $u$ on a shortest $v$-$\lambda(v)$ path $P_{v,\lambda(v)}$. The set of landmarks of $u$ on $Q$ may well not contain $\lambda(v)$, and shortest paths from $u$ to its landmarks may intersect $P_{v,\lambda(v)}$. We will refer to this issue as the \textit{landmark continuity} problem.

One way around this was suggested in the pioneering work of Gupta et al. \cite{GKR04}. In their construction, every vertex $v$ connects to just one landmark $\lambda^j(v)$ on each path $Q^j$ of the separator. Specifically, $\lambda^j(v)$ is the closest vertex on $Q^j$ to $v$. This modification completely resolves the issues that were discussed above, and yields a spanning tree cover with size $O(p\log n)$. Alas, the stretch obtained via these nearest landmarks is $3$, as opposed to $1+\epsilon$ of Thorup's DO \cite{Tho04,Kle02}. Another approach was suggested by Bartal et al. \cite{BFN22}. In their construction, every vertex $v$ selects uniformly at random one of the landmarks $\lambda^j_i(v)$, $1\leq i\leq t$, for every separator path $Q^j$, $1\leq j\leq p$, and connects to it via a direct edge $(v,\lambda^j_i(v))$ (with weight $d_G(v,\lambda^j_i(v))$). This forms $p$ metric (i.e., non-spanning) trees -- one tree for each separator path. One then repeats this construction $O(\frac{\log^2n}{\epsilon^2})$ times, obtaining $O(p\cdot\frac{\log^2n}{\epsilon^2})$ metric trees. The stretch argument hinges on the observation that with high probability, for every pair $u,v\in V$ of vertices whose shortest path $P_{u,v}$ traverses a separator path $Q^j$ (for some $j$), in one of the $O(\frac{\log^2n}{\epsilon^2})$ trees built around $Q^j$ both $u$ and $v$ select the ``right" landmarks. This idea effectively converts the non-path-reporting DO of \cite{Tho04} into a non-spanning tree cover. However, it fails to address the landmark continuity problem, and thus it is inadequate for building \textit{spanning tree covers}.

The second construction of \cite{Tho04} provides \textit{path-reporting} distance oracles, and thus could a-priori serve as a better starting point for resolving this question. It builds a separate PRDO for every scale $\Delta=(1+\epsilon)^i$, $i=0,1,2,...$. The scale-$i$ PRDO caters for vertex pairs $u,v$ with $d_G(u,v)\leq\Delta$, and provides them with an additive approximation of $\epsilon\Delta$. To build this PRDO, Thorup \cite{Tho04} heavily exploited the ``tree-likeness" property of path-separators (for planar graphs). Recall that for a strong tree-like $p$-path separator $Q^1\cup Q^2\cup\cdots\cup Q^p$, all these paths are contained in an SPT $T$, rooted at a vertex $rt$. The construction of \cite{Tho04} then ``slices" the graph $G$ into sub-graphs $G_0,G_1,G_2,...$, with each $G_i$ containing vertices $v$ at distance $i\cdot\Delta\leq d_G(rt,v)\leq(i+2)\Delta$. In each slice $G_i$, one deletes all vertices with distance $>(i+2)\Delta$ from $rt$, and contracts all vertices at distance $<i\cdot\Delta$ from $rt$. Restricting the separator paths $Q^1,Q^2,...,Q^p$ to a single given slice $G_i$, one notes that their length within $G_i$ is at most $2\Delta$, and thus one can equip each of them with $\frac{2\Delta}{\epsilon\Delta}=O(\frac{1}{\epsilon})$ \textit{universal} landmarks, at pairwise distances $\epsilon\Delta$ between subsequent landmarks. Their universality is crucial, as this enables one to resolve the landmark continuity problem -- now for every vertex $v$ in $G_i$ and a landmark $\lambda^j(v)$ of $v$ on a separator path $Q^j$ (restricted to $G_i$), and a vertex $u$ on the shortest $v$-$\lambda^j(v)$ path, the vertex $u$ also needs to connect $\lambda^j(v)$. %Thus, one no longer needs to store for $u$ both the information about its own landmarks and the relay information for $v$ (and all the other vertices that may use $u$ on their way to a landmark), yielding a compact PRDO.

There are, however, a number of issues that arise if one attempts to convert this PRDO into a spanning tree cover. First, a PRDO is allowed to produce paths in a slice $G_i$, since they are also contained in $G$. On the other hand, a spanning tree cover would need to merge spanning trees of separate slices into fewer spanning trees of $G$ (otherwise, the resulting number of trees would explode). Moreover, contractions (used to obtain the $G_i$'s from $G$) are particularly problematic when building tree covers, as they alter the metric space induced by the graph -- some distances in the resulting trees may become shorter than the respective original distances. For PRDOs, this problem is circumvented by carefully selecting the slice $G_i$ in which one seeks an approximate shortest path for a given query vertex pair $u,v$. It may well be that in another slice $G_{i'}$, the distance between $u$ and $v$ is shorter than $d_G(u,v)$, but the PRDO is ``smart enough" to never fetch it. However, this type of approach is not applicable to (spanning) tree covers.

Yet another problem is that this PRDO is heavily tailored to \textit{strong tree-like} path separators. In a weak path separator, some separator paths may be shortest, not in the original graph $G$, but rather in the graph obtained from $G$ after removing some of the other separator paths. One can see that these separator paths will not necessarily become short when restricted to a slice $G_i$, and thus, not only that this approach does not yield a spanning tree cover -- it even does not yield a PRDO for weakly separable graphs. As a result, not only the problem of building spanning tree covers, but also the problem of building PRDOs, as opposed to non-path-reporting DOs, is open in weakly path-separable graphs or in $K_r$-minor-free graphs.

To summarize, both constructions of distance oracles of \cite{Tho04} appear to be inadequate for the problem of building a stretch-$(1+\epsilon)$ \textit{spanning} tree covers for path-separable graphs. Next, we sketch our construction of stretch-$(1+\epsilon)$ spanning tree covers. To facilitate presentation, we start with the simpler case of strongly $p$-path-separable graphs.%, and then outline the adaptation to the generally much more challenging scenario of weakly path-separable graphs.

Like the construction of PRDOs due to \cite{Tho04}, our construction is a union of separate constructions for different scales $\Delta=2^i$, $i=0,1,2,...$ (we later get rid of the resulting dependence on $\log\Lambda$ via a separate weight reduction; see Theorem \ref{thm:SpanTreeCoverForPathSeparators}). For each given scale $\Delta$, we place \textit{universal landmarks} on each of the separator paths $Q^j$, $1\leq j\leq p$. Roughly speaking, these landmarks are placed at distance $\approx\epsilon\Delta$ between subsequent landmarks. Note, however, that the overall number of landmarks may be huge. Denote by $\lambda^j_1,\lambda^j_2,...$ the landmarks on the separator paths $Q^j$, for every $1\leq j\leq p$. For every landmark $\lambda^j_i$, consider a ball $B(\lambda^j_i)=B_G(\lambda^j_i,(1+\epsilon)\Delta)$. Recall that the objective of scale-$\Delta$ tree covers is to cater for vertex pairs $u,v$ with $d_G(u,v)\leq\Delta$, and thus, all vertices $v$ which are ``served" by a landmark $\lambda^j_i$ necessarily belong to the ball $B(\lambda^j_i)$ of this landmark. Note that for any two sufficiently remote landmarks $\lambda^j_i,\lambda^j_{i'}$ (with $|i-i'|$ being greater than $\rho=\frac{c}{\epsilon}$, for some small constant $c$), their respective balls are disjoint. We therefore can ``glue" together the SPTs for $B(\lambda^j_i),B(\lambda^j_{\rho+i}),B(\lambda^j_{2\rho+i}),...$ into a single spanning tree that caters for all these balls. We perform this for every $i=1,2,...,\rho$, and obtain $O(\frac{1}{\epsilon})$ spanning trees per separator path, and $O(\frac{p}{\epsilon})$ spanning trees overall, for a single recursion level. We do this in all $O(\log n)$ levels of recursion, and for each $O(\log_{1+\epsilon}\Lambda)$ different scales.%, one obtains the total bound of $O(p\cdot\frac{\log n}{\epsilon^2}\cdot\log\Lambda)$ spanning trees.

In the case of weakly path-separable graphs, the path $Q^j$, for $j>1$, is not necessarily a shortest path in $G$, but rather a shortest path in $H_j=G\setminus\bigcup_{i<j}Q^i$ (this is the case, e.g., in the path separator of \cite{AG06} for $K_r$-minor-free graphs). As a result, the balls $B_G(\lambda^j_i,(1+\epsilon)\Delta)$ and $B_G(\lambda^j_{i'},(1+\epsilon)\Delta)$ with large $|i-i'|$ are not necessarily disjoint. However, the balls $B_{H_j}(\lambda^j_i,(1+\epsilon)\Delta)$ and $B_{H_j}(\lambda^j_{i'},(1+\epsilon)\Delta)$ are disjoint. We show that this is sufficient for the stretch analysis to apply. %As a result, with this modification, essentially the same construction that applies for strongly path-separable graphs provides a stretch-$(1+\epsilon)$ spanning tree cover with size $O(p\cdot\frac{\log^2n}{\epsilon^2})$ for weakly path-separable graphs as well.

Next, we overview our main construction: the construction of spanning tree covers that trade stretch for size, and provides the first tree covers (spanning or non-spanning) for general $p$-path-separable graphs with \textit{sublinear} size in $p$. To build some intuition, we first sketch a much simpler construction for $s$-vertex-separable graphs.

For this graph family, Gupta et al. \cite{GKR04} provided an exact (i.e., stretch-$1$) spanning tree cover of size $O(s(n)\log n)$. This tree cover consists of $s(n)$ SPTs for the entire graph, rooted at each vertex of the separator set $S$, for each of the $O(\log n)$ levels of the recursion. Our construction that trades stretch for size receives as an input a parameter $k$. Instead of using $s(n)$ SPTs, it invokes the metric Ramsey construction of Mendel and Naor \cite{MN06} on the set $S$. This construction provides us with a collection of $O(k\cdot s(n)^{\frac{1}{k}})$ HSTs, such that for every vertex pair $(s,v)\in S\times V$, there exists an HST in the collection that provides stretch $O(k)$ for this pair. As a result, this collection of HSTs provides stretch $O(k)$ for all the vertex pairs $u,v\in V$ whose shortest $u$-$v$ paths traverse $S$. Other vertex pairs are taken care by the recursive invocations of this construction on connected components of $G\setminus S$. Hence, the resulting HST cover provides stretch $O(k)$ and size at most $O(k\cdot s(n)^{\frac{1}{k}}\cdot\log n)$.\footnote{If $s(n)$ is large, then much better bounds apply. In particular, if $s(n)=n^\delta$, for $0<\delta\leq 1$, this is $O(\frac{k^2}{\delta}\cdot s(n)^{\frac{1}{k}})$.} To convert this into a spanning tree cover, we use the graph version of the metric Ramsey theorem due to \cite{ACEFN20}, which provides a spanning tree cover of the same size, but with stretch $O(k\log\log s(n))$.

To extend this construction to $p$-path-separable graphs, it is natural to try replacing the vertex separator $S$ by the set of landmarks of the separator paths $\{Q^j\}_{j=1}^p$. As we argued above, these landmarks need to be universal, and thus, as in the case of stretch $1+\epsilon$, our ultimate tree collection is a union of $\log\Lambda$ tree collections, one for every scale $\Delta=2^i$, $i=0,1,2,...$. On a scale $\Delta$, our tree collection provides multiplicative stretch $O(k)$ and additive stretch $O(\Delta)$ (ultimately, the additive stretch translates into a higher constant hidden in the $O$-notation of the multiplicative stretch). %To facilitate presentation, consider \textit{strongly} $p$-path-separable graphs.

On each separator path $Q^j$, we place universal landmarks $\lambda^j_1,\lambda^j_2,...$, with distance $\approx\Delta$ between each pair of consecutive landmarks. Note, however, that the overall set of landmarks $\mathcal{L}=\{\lambda^j_i\;|\;i\geq1,\;1\leq j\leq p\}$ might be huge, and so one cannot just replace the set $S$ from the construction for small vertex separators by this set. It is tempting to consider together landmarks $\lambda^j_i,\lambda^j_{ck+i},\lambda^j_{2ck+i},...$, for a sufficiently large constant $c$, and to use the observation that their balls $B(\lambda^j_{i}), B(\lambda^j_{ck+i}),B(\lambda^j_{2ck+i})$ are disjoint. A natural attempt is to connect an imaginary vertex $\hat{\lambda}^j_i$ to all of them via zero-weight edges, for every $1\leq i\leq ck$. The resulting set of imaginary landmarks $\{\hat{\lambda}^j_1,\hat{\lambda}^j_2,...,\hat{\lambda}^j_{ck}\;|\;1\leq j\leq p\}$ is small, and one could attempt to use it instead of the set $S$.

Unfortunately, this attempt fails, as adding zero-weight edges ruins the metric. One can do it for distance oracles, but not for tree covers -- some vertex pairs $u,v$ may now end up being closer in the tree collection obtained from the metric Ramsey theorems of \cite{MN06,ACEFN20} than in $G$.

To overcome this hurdle, we define a graph $L$ whose vertex set is the set $\mathcal{L}$ of landmarks, and there is an edge between two landmarks iff their respective balls intersect. We then build a \textit{clustered coloring} $\varphi$ for this graph. A clustered coloring $\varphi$ of a graph $L$ with $Z$ colors and \textit{block size} $T$ is an assignment of colors from $\{1,2,...,Z\}$ to the vertices of $L$, in such a way that each connected component of vertices colored by the same color has size at most $T$. This notion was studied, e.g., in \cite{KMRV97,ADOV03,EF05}. See \cite{W12} for an extensive survey.

First, note that given a $(Z,T)$-clustered coloring of the graph $L$ as above, one indeed can invoke metric Ramsey theorems in each connected component of a fixed color class, and merge the resulting tree collections into a single tree collection (of size $O(kT^{\frac{1}{k}})$) for this color class. Doing it separately for every color class results in the overall tree cover of size $O(kT^{\frac{1}{k}}\cdot Z)$. Our problem is thus reduced to building a good clustered coloring (with small $Z$ and $T$) for the landmark graph $L$. We argue that the landmark graphs of $p$-path-separable graphs admit very good clustered colorings. In particular, in the case of strongly $p$-path-separable graphs, they admit clustered coloring with $Z=O(\log n)$ colors and block size $T=O(p\log p)$.

Intuitively, the reason for this is that the landmark graph $L$ has a very small expansion. It is not hard to see that for any landmark $\lambda$ and a positive integer $R$, the $R$-ball around $\lambda$ in $L$ may contain at most $O(R\cdot p)$ landmarks. Indeed, it can contain $O(R)$ landmarks on every separator path $Q^j$, as otherwise it would contain two landmarks that are too far from one another. This, however, would be a contradiction to the assumption that they belong to the same $R$-ball. It follows that one can carve disjoint balls of radius $O(\log p)$ each, that are at unweighted distance at least $2$ from one another (in $L$), and whose union covers a constant fraction of all landmarks. Each such ball contains $O(p\log p)$ landmarks, i.e., $T=O(p\log p)$. The set of these balls forms the first color class, and then we carve the balls of the second color, etc. Overall, $Z=O(\log n)$ color classes cover all the landmarks, providing us with the desired coloring, which translates to the spanning and non-spanning tree collections that gracefully trade between stretch $k$ and size $\tilde{O}(kp^{\frac{1}{k}})$.\footnote{The $\tilde{O}$-notation here hides factors polylogarithmic in $n$.} In the case of weakly path-separable graphs, the construction is more involved, and the resulting bounds are somewhat weaker.

Finally, we devise a weight reduction that enables us to replace a factor of $\log\Lambda$ by $\log n$. A related weight reduction was devised by \cite{KS97}. Below we argue that it is not applicable in the context of tree covers. The reduction of \cite{KS97} is typically used in the context of spanners and hopsets \cite{C98,C00,EN19,EM21}. It defines graphs $G_1,G_2,...,G_\lambda$, with $\lambda\approx\log\Lambda$, one graph for each distance scale $[2^{i-1},2^i]$. The aspect ratio of each of these graphs is polynomial in $n$. In addition, their numbers of vertices $n_1,n_2,..,n_\lambda$ sum up to $O(n\log n)$. Applying a black-box spanner/hopset construction to it, one creates spanners/hopsets $H_1,H_2,...,H_\lambda$ and returns their union $H$ as the ultimate output. For each $i=1,2,...,\lambda$, $|H_i|$ is typically bounded by $\approx O(n_i^{1+\frac{1}{k}})$, for a positive integer parameter $k$, and so $|H|\approx O(\sum_{i=1}^\lambda n_i^{1+\frac{1}{k}})=O(n^{1+\frac{1}{k}}\log n)$. Note, however, that in the context of tree covers, naively one would have to compute tree covers $\mathcal{T}_1,\mathcal{T}_2,...,\mathcal{T}_\lambda$ for $G_1,G_2,...,G_\lambda$, respectively, and return their union $\mathcal{T}=\bigcup_{i=1}^\lambda\mathcal{T}_i$. However, even if each $|\mathcal{T}_i|$ is bounded by $\tilde{O}(n_i^{\frac{1}{k}})$, it is no longer true that $\sum_{i=1}^\lambda|\mathcal{T}_i|=\tilde{O}(\sum_{i=1}^\lambda n_i^{\frac{1}{k}})$ is bounded by $\tilde{O}(n^{\frac{1}{k}})$ (for example, if $n_i\approx\frac{n}{\lambda}$ for every $i$, then $\sum_{i=1}^\lambda n_i^{\frac{1}{k}}=\lambda^{1-\frac{1}{k}}n^{\frac{1}{k}}$).

Our reduction, on the other hand, considers together tree covers $\mathcal{T}_1,\mathcal{T}_{1+t},\mathcal{T}_{1+2t},...$, where $t\approx\log\frac{n^2}{\epsilon}$. Note that there is a huge gap of $\frac{n^2}{\epsilon}$ between two consecutive scales in this subsequence. Any simple path in $G_{1+it}$, for any $i=0,1,2,...$, has negligible length in comparison to the minimum distance in the next scale graph $G_{1+(i+1)t}$. As a result, we can glue together the tree covers $\mathcal{T}_1,\mathcal{T}_{1+t},\mathcal{T}_{1+2t},...$ into one single tree cover $\hat{\mathcal{T}}_1$ of size bounded by $\max\{|\mathcal{T}_{1+it}|\;|\;i=0,1,2,...\}$. We then do the same for each $j=2,3,...,t$, i.e., build a single tree cover $\hat{\mathcal{T}}_j$ for all scales $G_j,G_{j+t},G_{j+2t},...$. As a result, we obtain a tree cover $\hat{\mathcal{T}}=\bigcup_{j=1}^t\hat{\mathcal{T}}_j$ of size just $t=O(\log n)$-times larger than the size of a single-scale tree cover $\mathcal{T}_i$, eliminating the dependence on $\log\Lambda$.

\subsection{Related work}

Extending previous results of Arya et al. \cite{ADMSS95} regarding Euclidean metrics, Chan et al. \cite{CGMZ16} and later Bartal et al. \cite{BFN22} explored \textit{doubling metrics}, i.e., metrics with doubling constant at most $\lambda$, for a parameter $\lambda$. Improving upon a result of \cite{CGMZ16}, \cite{BFN22} showed that for any positive integer $k$, doubling metrics admit tree covers with stretch $O(k)$ and with $O(\lambda^{\frac{1}{k}}\log\lambda\cdot\log k)$ trees. They also devised a construction of tree covers with stretch $1+\epsilon$, for an arbitrarily small $0<\epsilon<1$, and $(\frac{1}{\epsilon})^{O(\log\lambda)}$ trees. Chang et al. \cite{CCLST25} devised tree covers for \textit{doubling graphs}, i.e., graphs whose induced shortest-path metric is doubling. The tree covers of \cite{CGMZ16,BFN22} can be viewed as non-spanning tree covers for doubling graphs. The spanning tree cover of \cite{CCLST25} for doubling graphs has stretch $1+\epsilon$ and uses $(\frac{1}{\epsilon})^{\tilde{O}(\log\lambda)}$ trees. Lower bounds for tree covers with a small number of trees were shown in \cite{CTX25}.

\subsection{Organization}

In Section \ref{sec:Preliminaries} we provide necessary definitions regarding metric spaces, spanners and emulators, and graph separators. In Section \ref{sec:TreeCoversBasic}, we define the different types of tree covers and show some basic properties and connections between them. Then, in Section \ref{sec:RamseyThms} we develop our main toolkit for constructing tree covers, by generalizing some metric Ramsey theorems. Section \ref{sec:TreeCoversForSeparableGraphs} contains the main contribution of the paper, i.e., constructions of spanning and non-spanning tree covers for vertex-separable and path-separable graphs (for vertex-separable graphs, see Section \ref{sec:VertexSeparableTreeCovers}, and for path-separable graphs, see Sections \ref{sec:PathSeparableSpanSmallStretchTreeCovers} and \ref{sec:PathSeparableLargeStretchTreeCovers}). Next, in Section \ref{sec:SpannersAndEmulators}, we use our new tree covers to obtain path-reporting spanners and emulators for path-separable graphs. 
In Appendix \ref{app:ComparisonGeneralGraphs} we compare our new tree covers with known results for tree covers in general graphs. Then, Appendix \ref{app:AspectRatioReduction} presents our weight reduction that enables us to eliminate the dependence on the aspect ratio in our tree covers (in Appendices \ref{app:PRSpannerAspectRatioReduction1} and \ref{app:PRSpannerAspectRatioReduction2} we also utilize this reduction to obtain improved query times for our PRDOs). Appendix \ref{app:ImprovedStretch} presents non-path-reporting distance oracles that provide an improved stretch. Finally, in Appendix \ref{app:DLandRS} we provide new results for distance labeling schemes (Section \ref{sec:DistanceLabeling}) and routing schemes (Section \ref{sec:RoutingSchemes}), for path-separable and vertex-separable graphs.

\section{Preliminaries} \label{sec:Preliminaries}

Unless stated otherwise, all logarithms are in base $2$.

Given an undirected graph $G=(V,E)$, we denote by $V(G)$ and $E(G)$ the sets of $G$'s vertices and edges, respectively. When the graph $G$ is clear from the context, we denote $n=|V(G)|$ and $m=|E(G)|$. We also denote $\binom{V}{2}=\{\{u,v\}\;|\;u,v\in V,\;u\neq v\}$. 

A \textbf{minor} of $G$ is a graph that is obtained from $G$ by a sequence of vertex deletions, edge deletions and edge contractions.

If the graph $G$ is weighted, with positive weights on the edges, we denote by $d_G(u,v)$ the distance between two vertices $u,v\in V$, i.e., the weight of the shortest path between them. If $u,v$ are on a path $P$, we denote by $P[u,v]$ the sub-path of $P$ between $u$ and $v$. The \textbf{diameter} of $G$ is $diam(G)=\max_{u,v\in V}d_G(u,v)$, and its \textbf{aspect ratio} is 
\[\Lambda=\frac{\max_{u,v\in V}d_G(u,v)}{\min_{u\neq v\in V}d_G(u,v)}=\frac{diam(G)}{\min_{e\in E}w(e)}~.\]

%A \textbf{spanning tree} of $G$ is a sub-graph $T=(V_T,E_T)$, which is a tree, such that $V_T\subseteq V$ (here, a spanning tree does not necessarily contain all the vertices, but it is a sub-graph of the original graph). 
Given a rooted tree $T$ and two of its vertices $u,v$, we denote by $LCA_T(u,v)$ the lowest common ancestor of $u,v$ in the tree. That is, the sub-tree of $T$ rooted at $LCA_T(u,v)$ contains $u,v$, but none of the children of $LCA_T(u,v)$ satisfies the same property.

%Unless stated otherwise, we assume that the given graph $G$ is connected. All our results translate to disconnected graphs by applying these results on every connected component.

\subsection{Ultrametrics and HSTs} \label{sec:UMHST}

\begin{comment}
The following definitions, regarding metric spaces, are useful for constructing non-spanning tree covers (for the definition of tree covers, see Section \ref{sec:TreeCoversBasic}).
\begin{definition} \label{def:MetricSpace}
A \textbf{metric space} is a pair $(X,d)$, where $X$ is a set and $d:X\rightarrow\mathbb{R}_{\geq0}$ satisfies
\begin{enumerate}
    \item For every $x,y\in X$, $d(x,y)=0$ if and only if $x=y$.
    \item For every $x,y\in X$, $d(x,y)=d(y,x)$.
    \item (Triangle Inequality) For every $x,y,z\in X$, $d(x,z)\leq d(x,y)+d(y,z)$.
\end{enumerate}
\end{definition}

\subsubsection{Ultrametrics} \label{sec:PreUltrametrics}
\end{comment}

Given a metric space $(X,d)$, if $d$ satisfies $d(x,z)\leq\max\{d(x,y),d(y,z)\}$ for every $x,y,z\in X$, then $d$ is called an \textbf{ultrametric} on $X$, and $(X,d)$ is called an \textbf{ultrametric space}.

\begin{definition} \label{def:HST}
A \textbf{hierarchically (well) separated tree}, or \textbf{HST},\footnote{This is the definition of a $1$-HST. See \cite{BLMN03} for the more general definition of a $k$-HST, for a parameter $k$.} is a rooted tree $T=(V,E)$ with labels\footnote{The original definition by Bartal in \cite{B96} uses weights on the \textit{edges} instead of labels on the vertices. We use a different equivalent notion, that was given by Bartal et al. in another paper \cite{BLMN03}.} $\ell:V\rightarrow\mathbb{R}_{\geq0}$, such that if $v\in V$ is a child of $u\in V$, then $\ell(v)\leq\ell(u)$, and for every leaf $v\in V$, $\ell(v)=0$. %if and only if $v$ is a leaf in $T$.
\end{definition}

Let $x,y$ be two vertices in a rooted tree (e.g., in an HST), and let $LCA(x,y)$ denote their lowest common ancestor. Given an HST, let $L$ be its set of leaves, and define the function $d(x,y)=\ell(LCA(x,y))$. It is not hard to see that $(L,d)$ is an ultrametric space. %\footnote{Strictly speaking, one needs to require that all internal labels are strictly positive.}. 
In \cite{BLMN03}, Bartal et al. proved the following fact

\begin{fact} \label{fact:UM<->HST}
\textit{Every} finite ultrametric $(X,\rho)$ can be represented by an HST $T=(V,E)$ with labels $\ell:V\rightarrow\mathbb{R}_{\geq0}$, where the leaves of $T$ are exactly the points of $X$, and $\rho(x,y)=\ell(LCA_T(x,y))$ for every $x,y\in X$.
\end{fact}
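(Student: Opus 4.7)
The plan is to construct the desired HST $T$ directly from the ultrametric $\rho$ via the standard equivalence-class / laminar-family argument, which converts the ultrametric inequality into a hierarchical clustering of $X$.

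First, I would show that for every real $t \geq 0$, the relation $x \sim_t y \Longleftrightarrow \rho(x,y) \leq t$ is an equivalence relation on $X$. Reflexivity and symmetry are immediate, and transitivity is precisely the ultrametric inequality: $\rho(x,y) \leq t$ and $\rho(y,z) \leq t$ give $\rho(x,z) \leq \max\{\rho(x,y),\rho(y,z)\} \leq t$. Writing $B_t(x) := \{y \in X : \rho(x,y) \leq t\}$ for the equivalence class of $x$ under $\sim_t$, I would collect the family $\mathcal{L} := \{B_t(x) : x \in X,\ t \geq 0\}$. Any two members of $\mathcal{L}$ are either nested or disjoint (equivalence classes under $\sim_t$ partition $X$, and refining $t$ only refines the partition), so $\mathcal{L}$ is laminar. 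Since $X$ is finite, $\mathcal{L}$ is finite, and it contains every singleton $\{x\} = B_0(x)$ as well as the full set $X = B_D(x)$ for $D := \max_{u,v}\rho(u,v)$.

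Next, I would let $T$ be the Hasse diagram of $(\mathcal{L},\subseteq)$, rooted at $X$: leaves are exactly the singletons $\{x\}$ (identified with the points of $X$), and $C'$ is a child of $C$ iff $C' \subsetneq C$ with no member of $\mathcal{L}$ strictly between them. Assign labels by $\ell(C) := \max_{u,v \in C}\rho(u,v)$, the diameter of $C$, with $\ell(\{x\}) = 0$. Monotonicity $\ell(C') \leq \ell(C)$ whenever $C' \subseteq C$ is immediate, and leaves carry label $0$, so $T$ satisfies Definition \ref{def:HST}.

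Finally, I would verify the identity $\rho(x,y) = \ell(LCA_T(x,y))$. The ancestors of the leaf $\{x\}$ in $T$ are exactly the members of $\mathcal{L}$ containing $x$, so $LCA_T(x,y)$ is the smallest member of $\mathcal{L}$ containing both $x$ and $y$, which is $B_{\rho(x,y)}(x)$. It remains to compute $\mathrm{diam}(B_{\rho(x,y)}(x)) = \rho(x,y)$: the lower bound is trivial since $x,y \in B_{\rho(x,y)}(x)$, and the upper bound uses the ultrametric inequality once more, since for any $u,v \in B_{\rho(x,y)}(x)$ we have $\rho(u,v) \leq \max\{\rho(u,x),\rho(v,x)\} \leq \rho(x,y)$. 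I do not expect any serious obstacle; the only delicate point is identifying the LCA with this particular ball rather than some strictly larger member of $\mathcal{L}$, which follows from the minimality of $\rho(x,y)$ as the smallest threshold making $x$ and $y$ equivalent.
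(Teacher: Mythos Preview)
Your proof is correct and complete: the laminar family of ultrametric balls, ordered by containment, yields precisely the required HST, and your verification that $LCA_T(x,y)=B_{\rho(x,y)}(x)$ with diameter exactly $\rho(x,y)$ is clean.

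The paper, however, does not prove this statement at all. It is stated as a fact with a citation to \cite{BLMN03}, where Bartal, Linial, Mendel and Naor establish the equivalence between finite ultrametrics and $1$-HSTs. So there is nothing to compare against: you have supplied the standard proof of a result the paper simply quotes.
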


For convenience, we always assume that an HST $T$ contains labels $\ell$, and we denote $\rho_T(u,v)=\ell(LCA_T(u,v))$, for every $u,v\in L(T)$ (where $L(T)$ is the set of leaves of $T$).

Elkin and Shabat \cite{ES23} showed how to use a result by Gupta \cite{Gupta01} to construct a tree $T=(X,E)$ such that the shortest path metric in $T$ equals $\rho$, up to a stretch of at most $8$ (see also Konjevod et al. \cite{KRS01}, Theorem 5.1, for a related argument). For convenience, we state this result here.

\begin{lemma}[\cite{ES23}] \label{lemma:FromUMToTree}
Let $(X,\rho)$ be an ultrametric. Then, there is a tree $T=(X,E)$ such that for every $u,v\in X$,
\[\rho(u,v)\leq d_T(u,v)\leq8\rho(u,v)~.\]
\end{lemma}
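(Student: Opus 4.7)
The plan is to prove Lemma \ref{lemma:FromUMToTree} by combining Fact \ref{fact:UM<->HST} with Gupta's Steiner-point-removal theorem for tree metrics \cite{Gupta01}. The strategy is to first realize $(X,\rho)$ as a metric on the leaves of a weighted tree whose internal vertices act as Steiner points, and then to reduce to a tree whose vertex set is exactly $X$, losing only a factor of $8$.

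First I would invoke Fact \ref{fact:UM<->HST} to obtain an HST $T_0=(V_0,E_0)$ with labels $\ell:V_0\to\mathbb{R}_{\ge 0}$, whose leaf set is exactly $X$, and which satisfies $\rho(x,y)=\ell(LCA_{T_0}(x,y))$ for all $x,y\in X$. Then I would turn $T_0$ into an edge-weighted tree $T_0'$ by assigning to each edge between a parent $u$ and a child $v$ the weight $(\ell(u)-\ell(v))/2$. A direct telescoping computation along the unique $x$--$y$ path in $T_0'$ shows that $d_{T_0'}(x,y)=\ell(LCA_{T_0}(x,y))=\rho(x,y)$ for every pair of leaves $x,y\in X$. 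Thus $T_0'$ is a weighted tree in which the terminals $X$ carry exactly the metric $\rho$, but $T_0'$ may have many internal (Steiner) vertices not in $X$.

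Next I would apply Gupta's theorem \cite{Gupta01}, which takes as input any edge-weighted tree together with a designated terminal set and returns a tree on the terminal set alone whose induced metric multiplicatively $8$-approximates the terminal distances of the input tree. Applying it with input $T_0'$ and terminals $X$ yields the desired tree $T=(X,E)$ satisfying $d_T(u,v)\le 8\,d_{T_0'}(u,v)=8\rho(u,v)$ for all $u,v\in X$. For the non-contraction inequality $\rho(u,v)\le d_T(u,v)$, I would observe that Gupta's construction weights each edge of $T$ by the corresponding $T_0'$-distance between its endpoints; combined with the triangle inequality applied to the unique $u$--$v$ path in $T$, this gives $d_{T_0'}(u,v)\le d_T(u,v)$, which is exactly $\rho(u,v)\le d_T(u,v)$.

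The only genuinely non-trivial step is the invocation of Gupta's theorem, which is used as a black box; the remaining work is the bookkeeping of translating between the label-on-vertices HST formalism of Definition \ref{def:HST} and the edge-weighted tree formalism required by \cite{Gupta01}, and verifying that the lower bound $\rho\le d_T$ survives the Steiner-point removal. I expect this bookkeeping, rather than any new combinatorial idea, to be the main potential pitfall; in particular one must be careful that the edge weights in $T_0'$ are nonnegative, which follows from the HST monotonicity $\ell(v)\le\ell(u)$ for $v$ a child of $u$.
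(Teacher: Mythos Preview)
Your proposal is correct and follows essentially the same approach as the paper: represent the ultrametric by an HST via Fact~\ref{fact:UM<->HST}, convert vertex labels to edge weights $(\ell(u)-\ell(v))/2$ so that leaf-to-leaf distances equal $\rho$, and then apply Gupta's Steiner-removal theorem~\cite{Gupta01} to obtain the tree on $X$ with the factor-$8$ guarantee. Your extra care about the lower bound is fine but slightly redundant, since Gupta's theorem as stated already guarantees $d_{T'}(x,y)\le d_T(x,y)$ directly.
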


\begin{comment}
\begin{proof}

By \cite{BLMN03}, there is an HST $T'=(V,E')$, in which the leaves are the elements of $X$, and there are labels $\ell:V\rightarrow\mathbb{R}_{\geq0}$ that satisfy
\begin{enumerate}
    \item $\ell(v)=0$ for every leaf $v\in X$,
    \item $\ell(v)\leq\ell(u)$ if $v\in V$ is a child of $u\in V$,
    \item $\ell(LCA(u,v))=\rho(u,v)$ for every $u,v\in X$, where $LCA(u,v)$ is the lowest common ancestor of $u,v$ in $T'$.
\end{enumerate}

Define a new weight function $w:E'\rightarrow\mathbb{R}_{\geq0}$ on the edges of $T'$, by $w(e)=\frac{\ell(u)-\ell(v)}{2}$, where $e=\{u,v\}$ such that $v$ is a child of $u$ (by (2) above, $w(e)\geq0$). It is easy to check that the weight of the unique path in $T'$ between two leaves $u,v\in X$ is exactly $\ell(LCA(x,y))=\rho(x,y)$.

We now use the following result by Gupta \cite{Gupta01}.
\begin{lemma}[Gupta \cite{Gupta01}] \label{lemma:Gupta}
Given a weighted tree $T'$ and a subset $R$ of vertices in $T'$, there exists a weighted tree $T$ on $R$ such that for every $x,y\in R$,
\[d_{T'}(x,y)\leq d_{T}(x,y)\leq8d_{T'}(x,y)~.\]
\end{lemma}

Applying Lemma \ref{lemma:Gupta} on $T$, we obtain a tree $T=(X,E)$, such that for every $u,v\in X$,
\[\rho(u,v)=d_{T'}(u,v)\leq d_{T}(u,v)\leq8d_{T'}(x,y)=8\rho(u,v)~.\]
    
\end{proof}
\end{comment}

Lemma \ref{lemma:FromUMToTree} can be thought of as an approximation of an ultrametric by a \textit{tree metric} (i.e., a metric space $(V,d_T)$, where $V$ are the vertices of a tree $T$, and $d_T$ is the shortest path metric of $T$).

\subsection{Spanners and Emulators} \label{sec:PreSpannersAndEmulators}

Consider the fundamental objects of graph \textit{spanners} and \textit{emulators}.

\begin{definition} \label{def:SpannersAndEmulators}
Given an undirected weighted graph $G=(V,E)$, an \textbf{$\alpha$-emulator} is a weighted graph $G'=(V,E')$ (not necessarily $E'\subseteq E$) such that for every $u,v\in V$,
\[d_G(u,v)\leq d_{G'}(u,v)\leq\alpha\cdot d_G(u,v)~.\]
If $E'\subseteq E$, we say that $G'$ is an \textbf{$\alpha$-spanner}.
We say that the emulator/spanner $G'$ has \textbf{stretch} at most $\alpha$ and size $|E'|$.
\end{definition}

We sometimes identify an emulator or a spanner with its set of edges (thus calling $E'$ itself an emulator/spanner).

%For a metric space $(X,d_X)$, a graph $G=(V,E)$ is called an \textbf{$\alpha$-spanner} if it is an $\alpha$-spanner for the complete graph $(X,\binom{X}{2})$, with the weight function $w(x,y)=d_X(x,y)$, for every $x,y\in X$. Note that for a metric space $(V,d_G)$ that is induced by an undirected weighted graph $G=(V,E)$, an $\alpha$-spanner for $(V,d_G)$ is actually an $\alpha$-emulator for $G$.

In this paper we provide results for the stronger notions of \textit{path-reporting} spanners/emulators. The following definitions were introduced in \cite{ES23}.

\begin{definition}[(Pairwise) Path-Reporting Emulators and Spanners] \label{def:PathReportingStructures}
Given an undirected weighted graph $G=(V,E)$ and a subset $\mathcal{P}\subseteq V^2$ of vertex pairs, a \textbf{$\mathcal{P}$-pairwise path-reporting $\alpha$-emulator} with query time $q$ is a pair $(S,D)$, where $S\subseteq\binom{V}{2}$ is called the set of \textbf{underlying edges}, equipped with weights $w(u,v)=d_G(u,v)$ for every $\{u,v\}\in S$, and $D$ is an oracle that receives a query $(u,v)\in V^2$ as an input, and can provide either only a distance estimate $\hat{d}(u,v)$ within time $q$, or a distance estimate $\hat{d}(u,v)$ and a $u$-$v$ path $P_{u,v}\subseteq S$, with weight $w(P_{u,v})=\hat{d}(u,v)$, within time $q+O(|P_{u,v}|)$. The distance estimate must satisfy $d_G(u,v)\leq\hat{d}(u,v)$, and if $(u,v)\in\mathcal{P}$, also
\[\hat{d}(u,v)\leq\alpha\cdot d_G(u,v)~.\]
The size of the path-reporting emulator $(S,D)$ is defined as the maximum between $|S|$ and the storage size of $D$.
If $S\subseteq E$, then $(S,D)$ is called a \textbf{$\mathcal{P}$-pairwise path-reporting $\alpha$-spanner}. In the case that $\mathcal{P}=V^2$ we omit the word ``pairwise", i.e., we call $(S,D)$ a \textbf{path-reporting $\alpha$-emulator}, or \textbf{path-reporting $\alpha$-spanner}.
\end{definition}

Path-reporting emulators and spanners are a more restricted form of classic emulators and spanners. On the other hand, these notions can be viewed as special cases of \textit{distance oracles} and \textit{path-reporting distance oracles}.

\begin{definition} \label{def:DistanceOracles}
    Given an undirected weighted graph $G=(V,E)$, a \textbf{distance oracle} with stretch $\alpha$ is a data structure $D$, such that when given a pair of vertices $(u,v)\in V^2$, called query, $D$ outputs a distance estimate $\hat{d}(u,v)$, that satisfies
    \[d_G(u,v)\leq\hat{d}(u,v)\leq\alpha\cdot d_G(u,v)~.\]
    The maximum time it takes for $D$ to output an estimate is called the query time of $D$. The size of $D$ is defined as the total storage space of the data structure $D$.

    A \textbf{path-reporting distance oracle}, or shortly, a \textbf{PRDO}, is a distance oracle that, given a query $(u,v)\in V^2$, can provide either only a distance estimate $\hat{d}(u,v)$ as above, or a distance estimate $\hat{d}(u,v)$ and a $u$-$v$ path $P_{u,v}\subseteq E$ in $G$ of weight $\hat{d}(u,v)$. The query time of a PRDO $D$ is the minimal value $q$ such that, upon a query, $D$ outputs a distance estimate in $q$ time, and outputs a distance estimate and a corresponding $u$-$v$ path in $q+O(|P_{u,v}|)$ time.
\end{definition}

\subsection{Vertex Separators and Path Separators} \label{sec:SeparatorsPreliminaries}

Below we define various notions of graph \textit{separators}.

\begin{definition} \label{def:VertexSeparator}
Given an $n$-vertex undirected graph $G=(V,E)$, a \textbf{balanced vertex separator}, or shortly, a \textbf{separator}, is a subset $S\subseteq V$ such that each connected component of the graph $G'=G[V\setminus S]$ contains at most $\frac{n}{2}$ vertices.
\end{definition}

A well-known bound, that connects the graph \textit{treewidth} and its separators, is provided below (see, e.g., \cite{BGHK95}, Lemma 6).

\begin{fact} \label{fact:Separators}
Any graph $G=(V,E)$ with treewidth at most $t$ has a separator of size at most $t+1$.
\end{fact}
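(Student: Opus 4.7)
The plan is to use the standard ``centroid of the tree decomposition'' argument. Start with any tree decomposition $T=(V_T,E_T)$ of $G$ of width at most $t$, so that every bag $X\in V_T$ has $|X|\le t+1$. Root $T$ arbitrarily. For each vertex $v\in V$, let $X_v$ denote the topmost (root-closest) bag containing $v$; by property (2) of Definition \ref{def:TreeDecomposition} the bags containing $v$ form a connected subtree of $T$, so $X_v$ is well defined and every bag containing $v$ lies in the subtree of $T$ rooted at $X_v$. Set $w(X)=|\{v\in V:X_v=X\}|$, so that $\sum_X w(X)=n$, and for each node $X$ write $\omega(X)$ for the total $w$-weight of the subtree of $T$ rooted at $X$.

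I next choose the candidate separator by a greedy descent: starting at the root of $T$, while the current node $X$ has a child $Y$ with $\omega(Y)>n/2$ move to $Y$, and otherwise stop. Let $X^{*}$ be the stopping node. By induction along the descent, $\omega(X^{*})>n/2$, while $\omega(c)\le n/2$ for every child $c$ of $X^{*}$. The bag $X^{*}$ satisfies $|X^{*}|\le t+1$ by the width bound, so it remains to argue that $X^{*}$ is a balanced vertex separator of $G$ in the sense of Definition \ref{def:VertexSeparator}.

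To see this, delete $X^{*}$ from both $G$ and $T$. The forest $T\setminus\{X^{*}\}$ splits into one subtree for each child of $X^{*}$, together with, if $X^{*}$ is not the root, one extra ``upper'' component containing all nodes outside the subtree rooted at $X^{*}$. For every $v\in V\setminus X^{*}$, the set $T_v=\{X\in V_T:v\in X\}$ is connected and avoids $X^{*}$, hence lies entirely in one component of $T\setminus\{X^{*}\}$, which I declare to be the \emph{part} of $v$. For any edge $\{u,v\}\in E$ with $u,v\notin X^{*}$, property (3) of Definition \ref{def:TreeDecomposition} furnishes a bag containing both, so $T_u\cap T_v\neq\emptyset$ and $u,v$ belong to the same part; consequently the parts are exactly the vertex sets of the connected components of $G[V\setminus X^{*}]$. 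Finally, the part attached to a child $c$ of $X^{*}$ has size at most $\omega(c)\le n/2$, and the upper part, when it exists, has size at most $n-\omega(X^{*})<n/2$, so every component of $G[V\setminus X^{*}]$ has at most $n/2$ vertices, as required.

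The main obstacle is really the only non-routine step: arguing that no edge of $G$ can cross between two different components of $T\setminus\{X^{*}\}$ after $X^{*}$ is removed. This is precisely where both nontrivial tree-decomposition axioms are used, namely that the bags containing a common vertex form a connected subtree and that every edge of $G$ is covered by some bag. Once this is in place, the weight-based choice of $X^{*}$ makes the balance bound immediate.
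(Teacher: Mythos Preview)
Your argument is correct and is the standard centroid-of-the-tree-decomposition proof. The paper itself does not give a proof of this fact at all; it simply cites \cite{BGHK95} (Lemma~6) and earlier references \cite{GRE84,RS86a}, so you have supplied a self-contained proof where the paper appeals to the literature.

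One small inaccuracy worth noting (though it does not affect the conclusion): you assert that ``the parts are exactly the vertex sets of the connected components of $G[V\setminus X^{*}]$''. What you have actually shown is that no edge of $G[V\setminus X^{*}]$ crosses between two parts, i.e., every connected component lies inside a single part. A part could still be disconnected in $G[V\setminus X^{*}]$, so the parts are in general unions of components, not single components. This is harmless: since each part has at most $n/2$ vertices, so does each component contained in it.
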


%Fact \ref{fact:Separators} is proved, e.g., in \cite{BGHK95} (Lemma 6), but as mentioned in this paper of Bodlaender et al., it has already appeared in several prior works, including \cite{GRE84,RS86a}.

For several families of graphs (e.g., planar or constant treewidth graphs), one can show that not only the graph itself has a small separator, but so does each of its sub-graphs. This motivates the following definition.

\begin{definition} \label{def:RecursiveSeparators}
Given an undirected graph $G=(V,E)$, we say that $G$ is \textbf{$s$-vertex-separable} for some function $s=s(\theta)$, if every sub-graph of $G$ with $\theta$ vertices has a separator of size at most $s(\theta)$.
\end{definition}

Next, we define path separators.
%In Section \ref{sec:PathSeparableSpanSmallStretchTreeCovers}, we consider graphs that admit a special type of separator, formed by the union of a small number of shortest paths. More precisely, we use the definition of \cite{AG06}.

\begin{definition}[Extended\footnote{Only Items (1)-(2) appear in \cite{AG06}.} definitions from \cite{AG06}] \label{def:PathSeparable}
  Suppose that every sub-graph $G'$ of a graph $G$ has a balanced vertex separator of the form $P_0\cup P_1\cup\cdots\cup P_{\ell-1}$, where $P_i$ is a union of $p_i$ shortest paths in the graph $H_i=G'\setminus\bigcup_{j<i}P_j$ (i.e., $G'$ with the vertices of $\bigcup_{j<i}P_j$ removed).
  \begin{enumerate}
      \item We say that $G$ is \textbf{(weakly) $p$-path-separable} if $\sum_{i=0}^{\ell-1}p_i\leq p$.
      \item We say that $G$ is \textbf{strongly $p$-path-separable} if $\ell=1$, and thus the separator $P_0$ consists of at most $p$ shortest paths in $G$.
      \item We say that $G$ is \textbf{(weakly) $(\ell,\pi)$-path-separable} if $p_i\leq\pi$ for every $0\leq i<\ell$.
      %\item We say that $G$ is \textbf{recursively $(\ell,\pi)$-path-separable} if every sub-graph of $G$ is $(\ell,\pi)$-path-separable.
      \item We say that $G$ is \textbf{(weakly) tree-like $(\ell,\pi)$-path-separable} (or \textbf{ (weakly) tree-like $p$-path-separable}) if it is $(\ell,\pi)$-path-separable (or $p$-path-separable), and the paths that form $P_i$ share a common endpoint $r_i\in V$ (and thus $P_i$ is a tree rooted at $r_i$), for every $0\leq i<\ell$.
  \end{enumerate}
\end{definition}

\begin{theorem}  \label{thm:PathSeparableFamilies}
    The following graph families are path-separable.
    \begin{enumerate}
        \item By \cite{LT79,Tho04}, any planar graph is strongly tree-like $3$-path-separable.
        \item By \cite{AG06,GSW25}, any $H$-minor-free graph is both tree-like $p$-path-separable and tree-like $(\ell,\pi)$-path-separable, where $\ell\leq|E(H)|$, and
        $p=\pi=O(RS(H)^2\cdot|E(H)|)$, where $RS(H)$ is the Robertson-Seymour function \cite{RS86b,RS03}. By \cite{GSW25}, $RS(H)=O(|V(H)|^{2300})$.
        \item By \cite{AD96}, any graph with orientable or non-orientable genus at most $g$ is strongly tree-like $(2g+2)$-path-separable.
    \end{enumerate}
\end{theorem}

%In \cite{LT79,Tho04} (see Lemma 2.3 in \cite{Tho04}), it was proved that planar graphs are strongly tree-like $3$-path-separable. More generally, Abraham and Gavoille \cite{AG06} showed that any $H$-minor-free graph is $(\ell,\pi)$-path-separable, for some $\pi=\pi(H)$ and $\ell\leq|E(H)|$, and that its separator can be found within polynomial time. Here, $\pi(H)=RS(H)^2$ is the square of a function presented by Robertson and Seymour \cite{RS86b,RS03}, and stronger bounds of it were proved in \cite{KTW20} and recently in \cite{GSW25}, where the authors showed \textit{polynomial} bounds.

\section{Tree Covers: Definitions and Basic Properties} \label{sec:TreeCoversBasic}

The notion of \textit{tree covers} of a graph plays a crucial role in the construction of a variety of graph metric structures, such as spanners, emulators, distance oracles, distance labeling and routing schemes. The use of tree covers for these purposes appears, both implicitly and explicitly, in many papers, such as \cite{AP92,TZ01,TZ01-spaa,KLMS22}. In each of these results, the definition of a tree cover of a graph $G=(V,E)$ may slightly vary. In particular, they differ on whether the trees are sub-graphs of $G$, or whether their vertex set is the entire $V$ or rather just a subset of $V$. In our paper, we use a generalized definition of tree covers that captures all of these different scenarios. In fact, we first define \textit{forest} covers, rather than tree covers, and also devise a version of forest covers (called \textit{pairwise} forest covers) that satisfies the stretch guarantee only for some of the pairs in $V^2$.

\begin{definition}[(Pairwise) Forest Covers and Tree Covers] \label{def:PairwiseForestCover}
Given an undirected weighted graph $G=(V,E)$, a real number $\alpha\geq1$, and a subset $\mathcal{P}\subseteq V^2$ of vertex pairs, a \textbf{$\mathcal{P}$-pairwise forest cover} with stretch $\alpha$ is a set $\mathcal{F}$ of weighted\footnote{In case of a forest edge that does not exist in the input graph $G$, we allow any positive edge weight, including $\infty$ (this is necessary for graphs that consist of more than one connected component).} forests, where $V(F)\subseteq V$ for every $F\in\mathcal{F}$, such that $d_G(u,v)\leq\min_{F\in\mathcal{F}}d_F(u,v)$ for every $u,v\in V$, and if $(u,v)\in\mathcal{P}$,
\[min_{F\in\mathcal{F}}d_F(u,v)\leq\alpha\cdot d_G(u,v)~.\]

The (maximum) \textbf{overlap} of $\mathcal{F}$ is $\max_{v\in V}|\{F\in\mathcal{F}\;|\;v\in V(F)\}|$, and its \textbf{average overlap} is
\[\frac{1}{|V|}\sum_{v\in V}|\{F\in\mathcal{F}\;|\;v\in V(F)\}|=\frac{1}{|V|}\sum_{F\in\mathcal{F}}|V(F)|~.\]
The sum $\sum_{F\in\mathcal{F}}|V(F)|$, is called the \textbf{covering} of $\mathcal{F}$, and $|\mathcal{F}|$ is its \textbf{size}.

In the case that $\mathcal{P}=V^2$ we omit the word ``pairwise", i.e., we call $\mathcal{F}$ a \textbf{forest cover}.
If for all $F\in\mathcal{F}$, the forest $F$ is a tree, we call $\mathcal{F}$ a \textbf{(pairwise) tree cover}.

\end{definition}

Note that the forests of a pairwise forest cover do not have to be sub-graphs of the given graph $G=(V,E)$. However, an important special case is when they are sub-graphs, i.e., $E(F)\subseteq E$ for every $F\in\mathcal{F}$. We call such a forest cover a \textbf{spanning (pairwise) forest cover}, or \textbf{spanning (pairwise) tree cover} when the forests are trees.

We say that a tree cover is a \textbf{Ramsey tree cover} if every vertex $u\in V$ has a \textit{home tree} $T=home(u)\in\mathcal{T}$, such that for every $v\in V$, 
\[d_T(u,v)\leq\alpha\cdot d_G(u,v)~,\]
where $\alpha$ is the stretch of the tree cover.

\begin{comment}
\begin{definition}[Tree Cover] \label{def:TreeCover}
Given an undirected weighted graph $G=(V,E)$, and a real number $\alpha\geq1$, a \textbf{tree cover} with stretch $\alpha$ is a set $\mathcal{T}$ of weighted trees such that $V(T)\subseteq V$ for every $T\in\mathcal{T}$, and for every $u,v\in V$,
\[d_G(u,v)\leq\min_{T\in\mathcal{T}}d_T(u,v)\leq\alpha\cdot d_G(u,v)~.\]

The (maximum) \textbf{overlap} of $\mathcal{T}$ is defined as $\max_{v\in V}|\{T\in\mathcal{T}\;|\;v\in V(T)\}|$, and its \textbf{average overlap} is
\[\frac{1}{|V|}\sum_{v\in V}|\{T\in\mathcal{T}\;|\;v\in V(T)\}|=\frac{1}{|V|}\sum_{T\in\mathcal{T}}|V(T)|~.\]

The sum $\sum_{T\in\mathcal{T}}|V(T)|$, is called the \textbf{total size} of the tree cover $\mathcal{T}$.

A tree cover for a metric space $(X,d)$ is a tree cover for the complete graph $(X,\binom{X}{2})$, with weights $w(x,y)=d(x,y)$ for every $x,y\in X$.

\end{definition}
\end{comment}

We emphasize that every tree cover is also a forest cover with the same properties, since a tree is a special case of a forest. In the other direction, a forest cover can be perceived as a tree cover, possibly with a larger number of trees. Observation \ref{obs:FromForestCoverToTreeCover} below elaborates on these connections between forest covers and tree covers.

\begin{observation}[Gluing Trees] \label{obs:FromForestCoverToTreeCover}
    Let $\mathcal{F}$ be a $\mathcal{P}$-pairwise forest cover of $G=(V,E)$, where $\mathcal{P}\subseteq V^2$. The set $\mathcal{T}$ of all trees in the forests of $\mathcal{F}$ is a $\mathcal{P}$-pairwise tree cover with the same stretch, overlap and average overlap as $\mathcal{F}$ (but not necessarily with the same size). If $\mathcal{F}$ is spanning, so is $\mathcal{T}$.

    In addition, there is a $\mathcal{P}$-pairwise tree cover $\bar{\mathcal{T}}$, which is not necessarily spanning (even if $\mathcal{F}$ is spanning), with equal or smaller stretch, overlap and average overlap, such that $|\bar{\mathcal{T}}|=|\mathcal{F}|$.
    
    Lastly, if $G$ is a connected graph and $\mathcal{F}$ is spanning, there is a spanning $\mathcal{P}$-pairwise tree cover $\bar{\mathcal{T}}_{span}$ of $G$ with $|\bar{\mathcal{T}}_{span}|=|\mathcal{F}|$ and equal or smaller stretch as $\mathcal{F}$ (with no guarantee on the overlap).
\end{observation}

\begin{proof}

Consider the tree cover $\mathcal{T}$ that is the collection of all trees in all forests of $\mathcal{F}$. For every $u,v\in V$ and every forest $F\in\mathcal{F}$ we have $d_F(u,v)=\min_{\text{tree }T\text{ in }F}d_T(u,v)$, and therefore $\mathcal{T}$ has the same stretch as of $\mathcal{F}$. Moreover, since for every $v\in V$ at most one of the trees in a forest $F\in\mathcal{F}$ contains $v$, we get 
\[|\{T\in\mathcal{T}\;|\;v\in V(T)\}|=|\{F\in\mathcal{F}\;|\;v\in V(F)\}|~.\] 
Thus, the overlap and average overlap of $\mathcal{T}$ are the same as of $\mathcal{F}$.

Next, construct a tree cover $\bar{\mathcal{T}}$ as follows. Consider a forest $F\in\mathcal{F}$, consisting of the trees $T^0,T^1,...,T^f$, and let $v_i$ be an arbitrary vertex in $T^i$, for every $0\leq i\leq f$. Then, for every $i=1,2,...,f$, add to $F$ the edge $\{v_0,v_i\}$ with weight $d_G(v_0,v_i)$. Denote the resulting tree by $T_F$, and note that for every $u,v\in V$ we have $d_G(u,v)\leq d_{T_F}(u,v)\leq d_F(u,v)$. Thus, the stretch of the tree cover $\bar{\mathcal{T}}=\{T_F\}_{F\in\mathcal{F}}$ is at most the stretch of $\mathcal{F}$. Moreover, note that $V(T_F)=V(F)$ for every $F\in\mathcal{F}$, and therefore $\bar{\mathcal{T}}$ has the same overlap and average overlap as $\mathcal{F}$. In addition, we clearly have $|\bar{\mathcal{T}}|=|\mathcal{F}|$, by the definition of $\bar{\mathcal{T}}$.

Suppose now that $G$ is connected. Given a forest $F\in\mathcal{F}$, we now define the tree $T_F$ by iteratively adding edges from $E$ to $F$, as long as no cycles are formed. The resulting graph $T_F$ has no cycles and is connected -- since for every edge $\{x,y\}$ on every path in $G$, either $\{x,y\}$ is in $T_F$ or there is an $x$-$y$ path in $T_F$. Thus, $T_F$ is a tree such that $V(T_F)\subseteq V$ and $E(F)\subseteq E(T_F)\subseteq E$. We define a tree cover $\bar{\mathcal{T}}_{span}=\{T_F\}_{F\in\mathcal{F}}$, which clearly satisfies $|\bar{\mathcal{T}}_{span}|=|\mathcal{F}|$. Note that $d_G(u,v)\leq d_{T_F}(u,v)\leq d_F(u,v)$, for every $u,v\in V$ and $F\in\mathcal{F}$. This shows that the stretch of $\mathcal{T}$ is bounded by the stretch of $\mathcal{F}$.

\end{proof}

The next lemma argues that forest covers for disjoint vertex sets can be merged into a single forest cover.

\begin{lemma} \label{lemma:UnifyDisjointForestCovers}
    Let $G=(V,E)$ be a graph with connected components $C_1,C_2,C_3,...$, and suppose that every $C_j$ admits a (spanning) forest cover $\mathcal{F}_j$ with at most $K$ forests, stretch at most $\alpha$, 
    overlap at most $t$ and average overlap at most $s$. Then, $G$ admits a (spanning) forest cover $\tilde{\mathcal{F}}$ with at most $K$ forests, %stretch at most $\alpha$, 
    overlap at most $t$ and average overlap at most $s$, such that for every $j$, if $u,v\in V$ are in $C_j$, then $\min_{F\in\tilde{\mathcal{F}}}d_F(u,v)=\min_{F\in\mathcal{F}_j}d_F(u,v)$. In particular, the stretch of $\tilde{\mathcal{F}}$ is at most $\alpha$.
\end{lemma}

\begin{proof}

For every $1\leq h\leq K$, let $F^h_j$ be the $h$-th forest of $\mathcal{F}_j$, given some arbitrary ordering of these forests (if there are less than $h$ forests in $\mathcal{F}_j$, define $F^h_j=\emptyset$). Since for a fixed $h$, the forests $\{F^h_j\}_j$ are pairwise disjoint, their union $\tilde{F}^h=\bigcup_{j}F^h_j$ is still a (spanning) forest. We now denote 
\[\tilde{\mathcal{F}}=\left\{\tilde{F}^h\;|\;1\leq h\leq K\right\}~.\]
This is a collection of at most $K$ forests in $G$. Since the components $\{C_j\}$ are pairwise disjoint, it is immediate to verify that the overlap and average overlap of $\tilde{\mathcal{F}}$ are still at most $t$ and $s$, respectively.

If $u,v\in V$ are in $C_j$, then $d_{\tilde{F}^h}(u,v)=d_{F^h_j}(u,v)$ for every $1\leq h\leq K$, as $F^h_j$ is the only forest in $\tilde{F}^h$ that is in $C_j$ (if $F^h_j$ does not contain $u,v$, then $d_{\tilde{F}^h}(u,v)=d_{F^h_j}(u,v)=\infty$). Thus,
\[\min_{F\in\tilde{\mathcal{F}}}d_F(u,v)=\min_{1\leq h\leq K}d_{\tilde{F}^h}(u,v)=\min_{1\leq h\leq K}d_{F^h_j}(u,v)=\min_{F\in\mathcal{F}_j}d_F(u,v)~.\]
    
\end{proof}

As a variation of (non-spanning) tree covers, we also consider the following notion of \textit{HST covers}, where the input graph is covered by a collection of ultrametrics, that can be thought of as HSTs (see Definition \ref{def:HST}).

\begin{definition}[(pairwise) HST Cover] \label{def:HSTCover}
Given an undirected weighted graph $G=(V,E)$, a real number $\alpha\geq1$, and a subset $\mathcal{P}\subseteq V^2$ of vertex pairs, a \textbf{$\mathcal{P}$-pairwise HST cover} with stretch $\alpha$ is a set $\mathcal{T}$ of HSTs, each $T\in\mathcal{T}$ with a set $L(T)\subseteq V$ of leaves, such that $d_G(u,v)\leq\min_{T\in\mathcal{T}}\rho_T(u,v)$ for every $u,v\in V$, and if $(u,v)\in\mathcal{P}$, then
\[\min_{T\in\mathcal{T}}\rho_T(u,v)\leq\alpha\cdot d_G(u,v)~.\]

The (maximum) \textbf{overlap} of $\mathcal{T}$ is $\max_{v\in V}|\{T\in\mathcal{T}\;|\;v\in L(T)\}|$, its \textbf{average overlap} is $\frac{1}{|V|}\sum_{T\in\mathcal{T}}|L(T)|$, and its \textbf{covering} is $\sum_{T\in\mathcal{T}}|L(T)|$.
If $\mathcal{P}=V^2$ we call $\mathcal{T}$ an \textbf{HST cover}.
\end{definition}

Note that using Lemma \ref{lemma:FromUMToTree} (and recalling that every HST represents an ultrametric), it is possible to convert any HST cover with stretch $\alpha$ into a (not necessarily spanning) tree cover with stretch $8\alpha$, and with the same maximum and average overlap.

\section{Metric Ramsey Theorems for Trees} \label{sec:RamseyThms}

In our constructions of spanning tree covers, we use a result by \cite{ACEFN20}, which is described in the following theorem.

\begin{theorem}[\cite{ACEFN20}, Theorem 6] \label{thm:RamseyTree}
Let $G=(V,E)$ be a connected undirected weighted graph, let $A\subseteq V$ be a subset of size $t$, and let $k\geq1$ be some parameter. There exists a subset $S\subseteq A$ of size at least $t^{1-\frac{1}{k}}$, and a spanning tree $T$ of $G$, such that for every $u\in S$ and $v\in V$,
\[d_T(u,v)\leq\hat{c}\cdot k\log\log(t+4)\cdot d_G(u,v)~,\footnote{The stretch expression in Theorem \ref{thm:RamseyTree} is originally presented as $O(k\log\log t)$. However, here we specify the leading constant $\hat{c}$, and we also replace $t$ by $t+4$, so that the theorem holds also for $t\leq2$ (otherwise, for $0\leq t\leq2$, $\log\log t$ is either zero or undefined).}\]
for some universal constant $\hat{c}$.
\end{theorem}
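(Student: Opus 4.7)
The plan is to adapt the iterative Ramsey partition framework of Mendel--Naor \cite{MN06} to the spanning setting, following the route of \cite{ACEFN20}. I would work at geometric scales $\rho_i=2^i\rho_0$, where $\rho_0$ is the minimum positive pairwise $G$-distance, and at each scale $i$ construct a randomized partition of $V$ into clusters of $G$-diameter $O(\rho_i)$ via a CKR-style ball-growing process whose truncated-exponential radius parameter is $\Theta(k\rho_i)$. The partitions form a laminar hierarchy, every scale-$(i+1)$ cluster being a union of scale-$i$ clusters. A spanning tree $T$ is then assembled from this hierarchy: in every cluster at every scale I choose a center vertex, attach each non-center of a scale-$0$ cluster to its center via a shortest $G$-path inside the cluster, and connect each scale-$i$ center to the center of its parent cluster at scale $i+1$ by another shortest $G$-path. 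Taking any spanning tree of the resulting connected sub-graph of $G$ yields $T$.

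To produce the Ramsey subset $S\subseteq A$, I would apply the Mendel--Naor potential-function argument, but restricted to $A$. Set $\ell=\log\log(t+2)$ and call a scale $i$ \emph{relevant} for $a\in A$ if $|B_G(a,\rho_i)\cap A|$ lies between two consecutive powers of $t^{1/\ell}$ -- there are at most $\ell$ such scales. Call $a$ \emph{surviving} if no relevant scale has its partition separating $a$ from any $v\in V$ with $d_G(a,v)<\rho_i/(\hat{c}'k)$. The CKR cut-probability bound, combined with the density-sensitive choice of scales, shows the probability of surviving is at least $t^{-1/k}$, so in expectation at least $t^{1-1/k}$ vertices of $A$ survive; fixing such a realization gives $S$. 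For any $u\in S$ and $v\in V$, choose the scale $i$ with $\rho_i\approx d_G(u,v)$: by the surviving property, $u$ and $v$ lie in a common cluster at scale $i+O(\ell)$, whose $G$-diameter is $O(k\ell\cdot d_G(u,v))$, and the $u$-to-$v$ path in $T$ stays inside that cluster, giving stretch $O(k\log\log(t+2))$.

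The main obstacle is simultaneously meeting three demands: (i) each scale contributes only an $O(k)$ factor to cluster diameter, (ii) the number of relevant scales per vertex stays at $O(\log\log t)$ rather than the naive logarithm of the aspect ratio, and (iii) every inter-cluster hop is realized by an actual edge-path in $G$ so that $T$ is a sub-graph of $G$. The density-sensitive scheduling of Mendel--Naor handles (i) and (ii) in the purely metric setting; the spanning adaptation of \cite{ACEFN20} forces us to replace every ``virtual'' inter-cluster edge by a shortest $G$-path and to absorb the resulting detour into the stretch constants. Arguing that each cluster boundary is spanned by a $G$-path whose length matches the metric bound up to a constant factor -- and that the spanning constraint does not drop the Ramsey set size below $t^{1-1/k}$ -- is the delicate technical heart of the argument and the source of the universal constant $\hat{c}$.
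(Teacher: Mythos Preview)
This theorem is not proved in the present paper: it is quoted verbatim as Theorem~6 of \cite{ACEFN20} and used as a black box. There is therefore no ``paper's own proof'' to compare your proposal against.

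As for whether your sketch is a viable route to the result itself, it is in the right spirit---\cite{ACEFN20} indeed adapt the Mendel--Naor hierarchical partition scheme and potential-function analysis to the spanning setting---but several steps as you describe them would not go through without substantial modification. First, your scale-scheduling claim (``at most $\ell=\log\log(t+2)$ relevant scales per vertex of $A$'') is not the Mendel--Naor mechanism: in their argument the $\log\log$ factor does not come from counting growth scales of $|B(a,\rho_i)\cap A|$ between powers of $t^{1/\ell}$, but rather from a more delicate induction on the hierarchy depth together with a density-sensitive choice of padding parameter at each level. Second, the spanning-tree construction you describe---take shortest $G$-paths between cluster centers at consecutive scales and then extract ``any spanning tree of the resulting connected sub-graph''---does not by itself guarantee that the $u$--$v$ tree path stays inside the common ancestor cluster; once you pass to an arbitrary spanning subtree, the structural relation between tree paths and the cluster hierarchy can be destroyed. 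The actual \cite{ACEFN20} construction (their ``spanning clan embedding'' and Petal Decomposition machinery) is considerably more careful about how the tree is built so that tree distances are controlled level by level. Your outline captures the flavor but not the mechanism; to fill it in you would need to consult \cite{ACEFN20} directly.
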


Theorem \ref{thm:RamseyTree} is true for any \textit{connected} graph $G$. If $G$ is not connected, the spanning tree $T$ is replaced by a spanning \textit{forest} of $G$, as we prove in the next lemma. Here, $\hat{c}$ is the same constant from Theorem \ref{thm:RamseyTree}.

\begin{lemma} \label{lemma:RamseyForest}
Let $G=(V,E)$ be an undirected weighted graph, not necessarily connected, let $A\subseteq V$ be a subset of size $t$, and let $k\geq1$ be some parameter. There exists a subset $S\subseteq A$ of size at least $t^{1-\frac{1}{k}}$, and a spanning forest $F=(V,E_F)$ of $G$, such that for every $u\in S$ and $v\in V$,
\begin{equation} \label{eq:RamseyForest}
    d_F(u,v)\leq\hat{c}\cdot k\log\log(t+4)\cdot d_G(u,v)~.
\end{equation}
\end{lemma}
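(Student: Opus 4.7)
The plan is to reduce to the connected case by decomposing $G$ into its connected components and applying Theorem \ref{thm:RamseyTree} to each one separately, then gluing the results together.

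First, I would let $G_1, G_2, \ldots, G_p$ be the connected components of $G$, and set $A_i = A \cap V(G_i)$ and $t_i = |A_i|$, so that $\sum_{i=1}^p t_i = t$. For each component $G_i$ with $t_i \geq 1$, I invoke Theorem \ref{thm:RamseyTree} on the connected graph $G_i$ with the subset $A_i$ and parameter $k$, obtaining a subset $S_i \subseteq A_i$ of size at least $t_i^{1-1/k}$ together with a spanning tree $T_i$ of $G_i$ such that $d_{T_i}(u,v) \leq \hat{c} \cdot k \log\log(t_i+2) \cdot d_{G_i}(u,v)$ for every $u \in S_i$ and $v \in V(G_i)$. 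For components with $A_i = \emptyset$, I simply pick any spanning tree $T_i$ and set $S_i = \emptyset$. I then define $S = \bigcup_{i=1}^p S_i$ and $F = (V, \bigcup_{i=1}^p E(T_i))$, which is clearly a spanning forest of $G$.

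The key numerical step is to verify $|S| \geq t^{1-1/k}$. Since the function $x \mapsto x^{1-1/k}$ is concave and vanishes at $0$, it is subadditive on $\mathbb{R}_{\geq 0}$, i.e., $\sum_i t_i^{1-1/k} \geq \bigl(\sum_i t_i\bigr)^{1-1/k} = t^{1-1/k}$, and hence $|S| = \sum_i |S_i| \geq \sum_i t_i^{1-1/k} \geq t^{1-1/k}$, as required.

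It remains to check the stretch bound (\ref{eq:RamseyForest}). Fix $u \in S$ and $v \in V$, and let $i$ be such that $u \in S_i \subseteq V(G_i)$. If $v \in V(G_i)$, then $u$ and $v$ lie in the same component, $d_G(u,v) = d_{G_i}(u,v)$ and $d_F(u,v) = d_{T_i}(u,v)$, so Theorem \ref{thm:RamseyTree} combined with the monotonicity $\log\log(t_i+2) \leq \log\log(t+2)$ yields the desired inequality. If $v$ lies in a different component, then under the standard convention $d_G(u,v) = \infty$ (and $d_F(u,v) = \infty$), the inequality is trivial. The main (and only) subtlety is the subadditivity observation for the sizes of the Ramsey subsets across components; everything else is a routine bookkeeping of the connected case.
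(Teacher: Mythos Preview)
Your proposal is correct and follows essentially the same approach as the paper's proof: decompose into connected components, apply Theorem~\ref{thm:RamseyTree} to each, take the union of the resulting subsets and spanning trees, and use the subadditivity of $x\mapsto x^{1-1/k}$ to bound $|S|$. The paper states the subadditivity inequality directly rather than invoking concavity, and it does not single out the case $A_i=\emptyset$, but otherwise the arguments coincide.
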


\begin{proof}

Denote the connected components of $G$ by $C_1,C_2,...,C_\ell$, and by $n_i$ the number of vertices in $C_i$. In addition, let $A_i$ be the intersection of $A$ with the component $C_i$, and let $t_i=|A_i|$. We apply Theorem \ref{thm:RamseyTree} on every connected component $C_i$, and obtain, for every $i=1,2,...,\ell$, a subset $S_i\subseteq A_i$ of size at least $t_i^{1-\frac{1}{k}}$ and a spanning tree $T_i$ of $C_i$, such that for every $u\in S_i$ and $v$ in $C_i$,
\[d_{T_i}(u,v)\leq\hat{c}\cdot k\log\log(t_i+4)\cdot d_{C_i}(u,v)~.\]

We claim that the union $S=\bigcup_{i=1}^\ell S_i$, and the forest $F$ which is the union of $\{T_i\}_{i=1}^\ell$, are the desired subset and forest. Indeed, the size of $S$ is
\[|S|=\sum_{i=1}^\ell |S_i|\geq\sum_{i=1}^\ell t_i^{1-\frac{1}{k}}\geq\left(\sum_{i=1}^\ell t_i\right)^{1-\frac{1}{k}}=t^{1-\frac{1}{k}}~,\]
where we used here the fact that $x^\gamma+y^\gamma\geq(x+y)^\gamma$, for any numbers $x,y\geq0$ and $\gamma\leq1$. In addition, for every $u\in S$ and $v\in V$, if $u,v$ are not in the same connected component $C_i$, then $d_F(u,v)=\infty=d_G(u,v)$. Therefore, Inequality (\ref{eq:RamseyForest}) is satisfied. Otherwise, suppose that $u,v$ are in the same component $C_i$. Then by Theorem \ref{thm:RamseyTree}, we have
\[d_F(u,v)\leq d_{T_i}(u,v)\leq\hat{c}\cdot k\log\log(t_i+4)\cdot d_{C_i}(u,v)\leq\hat{c}\cdot k\log\log(t+4)\cdot d_G(u,v)~.\]

\end{proof}

Next, we state an analogous version of Theorem \ref{thm:RamseyTree}, for non-spanning forests and for HSTs. Its proof uses a simple combination of Theorem 1.2 from \cite{NT12} and Theorem 4.1 from \cite{MN06}. For convenience, we explicitly state these theorems here.

\begin{theorem}[Theorem 1.2 from \cite{NT12}] \label{thm:RamseyUM}
For every $n$-point metric space and $k\geq1$, there exists a subset of size $n^{1-\frac{1}{k}}$ that can be embedded into an ultrametric with distortion $2ek$.\footnote{A weaker previous bound of this form was proved by \cite{MN06}.}
\end{theorem}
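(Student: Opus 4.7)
The plan is to prove this metric Ramsey theorem via a recursive ball-carving construction, following the strategy of Mendel and Naor as sharpened in \cite{ACEFN20,NT12}. I would maintain, by induction on $n = |X|$, a subset $Y \subseteq X$ of size at least $n^{1-1/k}$ and an ultrametric $\rho$ on $Y$ with $d(x,y) \leq \rho(x,y) \leq (8k-2)\, d(x,y)$ for all $x,y \in Y$. The base case $n \leq 1$ is trivial, with $\rho \equiv 0$; by Fact \ref{fact:UM<->HST}, I can think of the ultrametric as being built as an HST bottom-up.

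The core of the inductive step is a ``growth'' lemma. Let $\Delta$ be (essentially) the diameter of $X$, and pick a candidate center $x_0 \in X$ (for instance, a point minimizing the volume of some carefully-chosen ball). The claim I would prove is that one can find a radius $r \in [\Delta/4, \Delta/2]$ such that the annulus $B(x_0, r(1+2/(4k-1))) \setminus B(x_0, r)$ is ``thin'', in the sense that $|B(x_0, r(1+2/(4k-1)))| \leq |B(x_0,r)| \cdot (n/|B(x_0,r)|)^{1/k}$. The reason such an $r$ must exist is an averaging/chaining argument: if no such scale existed, then iterating the failure across logarithmically many scales would force $|B(x_0,\Delta/2)| > n$, which is impossible.

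Given such an $r$, let $m = |B(x_0,r)|$, recurse on the ball $B = B(x_0,r)$ with parameter $k$ (producing $Y_B \subseteq B$ of size at least $m^{1-1/k}$ with an ultrametric of distortion $8k-2$), and recurse on $X \setminus B(x_0, r(1+2/(4k-1)))$ (throwing away the thin annulus entirely), producing $Y_{\mathrm{out}}$ of size at least $(n - m \cdot (n/m)^{1/k})^{1-1/k}$. Glue these two ultrametrics under a common root whose label is set to $(8k-2)\cdot\Delta/2$, so that $\rho(u,v) = (8k-2)\Delta/2$ whenever $u \in Y_B$ and $v \in Y_{\mathrm{out}}$. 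The size bound $|Y_B|+|Y_{\mathrm{out}}| \geq n^{1-1/k}$ follows from the thin-annulus inequality and elementary estimates on the map $x \mapsto x^{1-1/k}$.

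For the distortion, intra-part pairs are handled by induction. For a cross pair $u \in Y_B$, $v \in Y_{\mathrm{out}}$, the non-contraction $d(u,v) \leq \rho(u,v)$ is immediate because $d(u,v) \leq \Delta \leq \rho(u,v)$. The non-expansion $\rho(u,v) \leq (8k-2)\, d(u,v)$ is exactly where the thin annulus is used: since $v \notin B(x_0, r(1+2/(4k-1)))$ while $u \in B(x_0,r)$, we have $d(u,v) \geq r \cdot 2/(4k-1) \geq \Delta/(2(4k-1)) = \Delta/(8k-2) \cdot 1$, which matches the root label $(8k-2)\Delta/2$ up to a factor of $8k-2$. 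The main obstacle is calibrating the thin-annulus parameter $2/(4k-1)$ against the diameter scale so that the same constant $8k-2$ simultaneously (i) controls the cross-distance blow-up, (ii) permits the growth lemma to succeed at some scale, and (iii) produces a recursion whose size losses telescope to $n^{1-1/k}$; this is the quantitative improvement over the $128k$ constant of the original \cite{MN06} argument.
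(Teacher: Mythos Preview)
This theorem is merely cited from \cite{ACEFN20} (as Theorem~5 there); the present paper gives no proof of its own, so there is nothing to compare your argument against. That said, your sketch does follow the standard Mendel--Naor ball-carving strategy, and the overall architecture (growth lemma, discard a thin annulus, recurse on the inside and the far outside, glue under a new HST root) is the right one.

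However, your distortion calculation for cross pairs is miscalibrated. You set the root label to $(8k-2)\Delta/2$ and then lower-bound the cross distance by $d(u,v)\geq r\cdot\frac{2}{4k-1}\geq \frac{\Delta}{4}\cdot\frac{2}{4k-1}=\frac{\Delta}{8k-2}$. The resulting ratio is
\[
\frac{\rho(u,v)}{d(u,v)}\ \leq\ \frac{(8k-2)\Delta/2}{\Delta/(8k-2)}\ =\ \frac{(8k-2)^2}{2},
\]
not $8k-2$ as you claim. The fix is simply to set the root label to $\Delta$ (the diameter): then non-contraction $d(u,v)\leq\Delta=\rho(u,v)$ is immediate, and non-expansion $\rho(u,v)\leq(8k-2)d(u,v)$ follows exactly from your lower bound $d(u,v)\geq\Delta/(8k-2)$. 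The HST monotonicity (root label dominates children's labels) is preserved because the recursive calls on $B$ and on the outside each have diameter at most $\Delta$. With this correction your plan is sound; the size inequality $m^{1-1/k}+(n-m(n/m)^{1/k})^{1-1/k}\geq n^{1-1/k}$ is indeed the elementary estimate used in \cite{MN06,ACEFN20}, though you should check that your precise form of the thin-annulus condition actually delivers it.
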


\begin{theorem}[Theorem 4.1 from \cite{MN06}] \label{thm:ExtendingUM}
Let $(X,d)$ be a finite metric space, and $\alpha\geq1$. Fix a subset $\emptyset\neq Y\subseteq X$, and assume that there exists an ultrametric $\rho$ on $Y$ such that for every $x,y\in Y$, $d(x,y)\leq\rho(x,y)\leq\alpha\cdot d(x,y)$. Then there exists an ultrametric $\tilde{\rho}$ defined on all $X$ such that for every $x,y\in X$ we have $d(x,y)\leq\tilde{\rho}(x,y)$, and if $x\in X$ and $y\in Y$ then $\tilde{\rho}(x,y)\leq6\alpha\cdot d(x,y)$.
\end{theorem}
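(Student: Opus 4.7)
The plan is to define $\tilde\rho$ explicitly on $X$ by ``snapping'' each point of $X$ to its nearest neighbor in $Y$, then adding a small additive term that captures the approximation error. For each $x\in X$ fix a nearest neighbor $y_x\in Y$ with $r_x=d(x,y_x)$ (taking $y_x=x$, $r_x=0$ when $x\in Y$), and for $x\ne z$ set
\[
\hat\rho(x,z)\;:=\;\max\bigl\{\rho(y_x,y_z),\;\lambda r_x,\;\lambda r_z\bigr\},
\]
for a parameter $\lambda$ to be tuned, with $\hat\rho(x,x)=0$. The final extension will be $\tilde\rho:=c\,\hat\rho$ for a small rescaling $c\ge 1$ chosen below.

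First I would check that $\hat\rho$ (hence $\tilde\rho$) is an ultrametric. For any $x,y,z$, the ``$\rho$-part'' $\rho(y_x,y_z)$ is bounded by $\max(\rho(y_x,y_y),\rho(y_y,y_z))\le\max(\hat\rho(x,y),\hat\rho(y,z))$ since $\rho$ is ultrametric on $Y$, while $\lambda r_x\le\hat\rho(x,y)$ and $\lambda r_z\le\hat\rho(y,z)$ by construction, so the ultrametric three-point inequality follows termwise. Scaling by $c$ preserves this.

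Next, for $x\in X$ and $y\in Y$: since $y_x$ is a nearest point of $Y$ to $x$, we have $r_x\le d(x,y)$, so $d(y_x,y)\le r_x+d(x,y)\le 2d(x,y)$, yielding $\rho(y_x,y)\le\alpha d(y_x,y)\le 2\alpha d(x,y)$; also $\lambda r_x\le\lambda d(x,y)$. Taking $\lambda:=2\alpha$ gives $\hat\rho(x,y)\le 2\alpha\,d(x,y)$. For the lower bound at arbitrary $x,z\in X$, the reverse triangle inequality gives $d(y_x,y_z)\ge d(x,z)-r_x-r_z$, so
\[
\hat\rho(x,z)\;\ge\;d(x,z)-r_x-r_z\;\ge\;d(x,z)-\tfrac{2\hat\rho(x,z)}{\lambda},
\]
which rearranges to $\hat\rho(x,z)\ge\tfrac{\lambda}{\lambda+2}d(x,z)$. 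With $\lambda=2\alpha$, choosing $c:=\tfrac{\lambda+2}{\lambda}=1+\tfrac{1}{\alpha}\le 2$ (using $\alpha\ge 1$) yields $\tilde\rho(x,z)\ge d(x,z)$ on all of $X\times X$, and $\tilde\rho(x,y)\le c\cdot 2\alpha\,d(x,y)\le 4\alpha\,d(x,y)\le 6\alpha\,d(x,y)$ for $y\in Y$, as required.

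The main obstacle is balancing the parameters: pushing $\lambda$ up strengthens the lower bound but (after rescaling by $c=1+2/\lambda$) only helps the upper bound if we also control $\rho(y_x,y)$; pushing $\lambda$ down tightens the upper bound but forces a larger $c$. Choosing $\lambda$ of the same order as $\alpha$ is what matches the two sides and produces an $O(\alpha)$ stretch. A second minor point is that on $Y$ the construction gives $c\cdot\rho$ rather than $\rho$ itself, but this is permitted by the statement, which only constrains $\tilde\rho$ against $d$.
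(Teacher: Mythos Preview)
Your argument is correct. The ultrametric property of $\hat\rho$ is verified termwise exactly as you say; the upper bound for $y\in Y$ uses only $r_x\le d(x,y)$ and $\rho(y_x,y)\le\alpha d(y_x,y)\le 2\alpha d(x,y)$; and the lower bound follows from $\hat\rho(x,z)\ge \rho(y_x,y_z)\ge d(y_x,y_z)\ge d(x,z)-r_x-r_z$ combined with $r_x,r_z\le\hat\rho(x,z)/\lambda$. The rearrangement to $\hat\rho(x,z)\ge\frac{\lambda}{\lambda+2}d(x,z)$ is valid regardless of the sign of $d(x,z)-r_x-r_z$, since every intermediate inequality holds unconditionally. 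With $\lambda=2\alpha$ and $c=1+1/\alpha\le 2$ you in fact obtain $\tilde\rho(x,y)\le(2\alpha+2)d(x,y)\le 4\alpha\,d(x,y)$, which is stronger than the stated $6\alpha$.

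As for comparison with the paper: the paper does not prove this statement at all. It is quoted verbatim as Theorem~4.1 of Mendel and Naor~\cite{MN06} and used as a black box in the proof of Theorem~\ref{thm:RamseyNotSubTree}. Your nearest-neighbor-plus-radius construction is essentially the same idea as the original argument in~\cite{MN06}, which also attaches each $x\in X$ to its closest point in $Y$ and balances the two error terms; so there is nothing to contrast on the level of method, only the observation that your parameter choice yields a slightly better constant.
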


Using Theorems \ref{thm:RamseyUM} and \ref{thm:ExtendingUM}, we generalize Theorem \ref{thm:RamseyUM} into an analogous version of Theorem \ref{thm:RamseyTree}. Recall that for an HST $T$ with labels $\ell$ we denote $\rho_T(u,v)=\ell(LCA_T(u,v))$.

\begin{theorem} \label{thm:RamseyNotSubTree}
Let $G=(V,E)$ be an undirected weighted graph, not necessarily connected. Let $A\subseteq V$ be a subset of size $t$, and let $k\geq1$ be some parameter. There exists a subset $S\subseteq A$ of size at least $t^{1-\frac{1}{k}}$, an HST $T_1=(V_1,E_1)$ with leaves $V\subseteq V_1$, and a tree $T_2=(V,E_2)$ (not necessarily $E_2\subseteq E$), such that
\begin{enumerate}
    \item $d_G(u,v)\leq\rho_{T_1}(u,v)$ and $d_G(u,v)\leq d_{T_2}(u,v)$ for every $u,v\in V$, and
    \item $\rho_{T_1}(u,v)\leq12ek\cdot d_G(u,v)$ and $d_{T_2}(u,v)\leq(32ek+1)d_G(u,v)$ for every $u\in S$ and $v\in V$.
\end{enumerate}
\end{theorem}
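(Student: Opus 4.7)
The plan is to produce a single ``Ramsey core'' on $A$ and then use it in two different ways to build $T_1$ and $T_2$. Starting point: apply Theorem \ref{thm:RamseyUM} to the finite sub-metric $(A, d_G|_A)$ of size $t$, obtaining a subset $S \subseteq A$ with $|S| \geq t^{1-1/k}$ together with an ultrametric $\rho$ on $S$ satisfying $d_G(u,v) \leq \rho(u,v) \leq (8k-2)\, d_G(u,v)$ for all $u,v \in S$. For $T_1$, I would then apply Theorem \ref{thm:ExtendingUM} with ambient space $(V, d_G)$, $Y = S$ and $\alpha = 8k-2$, obtaining an ultrametric $\tilde{\rho}$ on $V$ with $d_G \leq \tilde{\rho}$ everywhere, and $\tilde{\rho}(u,v) \leq 6(8k-2)\, d_G(u,v) = (48k-12)\, d_G(u,v)$ whenever $u \in S$ and $v \in V$. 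Realizing $\tilde{\rho}$ as an HST via Fact \ref{fact:UM<->HST} immediately gives $T_1$ with leaf set $V$ and both required inequalities.

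For $T_2$, I would \emph{not} extend the ultrametric first and then convert to a tree (that would incur an additional factor $8$ from Lemma \ref{lemma:FromUMToTree} on top of the $48k-12$ bound, giving a much worse constant); instead, I would convert first and extend second. Namely, apply Lemma \ref{lemma:FromUMToTree} directly to $\rho$ on $S$ to get a tree $T' = (S, E')$ with $d_G(u,v) \leq d_{T'}(u,v) \leq 8\rho(u,v) \leq (64k-16)\, d_G(u,v)$ for all $u,v \in S$. Then form $T_2$ by attaching every $v \in V \setminus S$ as a pendant of $\pi(v) := \arg\min_{s \in S} d_G(v, s)$ via a single edge of weight $d_G(v, \pi(v))$. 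The non-contraction inequality $d_G(u,v) \leq d_{T_2}(u,v)$ is then immediate from the triangle inequality applied along the path $u \to \pi(u) \to \pi(v) \to v$ in $T_2$, together with non-contraction of $d_{T'}$ on $S$.

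The single computation that pins the constant $128k-31$ is the stretch analysis of $T_2$ for a pair $u \in S$, $v \in V \setminus S$. In this case $d_{T_2}(u,v) = d_{T'}(u, \pi(v)) + d_G(v, \pi(v))$. Since $u \in S$ and $\pi(v)$ is the closest $S$-point to $v$, we have $d_G(v, \pi(v)) \leq d_G(v, u)$, and by the triangle inequality $d_G(u, \pi(v)) \leq d_G(u,v) + d_G(v, \pi(v))$. Plugging in,
\[
d_{T_2}(u,v) \leq (64k-16)\bigl(d_G(u,v) + d_G(v, \pi(v))\bigr) + d_G(v, \pi(v)) \leq (64k-16)\, d_G(u,v) + (64k-15)\, d_G(u,v),
\]
which equals $(128k-31)\, d_G(u,v)$, matching the target exactly. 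The $u, v \in S$ case is already handled by the bound on $d_{T'}$, and the $u \in S$, $v \in S$ sub-case of the claim is strictly weaker.

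The main bookkeeping obstacle I anticipate is the disconnected case. When $G$ is disconnected, $d_G$ takes the value $\infty$ across components and the neighbor $\pi(v)$ need not exist for $v$ in a component disjoint from $S$. I would handle this exactly as in Lemma \ref{lemma:RamseyForest}: apply the construction to each connected component $C_i$ separately (with demand set $A \cap V(C_i)$), use the concavity inequality $\sum_i t_i^{1-1/k} \geq \bigl(\sum_i t_i\bigr)^{1-1/k}$ to preserve the size bound $|S| \geq t^{1-1/k}$, glue the component HSTs into a single HST $T_1$ by introducing a common root with a sufficiently large label (so the cross-component stretch statement is vacuous), and take the union of the component trees as $T_2$, interpreted as a spanning forest across components (which is consistent with the non-contraction requirement since $d_G \equiv \infty$ across components makes the stretch condition trivially vacuous, and item 1 is read with the convention $d_{T_2}(u,v) = \infty$ whenever $u,v$ lie in distinct components of $T_2$).
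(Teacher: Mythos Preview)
Your proposal is correct and follows essentially the same approach as the paper: apply Theorem~\ref{thm:RamseyUM} to $(A,d_G)$ to get $S$ and $\rho$, extend $\rho$ to all of $V$ via Theorem~\ref{thm:ExtendingUM} and realize as an HST for $T_1$, and for $T_2$ first convert $\rho$ on $S$ to a tree via Lemma~\ref{lemma:FromUMToTree} and then hang each $v\in V\setminus S$ off its nearest $S$-point, with the identical $(64k-16)+(64k-15)=128k-31$ computation. Your treatment of the disconnected case is in fact more careful than the paper's, which simply carries out the construction as stated (implicitly allowing infinite-weight pendant edges so that $T_2$ remains a single tree rather than a forest).
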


\begin{proof}

Consider the $t$-point metric space $(A,d_G)$ (i.e., the points are the vertices of the subset $A$, and the distances are the shortest path distances in the graph $G$). By Theorem \ref{thm:RamseyUM}, there is a subset $S\subseteq A$ of size $t^{1-\frac{1}{k}}$ that can be embedded into an ultrametric with distortion $2ek$. For simplicity, we identify the image of this embedding with the vertices of $S$, and conclude that there is an ultrametric $\rho$ on $S$ such that for every $u,v\in S$,
$d_G(u,v)\leq\rho(u,v)\leq2ek\cdot d_G(u,v)$.
Now, using Theorem \ref{thm:ExtendingUM}, we obtain an ultrametric $\tilde{\rho}$ on $V$, such that
\begin{enumerate}
    \item $d_G(u,v)\leq\tilde{\rho}(u,v)$ for every $u,v\in V$, and
    \item $\tilde{\rho}(u,v)\leq6\cdot 2ek\cdot d_G(u,v)=12ek\cdot d_G(u,v)$ for every $u\in S$ and $v\in V$.
\end{enumerate}

By Fact \ref{fact:UM<->HST}, the ultrametric $\tilde{\rho}$ can be converted into an HST $T_1$ that satisfies the assertion of the theorem.
To construct the tree $T_2$, we recall that $\rho$ is an ultrametric over $S\subseteq A$ with $d_G(u,v)\leq\rho(u,v)\leq2ek\cdot d_G(u,v)$ for every $u,v\in S$. We use Lemma \ref{lemma:FromUMToTree} on $\rho$, to obtain a tree $T'_2=(S,E'_2)$, not necessarily $E'_2\subseteq E$, such that for every $u,v\in S$,
\[d_G(u,v)\leq d_{T'_2}(u,v)\leq8\cdot2ek\cdot d_G(u,v)=16ek\cdot d_G(u,v)~.\]

Now, for every $v\in V\setminus S$, we add an edge between $v$ to its closest vertex in $S$, denoted by $s_v$, with weight $d_G(v,s_v)$. Let $T_2$ be the resulting tree. Note that
\begin{enumerate}
    \item For every $u,v\in V$, we have
    \begin{eqnarray*}
        d_G(u,v)&\leq&d_G(u,s_u)+d_G(s_u,s_v)+d_G(s_v,v)\\
        &\leq&d_{T_2}(u,s_u)+d_{T_2}(s_u,s_v)+d_{T_2}(s_v,v)=d_{T_2}(u,v)~,
    \end{eqnarray*}
    and
    \item For $u\in S$ and $v\in V$, we have
    \begin{eqnarray*}
        d_{T_2}(u,v)=d_{T_2}(u,s_v)+d_G(s_v,v)
        &\leq&16ek\cdot d_G(u,s_v)+d_G(s_v,v)\\
        &\leq&16ek\cdot[d_G(u,v)+d_G(v,s_v)]+d_G(s_v,v)\\
        &\leq&16ek\cdot[d_G(u,v)+d_G(u,v)]+d_G(u,v)\\
        &=&(32ek+1)d_G(u,v)~,
    \end{eqnarray*}
\end{enumerate}
as desired.

\end{proof}

\section{Tree Covers for Separable Graphs} \label{sec:TreeCoversForSeparableGraphs}

\subsection{Pairwise Covers}

Our first step towards constructing tree covers for vertex-separable and path-separable graphs is to construct a \textit{pairwise} tree cover (see Definition \ref{def:PairwiseForestCover}), for a designated set $\mathcal{P}_A$ of pairs, that is defined by a vertex subset $A$. Specifically, $\mathcal{P}_A$ is the set of all vertex pairs $(u,v)\in V^2$, such that some shortest $u$-$v$ path intersects the subset $A$. In the next lemma, we formalize our result about pairwise tree covers.

\begin{lemma} \label{lemma:PairwiseSpanTreeCover}
Let $G=(V,E)$ be an undirected weighted graph on $n$ vertices, and fix some $A\subseteq V$ with size $|A|=t$. Define the set $\mathcal{P}_A\subseteq V^2$ as the set of all pairs $(u,v)$ such that there is a shortest $u$-$v$ path that contains a vertex of $A$. Given an integer parameter $k\geq1$, the graph $G$ admits the following $\mathcal{P}_A$-pairwise covers.

\begin{enumerate}
    \item A $\mathcal{P}_A$-pairwise spanning forest cover $\mathcal{F}_A$ with stretch $\hat{c}\cdot k\log\log(t+4)$, where $\hat{c}$ is the constant from Theorem \ref{thm:RamseyTree}.

    \item A $\mathcal{P}_A$-pairwise (not necessarily spanning) tree cover $\mathcal{T}_A$ with stretch $32ek+1$.

    \item A $\mathcal{P}_A$-pairwise HST cover $\mathcal{H}_A$ with stretch $12ek$.
\end{enumerate}

Each of these covers has size $O\left(k\cdot t^{\frac{1}{k}}\right)$, and their maximum covering is
\[\max\left\{\sum_{F\in\mathcal{F}_A}|V(F)|,\sum_{T\in\mathcal{T}_A}|V(T)|,\sum_{T\in\mathcal{H}_A}|L(T)|\right\}=O\left(k(n-t)t^{\frac{1}{k}}+t^{1+\frac{1}{k}}\right)=O\left(k\cdot nt^\frac{1}{k}\right)~.\]

\end{lemma}

\begin{proof}

Our construction of the pairwise covers $\mathcal{F}_A,\mathcal{T}_A,\mathcal{H}_A$, with respect to a subset $A\subseteq V$, is recursive over $n,t$, while we fix the parameter $k$. We call the subset $A$ the \textit{demand} subset. Along with the recursive construction of $\mathcal{F}_A,\mathcal{T}_A,\mathcal{H}_A$, we will recursively define upper bounds $\Sigma(n,t),\alpha(t),q(t)$ for their covering, stretch and size, respectively, for an $n$-vertex graph $G$ and a demand subset of size $|A|=t$ (the upper bounds $\alpha(t)$ and $q(t)$ will hold for any number of vertices $n$, as we prove in the sequel). Our upper bounds $\Sigma(n,t),\alpha(t)$ and $q(t)$ will be monotonous non-decreasing functions. In our recursive construction, we consider the case where the graph $G$ is not necessarily connected.

For the base case, if $t=0$, thus $A=\mathcal{P}_A=\emptyset$, and $\mathcal{F}_A=\mathcal{T}_A=\mathcal{H}_A=\emptyset$ are trivial covers. Thus, we set the upper bounds $\alpha(0)=1$, $\Sigma(n,0)=0$ (for every $n$) and $q(1)=0$.

Given $t>0$, an $n$-vertex graph $G=(V,E)$ and a subset $A\subseteq V$ of size $|A|=t$, let $S\subseteq A$ be a subset of size at least $t^{1-\frac{1}{k}}$ and let $F$ be a spanning forest of $G$, such that $S,F$ satisfy the assertion of Lemma \ref{lemma:RamseyForest}. Alternatively, to form $\mathcal{T}_A$ and $\mathcal{H}_A$, we use Theorem \ref{thm:RamseyNotSubTree} to obtain the subset $S\subseteq A$ and corresponding tree $T$ and HST $T^{HST}$. Consider the graph $G'=G[V\setminus S]$, which is obtained by removing the vertices of $S$ from the graph $G$, together with their adjacent edges. Define also $A'=A\setminus S$. Denote the number of vertices in $G'$ by $n'$ and denote $|A'|=t'$. Then, we have
\begin{equation} \label{eq:NumberOfVertices}
    n'=n-|S|\leq n-t^{1-\frac{1}{k}},\;\;\;t'=t-|S|\leq t-t^{1-\frac{1}{k}}~.
\end{equation}

We recursively construct a $\mathcal{P}_{A'}$-pairwise forest cover $\mathcal{F}_{A'}$ (respectively, tree cover $\mathcal{T}_{A'}$ and HST cover $\mathcal{H}_{A'}$) for $G'$ with size at most $q(t')$, stretch at most $\alpha(t')$ and covering at most $\Sigma(n',t')$. Here, $\mathcal{P}_{A'}$ is the set of all $(u,v)\in V^2$ such that some shortest $u$-$v$ path in $G'$ intersects $A'$. We define
\begin{equation} \label{eq:recTreeCover}
\begin{split}
    \mathcal{F}_A&=\mathcal{F}_{A'}\cup\{F\}~,\\
    \mathcal{T}_A&=\mathcal{T}_{A'}\cup\{T\}~,\\
    \mathcal{H}_A&=\mathcal{H}_{A'}\cup\{T^{HST}\}~.
\end{split}
\end{equation}

We start by analyzing the stretch of these new pairwise covers. Fix a pair $(u,v)\in\mathcal{P}_A$, and note that in particular, $u,v$ are in the same connected component of $G$. We consider two cases. In the first case, let us assume that there is a shortest path between $u,v$ in $G$, which passes through a vertex $s\in S$. Denote by $Q^F_{u,s}$ and $Q^F_{s,v}$ the unique paths in the spanning forest $F$ between $u,s$ and between $s,v$, respectively. Similarly, let $Q^T_{u,s}$ and $Q^T_{s,v}$ be the unique paths in $T$ between $u,s$ and $s,v$, respectively. For each $\mu\in\{F,T\}$, the concatenation $Q^\mu_{u,s}\circ Q^\mu_{s,v}$ is a (not necessarily simple) path in $\mu$ between $u,v$. Thus, it contains $P^\mu$ -- the unique path in $\mu$ between $u,v$. By the properties of the set $S$, the forest $F$ from Lemma \ref{lemma:RamseyForest}, and the tree $T$ from Theorem \ref{thm:RamseyNotSubTree}, we get
\begin{equation} \label{eq:StrongerObservation}
\begin{split}
    d_\mu(u,v)&=w(P^\mu)
\leq w(Q^\mu_{u,s})+w(Q^\mu_{s,v})\\
&=d_\mu(u,s)+d_\mu(v,s)\\
&\leq\alpha_\mu\cdot d_G(u,s)+\alpha_\mu\cdot d_G(v,s)\\
&=\alpha_\mu\cdot d_G(u,v)~.
\end{split}
\end{equation}
where $\alpha_\mu$ is $\hat{c}\cdot k\log\log(t+4)$ in the case $\mu=F$, and $32ek+1$ in the case $\mu=T$.

Regarding $\mathcal{H}_A$, recall that $\rho_{T^{HST}}(u,v)$ denotes the label of the LCA of $u,v$ in the HST $T^{HST}$, and that it is an ultrametric. In this case we obtain, by Theorem \ref{thm:RamseyNotSubTree},
\begin{equation} \label{eq:StrongerObservationHST}
\begin{split}
    \rho_{T^{HST}}(u,v)&\leq\max\{\rho_{T^{HST}}(u,s),\rho_{T^{HST}}(s,v)\}\\
    &\leq12ek\cdot\max\{d_G(u,s),d_G(s,v)\}\\
    &\leq12ek\cdot d_G(u,v)~.
\end{split}
\end{equation}

We next consider the second case, where there is no shortest path between $u,v$ in $G$ that passes through $S$. Then, any such shortest path is fully contained in the graph $G'$, i.e., $d_{G'}(u,v)=d_G(u,v)$. Also, by the assumption that $(u,v)\in\mathcal{P}_A$, we conclude that there is a shortest $u$-$v$ path that contains a vertex of $A\setminus S=A'$. Hence, there is a forest $F'\in\mathcal{F}_{A'}$, a tree $T'\in\mathcal{T}_{A'}$ and an HST $T''\in\mathcal{H}_{A'}$ such that
\[\max\{d_{F'}(u,v),d_{T'}(u,v),\rho_{T''}(u,v)\}\leq\alpha(t')\cdot d_{G'}(u,v)\stackrel{(\ref{eq:NumberOfVertices})}{\leq}\alpha\left(t-t^{1-\frac{1}{k}}\right)\cdot d_G(u,v)~.\]
Thus, the following serves as an upper bound for the stretch of the forest cover $\mathcal{F}_A$.
\begin{equation} \label{eq:StretchRec}
    \alpha(t)=\max\{\hat{c}\cdot k\log\log(t+4),\alpha\left(t-t^{1-\frac{1}{k}}\right)\}~.
\end{equation}
A simple inductive proof over $t$ shows that $\alpha(t)\leq\hat{c}\cdot k\log\log(t+4)$: for $t=0$, we already saw that $\alpha(0)=1\leq\hat{c}\cdot k\log\log(t+4)$. For $t>0$, our induction hypothesis implies that $\alpha\left(t-t^{1-\frac{1}{k}}\right)\leq\hat{c}\cdot k\log\log(t+4)$, and by Inequality (\ref{eq:StretchRec}),
$\alpha(t)\leq\max\{\hat{c}\cdot k\log\log(t+4),\alpha\left(t-t^{1-\frac{1}{k}}\right)\}\leq\hat{c}\cdot k\log\log(t+4)$,
as desired. For the pairwise tree cover $\mathcal{T}_A$ and HST cover $\mathcal{H}_A$, a similar recursive relation of $\alpha(t)=\max\left\{32ek+1,\alpha\left(t-t^{1-\frac{1}{k}}\right)\right\}$ and $\alpha(t)=\max\left\{12ek,\alpha\left(t-t^{1-\frac{1}{k}}\right)\right\}$, respectively, gives $\alpha(t)\leq32ek+1$ and $\alpha(t)\leq12ek$, respectively, in the same way.

Next, we bound the size of the covers $\mathcal{F}_A,\mathcal{T}_A,\mathcal{H}_A$. Recall their definition in (\ref{eq:recTreeCover}), which implies that their sizes are $|\mathcal{F}_{A'}|+1,|\mathcal{T}_{A'}|+1,|\mathcal{H}_{A'}|+1$, respectively. In addition, note that $q(t')\leq q(t-t^{1-\frac{1}{k}})$ (by Inequality (\ref{eq:NumberOfVertices})). Thus, we recursively define the upper bound $q(t)$ as
\begin{equation} \label{eq:QueryTimeRec}
    q(t)=1+q\left(t-t^{1-\frac{1}{k}}\right)~.
\end{equation}

To solve this recursive relation, let us denote $x_k(t)=1-t^{-\frac{1}{k}}$ and $f_k(t)=\left\lceil t^{\frac{1}{k}}\right\rceil$. Suppose that we perform $f_k(t)$ recursive calls of our construction on the input graph $G$. After each recursive call, the size of the demand subset $A$ decreases from some $t_0$ (at the beginning $t_0=t$) to $x_k(t_0)t_0\leq x_k(t)t_0$. Thus, after $f_k(t)$ calls, this size is at most
$(x_k(t))^{f_k(t)}\cdot t\leq\left(1-\frac{1}{t^{\frac{1}{k}}}\right)^{t^{\frac{1}{k}}}\cdot t\leq\frac{t}{e}$.
By the same argument, after another $f_k\left(\frac{t}{e}\right)$ recursive calls, our demand subset has at most $\frac{t}{e^2}$ vertices, and after another $f_k\left(\frac{t}{e^2}\right)$ calls, at most $\frac{t}{e^3}$ vertices. We proceed with this process until the demand subset is empty, and then our recursive construction terminates. The number of recursive calls required for this purpose is at most 
\begin{equation} \label{eq:RecDepth}
    \sum_{i=0}^{\left\lceil\ln t\right\rceil}f_k\left(\frac{t}{e^i}\right)\leq\left\lceil\ln t\right\rceil+\sum_{i=0}^\infty\left(\frac{t}{e^i}\right)^{\frac{1}{k}}=\left\lceil\ln t\right\rceil+\frac{t^{\frac{1}{k}}}{1-e^{-\frac{1}{k}}}\leq\left\lceil\ln t\right\rceil+2k\cdot t^{\frac{1}{k}}=O(k\cdot t^{\frac{1}{k}})~,
\end{equation}
where we used the fact that $k\cdot t^{\frac{1}{k}}\geq\ln t$ for every $k$. Thus, $q(t)=O(k\cdot t^{\frac{1}{k}})$. We conclude that the number of recursive calls, which is also the size of $\mathcal{F}_A,\mathcal{T}_A,\mathcal{H}_A$, is at most $q(t)=O(k\cdot t^{\frac{1}{k}})$.

Next, we bound the covering of the pairwise covers $\mathcal{F}_A,\mathcal{T}_A,\mathcal{H}_A$. The forest $F$, the tree $T$ and the HST $T^{HST}$ that were added to the covers in Equation (\ref{eq:recTreeCover}) contain at most $n$ vertices (or in the case of $T^{HST}$, at most $n$ leaves). Thus, by (\ref{eq:NumberOfVertices}), we recursively define
\begin{equation} \label{eq:SizeRec}
    \Sigma(n,t)=n+\Sigma\left(n-t^{1-\frac{1}{k}},t-t^{1-\frac{1}{k}}\right)~.
\end{equation}

Using this equation, we prove by induction over $t$ that $\Sigma(n,t)\leq(n-t)\cdot q(t)+t^{1+\frac{1}{k}}$. Recall that by (\ref{eq:RecDepth}), we have $q(t)=O(k\cdot t^{\frac{1}{k}})$. For $t=0$, the equation holds since $q(0)=\Sigma(n,0)=0$.
For $t>0$, recall that by Equation (\ref{eq:QueryTimeRec}), $q(t)-1=q(t-t^{1-\frac{1}{k}})$. We use this fact, Equation (\ref{eq:SizeRec}), and the induction hypothesis, and we obtain
\begin{eqnarray*}
    \Sigma(n,t)&=&n+\Sigma\left(n-t^{1-\frac{1}{k}},t-t^{1-\frac{1}{k}}\right)\\
    &\leq&n+\left(n-t^{1-\frac{1}{k}}-(t-t^{1-\frac{1}{k}})\right)\cdot q(t-t^{1-\frac{1}{k}})+(t-t^{1-\frac{1}{k}})^{1+\frac{1}{k}}\\
    &=&n+(n-t)\cdot(q(t)-1)+(x_k(t)t)^{1+\frac{1}{k}}\\
    &\leq&n+(n-t)\cdot q(t)-n+t+x_k(t)\cdot t^{1+\frac{1}{k}}\\
    &=&n+(n-t)\cdot q(t)-n+t+t^{1+\frac{1}{k}}-t
    =(n-t)\cdot q(t)+t^{1+\frac{1}{k}}~.\\
\end{eqnarray*}
Hence, our pairwise covers has covering at most
$(n-t)\cdot q(t)+t^{1+\frac{1}{k}}=O\left(k(n-t)t^{\frac{1}{k}}+t^{1+\frac{1}{k}}\right)$.

\end{proof}

In case that $A=V$, we have the following corollary for spanning tree covers (here $t=n$).

\begin{corollary} \label{cor:BetterAverageOverlap}
Let $G=(V,E)$ be an undirected weighted graph on $n$ vertices. Given an integer parameter $k\geq1$, there exists a spanning forest cover $\mathcal{F}$ with $O(kn^{\frac{1}{k}})$ forests, stretch $\hat{c}\cdot k\log\log n$ and average overlap $O(n^{\frac{1}{k}})$.

By Observation \ref{obs:FromForestCoverToTreeCover}, there is a tree cover $\mathcal{T}$ with stretch $\hat{c}\cdot k\log\log n$ and average overlap $O(n^{\frac{1}{k}})$.
\end{corollary}

We note that it was known \cite{ACEFN20} that for any connected graph $G$ there exists a spanning tree cover $\mathcal{T}$ with $O(kn^{\frac{1}{k}})$ trees and stretch $O(k\log\log n)$ (this tree cover can be obtained from Lemma \ref{lemma:PairwiseSpanTreeCover} and Observation \ref{obs:FromForestCoverToTreeCover}). However, to the best of our knowledge, there was no known construction of a spanning tree cover for general graphs with stretch $O(k\log\log n)$ and \textit{average overlap} $O(n^{\frac{1}{k}})$.

\subsection{Tree Covers for Vertex-Separable Graphs} \label{sec:VertexSeparableTreeCovers}

In this section we use Lemma \ref{lemma:PairwiseSpanTreeCover}, to build tree covers for vertex-separable graphs (see Definition \ref{def:RecursiveSeparators}).

\begin{theorem} \label{thm:SmallTreewidthSpanTreeCover}
Let $G=(V,E)$ be an $n$-vertex undirected weighted $s$-vertex-separable graph, for some non-decreasing function $s=s(\theta)$. Let $k\geq1$ be an integer parameter, and denote $S(n,k)=\sum_{i=0}^{\log n}s\left(\frac{n}{2^i}\right)^{\frac{1}{k}}$. There exists a spanning forest cover for $G$ with stretch at most $\hat{c}\cdot k\log\log s(n)$ and size $O\left(k\cdot S(n,k)\right)$.
By Observation \ref{obs:FromForestCoverToTreeCover}, if $G$ is connected, there exists a spanning tree cover with the same stretch and size.

Moreover, there is a (not necessarily spanning) tree cover $\mathcal{T}$ and an HST cover $\mathcal{H}$ with the same size $O\left(k\cdot S(n,k)\right)$ and stretch $32ek+1$ and $12ek$, respectively.
\end{theorem}

\begin{proof}

We start by proving the existence of the desired spanning forest cover.

Let $A\subseteq V$ be a separator of the given graph $G=(V,E)$, of size $|A|\leq s(n)$. That is, if $C_1,C_2,...,C_\ell$ are the connected components of the graph $G'=G[V\setminus A]$, and $n_i$ is the number of vertices in $C_i$, then for every $i$, $n_i\leq\frac{n}{2}$.

Given the graph $G$ and its separator $A$, we construct the $\mathcal{P}_A$-pairwise forest cover $\mathcal{F}_A$ from Item (1) in Lemma \ref{lemma:PairwiseSpanTreeCover}. Recall that $\mathcal{F}_A$ has $O\left(k\cdot s(n)^{\frac{1}{k}}\right)$ forests, stretch $\hat{c}\cdot k\log\log s(n)$ and covering $O(kn\cdot s(n)^{\frac{1}{k}})$. Note that %every sub-graph of a connected component $C_i$ in the graph $G'$, with $\theta$ vertices, is also a sub-graph of $G$, and therefore has a separator of size at most $s(\theta)$. Thus, the 
any component $C_i$ is $s$-vertex-separable. Recursively, we construct a spanning forest cover $\mathcal{F}_i$ of size $O(k\cdot S(\frac{n}{2},k))$ and stretch at most $\hat{c}\cdot k\log\log s(n)$, for every such component $C_i$. For the base case, if $n_i=1$, we let $\mathcal{F}_i$ be the trivial forest cover, that has one forest, stretch $1$ and covering $1$. By Lemma \ref{lemma:UnifyDisjointForestCovers}, there is a forest cover $\mathcal{F}'$ for the graph $G'$ with $O(k\cdot S(\frac{n}{2},k))$ forests and stretch at most $\hat{c}\cdot k\log\log s(n)$.

We define our new spanning forest cover as
\begin{equation} \label{eq:nonFullDef}
  \mathcal{F}=\mathcal{F}_A\cup\mathcal{F}'~.%\{F^1,F^2,F^3,...\}~.  
\end{equation}

Note that the number of forests in $\mathcal{F}$ is proportional to
\begin{eqnarray*}
    k\cdot s(n)^\frac{1}{k}+k\cdot S\left(\frac{n}{2},k\right)&=&k\cdot s(n)^\frac{1}{k}+k\sum_{i=0}^{\log n-1}s\left(\frac{n}{2^{i+1}}\right)^{\frac{1}{k}}\\
    &=&k\cdot s(n)^\frac{1}{k}+k\sum_{i=1}^{\log n}s\left(\frac{n}{2^{i}}\right)^{\frac{1}{k}}\\
    &=&k\sum_{i=0}^{\log n}s\left(\frac{n}{2^{i}}\right)^{\frac{1}{k}}=k\cdot S(n,k)~.
\end{eqnarray*}

We argue that the stretch of $\mathcal{F}$ is at most $\hat{c}\cdot k\log\log s(n)$. Indeed, if a given pair $(u,v)\in V^2$ is in $\mathcal{P}_A$, that is, there is a shortest $u$-$v$ path that passes through a vertex of $A$, then by Lemma \ref{lemma:PairwiseSpanTreeCover}, there is a forest $F\in\mathcal{F}_A$ such that
\[d_F(u,v)\leq\hat{c}\cdot k\log\log s(n)\cdot d_G(u,v)~.\]
Otherwise, if any shortest $u$-$v$ path does not contain a vertex of $A$, then in particular, %the vertices $u,v$ are in the same connected component $C_i$ of 
$d_G(u,v)=d_{G'}(u,v)$, for the graph $G'=G[V\setminus A]$. Thus, the spanning forest cover $\mathcal{F}'$ contains a forest $F$ with
\[d_{F}(u,v)%\leq\hat{c}\cdot k\log\log t(n_i)\cdot d_G(u,v)
\leq\hat{c}\cdot k\log\log s(n)\cdot d_G(u,v)~.\]
%Since the forest $F_i^j$ is contained in $F^j$, we conclude that $d_{F^j}(u,v)\leq\hat{c}\cdot k\log\log t(n)\cdot d_G(u,v)$ as well.
Figure \ref{fig:Separatorspanning tree cover} depicts the different scenarios for the pair $(u,v)$, that were discussed above.

\begin{center}
\begin{figure}[!ht]
    \centering
    \includegraphics[width=9cm, height=6cm]{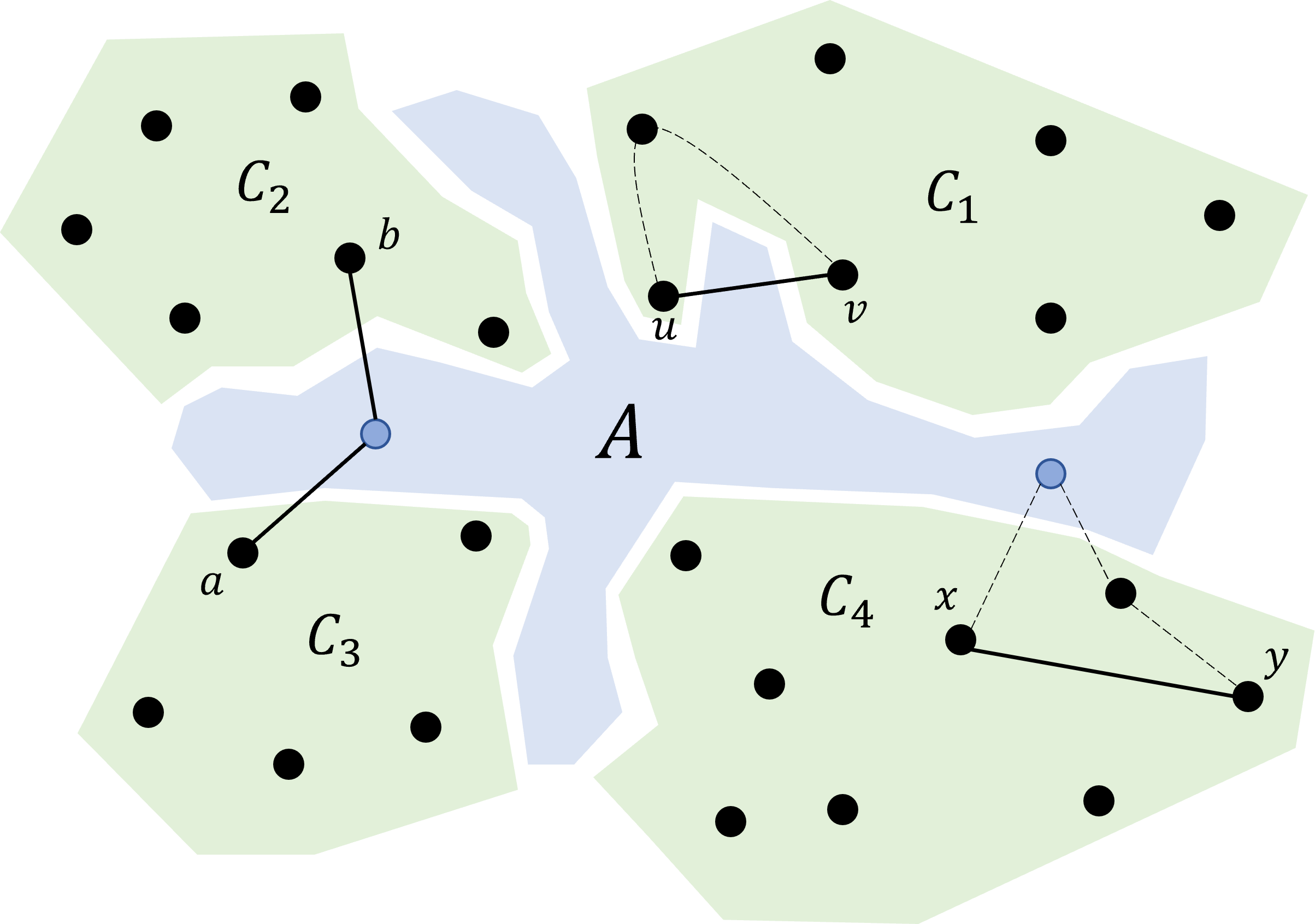}
    \caption{The separator $A$ divides the graph $G$ into connected components $C_i$, each containing at most half of the vertices in $G$. Then, the spanning forest cover $\mathcal{F}_A$ from Lemma \ref{lemma:PairwiseSpanTreeCover} is constructed for the demand set $A$. In addition, a recursive construction of a spanning forest cover $\mathcal{F}_i$ is performed for each component $C_i$. For every two vertices, we consider the shortest path between them in the forest of $\mathcal{F}_A$ that provides the lowest stretch between them. In case that the two vertices are in the same component $C_i$, we also consider the shortest path in the forest of $\mathcal{F}_i$ that provides the lowest stretch. The weights of these paths, up to stretch $O(k\log\log s(n))$, are the weights of the shortest path that intersects $A$, and the one that does not intersect $A$, respectively. In the figure, the shortest $a$-$b$ path intersects $A$, and $a,b$ are not in the same component. Hence, the spanning forest cover $\mathcal{F}_A$ contains a forest that provides stretch $O(k\log\log s(n))$ for $a,b$. For $x,y$, their shortest path does not intersect $A$, and they are in the same component $C_4$. Thus, there is a forest in $\mathcal{F}_4$ that provides stretch $O(k\log\log s(n))$ for $x,y$. Note that the case of $u,v$ might also occur. They are in the same component $C_1$, but their shortest path intersects $A$ (for this reason, the forest that provides low stretch for $u,v$ is in $\mathcal{F}_A$, even though the vertices are in the same component).}
    \label{fig:Separatorspanning tree cover}
\end{figure}
\end{center}

Next, we describe how to construct the (not necessarily spanning) tree cover $\mathcal{T}$ and the HST cover $\mathcal{H}$. This construction is very similar to that of the spanning forest cover $\mathcal{F}$, with the following differences.

For the same separator $A$ of $G$, with size at most $s(n)$, let $\mathcal{T}_A,\mathcal{H}_A$ be the $\mathcal{P}_A$-pairwise tree cover and $\mathcal{P}_A$-pairwise HST cover from Lemma \ref{lemma:PairwiseSpanTreeCover}, respectively. Then, we define the tree cover $\mathcal{T}$ using (\ref{eq:nonFullDef}), but with $\mathcal{T}_A$ instead of $\mathcal{F}_A$. Here, we form the forest cover $\mathcal{F}'$ in the same way from the recursive tree covers of the connected components of $G'=G[V\setminus A]$, and then use Observation \ref{obs:FromForestCoverToTreeCover} to obtain a tree cover $\mathcal{T}'$ with the same size and stretch as $\mathcal{F}'$. The union $\mathcal{T}=\mathcal{T}_A\cup\mathcal{T}'$ defines the desired (not necessarily spanning) tree cover. 

The construction of the HST cover with small number of HSTs follows the same approach as in Lemma \ref{lemma:UnifyDisjointForestCovers}. We explicitly describe this construction below.

For any connected component $C_i$ of $G'=G[V\setminus A]$, assume that we have an HST cover $\mathcal{T}_i$ with at most $J=O(k\cdot S(\frac{n}{2},k))$ trees, and enumerate them by $T_i^1,T_i^2,T_i^3...$. Note that for each $j$, the HSTs $T_1^j,T_2^j,...$ have disjoint sets of leaves, as for every $i$, the set of leaves $L(T_i^j)$ is a subset of $C_i$. For any $j\geq1$, we unify the HSTs $T_1^j,T_2^j,...$ (the ones of them that exist) into a single HST $T^j$ as follows. We create a new root vertex $r_j$, and set it as a parent of the roots $r_1^j,r_2^j,...$ of $T_1^j,T_2^j,...$, respectively. The label of $r_j$ is chosen as $\infty$, or any number large enough such that the HST constraint is satisfied.
Then, $\{T^j\}_{j=1}^J\cup\mathcal{H}_A$ is an HST cover with $J+O(k\cdot s(n)^{\frac{1}{k}})=O(k\cdot S(n,k))$ HSTs. Similarly to the proof above, we can prove that the stretch of this HST cover is still $12ek$.
    
\end{proof}

\begin{corollary} \label{cor:TWN^delta}
Let $G=(V,E)$ be an $s$-vertex-separable graph with $n$ vertices, where $s=s(\theta)=\theta^{\delta}$, for some $\delta\in(0,1]$, and let $k\geq1$ be an integer parameter. Then, $G$ admits a spanning tree cover, a non-spanning tree cover, and an HST cover, with stretch at most $\hat{c}\cdot k\log\log s(n)$, $32ek+1$, and $12ek$, respectively, all with size $O\left(\frac{k^2}{\delta}n^{\frac{\delta}{k}}\right)=O\left(\frac{k^2\log n}{\log s(n)}\cdot s(n)^{\frac{1}{k}}\right)$.
\end{corollary}

\begin{proof}

Using the notations of Theorem \ref{thm:SmallTreewidthSpanTreeCover}, we have
\[S(n,k)=\sum_{i=0}^{\log n}s\left(\frac{n}{2^i}\right)^{\frac{1}{k}}<\sum_{i=0}^{\infty}\left(\frac{n}{2^i}\right)^{\frac{\delta}{k}}=n^{\frac{\delta}{k}}\cdot\frac{1}{1-2^{-\delta/k}}\leq\frac{2k}{\delta}n^{\frac{\delta}{k}}~,\]
where the last step follows from the fact that $2^{-x}\leq1-\frac{x}{2}$ for every $x\in[0,1]$. Hence, Theorem \ref{thm:SmallTreewidthSpanTreeCover} provides tree covers with the desired properties.
    
\end{proof}

Corollary \ref{cor:TWN^delta} improves upon the bound of \cite{ACEFN20} on the number of trees in spanning tree covers for $\theta^\delta$-vertex-separable graphs, for any constant $\delta<1$, whenever $k=o\left(\frac{\log n}{\log\log n}\right)$. 

In the next corollary we consider a wider family of graphs, that admit a \textit{flat} bound $t(n)$ on their treewidth. Such an $n$-vertex graph $G=(V,E)$ has the property that for any vertex subset $U\subseteq V$ (possibly with $|U|\ll n$), the induced sub-graph $G[U]$ of $U$ is guaranteed to have treewidth $t(n)$, and therefore a separator of size at most $t(n)+1$ (as opposed to $t(|U|)$ or $t(|U|)+1$, respectively). In this case, we apply Theorem \ref{thm:SmallTreewidthSpanTreeCover} with $s(\theta)\equiv t(n)$, and obtain the following corollary.

\begin{corollary} \label{cor:FlatTW}
Let $G=(V,E)$ be an $n$-vertex graph with treewidth $t=t(n)$, and let $k\geq1$ be an integer. $G$ admits a spanning tree cover, a non-spanning tree cover, and an HST cover, with stretch $\hat{c}\cdot k\log\log t$, $32ek+1$, and $12ek$, all with size $O(k\cdot t^{\frac{1}{k}}\cdot\log n)$.
\end{corollary}

For spanning tree covers, Corollary \ref{cor:FlatTW} improves upon the aforementioned bound of \cite{ACEFN20} for treewidth $t(n)<\frac{n}{(c\log n)^k}$, for a sufficiently large constant $c$.

%Theorem \ref{thm:RamseyTree} (due to \cite{ACEFN20}) gives rise to a Ramsey HST cover for general graphs with the same stretch as above $12ek$, with $O(kn^{\frac{1}{k}})$ HSTs. Corollary \ref{cor:TWN^delta} improves upon this bound for $\theta^\delta$-vertex-separable, for any constant $\delta<1$ and $k=o(\frac{\log n}{\log\log n})$. Corollary \ref{cor:FlatTW} does this for graphs with treewidth $t(n)<\frac{n}{(c\log n)^k}$, for a sufficiently large constant $c$.

\subsection{Low Stretch Spanning Tree Covers for Path-Separable Graphs} \label{sec:PathSeparableSpanSmallStretchTreeCovers}

In this section, we devise spanning tree covers with stretch $1+\epsilon$ (i.e., \textit{near-exact} spanning tree covers), for path-separable graphs (see Definition \ref{def:PathSeparable}). For this purpose, similarly to previous work by \cite{Tho04,Kle02,GKR04,BFN22}, we extract a small number of special vertices called \textit{landmarks} from each separator path, and grow shortest paths trees (SPTs) rooted at each of these landmarks. To form a tree cover out of all these SPTs, previous work had to settle for non-spanning tree covers \cite{BFN22}, or to use only one landmark on each separator path \cite{GKR04}, which results in stretch $3$. Instead, we describe a delicate way to \textit{glue} these SPTs together, and obtain a spanning tree cover with stretch $1+\epsilon$. Similarly to previous work \cite{Tho04}, to keep the number of landmarks controllable, we consider each time a different \textit{distance scale} $\Delta>0$, and provide a tree cover for vertex pairs $u,v\in V$ such that $d_G(u,v)\approx\Delta$.

We begin with the following useful lemma, that finds a small number of landmarks on a given path $Q$, that divides $Q$ to sub-paths of bounded weight.

\begin{lemma} \label{lemma:FindLandmarks}
    Let $Q$ be a path in an undirected weighted graph $G=(V,E)$, and denote its endpoints by $x,y$. For every positive parameter $\eta>0$, there is a sequence $x=q_1,q_2,...,q_t=y$ of vertices on $Q$, called $\boldsymbol{\eta}$-\textbf{landmarks}, that satisfies the following properties (here, $Q[a,b]$ denotes the sub-path of $Q$ between $a$ and $b$, for any two vertices $a,b$ on $Q$).
\begin{enumerate}
    \item For every $1\leq i<t$, the sub-path $Q[q_i,q_{i+1}]$ is a path of weight at most $\eta$, or a single edge.
    \item For any sub-path $Q'$ of $Q$, there are at most $\frac{2w(Q')}{\eta}+1$ $\eta$-landmarks on $Q'$.
\end{enumerate}
\end{lemma}

\begin{proof}

To construct the sequence $\{q_i\}$, let $q_1=x$, and iteratively for any \textbf{odd} $i\geq1$ define $q_{i+1}$ as the farthest vertex on $Q[q_i,y]$ from $q_i$, such that $w(Q[q_i,q_{i+1}])\leq\eta$ (possibly $q_{i+1}=q_i$). Then define $q_{i+2}$ as the successor of $q_{i+1}$ on $Q$ (towards $y$). Note that $w(Q[q_i,q_{i+2}])>\eta$, whenever $q_{i+2}$ exists. If $q_{i+2}$ does not exist, i.e., $q_{i+1}=y$, stop the process and denote $t=i+1$.

Clearly, Item (1) holds, as for odd $i$, the weight of $Q[q_i,q_{i+1}]$ is by definition at most $\eta$, and for even $i$, it is a single edge. For Item (2), let $Q'$ be a sub-path of $Q$, and let $\{q_i,q_{i+2},...,q_j\}$ be all the $\eta$-landmarks on $Q'$ with odd index. By our construction, the weight of the sub-path between any two consecutive landmarks in this set is \textit{larger} than $\eta$, and therefore there are at most $\frac{w(Q')}{\eta}$ landmarks in this set. It follows that the total number of $\eta$-landmarks on $Q'$, including odd and even indices, is at most $\frac{2w(Q')}{\eta}+1$.
    
\end{proof}

We next employ the landmarks of Lemma \ref{lemma:FindLandmarks} to build spanning tree covers with small number of trees and low stretch. For this end, we prove the following lemma, that constructs a separate forest cover for each scale $\Delta>0$ by growing $\Delta$-bounded radius trees around each landmark, joining some of them to the same forest, and continuing recursively.

\begin{lemma} \label{lemma:OneScaleSimpleTreeCover}
    Let $G=(V,E)$ be an undirected weighted (weakly) $p$-path-separable graph, and let $\Delta,\epsilon$ be positive parameters. There is a spanning forest cover $\mathcal{F}_\Delta$ of at most $\left(\left\lceil\frac{4}{\epsilon}\right\rceil+3\right)p\cdot\log n$ forests, such that for every $u,v\in V$ with $d_G(u,v)\leq\Delta$ there is a forest $F=F(u,v)\in\mathcal{F}_\Delta$ that satisfies
    \[d_F(u,v)\leq d_G(u,v)+\epsilon\Delta~.\]
\end{lemma}

\pagebreak
\begin{proof}

By Definition \ref{def:PathSeparable}, the graph $G$ admits a separator $S=P_0\cup P_1\cup\cdots\cup P_{\ell-1}$, where every $P_i$ is a union of shortest paths in the graph $H_i=G\setminus\bigcup_{i'<i}P_{i'}$. Each connected component $C_j$ of the graph $G\setminus S$ contains at most $\frac{n}{2}$ vertices. Recursively, we construct a spanning forest cover $\mathcal{F}^j_\Delta$ with at most $\left(\left\lceil\frac{4}{\epsilon}\right\rceil+3\right)p\log\frac{n}{2}$ forests for each component $C_j$, where $\mathcal{F}^j_\Delta$ satisfies the assertion of the lemma for $C_j$.

To obtain a small number of forests, we use Lemma \ref{lemma:UnifyDisjointForestCovers} on the spanning forest covers $\{\mathcal{F}^j_\Delta\}$, and obtain a forest cover $\tilde{\mathcal{F}}_\Delta$ for $G\setminus S$ with at most $\left(\left\lceil\frac{4}{\epsilon}\right\rceil+3\right)p\log\frac{n}{2}$ forests. The forest cover $\tilde{\mathcal{F}}_\Delta$ satisfies that whenever $d_{G\setminus S}(u,v)\leq\Delta$, there is a forest $F\in\tilde{\mathcal{F}}_\Delta$ with $d_F(u,v)\leq d_{G\setminus S}(u,v)+\epsilon\Delta$ (since in this case $u,v$ are in the same component $C_j$).

%Note that forests from different collections $\mathcal{F}^j_\Delta\neq\mathcal{F}^{j'}_\Delta$ are disjoint. To obtain a small number of forests, we use the same approach as in Observation \ref{obs:FromForestCoverToTreeCover}. Namely, for every $1\leq h\leq\left(\left\lceil\frac{4}{\epsilon}\right\rceil+2\right)p\log\frac{n}{2}$, let $F^j_{h}$ be the $h$-th forest of $\mathcal{F}^j_\Delta$, given some arbitrary ordering of these forests (if there are less than $h$ forests in $\mathcal{F}^j_\Delta$, define $F^j_{h}=\emptyset$). Since for a fixed $h$, the forests $\{F^j_{h}\}_j$ are pairwise disjoint, their union $\tilde{F}_h=\bigcup_{j}F^j_h$ is still a spanning forest. We now denote 
%\[\tilde{\mathcal{F}}_\Delta=\left\{\tilde{F}_h\;\bigg|\;1\leq h\leq\left(\left\lceil\frac{4}{\epsilon}\right\rceil+2\right)p\log\frac{n}{2}\right\}~.\]
%This is a collection of at most $\left(\left\lceil\frac{4}{\epsilon}\right\rceil+2\right)p\log\frac{n}{2}$ forests in $G\setminus S$. Note that we recursively assume that each $\mathcal{F}^j_\Delta$ satisfies the assertion of the lemma for the component $C_j$ of $G\setminus S$. That is, if $u,v\in V$ are in $C_j$ and $d_{C_j}(u,v)\leq\Delta$, then there is $F^j_h\in\mathcal{T}^j_\Delta$ such that $d_{F^j_h}(u,v)\leq d_{C_j}(u,v)+\epsilon\Delta$. We conclude that for the forest $\tilde{F}_h\in\tilde{\mathcal{F}}_\Delta$ we have
%\[d_{\tilde{F}_h}(u,v)\leq d_{C_j}(u,v)+\epsilon\Delta\]
%as well.

Next, fix $i\in\{0,1,...,\ell-1\}$, and let $Q$ be one of the paths that form $P_i$. Denote the endpoints of $Q$ by $x$ and $y$, and let $x=q_1,q_2,...,q_{t_Q}=y$ be $\eta$-\textit{landmarks} on $Q$, for $\eta=\epsilon\Delta$ (from Lemma \ref{lemma:FindLandmarks}):
\begin{enumerate}
    \item For every $s<t_Q$, the sub-path $Q[q_s,q_{s+1}]$ is a path of weight at most $\epsilon\Delta$, or a single edge.
    \item For any sub-path $Q'$ of $Q$, there are at most $\frac{2w(Q')}{\epsilon\Delta}+1$ landmarks in $Q'$.
\end{enumerate}

For every landmark $q_s$ on $Q$, let $T(Q,s)$ be the shortest path tree rooted at $q_s$ in the graph $H_i=G\setminus\bigcup_{i'<i}P_{i'}$, consisting only of vertices of distance at most $(1+\frac{\epsilon}{2})\Delta$ from $q_s$. That is, the vertex set $V(T(Q,s))$ of this tree is the \textit{cluster} 
\[C(Q,s)=\{v\in V(H_i)\;|\;d_{H_i}(q_s,v)\leq(1+\frac{\epsilon}{2})\Delta\}~.\]

We claim that for any two indices $s_1,s_2\leq t_Q$ such that $s_2>s_1+\frac{4}{\epsilon}+2$ the trees $T(Q,s_1),T(Q_,s_2)$ are vertex disjoint, i.e., $C(Q,s_1)\cap C(Q,s_2)=\emptyset$. Indeed, if there was $v\in C(Q,s_1)\cap C(Q,s_2)$, then
\[d_{H_i}(q_{s_1},q_{s_2})\leq d_{H_i}(q_{s_1},v)+d_{H_i}(v,q_{s_2})\leq2(1+\frac{\epsilon}{2})\Delta=(2+\epsilon)\Delta~,\]
and since $Q$ is a shortest path in $H_i$, we conclude that $w(Q[q_{s_1},q_{s_2}])\leq(2+\epsilon)\Delta$. By Item (2) above, it means that the number of landmarks in $Q[q_{s_1},q_{s_2}]$ is at most $\frac{2(2+\epsilon)}{\epsilon}+1=\frac{4}{\epsilon}+3$, i.e., $s_2\leq s_1+\frac{4}{\epsilon}+2$, in contradiction. 

This implies that for every $\sigma=1,2,...,\min\{\lceil\frac{4}{\epsilon}\rceil+3,t_Q\}$, the trees 
\[\left\{T\left(Q,s\right)\;\bigg|\;1\leq s\leq t_Q\;,\;s\equiv \sigma\mod{\left(\left\lceil\frac{4}{\epsilon}\right\rceil+3\right)}\right\}\]
are pairwise vertex disjoint. Thus, the union $F(Q,\sigma)$ of these trees is a spanning forest in $G$. See illustration in Figure \ref{fig:SinglePathForestCover}. We can finally define the resulting spanning forest $\mathcal{F}_\Delta$ as follows.
%, there is a spanning tree $\bar{T}(Q,\sigma)$ in $G$ that contains all these trees as sub-graphs. We can finally define our collection $\mathcal{T}_\Delta$ of trees as follows.
\[\mathcal{F}_\Delta=\left\{F(Q,\sigma)\;\bigg|\;Q\text{ in }S,\;\sigma=1,2,...,\min\left\{\left\lceil\frac{4}{\epsilon}\right\rceil+3,t_Q\right\}\right\}\cup\tilde{\mathcal{F}}_\Delta~,\]
where ``$Q$ in $S$" means that $Q$ is one of the paths that form $P_i$, for some $i\in\{0,1,2,...,\ell-1\}$. Recall that $S=P_0\cup P_1\cup\cdots\cup P_{\ell-1}$, and that $\tilde{\mathcal{F}}_\Delta$ is a collection of at most $\left(\left\lceil\frac{4}{\epsilon}\right\rceil+3\right)p\log\frac{n}{2}$ forests, sub-graphs of $G$, such that if $u,v\in V$ satisfy $d_{G\setminus S}(u,v)\leq\Delta$, then there is $F\in\tilde{\mathcal{F}}_\Delta$ with $d_F(u,v)\leq d_{G\setminus S}(u,v)+\epsilon\Delta$.

\begin{center}
\begin{figure}[!ht]
    \centering
    \includegraphics[width=12cm, height=6cm]{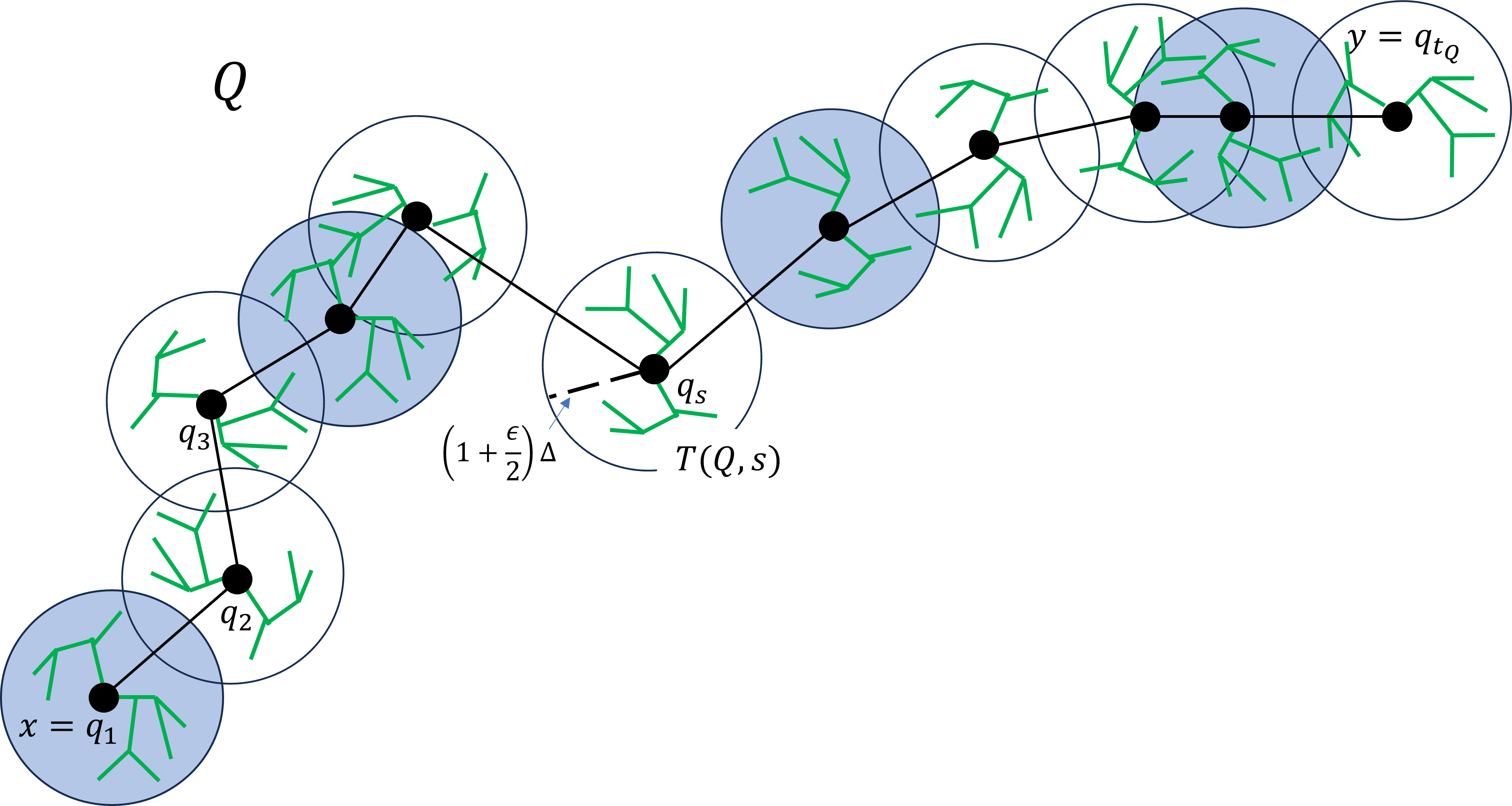}
    \caption{The path $Q$ with its $\epsilon\Delta$-landmarks $x=q_1,q_2,...,q_{t_Q}=y$ marked on it. Around every landmark $q_s$, we grow a shortest path tree $T(Q,s)$ in the graph $H_i$ for its cluster $C(Q,s)=\{v\in V(H_i)\;|\;d_{H_i}(q_s,v)\leq(1+\frac{\epsilon}{2})\Delta\}$. Clusters of remote landmarks, i.e., of which the indices differ by more than $\frac{4}{\epsilon}+2$, are disjoint. Starting with $C(Q,\sigma)$, for any index $1\leq\sigma\leq\frac{4}{\epsilon}+3$, we consider a sequence of clusters with remote landmarks (colored by blue) and denote the union of their trees $T(Q,s)$ by $F(Q,\sigma)$. Note that $F(Q,\sigma)$ is a forest. For the blue clusters in the picture, $\sigma=1$. Doing so for every $1\leq\sigma\leq\frac{4}{\epsilon}+3$ and for every separator path $Q$, we obtain a forest cover ($\mathcal{F}_\Delta\setminus\tilde{\mathcal{F}}_\Delta$) with $\left(\left\lceil\frac{4}{\epsilon}\right\rceil+3\right)p$ forests.}
    \label{fig:SinglePathForestCover}
\end{figure}
\end{center}

Following its definition, the number of forests in $\mathcal{F}_\Delta$ is at most 
\[\left(\left\lceil\frac{4}{\epsilon}\right\rceil+3\right)p+\left(\left\lceil\frac{4}{\epsilon}\right\rceil+3\right)p\log\frac{n}{2}=\left(\left\lceil\frac{4}{\epsilon}\right\rceil+3\right)p\log n,\] 
and they are all sub-graphs (i.e., spanning forests) of $G$. 

Fix some $u,v\in V$ with $d_G(u,v)\leq\Delta$, and let $P_{u,v}$ be a shortest $u$-$v$ path in $G$. If $P_{u,v}$ does not intersect the separator $S$, then $P_{u,v}$ is contained in $G\setminus S$. In this case, we know that there is $F\in\tilde{\mathcal{F}}_\Delta\subseteq\mathcal{F}_\Delta$ with 
\[d_F(u,v)\leq d_{G\setminus S}(u,v)+\epsilon\Delta=d_G(u,v)+\epsilon\Delta~.\]

Otherwise, the path $P_{u,v}$ intersects $S$. Let $i$ be the smallest index such that $P_{u,v}$ intersects a path $Q$ from $P_i\subseteq S$. Let $z$ be a vertex in which $P_{u,v}$ and $Q$ intersect, and let $q_s$ be the $\epsilon\Delta$-landmark on $Q$ that is the closest to $z$. If $z=q_s$, then clearly $d_{H_i}(z,q_s)=0$. If $z\neq q_s$, then $z$ must be either on the sub-path $Q[q_{s-1},q_s]$ or on the sub-path $Q[q_s,q_{s+1}]$. We assume the former, as the latter is analyzed analogously. Since $z\in Q[q_{s-1},q_s]$ and does not equal to $q_s$ or $q_{s-1}$, we conclude by Item (1) of Lemma \ref{lemma:FindLandmarks} that $w(Q[q_{s-1},q_s])\leq\epsilon\Delta$. The vertex $q_s$ is closer to $z$ than $q_{s-1}$, and therefore $d_{H_i}(z,q_s)= d_Q(z,q_s)\leq\frac{\epsilon}{2}\Delta$. This shows that in any case, $d_{H_i}(z,q_s)\leq\frac{\epsilon}{2}\Delta$.

Next, note that
\[d_{H_i}(u,q_s)\leq d_{H_i}(u,z)+d_{H_i}(z,q_s)\leq d_{H_i}(u,z)+\frac{\epsilon}{2}\Delta\leq(1+\frac{\epsilon}{2})\Delta~,\]
and similarly,
\[d_{H_i}(v,q_s)\leq d_{H_i}(v,z)+\frac{\epsilon}{2}\Delta\leq(1+\frac{\epsilon}{2})\Delta~.\]
That means that $u$ and $v$ are vertices in the tree $T(Q,s)$, and that
\[d_{T(Q,s)}(u,v)\leq d_{H_i}(u,q_s)+d_{H_i}(q_s,v)\leq d_{H_i}(u,z)+\frac{\epsilon}{2}\Delta+d_{H_i}(v,z)+\frac{\epsilon}{2}\Delta=d_G(u,v)+\epsilon\Delta~.\]
The last step here is due to the fact that $i$ is the smallest index such that $P_{u,v}$ intersects a separator path $Q$ from $P_i$. Therefore, $P_{u,v}$ is fully contained in $H_i=G\setminus\bigcup_{i'<i}P_{i'}$, and $d_{H_i}(u,z)+d_{H_i}(z,v)=d_G(u,z)+d_G(z,v)=d_G(u,v)$.

Lastly, let $\sigma$ be the unique index in $\{1,2,...,\min\{\lceil\frac{4}{\epsilon}\rceil+3,t_Q\}\}$ such that $\sigma\equiv s\mod{(\lceil\frac{4}{\epsilon}\rceil+3)}$. By the definition of the forest $F(Q,\sigma)\in\mathcal{F}_\Delta$, it contains $T(Q,s)$ as a sub-graph. We conclude that, as desired, there is a forest $F=F(Q,\sigma)\in\mathcal{F}_\Delta$ such that
$d_F(u,v)\leq d_G(u,v)+\epsilon\Delta~.$
    
\end{proof}

The union of the forest covers $\mathcal{F}_\Delta$, for every $\Delta\in\{2^i\}_{i=0}^{\lceil\log_2\Lambda\rceil}$, provides a spanning forest cover with stretch $1+\epsilon$ and $O(\frac{p}{\epsilon}\cdot\log n\cdot\log\Lambda)$ forests, where $\Lambda=\frac{\max_{u,v\in V}d_G(u,v)}{\min_{u\neq v\in V}d_G(u,v)}$ is the aspect ratio of $G$. To get rid of the dependence in $\Lambda$, and replace $\log\Lambda$ by $O(\log\frac{n}{\epsilon})$, we present a weight reduction in Appendix \ref{app:AspectRatioReduction}. As a result, we obtain the following theorem.

\begin{theorem} \label{thm:SpanTreeCoverForPathSeparators}
    Let $G=(V,E)$ be an $n$-vertex undirected weighted $p$-path-separable graph. For every $\epsilon\in(0,1]$, there is a spanning forest cover $\mathcal{F}$ for $G$ with stretch $1+\epsilon$ and size $O(p\cdot\frac{\log^2n}{\epsilon})$.

    If $G$ is connected, then by Observation \ref{obs:FromForestCoverToTreeCover}, there is also a spanning tree cover $\mathcal{T}$ for $G$ with stretch $1+\epsilon$ and $O(p\cdot\frac{\log^2n}{\epsilon})$ trees.
\end{theorem}

\subsection{Tree Covers with Fewer Trees for Path-Separable Graphs} \label{sec:PathSeparableLargeStretchTreeCovers}

In Section \ref{sec:PathSeparableSpanSmallStretchTreeCovers}, we constructed spanning tree covers for $p$-path-separable graphs, with a very low stretch of $1+\epsilon$, but with size that is at least linear in $p$. In this section, we devise (spanning and non-spanning) tree covers with larger stretch $\approx O(k)$ and with fewer trees $\approx O(p^\frac{1}{k})$ (for any choice of an integer parameter $k>0$). Note that for path-separable graphs, the separator size may be much higher than the number of paths of which it consists, and thus applying Theorem \ref{thm:SmallTreewidthSpanTreeCover} directly provides unsatisfactory bounds.

\begin{comment}

%Intuitively, given a graph with a path separator with $p=p(n)$ paths, we extract a small set of landmarks from each of the $p(n)$ paths, similarly to our construction in Section \ref{sec:PathSeparableSpanSmallStretchTreeCovers}. This time, we use the set $L$ of landmarks instead of the set $A$ from our construction in Lemma \ref{lemma:PairwiseSpanTreeCover} (for vertex-separable graphs). As in Section \ref{sec:PathSeparableSpanSmallStretchTreeCovers}, to ensure that the set of landmarks is small enough, we focus on all vertex pairs $u,v$ at some given distance scale $\Delta>0$ (i.e., such that $d_G(u,v)\approx\Delta$). 

%A similar approach was taken in Thorup's path-reporting distance oracle (PRDO) \cite{Tho04,Kle02} for planar graphs, but there are two main differences. First, we consider far more general path separators than those considered in \cite{Tho04,Kle02}. Indeed, path separators for planar graphs consist of at most $3$ shortest paths \textit{in the original input graph} $G$, that share a common endpoint. That is, planar graphs are \textit{strongly tree-like} $3$-path-separable (see Definition \ref{def:PathSeparable}). On the other hand, we consider more general path separators $S=\bigcup_{i=0}^{\ell-1}P_i$, where for each $0\leq i\leq\ell-1$, $P_i$ is a union of $p_i$ shortest paths in $H_i=G\setminus\bigcup_{j<i}P_j$, rather than in $G$, and $\sum_{i=0}^{\ell-1}p_i\leq p$ (see Definition \ref{def:PathSeparable}).

%Second, the PRDO of \cite{Tho04,Kle02} provides stretch $1+\epsilon$ and uses $\Omega(\frac{p}{\epsilon})$ landmarks per scale, while our tree covers allow larger stretch parametrized by $k$, but on the other hand employ only $O(kp^{\frac{1}{k}})$ trees per scale. In addition, even for stretch $1+\epsilon$, we note that the PRDO of \cite{Tho04,Kle02} is not a spanning tree cover with a small number of trees. 
%To build a spanning tree cover with a small number of trees, we first build many local sub-trees, each of which provides a small stretch for vertex pairs located in a vicinity of a certain landmark. We then carefully identify collections of vertex disjoint sub-trees among them, and glue together all trees in each collection.

\end{comment}

Note that in the proof of Lemma \ref{lemma:OneScaleSimpleTreeCover}, for approximating a shortest $u$-$v$ path that intersects a separator path $Q$, we used paths from $u$ and $v$ to a close $\eta$-landmark $q_s$ on $Q$. For this end, we had to include in the forest cover, for every landmark $q_s$, a shortest path tree rooted in $q_s$. This results in a large number of trees in the forest cover. In order to use fewer forests, we use our machinery from Lemma \ref{lemma:PairwiseSpanTreeCover}, albeit with the following observation, which was in fact proved in Inequalities (\ref{eq:StrongerObservation}) and (\ref{eq:StrongerObservationHST}) in the proof of Lemma \ref{lemma:PairwiseSpanTreeCover}.

\begin{observation} \label{obs:StrongPairwise}
    Given an undirected weighted graph $G=(V,E)$, a subset $A\subseteq V$ of size $t$, and an integer parameter $k\geq1$, the forest cover $\mathcal{F}_A$, the tree cover $\mathcal{T}_A$ and the HST cover $\mathcal{H}_A$ from Lemma \ref{lemma:PairwiseSpanTreeCover} satisfy, for every $u,v\in V$ and $s\in A$, 
    \begin{eqnarray*}
        \min_{F\in\mathcal{F}_A}d_F(u,v)&\leq&\hat{c}\cdot k\log\log(t+4)\cdot(d_G(u,s)+d_G(s,v))~,\\
        \min_{T\in\mathcal{T}_A}d_T(u,v)&\leq&(32ek+1)\cdot(d_G(u,s)+d_G(s,v))~,\\
        \min_{T\in\mathcal{H}_A}\rho_T(u,v)&\leq&12ek\cdot\max\{d_G(u,s),d_G(s,v)\}~.\\
    \end{eqnarray*}
\end{observation}

Recall that the covers $\mathcal{F}_A$, $\mathcal{T}_A$ and $\mathcal{H}_A$ have a relatively small size of $O(kt^{\frac{1}{k}})$. Thus, to utilize Observation \ref{obs:StrongPairwise}, we have to form the demand sets $A$ by using $\eta$-landmarks on different separator paths -- otherwise, the number of forests will be $\Omega(p)$. This makes the gluing of disjoint trees to a single forest (see the definition of $F(Q,\sigma)$ in the proof of Lemma \ref{lemma:OneScaleSimpleTreeCover}) more challenging. To encounter this problem, we first make an abstraction of it, using the following definition.

\begin{definition} \label{def:PathClustersGraph}
    Let $\Delta,\eta>0$ be positive real parameters. Suppose that the graph $G=(V,E)$ has a path separator $S=P_0\cup P_1\cup\cdots\cup P_{\ell-1}$ (see Definition \ref{def:PathSeparable}), and denote $H_i=G\setminus\bigcup_{j<i}P_j$ for every $i=0,1,...,\ell-1$. The \textbf{path-clusters graph} $L=(V_L,E_L)$ of $G$ with respect to $S,\Delta$ and $\eta$ is defined as follows. For every $i=0,1,...,\ell-1$ and a path $Q$ in $P_i$, let $q_1,q_2,...,q_{t_Q}$ be $\eta$-landmarks on $Q$ (see Lemma \ref{lemma:FindLandmarks}). For every $s=1,2,...,t_Q$, define the cluster $C(Q,s)=\{v\in V(H_i)\;|\;d_{H_i}(q_s,v)\leq\Delta+\frac{1}{2}\eta\}$. Then, the set of vertices $V_L$ is defined as the set of these clusters $C(Q,s)$, for every $Q$ and $1\leq s\leq t_Q$, and there is an edge between $C(Q,s)$ and $(Q',s')$ if and only if $C(Q,s)\cap C(Q',s')\neq\emptyset$.

    Given a cluster $C=C(Q,s)\in V_L$, we denote by $q(C)$ the landmark $q_s$ on $Q$, which is the \textbf{center} of this cluster. See Figure \ref{fig:PathClustersGraph} for an illustration.
\end{definition}

\begin{center}
\begin{figure}[!ht]
    \centering
    \includegraphics[width=16cm, height=8cm]{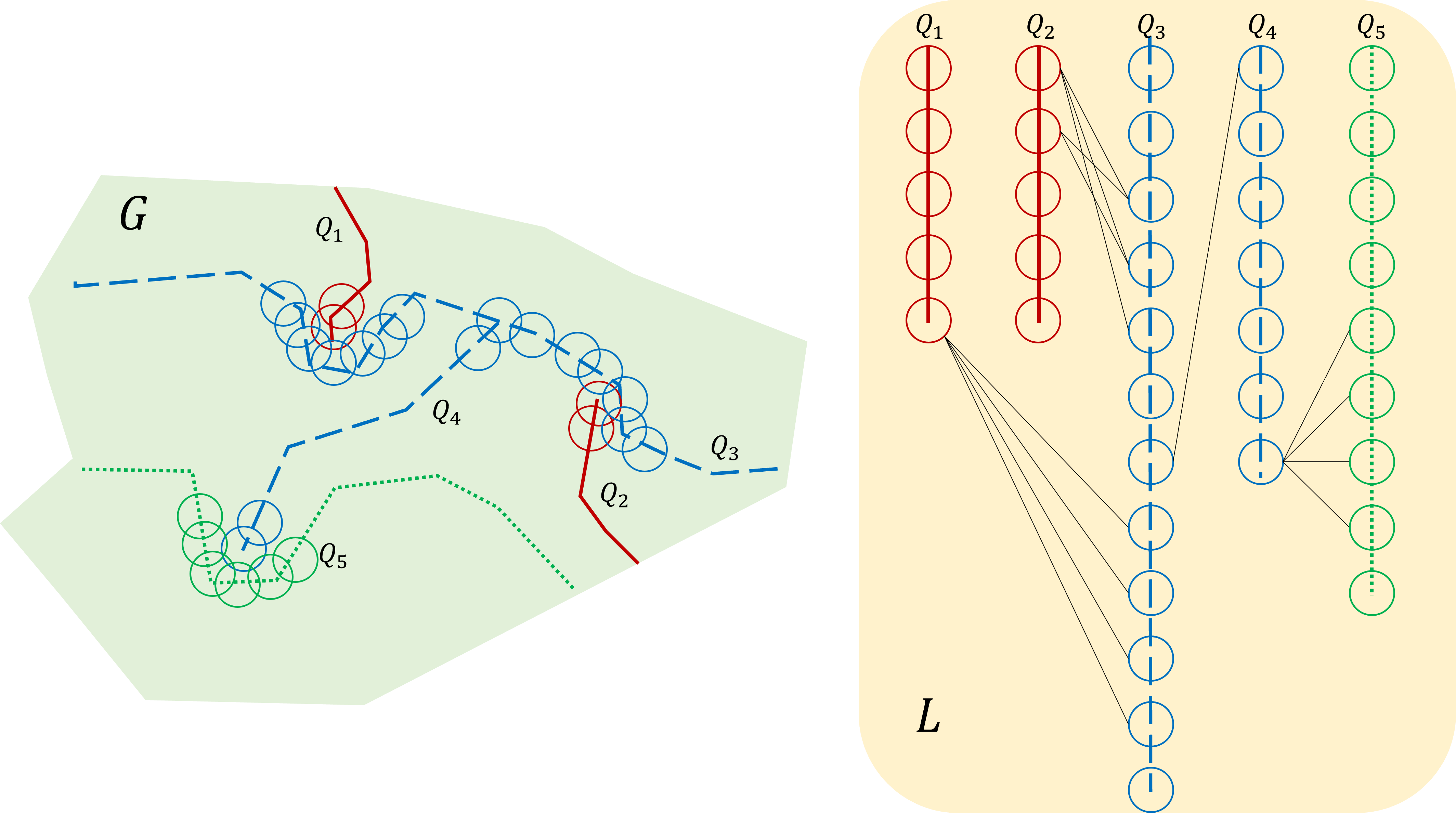}
    \caption{The graph $G$ is depicted on the left, with its path separator. Here, the path separator consists of two paths $Q_1,Q_2$ in $P_0$ (depicted by continuous red lines), two paths $Q_3,Q_4$ in $P_1$ (depicted by dashed blue lines) and a path $Q_5$ in $P_2$ (depicted by a dotted green line). Recall that, for example, $Q_3$ is a shortest path in the graph $H_1=G\setminus P_0$, and not necessarily in $G$. On these paths, the clusters around the $\eta$-landmarks are constructed, and they serve as the vertices of the path-clusters graph $L$ on the right. There are edges between two vertices of $L$ if and only if they intersect as clusters in $G$.}
    \label{fig:PathClustersGraph}
\end{figure}
\end{center}

\pagebreak
The clusters $C(Q,s)$, that are defined in Definition \ref{def:PathClustersGraph} and in the proof of Lemma \ref{lemma:OneScaleSimpleTreeCover} (with $\eta=\epsilon\Delta$), have the property that paths of length at most $\Delta$, that intersect the path separator, are fully contained in one of these clusters, and are well-approximated by a path that goes through the corresponding $\eta$-landmark. Thus, unifying several clusters and applying Observation \ref{obs:StrongPairwise} on the set of the corresponding landmarks takes care of all the $\Delta$-paths that are contained in these clusters. The remaining challenge is to find a way to divide all the clusters into small sets, such that we will be able to unify the clusters in each set, construct a forest cover to the resulting graph, and then glue forest covers of many different sets together. In terms of the path-clusters graph, this can be formalized using the notion of \textit{clustered coloring}. This notion has appeared in the literature with several names or formulations, and was mostly studied from a graph-theoretical viewpoint.

\begin{definition}[\cite{ADOV03}] \label{def:ClusteredColoring}
    Given an undirected graph $L=(V,E)$, a \textbf{clustered coloring} of $L$ with $Z$ colors and block size\footnote{In \cite{ADOV03}, the block size was called the \textit{clustering} of the clustered coloring. However, we find the term \textit{block size} more suitable. E.g., note that when the block size is $\Gamma=1$, a clustered coloring is simply a standard proper coloring of a graph.} $\Gamma$ is a vertex coloring of $L$, in which every monochromatic connected component has at most $\Gamma$ vertices. More formally, it is a function $f:V\rightarrow\{1,2,...,Z\}$ such that for every color $1\leq c\leq Z$, every connected component in the graph $L[f^{-1}(c)]$ has at most $\Gamma$ vertices. Here, $f^{-1}(c)=\{v\in V\;|\;f(v)=c\}$ is the set of vertices colored by $c$, and $L[f^{-1}(c)]$ is the induced graph on $f^{-1}(c)$.
\end{definition}

Next, we argue that for producing tree and forest covers with small stretch and size, one can employ clustered colorings with small parameters for the cluster graphs $L=L(S)$ obtained from path separators $S$ on different levels of the recursion.

\begin{lemma} \label{lemma:ReductionToLooseColoring}
    Fix two real parameters $\eta,\Delta>0$, an integer parameter $k>0$, and a graph $G=(V,E)$ on $n$ vertices. Suppose that every sub-graph $G'$ of $G$ has a path separator $S$, such that the path-clusters graph of $G'$ with respect to $S,\Delta,\eta$ has a clustered coloring with $Z$ colors and block size $\Gamma$. Then, $G$ admits a spanning forest cover $\mathcal{F}_\Delta$, a (not necessarily spanning) tree cover $\mathcal{T}_\Delta$, and an HST cover $\mathcal{H}_\Delta$, all of which of size $O\left(Z\cdot k\cdot \Gamma^{\frac{1}{k}}\cdot\log n\right)$, such that for every $u,v\in V$ with $d_G(u,v)\leq\Delta$,
    \begin{eqnarray*}
        \min_{F\in\mathcal{F}_\Delta}d_F(u,v)&\leq&O(k\log\log (\Gamma+4))\cdot(d_G(u,v)+\eta)~,\\
        \min_{T\in\mathcal{T}_\Delta}d_T(u,v)&\leq&(32ek+1)\cdot(d_G(u,v)+\eta)~,\\
        \min_{T\in\mathcal{H}_\Delta}\rho_T(u,v)&\leq&12ek\cdot(d_G(u,v)+\eta)~.\\
    \end{eqnarray*}
\end{lemma}

\begin{proof}

We prove the lemma by induction over $n$ (while the other parameters are fixed). For a constant $n$, the claim is trivial. For a general $n$, let $S=P_0\cup P_1\cup\cdots\cup P_{\ell-1}$ be a path separator of $G$, and let $C_1,C_2,C_3,...$ be the connected components of $G\setminus S$, all of which with at most $\frac{n}{2}$ vertices. By the induction hypothesis, every component $C_j$ has a spanning forest cover $\mathcal{F}^j_\Delta$, a (not necessarily spanning) tree cover $\mathcal{T}^j_\Delta$, and an HST cover $\mathcal{H}^j_\Delta$, all with size $O\left(Z\cdot k\cdot \Gamma^{\frac{1}{k}}\cdot\log\frac{n}{2}\right)$, such that if $d_{C_j}(u,v)\leq\Delta$, then 
\begin{eqnarray*}
        \min_{F\in\mathcal{F}^j_\Delta}d_F(u,v)&\leq&O(k\log\log (\Gamma+4))\cdot(d_{C_j}(u,v)+\eta)~,\\
        \min_{T\in\mathcal{T}^j_\Delta}d_T(u,v)&\leq&(32ek+1)\cdot(d_{C_j}(u,v)+\eta)~,\\
        \min_{T\in\mathcal{H}^j_\Delta}\rho_T(u,v)&\leq&12ek\cdot(d_{C_j}(u,v)+\eta)~.\\
    \end{eqnarray*}

%Lemma 2 (UnifyDisjoint...) does not talk about HSTs...
By Lemma \ref{lemma:UnifyDisjointForestCovers}, there is also a forest cover $\tilde{\mathcal{F}}_\Delta$, a (not necessarily spanning) tree cover $\tilde{\mathcal{T}}_\Delta$, and an HST cover\footnote{Lemma \ref{lemma:UnifyDisjointForestCovers} only considers forest cover, however, it is immediate to generalize it to hold also for HST covers. For this purpose, under some arbitrary enumeration of the HSTs in the HST covers of each connected component, one can merge the $j$'th HST of each of them, for every $j\geq0$, by simply setting all their roots as children of a new root with label $diam(G)$, or $\infty$. A more explicit construction appears at the end of the proof of Theorem \ref{thm:SmallTreewidthSpanTreeCover}.} $\tilde{\mathcal{H}}_\Delta$, all with size $O\left(Z\cdot k\cdot \Gamma^{\frac{1}{k}}\cdot\log\frac{n}{2}\right)$, such that whenever $d_{G\setminus S}(u,v)\leq\Delta$,
\begin{equation} \label{eq:StretchNotIntersect}
\begin{split}
    \min_{F\in\tilde{\mathcal{F}}_\Delta}d_F(u,v)&\leq O(k\log\log (\Gamma+4))\cdot(d_{G\setminus S}(u,v)+\eta)~,\\
    \min_{T\in\tilde{\mathcal{T}}_\Delta}d_T(u,v)&\leq(32ek+1)\cdot(d_{G\setminus S}(u,v)+\eta)~,\\
        \min_{T\in\tilde{\mathcal{H}}_\Delta}\rho_T(u,v)&\leq12ek\cdot(d_{G\setminus S}(u,v)+\eta)~.\\
\end{split}
\end{equation}

Next, denote by $L$ the path-clusters graph of $G$ with respect to $S,\Delta,\eta$. Fix some color $c\in\{1,2,...,Z\}$. In the graph $L[f^{-1}(c)]$ that is induced by the clusters (vertices of $L$) colored by $c$, we consider a connected component $L'$. Note that the number of clusters in $L'$ is at most $\Gamma$, by the properties of the clustered coloring $f$. 
%If $C=C(Q,s)$ is a cluster in $L'$, and $Q$ is a path in $P_i\subseteq S$, we denote by $q_C$ the landmark $q_s$ on $Q$. Recall that $C=\{v\in V\;|\;d_{H_i}(q_C,v)\leq\Delta+\frac{1}{2}\eta\}$. 
Consider the graph $G_{L'}$ that is induced by the union of all clusters in $L'$, i.e., $G_{L'}=G[\bigcup\{C\in L'\}]$. We now apply Lemma \ref{lemma:PairwiseSpanTreeCover} on $G_{L'}$ with the demand set $A_{L'}=\{q(C)\;|\;C\in L'\}$ (recall that $q(C)$ is the center of the cluster $C$), and obtain a spanning forest cover $\mathcal{F}_{L'}$, a (not necessarily spanning) tree cover $\mathcal{T}_{L'}$, and an HST cover $\mathcal{H}_{L'}$, all with size $O(k\cdot \Gamma^{\frac{1}{k}})$, as $|A_{L'}|=|L'|\leq \Gamma$. By Observation \ref{obs:StrongPairwise}, for every two vertices $u,v$ in $G_{L'}$, and any $x\in A_{L'}$,
\begin{eqnarray*}
    \min_{F\in\mathcal{F}_{L'}}d_F(u,v)&\leq& O(k\log\log(\Gamma+4))\cdot(d_{G_{L'}}(u,x)+d_{G_{L'}}(x,v))~,\\
    \min_{T\in\mathcal{T}_{L'}}d_F(u,v)&\leq& (32ek+1)\cdot(d_{G_{L'}}(u,x)+d_{G_{L'}}(x,v))~,\\
    \min_{T\in\mathcal{H}_{L'}}d_F(u,v)&\leq& 12ek\cdot\max\{d_{G_{L'}}(u,x),d_{G_{L'}}(x,v)\}~.\\
\end{eqnarray*}

Consider all connected components $L'$ of $L[f^{-1}(c)]$. By the definition of the path-clusters graph $L$, clusters do not intersect in $G$ if they are not neighbors in $L$. This implies that the sub-graphs $G_{L'}=G[\bigcup\{C\in L'\}]$ of $G$, for every connected component $L'$ of $L[f^{-1}(c)]$, are pairwise vertex disjoint, and so are the covers $\mathcal{F}_{L'},\mathcal{T}_{L'}$ and $\mathcal{H}_{L'}$. By Lemma \ref{lemma:UnifyDisjointForestCovers},\footnote{Technically, to apply Lemma \ref{lemma:UnifyDisjointForestCovers}, we have to consider a sub-graph $G^c$ of $G$, which is the disjoint union of the graphs $G_{L'}$, \textit{without} any edges that connect the different graphs $G_{L'}$.} the graph $G$ admits a spanning forest cover $\mathcal{F}^c$, a (not necessarily spanning) tree cover $\mathcal{T}^c$, and an HST cover $\mathcal{H}^C$, all with size $O(k\cdot \Gamma^{\frac{1}{k}})$, such that for every two vertices $u,v$ in the same graph $G_{L'}$ (for a connected component $L'$ of $L[f^{-1}(c)]$), and every $x\in A_{L'}$,
\begin{equation} \label{eq:ObservationStrongPairwise}
\begin{split}
    \min_{F\in\mathcal{F}^c}d_F(u,v)&\leq O(k\log\log(\Gamma+4))\cdot(d_{G_{L'}}(u,x)+d_{G_{L'}}(x,v))~,\\
    \min_{T\in\mathcal{T}^c}d_F(u,v)&\leq(32ek+1)\cdot(d_{G_{L'}}(u,x)+d_{G_{L'}}(x,v))~,\\
    \min_{T\in\mathcal{H}^c}d_F(u,v)&\leq12ek\cdot\max\{d_{G_{L'}}(u,x),d_{G_{L'}}(x,v)\}~.\\
\end{split}
\end{equation}

We finally can define the desired covers as
\begin{eqnarray*}
    \mathcal{F}_\Delta&=&\left(\bigcup_{c=1}^Z\mathcal{F}^c\right)\cup\tilde{\mathcal{F}}_\Delta~,\\
    \mathcal{T}_\Delta&=&\left(\bigcup_{c=1}^Z\mathcal{T}^c\right)\cup\tilde{\mathcal{T}}_\Delta~,\\
    \mathcal{H}_\Delta&=&\left(\bigcup_{c=1}^Z\mathcal{H}^c\right)\cup\tilde{\mathcal{H}}_\Delta~.\\
\end{eqnarray*}

The size of each of the covers defined above is bounded by
\[\sum_{c=1}^Z O\left(k\cdot \Gamma^{\frac{1}{k}}\right)+O\left(Z\cdot k\cdot \Gamma^{\frac{1}{k}}\cdot\log\frac{n}{2}\right)=O\left(Z\cdot k\cdot \Gamma^{\frac{1}{k}}\right)\cdot\left(1+\log\frac{n}{2}\right)=O\left(Z\cdot k\cdot \Gamma^{\frac{1}{k}}\cdot\log n\right)~,\]
as desired.

Fix two vertices $u,v\in V$ with $d_G(u,v)\leq\Delta$. If there is a shortest $u$-$v$ path $P_{u,v}$ that does not intersect $S$, then we have $d_{G\setminus S}(u,v)=d_G(u,v)\leq\Delta$, and by (\ref{eq:StretchNotIntersect}), we get
\begin{eqnarray*}
    \min_{F\in\mathcal{F}_\Delta}d_F(u,v)&\leq&\min_{F\in\tilde{\mathcal{F}}_\Delta}d_F(u,v)\leq O(k\log\log (\Gamma+4))\cdot(d_{G}(u,v)+\eta)~,\\
    \min_{T\in\mathcal{T}_\Delta}d_T(u,v)&\leq&\min_{T\in\tilde{\mathcal{T}}_\Delta}d_T(u,v)\leq(32ek+1)\cdot(d_{G}(u,v)+\eta)~,\\
        \min_{T\in\tilde{\mathcal{H}}_\Delta}\rho_T(u,v)&\leq&\min_{T\in\tilde{\mathcal{H}}_\Delta}\rho_T(u,v)\leq12ek\cdot(d_{G}(u,v)+\eta)~.\\
\end{eqnarray*}

Otherwise, we fix a shortest (with respect to $G$) $u$-$v$ path $P_{u,v}$, which must intersect $S$. Let $i$ be the smallest index such that $P_{u,v}$ intersects a path $Q$ from $P_i\subseteq S$. Let $z$ be a vertex in which $P_{u,v}$ and $Q$ intersect, and let $q_s$ be the $\eta$-landmark on $Q$ that is the closest to $z$. In the same way as in the proof of Lemma \ref{lemma:OneScaleSimpleTreeCover}, it is easy to prove that 
$d_{H_i}(z,q_s)\leq\frac{1}{2}\eta$ (using Lemma \ref{lemma:FindLandmarks}). Recall that $H_i=G\setminus\bigcup_{j<i}P_j$. This implies that $z$ is in the cluster $C=C(Q,s)$, and in fact, so are all the vertices $y$ of $P_{u,v}$, since
\[d_{H_i}(y,q_s)\leq d_{H_i}(y,z)+d_{H_i}(z,q_s)\leq d_{H_i}(y,z)+\frac{1}{2}\eta\leq\Delta+\frac{1}{2}\eta~.\]
We also conclude that the distance between $u$ and $v$ in the sub-graph induced by $C$ is at most
\begin{equation} \label{eq:DistInCluster}
    d_C(u,q_s)+d_C(q_s,v)\leq d_{H_i}(u,z)+\frac{1}{2}\eta+d_{H_i}(z,v)+\frac{1}{2}\eta=d_G(u,v)+\eta~,
\end{equation}
where the step $d_{H_i}(u,z)+d_{H_i}(z,v)=d_G(u,v)$ holds since $z$ is on the shortest path $P_{u,v}$, which is fully contained in $H_i$.

Suppose that the cluster $C$ is colored by $c$ in the clustered coloring $f$, i.e., $f(C)=c$, and is in a connected component $L'$ of $L[f^{-1}(c)]$. Then, in the graph $G_{L'}=G[\bigcup\{C'\in L'\}]$ we have
\[d_{G_{L'}}(u,q_s)+d_{G_{L'}}(q_s,v)\stackrel{C\in L'}{\leq}d_C(u,q_s)+d_C(q_s,v)\stackrel{(\ref{eq:DistInCluster})}{\leq}d_G(u,v)+\eta~.\]

Using this result, applying Inequality (\ref{eq:ObservationStrongPairwise}) with $x=q_s\in A_{L'}$ gives
\begin{eqnarray*}
    \min_{F\in\mathcal{F}_\Delta}d_F(u,v)&\leq&\min_{F\in\mathcal{F}^c}d_F(u,v)\leq\min_{F\in\mathcal{F}_{L'}}d_F(u,v)\leq O(k\log\log(\Gamma+4))\cdot(d_G(u,v)+\eta)~,\\
    \min_{T\in\mathcal{T}_\Delta}d_T(u,v)&\leq&\min_{T\in\mathcal{T}^c}d_T(u,v)\leq\min_{T\in\mathcal{T}_{L'}}d_T(u,v)\leq(32ek+1)\cdot(d_G(u,v)+\eta)~,\\
    \min_{T\in\mathcal{H}_\Delta}\rho_T(u,v)&\leq&\min_{T\in\mathcal{H}^c}\rho_T(u,v)\leq\min_{T\in\mathcal{H}_{L'}}\rho_T(u,v)\leq12ek\cdot(d_G(u,v)+\eta)~.
\end{eqnarray*}

\end{proof}

Lemma \ref{lemma:ReductionToLooseColoring} reduces the problem of finding a forest cover (or a tree cover, or an HST cover) in path-separable graphs to finding a clustered coloring of its corresponding path-clusters graph. In fact, Lemma \ref{lemma:OneScaleSimpleTreeCover} can be viewed as an application of Lemma \ref{lemma:ReductionToLooseColoring}, where one colors the clusters of each one of the $p$ paths alternately by $O(\frac{1}{\epsilon})$ distinct colors, using distinct colors for different paths. This results in a clustered coloring with $O(\frac{p}{\epsilon})$ colors and block size $1$. Then, each demand set $A_{L'}$ consists of only one landmark. Thus, applying Lemma \ref{lemma:PairwiseSpanTreeCover} (with $k=1$) actually provides a shortest path tree rooted at this landmark (w.l.o.g.). Setting $\eta=\epsilon\Delta$, we obtain Lemma \ref{lemma:OneScaleSimpleTreeCover}.

Next, we show that for $(\ell,\pi)$-path-separable graphs, there always exists a clustered coloring with $O(\ell\cdot\log n)$ colors and block size in $O(\pi\log\pi)$. Using Lemma \ref{lemma:ReductionToLooseColoring}, this will lead to a forest cover with larger stretch than in Theorem \ref{thm:SpanTreeCoverForPathSeparators}, but with a significantly smaller number of forests.

\begin{lemma} \label{lemma:EllPiColoring}
    Let $G=(V,E)$ be an $n$-vertex $(\ell,\pi)$-path-separable graph, and let $S=P_0\cup P_1\cup\cdots\cup P_{\ell-1}$ be a path separator such that $P_i$ consists of at most $\pi$ shortest paths in $H_i=G\setminus\bigcup_{j<i}P_j$. Fix two parameters $\Delta\geq\eta>0$ and denote $\epsilon=\frac{\eta}{\Delta}$. Then, the path-clusters graph of $G$ with respect to $S,\Delta$ and $\eta$ has a clustered coloring with $O(\ell\cdot\log n)$ colors and block size $O(\frac{\pi}{\epsilon}\log\frac{\pi}{\epsilon})$.
\end{lemma}

\begin{proof}

Denote by $L=(V_L,E_L)$ the path-clusters graph of $G$ with respect to $S,\Delta$ and $\eta$. Recall that the vertices of $L$ are all the clusters of the form $C(Q,s)=\{v\in V(H_i)\;|\;d_{H_i}(q_s,v)\leq\Delta+\frac{1}{2}\eta\}$, for some $0\leq i\leq\ell-1$, where $q_s$ is the $s$'th $\eta$-landmark (see Lemma \ref{lemma:FindLandmarks}) on the path $Q$ from $P_i$. Denote by $L^i$ the sub-graph of $L$ that is induced by such clusters $C(Q,s)$, where $Q$ is a path from $P_i$. Note that the $L^i$'s are pairwise vertex disjoint (for $i\neq i'$ and paths $Q$ from $P_i$ and $Q'$ from $P_{i'}$, respective clusters $C(Q,s),C(Q',s')$ may intersect in $G$ as subsets of $V$, but $C(Q,s)$ is a vertex of $L^i$, and $C(Q',s')$ is a vertex of $L^{i'}$). We construct a clustered coloring of $L$ by focusing on a different $L^i$ each time, and using a distinct palette of $O(\log n)$ colors for each of them. This results in $O(\ell\cdot\log n)$ colors overall.

Fix an index $i\in\{0,1,...,\ell-1\}$. Given a cluster $C$ in $L^i$ and an integer $r\geq0$, we define the $r$-ball 
\[B_{L^i}(C,r)=\{C'\in V(L^i)\;|\;d_{L^i}(C,C')\leq r\}~.\]
Note that the graph $L^i$ is unweighted, thus the notation $d_{L^i}(C,C')$ stands for the unweighted distance (i.e., the smallest number of edges on a path) between $C$ and $C'$ in $L^i$.

Next, we argue that for every cluster $C$ and radius $r>0$, we have the following bound on the size of its ball with radius $r$.
\begin{equation} \label{eq:BallSizeBound}
    |B_{L^i}(C,r)|\leq\left(\frac{12}{\epsilon}\cdot r+1\right)\cdot\pi
\end{equation}

Indeed, given a path $Q$ in $P_i$ and its $\eta$-landmarks $q_1,q_2,...,q_t$, suppose that there is an index $s$ such that $C(Q,s)\in B_{L^i}(C,r)$. Let $(C(Q,s)=C_0,C_1,...,C_l=C)$ be a path in $L^i$ from $C(Q,s)$ to $C$, such that $l\leq r$. Note that the radius of each cluster in this path, as a sub-graph of $H_i$, is at most $\Delta+\frac{1}{2}\eta$, and therefore there is a path in $H_i$ from $q_s$ to $q(C)$ -- the center of $C$ -- with weight at most $(1+(l-1)\cdot2+1)\cdot(\Delta+\frac{1}{2}\eta)=2l\cdot(\Delta+\frac{1}{2}\eta)\leq r\cdot(2\Delta+\eta)$. Let $s_1$ be the lowest index such that $C(Q,s_1)\in B_{L^i}(C,r)$, and let $s_2$ be the highest index such that $C(Q,s_2)\in B_{L^i}(C,r)$. Using the bound we proved for the distances from the landmarks $q_{s_1},q_{s_2}$ to $q(C)$, we conclude that $d_{H_i}(q_{s_1},q_{s_2})\leq2r\cdot(2\Delta+\eta)$. See Figure \ref{fig:BallSizeBound} for an illustration. By Lemma \ref{lemma:FindLandmarks}, the number of landmarks on $Q[q_{s_1},q_{s_2}]$ is at most
\[\frac{2w(Q[q_{s_1},q_{s_2}])}{\eta}+1=\frac{2d_{H_i}(q_{s_1},q_{s_2})}{\eta}+1\leq4r\cdot(\frac{2\Delta}{\eta}+1)+1<\frac{12}{\epsilon}\cdot r+1~.\]
Here we used the fact that $\frac{\Delta}{\eta}=\frac{1}{\epsilon}\geq1$, and that $Q$ is a shortest path in $H_i$, and therefore $Q[q_{s_1},q_{s_2}]$ is the shortest path between $q_{s_1},q_{s_2}$. We conclude that there are at most $\frac{12}{\epsilon}\cdot r+1$ clusters in $B_{L^i}(C,r)$ that are centered in landmarks of $Q$, and therefore there are at most $(\frac{12}{\epsilon}\cdot r+1)\cdot\pi$ clusters in $B_{L^i}(C,r)$ (since there are at most $\pi$ paths $Q$ in $P_i$).

\begin{center}
\begin{figure}[!ht]
    \centering
    \includegraphics[width=14cm, height=8cm]{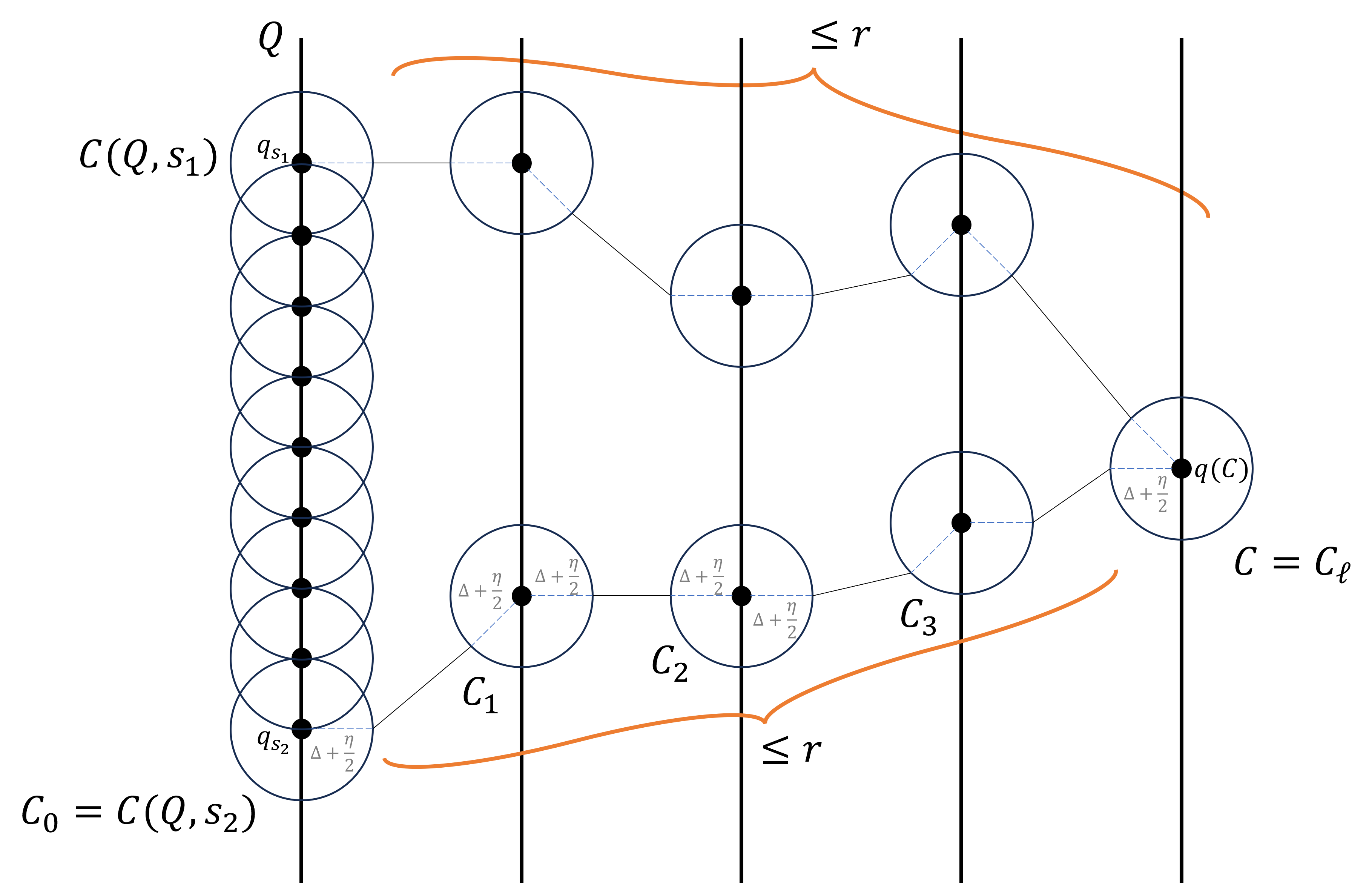}
    \caption{The cluster $C_0=C(Q,s_2)$ is in the $r$-ball of $C$ in $L^i$. Thus, there is a path of at most $r$ edges between $C_0$ and $C$. Every cluster $C_j$ on this path has a radius of at most $\Delta+\frac{1}{2}\eta$ in $H_i$. Thus, this path translates into a path of weight at most $2r\cdot(\Delta+\frac{1}{2}\eta)=r\cdot(2\Delta+\eta)$ in $H_i$ between $q_s$ and $q(C)$, the center of $C$.}
    \label{fig:BallSizeBound}
\end{figure}
\end{center}

We proceed with a standard ball-carving technique \cite{LS93,Bar96,FRT03}. We iterate over the colors $\sigma=1,2,3,...$, starting with $\sigma=1$. Fix an arbitrary cluster $C\in V(L^i)$ and consider the product
\begin{eqnarray*}
    \frac{|B_{L^i}(C,1)|}{|B_{L^i}(C,0)|}\cdot\frac{|B_{L^i}(C,2)|}{|B_{L^i}(C,1)|}\cdots\frac{|B_{L^i}(C,\log_2(\frac{\pi}{\epsilon})+1)|}{|B_{L^i}(C,\log_2(\frac{\pi}{\epsilon}))|}&=&\left|B_{L^i}\left(C,\log_2\left(\frac{\pi}{\epsilon}\right)+1\right)\right|\\
    &\stackrel{(\ref{eq:BallSizeBound})}{\leq}&\left(\frac{12}{\epsilon}\left(\log_2\left(\frac{\pi}{\epsilon}\right)+1\right)+1\right)\cdot\pi~.
\end{eqnarray*}
For simplicity, we assume here that $\log_2(\frac{\pi}{\epsilon})$ is an integer number. At least one of the terms in this product must be bounded by the geometric mean, i.e., there is $r\in\{0,1,...,\log_2(\frac{\pi}{\epsilon})\}$ with 
\[\frac{|B_{L^i}(C,r+1)|}{|B_{L^i}(C,r)|}\leq\left[\left(\frac{12}{\epsilon}\left(\log_2\left(\frac{\pi}{\epsilon}\right)+1\right)+1\right)\cdot\pi\right]^{\frac{1}{\log_2(\frac{\pi}{\epsilon})+1}}\leq13~.\]
Here we used the bound $\left[\left(\frac{12}{\epsilon}\left(\log_2\left(\frac{\pi}{\epsilon}\right)+1\right)+1\right)\cdot\pi\right]^{\frac{1}{\log_2(\frac{\pi}{\epsilon})+1}}\leq\left[\frac{12\pi}{\epsilon}\left(\log_2\left(\frac{\pi}{\epsilon}\right)+1\right)+\frac{\pi}{\epsilon}\right]^{\frac{1}{\log_2(\frac{\pi}{\epsilon})+1}}=\left[12x\left(\log_2x+1\right)+x\right]^{\frac{1}{\log_2x+1}}$, where $x=\frac{\pi}{\epsilon}$. Analyzing the latter function of $x$, one can show that it is decreasing for every $x\geq1$, thus its maximum value $13$ is obtained when $x=1$.

We color the clusters in $B_{L^i}(C,r)$ by the color $\sigma$ and remove all the clusters of the padded ball $B_{L^i}(C,r+1)$ from the graph $L^i$. After the deletion of these clusters, inequality (\ref{eq:BallSizeBound}) still holds, and we repeat the process with the remaining clusters of $L^i$ (each time coloring clusters by the same color $\sigma$). When all the clusters of $L^i$ are removed, we move on to the next color $\sigma+1$ and the graph $L^i$ \textit{without the clusters that were already colored}. This completes the construction of $f$.

To formalize notations, denote by $L^i_{\sigma,j}$ the remaining graph from $L^i$ after removing the $j$'th ball on the $\sigma$'th iteration. The arbitrary cluster that is fixed as center is denoted by $C_{\sigma,j}$ and the corresponding radius $r$, such that $|B_{L^i_{\sigma,j}}(C_{\sigma,j},r+1)|/|B_{L^i_{\sigma,j}}(C_{\sigma,j},r)|\leq13$, is denoted by $r_{\sigma,j}$. That is, the clusters of $B_{L^i_{\sigma,j}}(C_{\sigma,j},r_{\sigma,j})$ are colored by $\sigma$, and the clusters of the padded ball $B_{L^i_{\sigma,j}}(C_{\sigma,j},r_{\sigma,j}+1)$ are removed from $L^i_{\sigma,j}$ to obtain $L^i_{\sigma,j+1}$. 

If $\gamma_\sigma$ balls were removed in the $\sigma$'th iteration, then the padded balls $\{B_{L^i_{\sigma,j}}(C_{\sigma,j},r_{\sigma,j}+1)\}_{j=0}^{\gamma_\sigma-1}$ %are pairwise disjoint and 
cover all of $V(L^i_{\sigma,0})=V(L^i)$. We bound the number of clusters colored by the color $\sigma$ as follows.
\[\sum_{j=0}^{\gamma_\sigma-1}|B_{L^i_{\sigma,j}}(C_{\sigma,j},r_{\sigma,j})|\geq\sum_{j=1}^{\gamma_\sigma-1}\frac{|B_{L^i_{\sigma,j}}(C_{\sigma,j},r_{\sigma,j}+1)|}{13}\geq\frac{1}{13}|V(L^i_{\sigma,0})|~.\]

That is, each time we color a fraction of at least $\frac{1}{13}$ of the remaining uncolored clusters in $L^i$. Hence, the number of colors until all clusters of $L^i$ are colored is at most $\log_{13/12}|V(L^i)|=O(\log n)$. To color all clusters in $L$, the coloring $f$ uses a total of $O(\ell\cdot\log n)$ colors.

Finally, notice that each ball $B_{L^i_{\sigma,j}}(C_{\sigma,j},r_{\sigma,j})$ that we colored by $\sigma$, for every $\sigma$ and $j$, is a connected component in $L[f^{-1}(\sigma)]$, since neighboring vertices of this ball are immediately removed from $L^i_{\sigma,j}$ (as they are in the padded ball $B_{L^i_{\sigma,j}}(C_{\sigma,j},r_{\sigma,j}+1)$), and are not colored by $\sigma$. Using Inequality (\ref{eq:BallSizeBound}), we conclude that these connected components have at most 
\[\left(\frac{12}{\epsilon}r_{\sigma,j}+1\right)\pi\stackrel{r_{\sigma,j}\leq\log_2\pi}{\leq}\left(\frac{12}{\epsilon}\log_2\left(\frac{\pi}{\epsilon}\right)+1\right)\pi=O\left(\frac{\pi}{\epsilon}\log\frac{\pi}{\epsilon}\right)\]
clusters each, and this is a bound on the block size of the clustered coloring $f$.

\end{proof}

In the next lemma we consider \textit{tree-like} $(\ell,\pi)$-path-separable graphs, where all the path separators in a single union $P_i$ are rooted at the same vertex. We devise a better clustered coloring of the path-clusters graph for this important case (that includes, e.g., $K_r$-minor-free graphs).

\begin{lemma} \label{lemma:TreelikeColoring}
    Let $G=(V,E)$ be an $n$-vertex tree-like $(\ell,\pi)$-path-separable graph, and let $S=P_0\cup P_1\cup\cdots\cup P_{\ell-1}$ be a path separator such that $P_i$ consists of at most $\pi$ shortest paths in $H_i=G\setminus\bigcup_{j<i}P_j$ with a common root $r_i$. Fix two parameters $\Delta\geq\eta>0$ and denote $\epsilon=\frac{\eta}{\Delta}$. Then, the path-clusters graph of $G$ with respect to $S,\Delta$ and $\eta$ has a clustered coloring with $5\ell$ colors and block size $\left\lceil\frac{2}{\epsilon}\right\rceil\cdot\pi$.
\end{lemma}

\begin{proof}

Similarly to the proof of Lemma \ref{lemma:EllPiColoring}, given the path-clusters graph $L$ of $G$ with respect to $S,\Delta$ and $\eta$, we define a coloring $f$ for each sub-graph $L^i$ of $L$, that contains the clusters that are centered on a path from $P_i$.

Fix an index $i\in\{0,1,...,\ell-1\}$. For every shortest path $Q$ in $P_i$, let $q_1,q_2,...,q_t$ be the $\eta$-landmarks of $Q$ from Lemma \ref{lemma:FindLandmarks}. Recall that the sub-graph $L^i$ consists of clusters of the form $C(Q,s)=\{v\in V(H_i)\;|\;d_{H_i}(q_s,v)\leq\Delta+\frac{1}{2}\eta\}$, for a shortest path $Q$ in $P_i$.

We define the color of any cluster $C(Q,s)$ in $L^i$ using the distance of $q_s$, the center of this cluster, to the common root $r_i$ (which is an endpoint of $Q$).
\[f(C(Q,s))=\left\lfloor\frac{d_{H_i}(r_i,q_s)}{\Delta}\right\rfloor\mod{5}~.\]
That is, the color that $f$ sets to the cluster $C(Q,s)$ is the remainder when dividing $\left\lfloor\frac{d_{H_i}(r_i,q_s)}{\Delta}\right\rfloor$ by $5$. Clearly, the clustered coloring $f$ uses at most $5$ different colors for $L^i$, i.e., at most $5\ell$ colors for the entire path-clusters graph $L$.

Fix two separator paths $Q_1,Q_2$ in $P_i$, a landmark $q_{s_1}$ on $Q_1$ and a landmark $q_{s_2}$ on $Q_2$. Suppose that $C(Q_1,s_1)$ and $C(Q_2,s_2)$ intersect each other, and let $v\in C(Q_1,s_1)\cap C(Q_2,s_2)$. Recall that $Q_1,Q_2$ have a common endpoint $r_i$. Then, by the triangle inequality,
\begin{eqnarray*}
    |d_{H_i}(r_i,q_{s_1})-d_{H_i}(r_i,q_{s_2})|
    &\leq&d_{H_i}(q_{s_1},q_{s_2})
    \leq d_{H_i}(q_{s_1},v)+d_{H_i}(v,q_{s_2})\\
    &\leq&\Delta+\frac{1}{2}\eta+\Delta+\frac{1}{2}\eta=2\Delta+\eta~.
\end{eqnarray*}
Therefore, we have
\[\left|\frac{d_{H_i}(r_i,q_{s_1})}{\Delta}-\frac{d_{H_i}(r_i,q_{s_2})}{\Delta}\right|\leq2+\frac{\eta}{\Delta}=2+\epsilon\leq3~,\]
and thus $\left|\left\lfloor\frac{d_{H_i}(r_i,q_{s_1})}{\Delta}\right\rfloor-\left\lfloor\frac{d_{H_i}(r_i,q_{s_2})}{\Delta}\right\rfloor\right|\leq4$.
Hence, by the definition of $f$, if the two clusters have the same color $f(C(Q_1,s_1))=f(C(Q_2,s_2))$, i.e., $\left\lfloor\frac{d_{H_i}(r_i,q_{s_1})}{\Delta}\right\rfloor=\left\lfloor\frac{d_{H_i}(r_i,q_{s_2})}{\Delta}\right\rfloor\mod{5}$ then it must be that $\left\lfloor\frac{d_{H_i}(r_i,q_{s_1})}{\Delta}\right\rfloor=\left\lfloor\frac{d_{H_i}(r_i,q_{s_2})}{\Delta}\right\rfloor$. This implies that for any color $\sigma\in\{0,1,...,4\}$, and any connected component $L'$ in $L^i[f^{-1}(\sigma)]$, all the clusters $C(Q,s)$ in $L'$ have the same value of $\left\lfloor\frac{d_{H_i}(r_i,q_s)}{\Delta}\right\rfloor$.

Fix a color $\sigma\in\{0,1,...,4\}$ and a connected component $L'$ of $L^i[f^{-1}(\sigma)]$. Given a path $Q$ in $P_i$, let $s_1$ be the lowest index such that $C(Q,s_1)$ is in $L'$, and let $s_2$ be the highest index such that $C(Q,s_2)$ is in $L'$. The argument above shows that $\left\lfloor\frac{d_{H_i}(r_i,q_{s_1})}{\Delta}\right\rfloor=\left\lfloor\frac{d_{H_i}(r_i,q_{s_2})}{\Delta}\right\rfloor$, and therefore
\[0\leq w(Q[q_{s_1},q_{s_2}])=|d_{H_i}(r_i,q_{s_1})-d_{H_i}(r_i,q_{s_2})|<\Delta~,\]
where we used the fact that $r_i$ is an endpoint of $Q$, and $Q$ is a shortest path in $H_i$. By Lemma \ref{lemma:FindLandmarks}, the number of landmarks in $Q[q_{s_1},q_{s_2}]$ is at most $\frac{2w(Q[q_{s_1},q_{s_2}])}{\eta}+1<\frac{2\Delta}{\eta}+1=\frac{2}{\epsilon}+1$, and therefore the number of the clusters in $L'$ that are centered on $Q$ is at most $\left\lceil\frac{2}{\epsilon}\right\rceil$. This shows that the component $L'$ has at most $\left\lceil\frac{2}{\epsilon}\right\rceil\cdot\pi$ clusters overall, i.e., the clustered coloring $f$ has block size at most $\left\lceil\frac{2}{\epsilon}\right\rceil\cdot\pi$.

\end{proof}

We now apply the reduction from Lemma \ref{lemma:ReductionToLooseColoring}, to obtain the following corollaries.

\begin{corollary} \label{cor:EllPiOneScale}
    Let $G=(V,E)$ be an $n$-vertex $(\ell,\pi)$-path-separable graph, let $\Delta>0$ and $0<\epsilon\leq1$ be real parameters, and let $k>0$ be an integer parameter. Then, $G$ admits a spanning forest cover $\mathcal{F}_\Delta$, a (not necessarily spanning) tree cover $\mathcal{T}_\Delta$, and an HST cover $\mathcal{H}_\Delta$, each of them with size $O(k\cdot\ell\cdot(\frac{\pi}{\epsilon}\log\frac{\pi}{\epsilon})^{\frac{1}{k}}\cdot\log^2n)$, and for every $u,v\in V$ with $d_G(u,v)\leq\Delta$,
    \begin{eqnarray*}
        \min_{F\in\mathcal{F}_\Delta}d_F(u,v)&\leq& O(k\log\log(\pi+4))\cdot(d_G(u,v)+\epsilon\Delta)~,\\
        \min_{T\in\mathcal{T}_\Delta}d_T(u,v)&\leq&(32ek+1)\cdot(d_G(u,v)+\epsilon\Delta)~,\\
        \min_{T\in\mathcal{H}_\Delta}\rho_T(u,v)&\leq&12ek\cdot(d_G(u,v)+\epsilon\Delta)~.
    \end{eqnarray*}

    If $G$ is \textbf{tree-like} $(\ell,\pi)$-path-separable, then the size of these covers is $O(k\cdot\ell\cdot(\frac{\pi}{\epsilon})^{\frac{1}{k}}\cdot\log n)$.
    
\end{corollary}

\begin{comment}
For \textit{tree-like} path-separable graphs, the size of the covers decreases by a factor of $\log n$, using the clustered coloring from Lemma \ref{lemma:TreelikeColoring}.

\begin{corollary} \label{cor:TreelikeOneScale}
    Let $G=(V,E)$ be an $n$-vertex \textbf{tree-like} $(\ell,\pi)$-path-separable graph, let $\Delta>0$ and $0<\epsilon\leq1$ be real parameters, and let $k>0$ be an integer parameter. Then, $G$ admits a spanning forest cover $\mathcal{F}_\Delta$, a (not necessarily spanning) tree cover $\mathcal{T}_\Delta$, and an HST cover $\mathcal{H}_\Delta$, each of them with size $O(k\cdot\ell\cdot(\frac{\pi}{\epsilon})^{\frac{1}{k}}\cdot\log n)$, and for every $u,v\in V$ with $d_G(u,v)\leq\Delta$,
    \begin{eqnarray*}
        \min_{F\in\mathcal{F}_\Delta}d_F(u,v)&\leq& O(k\log\log(\pi+4))\cdot(d_G(u,v)+\epsilon\Delta)~,\\
        \min_{T\in\mathcal{T}_\Delta}d_T(u,v)&\leq&(32ek+1)\cdot(d_G(u,v)+\epsilon\Delta)~,\\
        \min_{T\in\mathcal{H}_\Delta}\rho_T(u,v)&\leq&12ek\cdot(d_G(u,v)+\epsilon\Delta)~.
    \end{eqnarray*}
\end{corollary}
\end{comment}

The covers from Corollary \ref{cor:EllPiOneScale} provide low stretch for pairs of vertices $u,v\in V$ such that $d_G(u,v)\approx\Delta$, for any given $\Delta>0$. To obtain low stretch for \textit{all} pairs of vertices in the graph, we apply them on different scales $\Delta$. As in Section \ref{sec:PathSeparableSpanSmallStretchTreeCovers}, if we consider every scale $\Delta=2^i$, for $i=0,1,...,\log_2\Lambda$, we obtain covers with size $O(k\cdot\ell\cdot(\frac{\pi}{\epsilon})^{\frac{1}{k}}\cdot\log n\cdot\log\Lambda)$ (for the tree-like case) and low stretch ($O(k\log\log(\pi+4))$, $(32e+\epsilon)k+1$ and $(12e+\epsilon)k$ for a spanning forest cover, not necessarily spanning tree cover and HST cover, respectively). To get rid of the dependence on the aspect ratio $\Lambda$, and replace $\log\Lambda$ by $O(\log n)$, we use the same reduction that we used in the proof of Theorem \ref{thm:SpanTreeCoverForPathSeparators}, that is presented in Appendix \ref{app:AspectRatioReduction}.

\begin{theorem} \label{thm:LargeStretchTreeCoverForPathSeparators}
    Let $G=(V,E)$ be an $n$-vertex $(\ell,\pi)$-path-separable graph, let $0<\epsilon\leq1$ be a real parameter and let $k>0$ be an integer.
    \begin{enumerate}
        \item $G$ has a spanning forest cover $\mathcal{F}$ with stretch $O(k\log\log(\pi+4))$, a (not necessarily spanning) tree cover $\mathcal{T}$ with stretch $(32e+\epsilon)k+1=O(k)$, and an HST cover $\mathcal{H}$ with stretch $(12e+\epsilon)k$, all of them with size 
        \[O\left(k\cdot\ell\cdot\left(\frac{\pi}{\epsilon}\log\frac{\pi}{\epsilon}\right)^\frac{1}{k}\cdot\log^3n\right)~.\]
        
        If $G$ is connected, then by Observation \ref{obs:FromForestCoverToTreeCover}, $G$ also has a spanning tree cover with the same size and stretch $O(k\log\log(\pi+4))$.
        
        \item If $G$ is \textit{tree-like} $(\ell,\pi)$-path-separable, then the size of all these covers improves to 
        \[O\left(k\cdot\ell\cdot\left(\frac{\pi}{\epsilon}\right)^\frac{1}{k}\cdot\log^2n\right)~.\]
    \end{enumerate}
\end{theorem}

Note that for not-tree-like $(\ell,\pi)$-path-separable graphs, one can replace the parameter $k$ by $2k$, then the size of the covers in Theorem \ref{thm:LargeStretchTreeCoverForPathSeparators} becomes $O\left(\left(\frac{1}{\epsilon}\right)^{\frac{1}{k}}\cdot k\cdot\ell\cdot\pi^\frac{1}{k}\cdot\log^3n\right)$, while the stretch grows by a factor of at most $2$. In particular, the stretch of the spanning forest/tree cover remains $O(k\log\log(\pi+4))$.

Recall that $K_r$-minor-free graphs are (weakly) tree-like $(\ell,\pi)$-path-separable, for $\ell=O(r^2)$ and $\pi=O(r^{c_{AG}})$, where $c_{AG}=4602$, and that graphs with bounded genus $g$ are strongly tree-like $O(g)$-path-separable (see Items (2) and (3) in Theorem \ref{thm:PathSeparableFamilies}). Thus, the following corollary is derived by Theorem \ref{thm:LargeStretchTreeCoverForPathSeparators}.

\begin{corollary}
    Let $G$ be an $n$-vertex graph, and let $k>0$ be an integer.
    \begin{enumerate}
        \item If $G$ is $K_r$-minor-free, then it has a spanning forest cover $\mathcal{F}$ with stretch $O(k\log\log r)$, and a (not necessarily spanning) tree cover $\mathcal{T}$ and an HST cover $\mathcal{H}$, both with stretch $O(k)$. Each of the covers $\mathcal{F},\mathcal{T}$ and $\mathcal{H}$ has size 
        \[O\left(k\cdot r^{2+\frac{1}{k}}\cdot\log^2n\right)~.\]
        If $G$ is connected, then by Observation \ref{obs:FromForestCoverToTreeCover}, $G$ also has a spanning tree cover with the same size and stretch $O(k\log\log r)$.
        
        \item If $G$ has bounded genus $g$, then it has a spanning forest cover $\mathcal{F}$ with stretch $O(k\log\log g)$, and a (not necessarily spanning) tree cover $\mathcal{T}$ and an HST cover $\mathcal{H}$, both with stretch $O(k)$. Each of the covers $\mathcal{F},\mathcal{T}$ and $\mathcal{H}$ has size 
        \[O\left(k\cdot g^{\frac{1}{k}}\cdot\log^2n\right)~.\]
        If $G$ is connected, then by Observation \ref{obs:FromForestCoverToTreeCover}, $G$ also has a spanning tree cover with the same size and stretch $O(k\log\log g)$.
        The result applies for both orientable and non-orientable genus.
    \end{enumerate}

\end{corollary}

\section{Path-Reporting Spanners and Emulators} \label{sec:SpannersAndEmulators}

In this section, we show how tree covers give rise to path-reporting spanners and emulators. Spanners and emulators are well-studied objects (see Definition \ref{def:SpannersAndEmulators} for a formal definition). In this section we provide results for the stronger notions of \textit{path-reporting} spanners/emulators (see Definition \ref{def:PathReportingStructures}).

For the special case where the given graph $G$ is a tree, we next prove a useful fact, that $G$ admits a path-reporting spanner with optimal stretch, query time and size ($1$, $O(1)$ and $O(n)$, respectively). For this purpose, we use the following result by Harel and Tarjan \cite{HT84}.
\begin{theorem}[\cite{HT84}] \label{thm:LCAOracle}
    For every rooted tree $T$ on $n$ vertices there is an oracle with size $O(n)$, that given two vertices in $T$, returns their lowest common ancestor (LCA) in $T$ within $O(1)$ time.
\end{theorem}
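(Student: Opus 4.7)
The plan is to prove the theorem via the classical reduction of Lowest Common Ancestor queries to Range Minimum Queries (RMQ). First I would perform an Euler tour of the rooted tree $T$: run a DFS from the root and record the sequence $E = (v_1, v_2, \ldots, v_{2n-1})$ of vertices in the order they are visited (including every revisit upon backtracking). Alongside $E$ I would store two arrays: a depth array $D[i] = \mathrm{depth}_T(v_i)$, and a first-occurrence array $F[v]$ giving the smallest index $i$ with $v_i = v$. A standard observation is that for any two vertices $u, v$, the LCA in $T$ is precisely the vertex appearing at the position $\arg\min_{F[u] \le i \le F[v]} D[i]$ in $E$. Thus, answering LCA queries in $O(1)$ with $O(n)$ preprocessing reduces to solving RMQ on $D$ in the same bounds.

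Next I would exploit the special structure of the depth array: consecutive entries of $D$ differ by exactly $\pm 1$, since each step of the Euler tour either descends to a child or returns to a parent. I would therefore apply a block decomposition with block size $b = \tfrac{1}{2}\log n$. The array is partitioned into $\lceil (2n-1)/b \rceil$ blocks; for the aggregated array of per-block minima I would build a sparse table, precomputing, for each block $j$ and each $\ell \ge 0$, the minimum over the $2^\ell$ blocks starting at block $j$. This sparse table occupies $O\!\bigl((n/b) \log(n/b)\bigr) = O(n)$ space and answers any inter-block RMQ in $O(1)$ via the standard two-overlapping-intervals trick.

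For intra-block queries I would exploit the $\pm 1$ property. Up to an additive shift, the shape of a block is determined by its sequence of $\pm 1$ differences, yielding at most $2^{b-1} = O(\sqrt{n})$ distinct shapes. I would precompute, for each shape and each pair $(i,j)$ of in-block indices, the position of the minimum, spending $O(\sqrt{n} \cdot b^2) = O(n)$ total time and space. Each block stores only a pointer to its shape. An arbitrary RMQ on $D$ is then answered by combining at most two partial-block lookups at the endpoints with one full-block sparse-table query in between, all in $O(1)$ time. Composing this with the Euler-tour reduction yields an oracle of size $O(n)$ that returns the LCA of any queried pair in $O(1)$ time, proving the theorem.

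The main obstacle is shaving the preprocessing space from the naive sparse-table bound of $O(n \log n)$ down to $O(n)$; this is exactly where the $\pm 1$ structure of Euler-tour depths is indispensable, since without it one cannot bound the number of distinct block shapes by $O(\sqrt{n})$. Everything else — the Euler tour itself, the first-occurrence array, and the standard sparse table at the block-minima level — is linear-time and mechanical.
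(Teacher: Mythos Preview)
Your proof is correct; it is the standard Euler-tour reduction to $\pm 1$ RMQ due to Bender and Farach-Colton. The paper, however, does not prove this statement at all: it simply cites it as a known result from Harel and Tarjan~\cite{HT84} and uses it as a black box. So there is nothing to compare against in the paper itself.

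As a minor historical note, the original Harel--Tarjan algorithm is quite different from what you describe (it works directly on the tree via heavy-path-like decompositions and bit tricks, without the Euler tour), but the approach you give is the one most commonly presented today and establishes the theorem with the same $O(n)$ space and $O(1)$ query-time bounds.
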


\begin{lemma} \label{lemma:TreeOracle}
    Let $T=(V,E)$ be an undirected weighted tree with $n$ vertices. There is a path-reporting $1$-spanner $(S_T,D_T)$ with constant query time and size $O(n)$.
\end{lemma}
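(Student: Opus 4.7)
The plan is to exploit the fact that in a tree $T$ there is a unique $u$--$v$ path for every pair $u,v$, so this path is automatically the shortest path and has stretch $1$. Thus the ``spanner'' will simply be the edge set of $T$ itself, $S_T = E$, which has size $n-1$. All the work goes into building an oracle $D_T$ that can (i)~output the distance in $O(1)$ time and (ii)~enumerate the edges of the unique $u$--$v$ path in time $O(|P_{u,v}|)$.

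My preprocessing would be as follows. First, root $T$ at an arbitrary vertex $r$. Using one DFS traversal, store for each vertex $v$ its parent pointer $p(v)$, the weight $w(v,p(v))$ of the edge to its parent, and the weighted depth $\mathrm{dep}(v) := d_T(r,v)$. This uses $O(n)$ space and $O(n)$ preprocessing time. Next, invoke Theorem~\ref{thm:LCAOracle} on the rooted tree $T$ to build an LCA oracle of size $O(n)$ that answers LCA queries in $O(1)$ time.

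Given a query $(u,v) \in V^2$, the oracle first computes $z := \mathrm{LCA}_T(u,v)$ in $O(1)$ time. Because the unique $u$--$v$ path in $T$ goes from $u$ up to $z$ and then down to $v$, and this is exactly a shortest path in $T$, the distance estimate is
\[
\hat d(u,v) \;=\; \mathrm{dep}(u) + \mathrm{dep}(v) - 2\,\mathrm{dep}(z),
\]
which can be returned in $O(1)$ time and satisfies $\hat d(u,v) = d_T(u,v)$, i.e.\ stretch $1$. If the path is also requested, the oracle follows parent pointers from $u$ upward until it reaches $z$, recording the traversed edges, and then does the same from $v$ upward to $z$, concatenating the two edge sequences (reversing the second) to form $P_{u,v} \subseteq S_T$. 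Each edge reported costs $O(1)$ work, so the total additional time is $O(|P_{u,v}|)$, matching the bound in Definition~\ref{def:PathReportingStructures}. By construction $w(P_{u,v}) = \hat d(u,v)$.

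There is no real obstacle: the lemma is essentially a direct packaging of the Harel--Tarjan LCA oracle together with weighted depths and parent pointers. The only point worth being careful about is that Definition~\ref{def:PathReportingStructures} demands $w(P_{u,v}) = \hat d(u,v)$ exactly, not just up to stretch, which is why we compute $\hat d$ from weighted depths (and hence from the same edge weights that $P_{u,v}$ traverses) rather than by any other means.
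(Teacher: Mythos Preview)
Your proof is correct and takes essentially the same approach as the paper: set $S_T=E(T)$, root the tree, store parent pointers and weighted depths, and use the Harel--Tarjan LCA oracle (Theorem~\ref{thm:LCAOracle}) to compute $\hat d(u,v)=\mathrm{dep}(u)+\mathrm{dep}(v)-2\,\mathrm{dep}(z)$ in $O(1)$ time, then climb parent pointers to report the path. The paper's proof is identical in structure and content.
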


\begin{proof}

First, note that the set $S_T$ of underlying edges (in which all the output paths of $D_T$ must be contained) must be exactly $E(T)$, the edges of $T$. The size of this set is $n-1$.

Next, we construct the oracle $D_T$. For an arbitrary chosen root $r\in V$, the oracle $D_T$ stores, for every $v\in V$, the distance $d_T(r,v)$ and %a pointer $p(v)$ that points to 
the parent $p(v)$ of $v$ in $T$ (i.e., the next vertex on the unique path from $v$ to the root $r$). In addition, $D_T$ stores the LCA oracle $M$ from Theorem \ref{thm:LCAOracle}, for the rooted tree $T$. Since the size of $M$ is $O(n)$, the total size of $D_T$ is also $O(n)$.

Given a query $(u,v)\in V^2$, the oracle $D_T$ invokes the LCA oracle $M$ to find the LCA $z$ of $u,v$ in $T$. Then, the distance $d_T(u,v)$ is computed within time $O(1)$ by
\[d_T(u,v)=d_T(u,z)+d_T(z,v)=d_T(r,u)-d_T(r,z)+d_T(r,v)-d_T(r,z)~.\]

If a $u$-$v$ path $P_{u,v}\subseteq S_T=E(T)$ is also required, the oracle $D_T$ uses the pointers $p()$, starting from $u,v$, until it gets to $z$. The concatenation of these two paths is the unique path in $T$ between $u,v$. The time for this computation is proportional to the number of edges in the output path.
    
\end{proof}

Applying Lemma \ref{lemma:TreeOracle} on every tree of a given tree cover produces a path-reporting emulator (if the tree cover is non-spanning) or a path-reporting spanner (if the tree cover is spanning). The following lemma provides a formal argument.

\begin{lemma} \label{lemma:TreeCoverToPREmulator}
Let $G=(V,E)$ be an undirected weighted graph, let $\mathcal{P}\subseteq V^2$ be a set of vertex pairs, and let $\mathcal{T}$ be a $\mathcal{P}$-pairwise tree cover for $G$ with stretch $\alpha$, (maximum) overlap $q$ and average overlap $s$. Then, the graph $G$ has a $\mathcal{P}$-pairwise path-reporting $\alpha$-emulator $(S,D)$ with query time $O(q)$ and size $O(s|V|)$. Moreover, if $\mathcal{T}$ is a spanning tree cover, then $(S,D)$ is a $\mathcal{P}$-pairwise path-reporting spanner with the same properties.
\end{lemma}

\begin{proof}

Given the tree cover $\mathcal{T}$, for every $T\in\mathcal{T}$, construct the path-reporting $1$-spanner $(S_T,D_T)$ from Lemma \ref{lemma:TreeOracle} (here, $S_T$ is simply $E(T)$, the edges of the tree $T$). We define an oracle $D$ to contain all the oracles $D_T$ for $T\in\mathcal{T}$. In addition, for every $u\in V$, the oracle $D$ contains a list $L(u)$ of pointers to all trees $T\in\mathcal{T}$ such that $u\in V(T)$.

Given a query $(u,v)\in\mathcal{P}$, the oracle $D$ iterates over all trees $T\in\mathcal{T}$ that appear both in $L(u)$ and in $L(v)$. For every such tree $T$, using the oracle $D_T$, the oracle $D$ computes the distance estimate $d_T(u,v)$, within time $O(1)$. This way, the tree $T_0$ with minimal value of $d_T(u,v)$ is found. As $\mathcal{T}$ is a $\mathcal{P}$-pairwise tree cover with stretch $\alpha$, this value is always at least $d_G(u,v)$, and if $(u,v)\in\mathcal{P}$, it is at most $\alpha\cdot d_G(u,v)$. Lastly, the oracle $D$ returns either the distance estimate $d_{T_0}(u,v)$, within time $O(1)$, or a path $P_{u,v}$ in $T_0$ with weight $d_{T_0}(u,v)$, within time $O(|P_{u,v}|)$. Note that the total query time of $D$ is $O(|L(u)|+|L(v)|)+O(1)=O(q)$ if an output path is not required, or $O(q+|P_{u,v}|)$ if an output path is required.

%The size of the oracle $D$ is the sum of the sizes of the oracles $\{D_T\}_{T\in\mathcal{T}}$, plus the sum of the sizes of the lists $\{L(u)\}_{u\in V}$. 
By Lemma \ref{lemma:TreeOracle}, the size of each oracle $D_T$ is $O(|V(T)|)$. Hence, the size of $D$ is
\[\sum_{T\in\mathcal{T}}O(|V(T)|)+\sum_{u\in V}O(|L(u)|)=O\left(\sum_{u\in V}|\{T\in\mathcal{T}\;|\;u\in V(T)\}|\right)=O(s|V|)~,\]
by the definition of the average overlap (see Definition \ref{def:PairwiseForestCover}).

Note that all of the output paths of $D$ are contained in the union $\bigcup_{T\in\mathcal{T}}E(T)$. Thus, we define the set $S$ of underlying edges to be this union. The size of $S$ is
\[|S|\leq\sum_{T\in\mathcal{T}}|E(T)|<\sum_{T\in\mathcal{T}}|V(T)|\leq s|V|~.\]

We conclude that $(S,D)$ is a path-reporting $\alpha$-emulator with query time $O(q)$ and size $O(s|V|)$. If $\mathcal{T}$ is a spanning tree cover, then by definition, it means that $E(T)\subseteq E$ for all $T\in\mathcal{T}$. Thus, $S=\bigcup_{T\in\mathcal{T}}E(T)\subseteq E$, i.e., $(S,D)$ is a path-reporting spanner.
    
\end{proof}

Lemma \ref{lemma:TreeCoverToPREmulator} provides a unified framework of converting tree covers into path-reporting emulators and spanners. In fact, one can verify that the well known results by \cite{TZ01} and \cite{MN06} can be achieved using this framework (however, with an improved query time that relies on their specific constructions).

We proceed by applying Lemma \ref{lemma:TreeCoverToPREmulator} on our new tree covers in this paper. By applying Lemma \ref{lemma:TreeCoverToPREmulator} on Corollary \ref{cor:BetterAverageOverlap}, we obtain the following theorem.

\begin{theorem} \label{thm:RamseyPRSpanner}
Let $G=(V,E)$ be an undirected weighted graph on $n$ vertices. Given an integer parameter $k\geq1$, there exists a path-reporting spanner with stretch $O(k\log\log n)$, query time $O(kn^{\frac{1}{k}})$ 
and size $O(n^{1+\frac{1}{k}})$.
\end{theorem}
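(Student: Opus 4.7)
The plan is to assemble this theorem directly from the machinery already developed in Section~\ref{sec:TreeCovers}, namely Corollary~\ref{cor:BetterAverageOverlap} (which provides the tree cover), Lemma~\ref{lemma:ImmediateAlgorithm} (which locates a good tree at query time), and Lemma~\ref{lemma:TreeCoverToPREmulator} (which packages a tree cover with such a locator algorithm into a path-reporting spanner). There is really no new ingredient needed; the proof is essentially a bookkeeping exercise, and the only mild subtlety is justifying the correct maximum-overlap bound, since Corollary~\ref{cor:BetterAverageOverlap} only states the \emph{average} overlap.

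First I would invoke Corollary~\ref{cor:BetterAverageOverlap} with parameter $k$ on $G$, obtaining a spanning tree cover $\mathcal{T}$ with stretch $\hat{c}\cdot k\log\log n$ and average overlap $s=O(n^{1/k})$; in particular the total size $\sum_{T\in\mathcal{T}}|V(T)|$ is $O(n^{1+1/k})$. Next I would observe that the maximum overlap of $\mathcal{T}$ is bounded by $q=O(k\cdot n^{1/k})$. This follows by inspecting the recursive construction underlying Theorem~\ref{thm:PairwiseSpanTreeCover} (applied with $A=V$, $t=n$): each recursion level contributes a single spanning forest, so every vertex lies in at most one newly-added tree per level, and the recursion depth is $q(n)=O(kn^{1/k})$ by~(\ref{eq:RecDepth}). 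Hence no vertex participates in more than $O(kn^{1/k})$ trees of $\mathcal{T}$.

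Then I would apply Lemma~\ref{lemma:ImmediateAlgorithm} to $\mathcal{T}$ with the above parameters ($\alpha=O(k\log\log n)$, overlap $q=O(kn^{1/k})$, average overlap $s=O(n^{1/k})$). This yields an algorithm $\mathcal{A}$ that, on input $u,v\in V$, returns a tree $T(u,v)\in\mathcal{T}$ with $d_T(u,v)\leq\alpha\cdot d_G(u,v)$ in time $O(kn^{1/k})$, using $W=O(s|V|)=O(n^{1+1/k})$ words of preprocessed information. Finally I would feed $(\mathcal{T},\mathcal{A})$ into Lemma~\ref{lemma:TreeCoverToPREmulator}. Because $\mathcal{T}$ is a spanning tree cover, this produces a path-reporting \emph{spanner} $(H,D)$ of stretch $O(k\log\log n)$, with query time $q+O(1)=O(kn^{1/k})$ and size $W+O(s|V|)=O(n^{1+1/k})$, exactly matching the claim.

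The only place requiring a moment of care is the maximum-overlap bound alluded to above, since Corollary~\ref{cor:BetterAverageOverlap} records only the average overlap; the argument above tracks it back through the recursion in the proof of Theorem~\ref{thm:PairwiseSpanTreeCover}. Everything else is a direct substitution of parameters into the two lemmas, and no further estimates are required.
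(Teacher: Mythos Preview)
Your proposal is correct and follows exactly the same route as the paper: the paper simply states that the theorem follows ``by applying Lemma~\ref{lemma:TreeCoverToPREmulator}, with the algorithm from Lemma~\ref{lemma:ImmediateAlgorithm}, on Corollary~\ref{cor:BetterAverageOverlap}.'' Your extra paragraph justifying the maximum-overlap bound $O(kn^{1/k})$ via the recursion depth~(\ref{eq:RecDepth}) is a welcome addition, since Corollary~\ref{cor:BetterAverageOverlap} records only the average overlap while Lemma~\ref{lemma:ImmediateAlgorithm} needs the maximum overlap for its query-time guarantee; the paper glosses over this point.
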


\begin{remark}
    Though the result of Theorem \ref{thm:RamseyPRSpanner} is weaker than the recent path-reporting distance oracles of \cite{ES23,CZ24}, it nevertheless demonstrates that a linear-size path-reporting $O(\log n\cdot\log\log n)$-spanner can be constructed as a direct application of the spanning tree covers of \cite{ACEFN20}.
\end{remark}

For vertex-separable graphs, we apply Lemma \ref{lemma:TreeCoverToPREmulator} on Theorem \ref{thm:SmallTreewidthSpanTreeCover}. We obtain the following result.%path-reporting emulators and spanners.

\begin{theorem} \label{thm:SmallTreewidthEmulatorAndSpanner}
Let $G=(V,E)$ be an $n$-vertex undirected weighted $s$-vertex-separable graph, for some non-decreasing function $s=s(\theta)$. Let $k\geq1$ be an integer parameter, and denote $S(n,k)=\sum_{i=0}^{\log n}s\left(\frac{n}{2^i}\right)^{\frac{1}{k}}$. Then, $G$ has a path-reporting emulator with stretch $32ek+1$, and a path-reporting spanner with stretch $\hat{c}\cdot k\log\log s(n)$, both with query time $O\left(k\cdot S(n,k)\right)$ and size $O\left(k\cdot n\cdot S(n,k)\right)$.
\end{theorem}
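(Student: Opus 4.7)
The plan is to obtain this theorem as an essentially immediate corollary of the three tools already in place: Theorems \ref{thm:SmallTreewidthSpanTreeCover} and \ref{thm:SmallTreewidthTreeCover} supply the tree covers, Lemma \ref{lemma:ImmediateAlgorithm} supplies the lookup algorithm $\mathcal{A}$, and Lemma \ref{lemma:TreeCoverToPREmulator} converts the whole package into a path-reporting emulator or spanner. The only nontrivial part is checking that the parameter bookkeeping matches what the statement claims.

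First I would build the emulator. By Theorem \ref{thm:SmallTreewidthTreeCover}, $G$ admits a tree cover $\mathcal{T}$ (not necessarily spanning) with stretch $128k-31 = O(k)$ and overlap $O(k\cdot T(n,k))$. Since overlap upper bounds average overlap, the average overlap $s$ of $\mathcal{T}$ is also $O(k\cdot T(n,k))$. Apply Lemma \ref{lemma:ImmediateAlgorithm} to $\mathcal{T}$: this yields an algorithm $\mathcal{A}$ that, on input $(u,v)\in V^2$, returns a tree $T(u,v)\in\mathcal{T}$ witnessing the stretch, runs in time $q = O(k\cdot T(n,k))$ (since its running time is proportional to the overlap), and accesses $W = O(s|V|) = O(k\cdot n\cdot T(n,k))$ words of information. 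Now feed $(\mathcal{T},\mathcal{A})$ into Lemma \ref{lemma:TreeCoverToPREmulator}: this produces a path-reporting $O(k)$-emulator $(S,D)$ with query time $q+O(1) = O(k\cdot T(n,k))$ and size $W+O(s|V|) = O(k\cdot n\cdot T(n,k))$, as desired.

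For the path-reporting spanner, I would run the identical argument with the spanning tree cover of Theorem \ref{thm:SmallTreewidthSpanTreeCover} in place of $\mathcal{T}$. That theorem guarantees the same overlap bound $O(k\cdot T(n,k))$ but with stretch $\hat{c}\cdot k\log\log t(n)$ and with the additional property that every tree is a subgraph of $G$. Lemmas \ref{lemma:ImmediateAlgorithm} and \ref{lemma:TreeCoverToPREmulator} then yield an object with the same size and query time as before; the last sentence of Lemma \ref{lemma:TreeCoverToPREmulator} asserts that when the tree cover is spanning, the resulting object is a path-reporting spanner rather than just an emulator.

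There is no substantive obstacle; the proof is a purely mechanical composition. The one point worth double-checking while writing it is that the overlap bound coming out of Theorems \ref{thm:SmallTreewidthSpanTreeCover} and \ref{thm:SmallTreewidthTreeCover} is what Lemma \ref{lemma:ImmediateAlgorithm} calls $q$, so that the final query time matches the $O(k\cdot T(n,k))$ in the theorem statement, and that bounding $s$ by the overlap suffices to obtain the claimed size $O(k\cdot n\cdot T(n,k))$. Since we only need the "(maximum) overlap" bound and not a separately sharper average-overlap bound, the plug-in is clean. The proof can therefore be written in just a few lines, essentially saying: apply Lemma \ref{lemma:TreeCoverToPREmulator} with the algorithm of Lemma \ref{lemma:ImmediateAlgorithm} to the tree cover of Theorem \ref{thm:SmallTreewidthTreeCover} for the emulator, and to the spanning tree cover of Theorem \ref{thm:SmallTreewidthSpanTreeCover} for the spanner.
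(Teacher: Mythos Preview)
Your proposal is correct and matches the paper's approach exactly: the paper states the theorem as a direct consequence of applying Lemma \ref{lemma:TreeCoverToPREmulator}, with the algorithm from Lemma \ref{lemma:ImmediateAlgorithm}, to the tree covers of Theorems \ref{thm:SmallTreewidthSpanTreeCover} and \ref{thm:SmallTreewidthTreeCover}. Your parameter bookkeeping (overlap bounds average overlap, so $s=O(k\cdot T(n,k))$, giving $W+O(s|V|)=O(k\cdot n\cdot T(n,k))$ and query time $O(k\cdot T(n,k))$) is exactly what is needed.
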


The following corollaries are obtained by setting $s(\theta)=\theta^{\delta}$, for some $\delta\in(0,1]$, and setting $s(\theta)\equiv t(n)$, for some function $t$ of $n$. They are analogous to Corollaries \ref{cor:TWN^delta} and \ref{cor:FlatTW}.

\begin{corollary}
Let $G$ be an $n$-vertex $s$-vertex-separable graph, for $s=s(\theta)=\theta^\delta$, for some $\delta\in(0,1]$, and let $k\geq1$ be an integer. Then, $G$ has a path-reporting emulator with stretch $32ek+1$, and a path-reporting spanner with stretch $O(k\log\log n)$, both with query time $O\left(\frac{k^2}{\delta}n^{\frac{\delta}{k}}\right)$ and size $O\left(\frac{k^2}{\delta}n^{1+\frac{\delta}{k}}\right)$.
\end{corollary}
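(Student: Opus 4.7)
The plan is to obtain this corollary as a direct specialization of Theorem \ref{thm:SmallTreewidthEmulatorAndSpanner} to the function $t(\theta)=\theta^\delta$, in complete analogy with how Corollary \ref{cor:TWN^delta} was derived from Theorem \ref{thm:SmallTreewidthSpanTreeCover}. So there is essentially nothing beyond bookkeeping to do, and the proof should be very short.

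First, I would evaluate the quantity $T(n,k)=\sum_{i=0}^{\log n}t(n/2^i)^{1/k}$ appearing in Theorem \ref{thm:SmallTreewidthEmulatorAndSpanner}. With $t(\theta)=\theta^\delta$ this is a geometric sum in the variable $n^{\delta/k}$, namely $T(n,k)=\sum_{i=0}^{\log n}(n/2^i)^{\delta/k}\le n^{\delta/k}\cdot\frac{1}{1-2^{-\delta/k}}$. Using the elementary inequality $2^{-x}\le 1-x/2$ for $x\in[0,1]$ (applied with $x=\delta/k\in(0,1]$), this simplifies to $T(n,k)\le \frac{2k}{\delta}\,n^{\delta/k}$. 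This is exactly the computation already performed in the proof of Corollary \ref{cor:TWN^delta}, and I would simply cite it.

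Second, I would plug this bound into the guarantees of Theorem \ref{thm:SmallTreewidthEmulatorAndSpanner}. The query time becomes $O(k\cdot T(n,k))=O\!\left(\frac{k^2}{\delta}n^{\delta/k}\right)$, and the size becomes $O(k\cdot n\cdot T(n,k))=O\!\left(\frac{k^2}{\delta}n^{1+\delta/k}\right)$, matching the claimed bounds. The stretch of the emulator is already $O(k)$ by the theorem. For the spanner, the stretch is $\hat{c}\cdot k\log\log t(n)=\hat{c}\cdot k\log\log n^{\delta}=\hat{c}\cdot k\log(\delta\log n)$, which is at most $\hat{c}\cdot k\log\log n$ since $\delta\le 1$ implies $\log\delta\le 0$; hence the stretch is $O(k\log\log n)$ as claimed.

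There is no real obstacle here: the only tiny thing to be careful about is the replacement of $\log\log t(n)$ by $\log\log n$ in the spanner stretch, which is immediate because $\delta\in(0,1]$. Everything else is a direct substitution and reuses the sum estimate from Corollary \ref{cor:TWN^delta} verbatim.
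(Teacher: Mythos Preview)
Your proposal is correct and matches the paper's approach exactly: the paper presents this corollary without proof, noting only that it is obtained by setting $t(\theta)=\theta^\delta$ in Theorem \ref{thm:SmallTreewidthEmulatorAndSpanner} and that it is analogous to Corollary \ref{cor:TWN^delta}. Your explicit verification of the $T(n,k)$ bound and the $\log\log t(n)\le\log\log n$ step is precisely the intended derivation.
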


\begin{corollary}
Let $G$ be an $n$-vertex graph with treewidth $t=t(n)$, and let $k\geq1$ be an integer. Then, $G$ admits a path-reporting emulator with stretch $32ek+1$, and a path-reporting spanner with stretch $O(k\log\log t)$, both with query time $O\left(k\cdot t^{\frac{1}{k}}\log n\right)$ and size $O\left(k\cdot t^{\frac{1}{k}}\cdot n\log n\right)$.
\end{corollary}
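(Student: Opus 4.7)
The plan is to derive this corollary as a direct consequence of Theorem \ref{thm:SmallTreewidthEmulatorAndSpanner}, by instantiating it with the flat function $t(\theta) \equiv t(n)$. The preliminary observation is that if $G$ has treewidth at most $t$, then every subgraph of $G$ also has treewidth at most $t$ (this is immediate from Definition \ref{def:TreeDecomposition}, since restricting a tree decomposition to a subgraph yields a valid tree decomposition of no larger width). Combined with Fact \ref{fact:Separators}, this shows that every subgraph of $G$, regardless of its size $\theta$, admits a balanced separator of size at most $t+1$. Hence $G$ has recursive separators of size $t(\theta) \equiv t+1 = O(t)$ in the sense of Definition \ref{def:RecursiveSeparators}.

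Next, I would compute the quantity $T(n,k)$ that appears in the statement of Theorem \ref{thm:SmallTreewidthEmulatorAndSpanner} for this flat choice. Since $t(\theta) \equiv t$, we have
\[
T(n,k) \;=\; \sum_{i=0}^{\log n} t\!\left(\frac{n}{2^i}\right)^{\frac{1}{k}} \;=\; (\log n + 1)\cdot t^{\frac{1}{k}} \;=\; O\!\left(t^{\frac{1}{k}}\log n\right).
\]
Plugging this into Theorem \ref{thm:SmallTreewidthEmulatorAndSpanner} immediately yields, on one hand, a path-reporting $O(k)$-emulator with query time $O(k \cdot T(n,k)) = O(k \cdot t^{1/k}\log n)$ and size $O(k \cdot n \cdot T(n,k)) = O(k \cdot t^{1/k} \cdot n \log n)$. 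On the other hand, it yields a path-reporting spanner with the same query time and size, and with stretch $\hat{c}\cdot k \log\log t(n) = O(k\log\log t)$.

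There is essentially no obstacle here; the only point worth double-checking is that the hypothesis of Theorem \ref{thm:SmallTreewidthEmulatorAndSpanner} (a non-decreasing recursive separator bound) is satisfied by the constant function $t(\theta)\equiv t$, which it trivially is. Since constants do not affect asymptotic notation, the factor $t+1$ can be replaced by $t$ in the bounds, giving exactly the claimed corollary.
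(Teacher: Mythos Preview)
Your proposal is correct and follows essentially the same approach as the paper: the paper also derives this corollary directly from Theorem~\ref{thm:SmallTreewidthEmulatorAndSpanner} by taking the flat function $t(\theta)\equiv t(n)$ (justified, as you note, by the fact that subgraphs inherit the treewidth bound and hence admit separators of size at most $t+1$ via Fact~\ref{fact:Separators}), and the computation $T(n,k)=O(t^{1/k}\log n)$ is identical.
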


These results for path-reporting emulators improve upon previous bounds on the size-stretch tradeoff \cite{TZ01,MN06,ES23,CZ24} when $s=O(n^{\delta})$, for a sufficiently small constant $\delta>0$. For path-reporting spanners, our results improve upon existing state-of-the-art bounds for this tradeoff \cite{ES23,CZ24,TZ01,C15} for $s=2^{O(\frac{n}{\log\log n})}$. Note, however, that our query time is inferior to that of \cite{TZ01,MN06,ES23,CZ24,C15}.

Applying Lemma \ref{lemma:TreeCoverToPREmulator} on Theorem \ref{thm:SpanTreeCoverForPathSeparators} for $p$-path-separable graphs provides a path-reporting spanner with stretch $1+\epsilon$, and query time and size overhead (i.e., size divided by $n$) $O(p\cdot\frac{\log^2n}{\epsilon})$. %Similarly, for $(\ell,\pi)$-path-separable graphs, applying Lemma \ref{lemma:TreeCoverToPREmulator} on Theorem \ref{thm:LargeStretchTreeCoverForPathSeparators} provides a path-reporting spanner (respectively, path-reporting emulator) with stretch $O(k\log\log(\pi+4))$ (resp., $O(k)$), and query time and size overhead $O(k\cdot\ell\cdot\pi^{\frac{1}{k}}\cdot\log^3n)$, where the factor $\log^3n$ improves to $\log^2n$ if the graph is tree-like path-separable. In the case of path-separable graphs, we can 
Next, we use a more subtle construction that employs the fact that the tree cover from Theorem \ref{thm:SpanTreeCoverForPathSeparators} consists of a different tree cover for each \textit{scale}. To determine which scales should be used, we employ the tree cover of Gupta et al. \cite{GKR04}. This eliminates one of the $\log n$ factors from the query time of the resulting path-reporting spanner.

\begin{theorem} \label{thm:SmallStretchImprovedQT}
    Let $G=(V,E)$ be an $n$-vertex $p$-path-separable graph, let $k>0$ be an integer parameter and let $\epsilon\in(0,1]$ be a real parameter. Then, $G$ admits a path-reporting spanner with stretch $1+\epsilon$, query time $O\left(p\cdot\frac{\log n}{\epsilon}\right)$, and size $O\left(n\cdot p\cdot\frac{\log^2n}{\epsilon}\right)$.
\end{theorem}

\begin{proof}

For simplicity, we show a path-reporting spanner with size $O\left(n\cdot p\cdot\frac{\log n\cdot\log\Lambda}{\epsilon}\right)$, instead of $O\left(n\cdot p\cdot\frac{\log^2n}{\epsilon}\right)$. To obtain the latter, we apply our new aspect ratio reduction (see Appendix \ref{app:PRSpannerAspectRatioReduction1}). Here, as well as in Appendix \ref{app:PRSpannerAspectRatioReduction1}, we assume that $\min_{u\neq v\in V}d_G(u,v)=\min_{e\in E}w(e)=1$, as otherwise we multiply the weight of each edge by $(\min_{u\neq v\in V}d_G(u,v))^{-1}$. Then, the aspect ratio $\Lambda$ is simply the diameter $\max_{u,v\in V}d_G(u,v)$.
We also assume that the graph $G$ is connected, as otherwise we construct a path-reporting spanner for every connected component, and their union is the desired path-reporting spanner for $G$. 

In this proof we use the following result of \cite{GKR04}, for tree covers of path-separable graphs.

\begin{theorem}[\cite{GKR04}] \label{thm:GKR}
    Every $n$-vertex $p$-path-separable graph admits a (spanning) tree cover with stretch $3$ and size $O(p\cdot\log n)$.
\end{theorem}

Applying Lemma \ref{lemma:TreeCoverToPREmulator} on this tree cover by \cite{GKR04} (with $\mathcal{P}=V^2$), we obtain a path-reporting $3$-spanner $(S_0,D_0)$ with query time $O(p\cdot\log n)$ and size $O(n\cdot p\cdot\log n)$. %We will later use the oracle $D_0$ only to obtain an initial distance estimate (without an approximate shortest path) for each query we get.

For every $i=1,2,...,\lceil\log_2\Lambda\rceil$, let $\mathcal{F}_{2^i}$ be the forest cover $\mathcal{F}_{\Delta}$ from Lemma \ref{lemma:OneScaleSimpleTreeCover}, for $\Delta=2^i$. By Observation \ref{obs:FromForestCoverToTreeCover}, for every $i$ there is also a tree cover $\mathcal{T}_{2^i}$ for $G$ with the same properties. That is, the size of $\mathcal{T}_{2^i}$ is $O(p\cdot\frac{\log n}{\epsilon})$, and for every $u,v\in V$ such that $d_G(u,v)\leq2^i$, there is a tree $T\in\mathcal{T}_{2^i}$ with $d_T(u,v)\leq d_G(u,v)+\epsilon\cdot2^i$. This implies that $\mathcal{T}_{2^i}$ is a $\mathcal{P}_{2^i}$-pairwise tree cover with stretch $1+2\epsilon$, for the set of vertex pairs $\mathcal{P}_{2^i}=\{(u,v)\in V^2\;|\;2^{i-1}<d_G(u,v)\leq2^i\}$. Indeed, for every pair $(u,v)\in\mathcal{P}_{2^i}$ we have $d_G(u,v)\leq2^i$, thus there is a tree $T\in\mathcal{T}_{2^i}$ with
\[d_T(u,v)\leq d_G(u,v)+\epsilon\cdot2^i<d_G(u,v)+2\epsilon\cdot d_G(u,v)=(1+2\epsilon)d_G(u,v)~.\]

We apply Lemma \ref{lemma:TreeCoverToPREmulator} on $\mathcal{T}_{2^i}$, and obtain a $\mathcal{P}_{2^i}$-pairwise path-reporting $(1+2\epsilon)$-spanner $(S_i,D_i)$ for $G$ with query time $O(p\cdot\frac{\log n}{\epsilon})$ and size $O(n\cdot p\cdot\frac{\log n}{\epsilon})$. In our new path-reporting spanner $(S,D)$, we define $S=\bigcup_{i=1}^{\lceil\log_2\Lambda\rceil}S_i$, and we store all the oracles $\{D_i\}_{i=0}^{\lceil\log_2\Lambda\rceil}$ in $D$.

Given a query $(u,v)\in V^2$, the oracle $D$ first applies $D_0$ on $(u,v)$, to get an estimate $\hat{d}_0(u,v)\in[d_G(u,v),3d_G(u,v)]$, within time $O(p\cdot\log n)$. Equivalently, we have $\frac{1}{3}\hat{d}_0(u,v)\leq d_G(u,v)\leq\hat{d}_0(u,v)$. Then, the oracle $D$ finds all the indices $i\in\{1,2,...,\lceil\log_2\Lambda\rceil\}$ such that the interval $[\frac{1}{3}\hat{d}_0(u,v),\hat{d}_0(u,v)]$ intersects the interval $(2^{i-1},2^i]$. We claim that there are at most three such values $i$. Indeed, if $x\in(2^{i-1},2^i]$ and $y\in(2^{j-1},2^j]$ are both in $[\frac{1}{3}\hat{d}_0(u,v),\hat{d}_0(u,v)]$, for some $i<j$, then 
\[2^{j-1}<y\leq\hat{d}_0(u,v)\leq3x\leq3\cdot2^i<2^{i+2}~,\]
i.e., $j\leq i+2$. Hence, there is an index $i\in\{1,2,...,\lceil\log_2\Lambda\rceil\}$ such that $d_G(u,v)\in(2^{i-1},2^i]\cup(2^i,2^{i+1}]\cup(2^{i+1},2^{i+2}]$, or equivalently, $(u,v)\in\mathcal{P}_{2^i}\cup\mathcal{P}_{2^{i+1}}\cup\mathcal{P}_{2^{i+2}}$. Thus, one of the oracles $D_i,D_{i+1},D_{i+2}$ outputs a distance estimate for $(u,v)$ with stretch at most $1+2\epsilon$. The oracle $D$ applies these oracles on $(u,v)$ and finds the index $j\in\{i,i+1,i+2\}$ such that $D_j$ outputs the minimal distance estimate $\hat{d}(u,v)$ for $(u,v)$. If a path is also required, $D$ applies $D_j$ on $(u,v)$ to obtain such an approximate path. In both cases, $\hat{d}(u,v)$ has stretch at most $1+2\epsilon$.

Note that the query time of $D$ equals the query time of $D_0$ plus the query times of $D_i,D_{i+1}$ and $D_{i+2}$. As these are all $O(p\cdot\frac{\log n}{\epsilon})$, the query time of $D$ is $O(p\cdot\frac{\log n}{\epsilon})$ as well.

Lastly, observe that the output paths of $D$ are all in $S=\bigcup_{i=1}^{\lceil\log_2\Lambda\rceil}S_i$, and that the size of $S$ and of $D$ are both at most the total size of the oracles $\{D_i\}_{i=0}^{\lceil\log_2\Lambda\rceil}$, which is $O(n\cdot p\cdot\frac{\log n\cdot\log\Lambda}{\epsilon})$.
    
\end{proof}

Consider Theorem \ref{thm:LargeStretchTreeCoverForPathSeparators}, in which we presented tree covers for $(\ell,\pi)$-path-separable graphs with stretch $O(k\log\log(\pi+4))$ and \textit{sublinear} size in $\pi$. Applying Lemma \ref{lemma:TreeCoverToPREmulator} on this theorem provides the following result.

\begin{theorem} \label{thm:LargeStretchDirectPRDO}
    Let $G=(V,E)$ be an $n$-vertex $(\ell,\pi)$-path-separable graph, let $k>0$ be an integer parameter and let $\epsilon\in(0,1]$ be a real parameter. Then, $G$ admits a path-reporting spanner (respectively, emulator) with stretch $O(k\log\log(\pi+4))$ (resp., $(32e+\epsilon)k+1$), query time $q_S=O(k\cdot\ell\cdot\pi^{\frac{1}{k}}\cdot\log^3n)$ (resp., $q_E=O\left(k\cdot\ell\cdot(\frac{\pi}{\epsilon}\log\frac{\pi}{\epsilon})^{\frac{1}{k}}\cdot\log^3n\right)$), and size $O(q_S\cdot n)$ (resp., $O(q_E\cdot n)$). If $G$ is also \textit{tree-like} $(\ell,\pi)$-path-separable, then $q_S$ improves to $O(k\cdot\ell\cdot\pi^{\frac{1}{k}}\cdot\log^2n)$ and $q_E$ improves to $O\left(k\cdot\ell\cdot(\frac{\pi}{\epsilon}\log\frac{\pi}{\epsilon})^{\frac{1}{k}}\cdot\log^2n\right)$.
\end{theorem}

A similar proof as the proof of Theorem \ref{thm:SmallStretchImprovedQT} improves the query time of these path-reporting spanners and emulators to $\ell\cdot O(\log\pi\cdot\log^3n+k\pi^{\frac{1}{k}}\cdot\log^2n)$. For tree-like separators, the bound improves to $\ell\cdot O(\log\pi\cdot\log^2n+k\pi^{\frac{1}{k}}\cdot\log n)$. To find the right scale, we use the distance oracle that is implied by Theorem \ref{thm:LargeStretchDirectPRDO}, with $k=\log\pi$. The proof of the following result is in Appendix \ref{app:PRSpannerAspectRatioReduction2}

\begin{theorem} \label{thm:LargeStretchImprovedQT}
    Let $G=(V,E)$ be an $n$-vertex $(\ell,\pi)$-path-separable graph, let $k>0$ be an integer parameter and let $\epsilon\in(0,1]$ be a real parameter. Then, $G$ admits a path-reporting spanner (respectively, emulator) with stretch $O(k\log\log(\pi+4))$ (resp., $(32e+\epsilon)k+1$), query time $\ell\cdot O(\log\pi\cdot\log^{3-\gamma}n+k\pi^{\frac{1}{k}}\cdot\log^{2-\gamma}n\cdot\log\log\pi)$ (resp., $\ell\cdot O\left(\log\pi\cdot\log^{3-\gamma}n+k\cdot(\frac{\pi}{\epsilon}\log\frac{\pi}{\epsilon})^{\frac{1}{k}}\cdot\log^{2-\gamma}n\cdot\log\log\pi\right)$), and size $O(k\cdot\ell\cdot\pi^{\frac{1}{k}}\cdot n\log^{3-\gamma}n)$ (resp., $O(k\cdot(\frac{\pi}{\epsilon}\log\frac{\pi}{\epsilon})^{\frac{1}{k}}\cdot n\log^{3-\gamma}n)$). Here, $\gamma=1$ if $G$ is also \textit{tree-like} $(\ell,\pi)$-path-separable, and otherwise $\gamma=0$.
\end{theorem}

Recall that $K_r$-minor-free graphs are (weakly) tree-like $(\ell,\pi)$-path-separable, for $\ell=O(r^2)$ and $\pi=O(r^{c_{AG}})$, where $c_{AG}=4602$, and that graphs with bounded genus $g$ are strongly tree-like $O(g)$-path-separable (see Items (2) and (3) in Theorem \ref{thm:PathSeparableFamilies}). Thus, the following corollaries are derived by Theorems \ref{thm:SmallStretchImprovedQT} and \ref{thm:LargeStretchImprovedQT}.

\begin{corollary}
    Let $G$ be a $K_r$-minor-free $n$-vertex graph and let $k>0$ be an integer parameter. Recall that $c_{AG}=4602$.
    \begin{enumerate}
        \item $G$ has a path-reporting spanner with stretch $1+\epsilon$, query time 
        $O\left(r^{c_{AG}}\cdot\frac{\log n}{\epsilon}\right),$
        and size $O\left(n\cdot r^{c_{AG}}\cdot\frac{\log^2n}{\epsilon}\right)$.
        
        \item $G$ has a path-reporting spanner with stretch $O(k\log\log r)$, and a path-reporting emulator with stretch $O(k)$, both with query time 
        $O(r^2\log r\cdot\log^2n+kr^{2+\frac{1}{k}}\cdot\log n\cdot\log\log r)$
        and size $O(kr^{2+\frac{1}{k}}\cdot n\log^2n)$.
    \end{enumerate}
\end{corollary}

\begin{corollary}
    Let $G$ be an $n$-vertex graph with bounded genus $g$, let $k>0$ be an integer parameter and let $\epsilon\in(0,1]$ be a real parameter.
    \begin{enumerate}
        \item $G$ has a path-reporting $(1+\epsilon)$-spanner with query time 
        $O\left(g\cdot\frac{\log n}{\epsilon}\right),$
        and size $O\left(n\cdot g\cdot\frac{\log^2n}{\epsilon}\right)$.
    
        \item $G$ has a path-reporting spanner with stretch $O(k\log\log g)$, query time 
        $O(\log g\cdot\log^2n+kg^{\frac{1}{k}}\cdot\log n\cdot\log\log g)$,
        and size $O(kg^{\frac{1}{k}}\cdot n\log^2n)$.
        
        \item $G$ has a path-reporting emulator with stretch $(32e+\epsilon)k+1$, query time\newline 
        $O\left(\log g\cdot\log^2n+k\cdot(\frac{g}{\epsilon}\log\frac{g}{\epsilon})^{\frac{1}{k}}\cdot\log n\cdot\log\log g\right)$, 
        and size $O(k\cdot(\frac{g}{\epsilon}\log\frac{g}{\epsilon})^{\frac{1}{k}}\cdot n\log^2n)$.
    \end{enumerate}
\end{corollary}

\subsection{Path-Reporting Spanners with Improved Stretch} \label{sec:SpanImprovedStretch}

Consider our new path-reporting spanners for general graphs (Theorem \ref{thm:RamseyPRSpanner}), vertex-separable graphs (Theorem \ref{thm:SmallTreewidthEmulatorAndSpanner}) and path-separable graphs (Theorems \ref{thm:SmallStretchImprovedQT} and \ref{thm:LargeStretchImprovedQT}). In almost all these results, the stretch depends, albeit very mildly, on the parameters of the separators. Namely, the stretch is generally $O(k\log\log t)$, where $t$ is the size of the recursive separators, the number of separator paths, or, in general graphs, simply $t=n$. An exception is the path-reporting spanner with stretch $1+\epsilon$ for $p$-path-separable graphs (see Theorem \ref{thm:SmallStretchImprovedQT}); however, the dependency on $p$ in the size and query time is linear (i.e., it is $\Omega(p)$).

In this section, we apply techniques from \cite{ES23} on our new path-reporting \textit{emulators}, to convert them into path-reporting $O(k)$-spanners, while only losing small factors in the size. Specifically, we use the following theorem from \cite{ES23}.

\begin{theorem}[\cite{ES23}, Theorem 3 and Remark 1] \label{thm:ESPreserver}
Given an undirected weighted graph $G=(V,E)$, an integer $\kappa\geq3$, a positive parameter $\epsilon\leq O(\log \kappa)$, and a set of pairs $\mathcal{P}\subseteq V^2$, there exists a path-reporting $\mathcal{P}$-pairwise $(1+\epsilon)$-spanner with query time $O(1)$ and size
\[O(|\mathcal{P}|\cdot\beta_{4/3}(\epsilon,\kappa)+n\log\kappa+n^{1+\frac{1}{\kappa}})~,\]
where $\beta_{4/3}(\epsilon,\kappa)=O(\frac{\log\kappa}{\epsilon})^{\lceil\log_{4/3}\kappa\rceil}$.

In addition, there exists a path-reporting $\mathcal{P}$-pairwise $(3+\epsilon)$-spanner with query time $O(1)$ and size
\[O(|\mathcal{P}|\cdot\kappa^{\log_{4/3}(12+\frac{40}{\epsilon})}+n\log\kappa+n^{1+\frac{1}{\kappa}})~.\]
\end{theorem}

%In Theorem \ref{thm:ESPreserver}, a \textit{path-reporting $\mathcal{P}$-pairwise $\alpha$-spanner}\footnote{In \cite{ES23}, a pairwise $\alpha$-spanner was called an \textit{interactive $\alpha$-distance preserver}.} is a path-reporting spanner of $G$ that guarantees stretch $\alpha$ only for queries $(u,v)$ from the set $\mathcal{P}$ (and also provides a $u$-$v$ path, if necessary). 
In the next lemma, we apply the pairwise path-reporting spanners from Theorem \ref{thm:ESPreserver} on a path-reporting emulator, to obtain a path-reporting spanner (for \textit{all} queries $(u,v)\in V^2$).

\begin{lemma} \label{lemma:EmulatorToSpanner}
    Let $G=(V,E)$ be an undirected weighted $n$-vertex graph, and suppose that $G$ admits a path-reporting emulator $(H,D)$ with stretch $\alpha\geq1$, query time $q$ and size $M=\Omega(nt^{\frac{1}{k}})$, for some integer parameters $t,k>0$. Suppose also that $M=\Omega(n(\log k+\log\log n))$. Then, for every $\epsilon=O(\log k)$, the graph $G$ admits a path-reporting spanner with stretch $(1+\epsilon)\cdot\alpha$, query time $O(q)$ and size 
    \[O(M\cdot\beta_{4/3}(\epsilon,\kappa))~,\]
    where $\beta_{4/3}(\epsilon,\kappa)=O(\frac{\log\kappa}{\epsilon})^{\lceil\log_{4/3}\kappa\rceil}$ and $\kappa=\frac{k\log n}{\log t}$.
    
    In addition, $G$ admits a path-reporting spanner with stretch $O(\alpha)$, query time $O(q)$ and size 
    \[O(M\cdot\kappa^{O(1)})~.\]
\end{lemma}

\begin{proof}

Given the path-reporting emulator $(H,D)$, we construct a path-reporting spanner as follows. Consider the set $H\subseteq E$ as a set of pairs $\mathcal{P}=H\subseteq V^2$ (formally, we insert the pairs $(u,v),(v,u)$ to $\mathcal{P}$, for every $\{u,v\}\in H$). Let $(S,D_P)$ be the resulting path-reporting $\mathcal{P}$-pairwise $(1+\epsilon)$-spanner (resp., $(3+\epsilon)$-spanner) from Theorem \ref{thm:ESPreserver}. We define an oracle $D'$, that receives a query $(u,v)\in V^2$, and applies the oracle $D$ to obtain an estimate $\hat{d}(u,v)$. If a $u$-$v$ path is also required, the oracle $D'$ obtains a $u$-$v$ path $Q$ in $H$ (again by applying $D$), the applies the oracle $D_P$ on every edges of $Q$, and finally concatenates the resulting paths in $S$ into a single $u$-$v$ path in $S$.

It is immediate to verify that the query time of the resulting path-reporting spanner $(S,D')$ is $O(q)$. To analyze the stretch of $(S,D')$, let $(u,v)\in V^2$ be any query, and suppose that a $u$-$v$ path is required (otherwise, the stretch is at most $\alpha$, as in the path-reporting emulator $(H,D)$). Then, the $u$-$v$ path $Q$ obtained by $D$ has weight $w(Q)\leq\alpha\cdot d_G(u,v)$. Replacing every edge of $Q$ by a path in $S$ of stretch $1+\epsilon$ or $3+\epsilon$ results in a $u$-$v$ path in $S$ with stretch $1+\epsilon$ or $3+\epsilon$, respectively.

In the former case, where $(S,D_P)$ has stretch $1+\epsilon$, it has size $O(M\cdot\beta_{4/3}(\epsilon,\kappa)+n\log\kappa+n^{1+\frac{1}{\kappa}})$,
where $\beta_{4/3}(\epsilon,\kappa)=O(\frac{\log\kappa}{\epsilon})^{\lceil\log_{4/3}\kappa\rceil}$. We choose $\kappa=\frac{k\log n}{\log t}$ (for the given parameters $t,k>0$), and obtain a path-reporting spanner with size
\[O(M\cdot\beta_{4/3}(\epsilon,\kappa)+n\log\kappa+n^{1+\frac{1}{\kappa}})=O(M\cdot\beta_{4/3}(\epsilon,\kappa)+n\log\kappa+nt^{\frac{1}{k}})=O(M\cdot\beta_{4/3}(\epsilon,\kappa))~.\]
Here we used the assumptions that $M=\Omega(nt^{\frac{1}{k}})$ and $M=\Omega(n(\log k+\log\log n))$.

In the latter case, where $(S,D_P)$ has stretch $3+\epsilon$, it has size $O(M\cdot\kappa^{\log_{4/3}(12+\frac{40}{\epsilon})}+n\log\kappa+n^{1+\frac{1}{\kappa}})$. We again choose $\kappa=\frac{k\log n}{\log t}$, and a \textit{constant} $\epsilon$, and obtain a path-reporting spanner with size

\[O(M\cdot\kappa^{\log_{4/3}(12+\frac{40}{\epsilon})}+n\log\kappa+n^{1+\frac{1}{\kappa}})=O(M\cdot\kappa^{O(1)})~.\]
    
\end{proof}

Lemma \ref{lemma:EmulatorToSpanner}, when applied to our path-reporting emulators from Theorems \ref{thm:SmallTreewidthEmulatorAndSpanner} and \ref{thm:LargeStretchImprovedQT}, provides the following results for path-reporting spanners.

\begin{theorem} \label{thm:PRSpannersWithDecreasedStretch}
    Let $G=(V,E)$ be an undirected weighted $n$-vertex graph, let $k>0$ be any integer parameter and let $\epsilon\in(0,1]$ be any real parameter. In what follows, we denote $\beta_{4/3}(\epsilon,\kappa)=O(\frac{\log\kappa}{\epsilon})^{\lceil\log_{4/3}\kappa\rceil}$, for any parameter $\kappa$.

    \begin{enumerate}
        \item Suppose that $G$ is $s$-vertex-separable, for some non-decreasing function $s=s(\theta)$, and denote $S(n,k)=\sum_{i=0}^{\log n}s(\frac{n}{2^i})^{\frac{1}{k}}$. Then, $G$ admits a path-reporting spanner with stretch $(32e+\epsilon)k+1$ (resp., $O(k)$), query time $O(k\cdot S(n,k))$ and size $\beta_{4/3}(\epsilon,\kappa)\cdot n\cdot S(n,k)$ (resp., $O(\kappa^{O(1)}\cdot n\cdot S(n,k))$), where $\kappa=\frac{k\log n}{\log s(n/2)}$.
        
        In particular, if $s(\theta)=\theta^\delta$, for a constant $\delta\in(0,1]$, then $S(n,k)=O(\frac{k}{\delta}n^{\frac{\delta}{k}})$, and if the graph $G$ has treewidth $t(\theta)\equiv t$, then $S(n,k)=t^{\frac{1}{k}}\cdot\log n$.

        \item Suppose that $G$ is $(\ell,\pi)$-path-separable. Then, $G$ admits a path-reporting spanner with stretch $(32e+\epsilon)k+1$, query time 
        $\ell\cdot O\left(\log\pi\cdot\log^{3-\gamma}n+k\cdot(\frac{\pi}{\epsilon}\log\frac{\pi}{\epsilon})^{\frac{1}{k}}\cdot\log^{2-\gamma}n\cdot\log\log\pi\right)$
        and size $\beta_{4/3}(\epsilon,\kappa)\cdot(\frac{\pi}{\epsilon}\log\frac{\pi}{\epsilon})^{\frac{1}{k}}\cdot n\log^{3-\gamma}n)$, where $\kappa=\frac{k\log n}{\log\pi}$. Here, $\gamma=1$ if $G$ is also tree-like $(\ell,\pi)$-path-separable, and $\gamma=0$ otherwise.

        \item Suppose that $G$ is $(\ell,\pi)$-path-separable. Then, $G$ admits a path-reporting spanner with stretch $O(k)$, query time 
        $\ell\cdot O\left(\log\pi\cdot\log^{3-\gamma}n+k\cdot(\pi\log\pi)^{\frac{1}{k}}\cdot\log^{2-\gamma}n\cdot\log\log\pi\right)$
        and size $O(\kappa^{O(1)}\cdot(\pi\log\pi)^{\frac{1}{k}}\cdot n\log^{3-\gamma}n))$), where $\kappa=\frac{k\log n}{\log\pi}$. Here, $\gamma=1$ if $G$ is also tree-like $(\ell,\pi)$-path-separable, and $\gamma=0$ otherwise.
    \end{enumerate}
\end{theorem}

Recall that $K_r$-minor-free graphs are (weakly) tree-like $(\ell,\pi)$-path-separable, for $\ell=O(r^2)$ and $\pi=O(r^{c_{AG}})$, where $c_{AG}=4602$, and that graphs with bounded genus $g$ are strongly tree-like $O(g)$-path-separable (see Items (2) and (3) in Theorem \ref{thm:PathSeparableFamilies}). Thus, the following corollary is derived by Theorem \ref{thm:PRSpannersWithDecreasedStretch}.

\begin{corollary}
    Let $G=(V,E)$ be an undirected weighted $n$-vertex graph, let $k>0$ be any integer parameter and let $\epsilon\in(0,1]$ be any real parameter. In what follows, we denote $\beta_{4/3}(\epsilon,\kappa)=O(\frac{\log\kappa}{\epsilon})^{\lceil\log_{4/3}\kappa\rceil}$, for any parameter $\kappa$.

    \begin{enumerate}
        \item Suppose that $G$ is $K_r$-minor-free. Then, $G$ admits a path-reporting spanner with stretch $O(k)$, query time 
        $O\left(r^2\log r\cdot\log^2n+kr^{2+\frac{1}{k}}\cdot\log n\cdot\log\log r\right)$
        and size $O(\kappa^{O(1)}\cdot r^{2+\frac{1}{k}}\cdot n\log^2n)$, where $\kappa=\frac{k\log n}{\log r}$.

        \item Suppose that $G$ has bounded genus $g$. Then, $G$ admits a path-reporting spanner with stretch $(32e+\epsilon)k+1$, query time 
        $O\left(\log g\cdot\log^2n+k\cdot(\frac{g}{\epsilon}\log\frac{g}{\epsilon})^{\frac{1}{k}}\cdot\log n\cdot\log\log g\right)$
        and size $\beta_{4/3}(\epsilon,\kappa)\cdot(\frac{g}{\epsilon}\log\frac{g}{\epsilon})^{\frac{1}{k}}\cdot n\log^2n$, where $\kappa=\frac{k\log n}{\log g}$.

        \item Suppose that $G$ has bounded genus $g$. Then, $G$ admits a path-reporting spanner with stretch $O(k)$, query time 
        $O\left(\log g\cdot\log^2n+kg^{\frac{1}{k}}\cdot\log n\cdot\log\log g\right)$
        and size $O(\kappa^{O(1)}\cdot g^{\frac{1}{k}}\cdot n\log^2n)$), where $\kappa=\frac{k\log n}{\log g}$.
    \end{enumerate}
\end{corollary}

\subsection{Distance Oracles with Improved Stretch}

Recall that Theorem \ref{thm:SmallTreewidthEmulatorAndSpanner} provides a path-reporting emulator with stretch $32ek+1$, query time $O(k\cdot T(n,k))$ and size $O(k\cdot n\cdot T(n,k))$, for any $n$-vertex graph that is $t(\theta)$-vertex-separable, and any integer parameter $k\geq1$. Here, we provide an improved stretch of $4ek+1$. We note, however, that the structure that achieves this bound is not a path-reporting emulator, but only a \textit{distance oracle}. That is, given a query $(u,v)\in V^2$, this structure outputs only an estimate $\hat{d}(u,v)$ that satisfies $d_G(u,v)\leq\hat{d}(u,v)\leq(4ek+1)d_G(u,v)$, but not a path with weight $\hat{d}(u,v)$ in a fixed small-size emulator of $G$ (see Definition \ref{def:DistanceOracles}). The proof of the following theorem is in Appendix \ref{app:ImprovedStretch}.

\begin{theorem} \label{thm:SmallTreewidthDO}
Let $G$ be an $n$-vertex undirected weighted $s$-vertex-separable graph, for a non-decreasing function $s=s(\theta)$. Let $k\geq1$ be an integer, and denote $S(n,k)=\sum_{i=0}^{\log n}s\left(\frac{n}{2^i}\right)^{\frac{1}{k}}$. Then, $G$ has a distance oracle with stretch at most $4ek+1$, size $O(k\cdot n\cdot S(n,k))$, and query time $O(k\cdot S(n,k))$.
\end{theorem}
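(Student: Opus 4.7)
\medskip

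\noindent\textbf{Proof Proposal.} The plan is to follow the same recursive separator-based template used in the proof of Theorem \ref{thm:SmallTreewidthSpanTreeCover}, but to plug in the pairwise distance oracle of Lemma \ref{lemma:PairwiseDO} in place of the pairwise spanning tree collection of Theorem \ref{thm:PairwiseSpanTreeCover}. Concretely, given the graph $G=(V,E)$, I would fix a balanced vertex separator $A\subseteq V$ with $|A|\leq t(n)$, and invoke Lemma \ref{lemma:PairwiseDO} on $(G,A)$ to obtain a $\mathcal{P}_A$-pairwise distance oracle $D_A$ with stretch $16k-3$, size $O(k\cdot n\cdot t(n)^{1/k})$, and query time $O(k\cdot t(n)^{1/k})$. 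Denote the connected components of $G[V\setminus A]$ by $C_1,\ldots,C_\ell$, each of size at most $n/2$; since every sub-graph of $C_i$ is a sub-graph of $G$, each $C_i$ inherits the recursive-separator property with the same function $t$. Recursively build a distance oracle $D_i$ for each $C_i$, and let the overall oracle $D$ store $D_A$, all of the $D_i$'s, and for every vertex a pointer to the component it belongs to at every level of the recursion (so that the component of a query pair can be located in $O(1)$ time per level).

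\smallskip

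Given a query $(u,v)\in V^2$, the oracle $D$ first asks $D_A$ for an estimate $\hat{d}_A(u,v)$, and, if $u,v$ lie in the same component $C_i$, asks $D_i$ recursively for $\hat{d}_i(u,v)$; the output is the minimum of the available estimates. Correctness follows from a standard case analysis: either some shortest $u$--$v$ path hits $A$, in which case $(u,v)\in\mathcal{P}_A$ and Lemma \ref{lemma:PairwiseDO} guarantees $d_G(u,v)\leq\hat{d}_A(u,v)\leq(16k-3)\,d_G(u,v)$; or every shortest $u$--$v$ path avoids $A$, which forces $u,v$ to be in a common $C_i$ with $d_{C_i}(u,v)=d_G(u,v)$, and the inductive guarantee on $D_i$ yields the bound. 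In both cases no estimate returned by any sub-oracle can underestimate $d_G(u,v)$, since each of them is an upper bound on the corresponding shortest-path distance in a sub-graph of $G$, so taking the minimum is safe. A straightforward induction then establishes the stretch $16k-3$.

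\smallskip

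For the size and query time, I would set up the standard recurrences
\[
S(n)\;\leq\;c\cdot k\cdot n\cdot t(n)^{1/k}\;+\;\sum_{i=1}^{\ell}S(n_i),\qquad
Q(n)\;\leq\;c\cdot k\cdot t(n)^{1/k}\;+\;\max_{i}Q(n_i),
\]
with $n_i\leq n/2$ and $\sum_i n_i\leq n$. Unrolling the recursion tree level by level, at depth $j$ the surviving sub-components have total size at most $n$ and each has at most $n/2^j$ vertices, so by monotonicity of $t$ the total contribution to $S(n)$ at depth $j$ is $O(k\cdot n\cdot t(n/2^j)^{1/k})$, and summing over $j=0,1,\ldots,\log n$ gives the claimed $O(k\cdot n\cdot T(n,k))$ bound; an identical telescoping handles $Q(n)$, giving $O(k\cdot T(n,k))$.

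\smallskip

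I expect the main obstacle to be purely a careful accounting issue rather than a conceptual one: ensuring that the $O(n)$-word overhead per recursion level (coming from the per-vertex auxiliary information stored by the pairwise oracle, as in the proof of Lemma \ref{lemma:PairwiseDO}) telescopes correctly into the stated $O(k\cdot n\cdot T(n,k))$ bound, and verifying that identifying the component of a query pair at each level can be implemented with $O(1)$ overhead so that the query time recurrence is indeed additive in $k\cdot t(n/2^j)^{1/k}$ and not multiplicative in $\log n$.
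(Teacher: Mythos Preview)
Your proposal is correct and follows essentially the same approach as the paper: take a balanced separator $A$, apply Lemma~\ref{lemma:PairwiseDO} to obtain the $\mathcal{P}_A$-pairwise oracle, recurse on the components of $G[V\setminus A]$, and answer queries by taking the minimum of the top-level and recursive estimates, with the size and query-time bounds obtained by summing $O(k\cdot t(n/2^j)^{1/k})$ over the $O(\log n)$ levels. The paper carries out the same argument via an explicit induction with tracked constants (in particular absorbing the $O(n)$-per-level overhead from the component pointers into the $(\tilde c+1)\cdot k\cdot n\cdot T(n,k)$ bound), which is exactly the ``careful accounting'' you anticipated.
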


\clearpage
\pagestyle{empty}
\bibliography{hopset}

@inproceedings{EM21,
author = {Elkin, Michael and Matar, Shaked},
title = {Ultra-Sparse Near-Additive Emulators},
year = {2021},
isbn = {9781450385480},
publisher = {Association for Computing Machinery},
address = {New York, NY, USA},
url = {https://doi.org/10.1145/3465084.3467926},
doi = {10.1145/3465084.3467926},
abstract = {Near-additive (aka (1+ε,β)β-) emulators and spanners are a fundamental graph-algorithmic construct, with numerous applications for computing approximate shortest paths and related problems in distributed, streaming and dynamic settings.  Known constructions of near-additive emulators enable one to trade between their sparsity (i.e., number of edges) and the additive stretch β. Specifically, for any pair of parameters ε >0, κ=1,2,..., one can have a (1+ε,β)-emulator with O(n^1+1/κ ) edges, with β = \l{}eft(frac\l{}og κ ε right)^\l{}og κ . At their sparsest, these emulators employ c.n edges, for some constant c≥ 2. We tighten this bound, and show that in fact precisely n^1+1/κ edges suffice.},
booktitle = {Proceedings of the 2021 ACM Symposium on Principles of Distributed Computing},
pages = {235–246},
numpages = {12},
keywords = {deterministic, distributed algorithms, emulators},
location = {Virtual Event, Italy},
series = {PODC'21}
}

@article{CTX25,
  title={Lower Bounds on Tree Covers},
  author={Chen, Yu and Tan, Zihan and Xu, Hangyu},
  journal={arXiv preprint arXiv:2508.10376},
  year={2025}
}

@inproceedings{FRT03,
  title={A tight bound on approximating arbitrary metrics by tree metrics},
  author={Fakcharoenphol, Jittat and Rao, Satish and Talwar, Kunal},
  booktitle={Proceedings of the thirty-fifth annual ACM symposium on Theory of computing},
  pages={448--455},
  year={2003}
}

@inproceedings{DG10,
  title={Path separability of graphs},
  author={Diot, Emilie and Gavoille, Cyril},
  booktitle={International Workshop on Frontiers in Algorithmics},
  pages={262--273},
  year={2010},
  organization={Springer}
}

@inproceedings{Fil25,
  title={On Sparse Covers of Minor Free Graphs, Low Dimensional Metric Embeddings, and Other Applications},
  author={Filtser, Arnold},
  booktitle={41st International Symposium on Computational Geometry (SoCG 2025)},
  pages={49--1},
  year={2025},
  organization={Schloss Dagstuhl--Leibniz-Zentrum f{\"u}r Informatik}
}

@article{GHT84,
  title={A separator theorem for graphs of bounded genus},
  author={Gilbert, John R and Hutchinson, Joan P and Tarjan, Robert Endre},
  journal={Journal of Algorithms},
  volume={5},
  number={3},
  pages={391--407},
  year={1984},
  publisher={Elsevier}
}

@article{Dji85,
  title={A separator theorem for graphs of fixed genus},
  author={Djidjev, Hristo N},
  journal={Serdica},
  volume={11},
  number={4},
  pages={319--329},
  year={1985}
}

@inproceedings{Dji88,
  title={Linear algorithms for graph separation problems},
  author={Djidjev, Hristo N},
  booktitle={Scandinavian Workshop on Algorithm Theory},
  pages={216--222},
  year={1988},
  organization={Springer}
}

@inproceedings{GKT03,
  title={Tree based MPLS routing},
  author={Gupta, Anupam and Kumar, Amit and Thorup, Mikkel},
  booktitle={Proceedings of the fifteenth annual ACM symposium on Parallel algorithms and architectures},
  pages={193--199},
  year={2003}
}

@inproceedings{GHR06,
  title={Oblivious network design},
  author={Gupta, Anupam and Hajiaghayi, Mohammad T and R{\"a}cke, Harald},
  booktitle={Proceedings of the seventeenth annual ACM-SIAM symposium on Discrete algorithm},
  pages={970--979},
  year={2006}
}

@inproceedings{AGGM06,
  title={Routing in networks with low doubling dimension},
  author={Abraham, Ittai and Gavoille, Cyril and Goldberg, Andrew V and Malkhi, Dahlia},
  booktitle={26th IEEE International Conference on Distributed Computing Systems (ICDCS'06)},
  pages={75--75},
  year={2006},
  organization={IEEE}
}

@article{AD96,
  title={Linear algorithms for partitioning embedded graphs of bounded genus},
  author={Aleksandrov, Lyudmil and Djidjev, Hristo},
  journal={SIAM Journal on Discrete Mathematics},
  volume={9},
  number={1},
  pages={129--150},
  year={1996},
  publisher={SIAM}
}

@article{GSW25,
  title={Polynomial bounds for the graph minor structure theorem},
  author={Gorsky, Maximilian and Seweryn, Micha{\l} T and Wiederrecht, Sebastian},
  journal={arXiv preprint arXiv:2504.02532},
  year={2025}
}

@article{KTW20,
  title={Quickly excluding a non-planar graph},
  author={Kawarabayashi, Ken-ichi and Thomas, Robin and Wollan, Paul},
  journal={arXiv preprint arXiv:2010.12397},
  year={2020}
}

@article{W12,
  title={Defective and Clustered Graph Colouring},
  author={Wood, David R},
  journal={The Electronic Journal of Combinatorics},
  pages={DS23--Apr},
  year={2012}
}

@article{EF05,
title = {On monochromatic component size for improper colourings},
journal = {Discrete Applied Mathematics},
volume = {148},
number = {1},
pages = {89-105},
year = {2005},
issn = {0166-218X},
doi = {https://doi.org/10.1016/j.dam.2004.10.005},
url = {https://www.sciencedirect.com/science/article/pii/S0166218X05000028},
author = {Keith Edwards and Graham Farr},
keywords = {Graph colouring, Graph partition, Monochromatic components, Metacolouring, Algorithm},
abstract = {This paper concerns improper ־»-colourings of graphs and focuses on the sizes of the monochromatic components (i.e., components of the subgraphs induced by the colour classes). Consider the following three simple operations, which should, heuristically, help reduce monochromatic component size: (a) assign to a vertex the colour that is least popular among its neighbours; (b) change the colours of any two adjacent differently coloured vertices, if doing so reduces the number of monochromatic edges; and (c) change the colour of a vertex, if by so doing you can reduce the size of the largest monochromatic component containing it without increasing the number of monochromatic edges. If a colouring cannot be further improved by these operations, then we regard it as locally optimal. We show that, for such a locally optimal 2-colouring of a graph of maximum degree 4, the maximum monochromatic component size is O(2(2log2n)1/2). The operation set (a)ג€“(c) appears to be one of the simplest that achieves a o(n) bound on monochromatic component size. Recent work by Alon, Ding, Oporowski and Vertigan, and then Haxell, Szabֳ³ and Tardos, has shown that some algorithms can do much better, achieving a constant bound on monochromatic component size. However, the simplicity of our operation set, and of the associated local search algorithm, make the algorithm, and our locally optimal colourings, of interest in their own right.}
}

@INPROCEEDINGS{KMRV97,
  author={Kleinberg, J. and Motwani, R. and Raghavan, P. and Venkatasubramanian, S.},
  booktitle={Proceedings 38th Annual Symposium on Foundations of Computer Science}, 
  title={Storage management for evolving databases}, 
  year={1997},
  volume={},
  number={},
  pages={353-362},
  keywords={Databases;Costs;Computer science;Software libraries;Computer network management;Graph theory;Polynomials;Computer networks;Hardware;Disk drives},
  doi={10.1109/SFCS.1997.646124}}

@article{ADOV03,
title = {Partitioning into graphs with only small components},
journal = {Journal of Combinatorial Theory, Series B},
volume = {87},
number = {2},
pages = {231-243},
year = {2003},
issn = {0095-8956},
doi = {https://doi.org/10.1016/S0095-8956(02)00006-0},
url = {https://www.sciencedirect.com/science/article/pii/S0095895602000060},
author = {Noga Alon and Guoli Ding and Bogdan Oporowski and Dirk Vertigan},
keywords = {Tree-width, Vertex partitions, Edge partitions, Small components},
abstract = {The paper presents several results on edge partitions and vertex partitions of graphs into graphs with bounded size components. We show that every graph of bounded tree-width and bounded maximum degree admits such partitions. We also show that an arbitrary graph of maximum degree four has a vertex partition into two graphs, each of which has components on at most 57 vertices. Some generalizations of the last result are also discussed.}
}

@article{LT79,
  title={A separator theorem for planar graphs},
  author={Lipton, Richard J and Tarjan, Robert Endre},
  journal={SIAM Journal on Applied Mathematics},
  volume={36},
  number={2},
  pages={177--189},
  year={1979},
  publisher={SIAM}
}

@article{RS03,
  title={Graph minors. XVI. Excluding a non-planar graph},
  author={Robertson, Neil and Seymour, Paul D},
  journal={Journal of Combinatorial Theory, Series B},
  volume={89},
  number={1},
  pages={43--76},
  year={2003},
  publisher={Elsevier}
}

@inproceedings{AG06,
  author       = {Ittai Abraham and
                  Cyril Gavoille},
  editor       = {Eric Ruppert and
                  Dahlia Malkhi},
  title        = {Object location using path separators},
  booktitle    = {Proceedings of the Twenty-Fifth Annual {ACM} Symposium on Principles
                  of Distributed Computing, {PODC} 2006, Denver, CO, USA, July 23-26,
                  2006},
  pages        = {188--197},
  publisher    = {{ACM}},
  year         = {2006},
  url          = {https://doi.org/10.1145/1146381.1146411},
  doi          = {10.1145/1146381.1146411},
  timestamp    = {Tue, 06 Nov 2018 11:07:18 +0100},
  biburl       = {https://dblp.org/rec/conf/podc/AbrahamG06.bib},
  bibsource    = {dblp computer science bibliography, https://dblp.org}
}

@inproceedings{ADMSS95,
  author       = {Sunil Arya and
                  Gautam Das and
                  David M. Mount and
                  Jeffrey S. Salowe and
                  Michiel H. M. Smid},
  editor       = {Frank Thomson Leighton and
                  Allan Borodin},
  title        = {Euclidean spanners: short, thin, and lanky},
  booktitle    = {Proceedings of the Twenty-Seventh Annual {ACM} Symposium on Theory
                  of Computing, 29 May-1 June 1995, Las Vegas, Nevada, {USA}},
  pages        = {489--498},
  publisher    = {{ACM}},
  year         = {1995},
  url          = {https://doi.org/10.1145/225058.225191},
  doi          = {10.1145/225058.225191},
  timestamp    = {Tue, 06 Nov 2018 11:07:05 +0100},
  biburl       = {https://dblp.org/rec/conf/stoc/AryaDMSS95.bib},
  bibsource    = {dblp computer science bibliography, https://dblp.org}
}

@inproceedings{P00b,
  author       = {David Peleg},
  editor       = {Mogens Nielsen and
                  Branislav Rovan},
  title        = {Informative Labeling Schemes for Graphs},
  booktitle    = {Mathematical Foundations of Computer Science 2000, 25th International
                  Symposium, {MFCS} 2000, Bratislava, Slovakia, August 28 - September
                  1, 2000, Proceedings},
  series       = {Lecture Notes in Computer Science},
  volume       = {1893},
  pages        = {579--588},
  publisher    = {Springer},
  year         = {2000},
  url          = {https://doi.org/10.1007/3-540-44612-5\_53},
  doi          = {10.1007/3-540-44612-5\_53},
  timestamp    = {Tue, 14 May 2019 10:00:37 +0200},
  biburl       = {https://dblp.org/rec/conf/mfcs/Peleg00.bib},
  bibsource    = {dblp computer science bibliography, https://dblp.org}
}

@article{GKR04,
  author       = {Anupam Gupta and
                  Amit Kumar and
                  Rajeev Rastogi},
  title        = {Traveling with a Pez Dispenser (or, Routing Issues in {MPLS)}},
  journal      = {{SIAM} J. Comput.},
  volume       = {34},
  number       = {2},
  pages        = {453--474},
  year         = {2004},
  url          = {https://doi.org/10.1137/S0097539702409927},
  doi          = {10.1137/S0097539702409927},
  timestamp    = {Sat, 30 Sep 2023 10:26:21 +0200},
  biburl       = {https://dblp.org/rec/journals/siamcomp/GuptaKR04.bib},
  bibsource    = {dblp computer science bibliography, https://dblp.org}
}

@inproceedings{CCLMST23,
  title={Covering planar metrics (and beyond): O(1) trees suffice},
  author={Chang, Hsien-Chih and Conroy, Jonathan and Le, Hung and Milenkovic, Lazar and Solomon, Shay and Than, Cuong},
  booktitle={2023 IEEE 64th Annual Symposium on Foundations of Computer Science (FOCS)},
  pages={2231--2261},
  year={2023},
  organization={IEEE}
}

@inproceedings{CCLMST24,
  title={Shortcut partitions in minor-free graphs: Steiner point removal, distance oracles, tree covers, and more},
  author={Chang, Hsien-Chih and Conroy, Jonathan and Le, Hung and Milenkovi{\'c}, Lazar and Solomon, Shay and Than, Cuong},
  booktitle={Proceedings of the 2024 Annual ACM-SIAM Symposium on Discrete Algorithms (SODA)},
  pages={5300--5331},
  year={2024},
  organization={SIAM}
}

@inproceedings{CCLST25,
  author       = {Hsien{-}Chih Chang and
                  Jonathan Conroy and
                  Hung Le and
                  Shay Solomon and
                  Cuong Than},
  editor       = {Michal Kouck{\'{y}} and
                  Nikhil Bansal},
  title        = {Light Tree Covers, Routing, and Path-Reporting Oracles via Spanning
                  Tree Covers in Doubling Graphs},
  booktitle    = {Proceedings of the 57th Annual {ACM} Symposium on Theory of Computing,
                  {STOC} 2025, Prague, Czechia, June 23-27, 2025},
  pages        = {2257--2268},
  publisher    = {{ACM}},
  year         = {2025},
  url          = {https://doi.org/10.1145/3717823.3718312},
  doi          = {10.1145/3717823.3718312},
  timestamp    = {Sun, 06 Jul 2025 13:23:44 +0200},
  biburl       = {https://dblp.org/rec/conf/stoc/ChangC0ST25.bib},
  bibsource    = {dblp computer science bibliography, https://dblp.org}
}

@article{CGMZ16,
  author       = {T.{-}H. Hubert Chan and
                  Anupam Gupta and
                  Bruce M. Maggs and
                  Shuheng Zhou},
  title        = {On Hierarchical Routing in Doubling Metrics},
  journal      = {{ACM} Trans. Algorithms},
  volume       = {12},
  number       = {4},
  pages        = {55:1--55:22},
  year         = {2016},
  url          = {https://doi.org/10.1145/2915183},
  doi          = {10.1145/2915183},
  timestamp    = {Mon, 17 Feb 2025 21:53:32 +0100},
  biburl       = {https://dblp.org/rec/journals/talg/ChanGMZ16.bib},
  bibsource    = {dblp computer science bibliography, https://dblp.org}
}

@article{BFN22,
  author       = {Yair Bartal and
                  Ora Nova Fandina and
                  Ofer Neiman},
  title        = {Covering metric spaces by few trees},
  journal      = {J. Comput. Syst. Sci.},
  volume       = {130},
  pages        = {26--42},
  year         = {2022},
  url          = {https://doi.org/10.1016/j.jcss.2022.06.001},
  doi          = {10.1016/J.JCSS.2022.06.001},
  timestamp    = {Sat, 10 Sep 2022 21:00:15 +0200},
  biburl       = {https://dblp.org/rec/journals/jcss/BartalFN22.bib},
  bibsource    = {dblp computer science bibliography, https://dblp.org}
}

@article{KRS01,
  title={On approximating planar metrics by tree metrics},
  author={Konjevod, Goran and Ravi, R and Salman, F Sibel},
  journal={Information Processing Letters},
  volume={80},
  number={4},
  pages={213--219},
  year={2001},
  publisher={Elsevier}
}

@article{BGHK95,
  title={Approximating treewidth, pathwidth, frontsize, and shortest elimination tree},
  author={Bodlaender, Hans L and Gilbert, John R and Hafsteinsson, Hj{\'a}lmtyr and Kloks, Ton},
  journal={Journal of Algorithms},
  volume={18},
  number={2},
  pages={238--255},
  year={1995},
  publisher={Elsevier}
}

@article{RS86b,
  title={Graph minors. V. Excluding a planar graph},
  author={Robertson, Neil and Seymour, Paul D},
  journal={Journal of Combinatorial Theory, Series B},
  volume={41},
  number={1},
  pages={92--114},
  year={1986},
  publisher={Elsevier}
}

@inproceedings{CZ24,
  title={Path-Reporting Distance Oracles with Logarithmic Stretch and Linear Size},
  author={Chechik, Shiri and Zhang, Tianyi},
  booktitle={51st International Colloquium on Automata, Languages, and Programming (ICALP 2024)},
  year={2024},
  organization={Schloss Dagstuhl--Leibniz-Zentrum f{\"u}r Informatik}
}

@article{HT84,
  author = {Harel, Dov and Tarjan, Robert Endre},
  title = {Fast Algorithms for Finding Nearest Common Ancestors},
  journal = {SIAM Journal on Computing},
  volume = {13},
  number = {2},
  pages = {338-355},
  year = {1984},
  doi = {10.1137/0213024},
  URL = {https://doi.org/10.1137/0213024},
  eprint = {https://doi.org/10.1137/0213024}
}

@article{ACEFN20,
  title={Ramsey spanning trees and their applications},
  author={Abraham, Ittai and Chechik, Shiri and Elkin, Michael and Filtser, Arnold and Neiman, Ofer},
  journal={ACM Transactions on Algorithms (TALG)},
  volume={16},
  number={2},
  pages={1--21},
  year={2020},
  publisher={ACM New York, NY, USA}
}

@inproceedings{KLMS22,
  title={Can't see the forest for the trees: navigating metric spaces by bounded hop-diameter spanners},
  author={Kahalon, Omri and Le, Hung and Milenkovi{\'c}, Lazar and Solomon, Shay},
  booktitle={Proceedings of the 2022 ACM Symposium on Principles of Distributed Computing},
  pages={151--162},
  year={2022}
}

@article{Tho04,
  title={Compact oracles for reachability and approximate distances in planar digraphs},
  author={Thorup, Mikkel},
  journal={Journal of the ACM (JACM)},
  volume={51},
  number={6},
  pages={993--1024},
  year={2004},
  publisher={ACM New York, NY, USA}
}

@inproceedings{Kle02,
  title={Preprocessing an undirected planar network to enable fast approximate distance queries},
  author={Klein, Philip},
  booktitle={Proceedings of the thirteenth annual ACM-SIAM Symposium on Discrete Algorithms},
  pages={820--827},
  year={2002}
}

@InProceedings{NS23,
  author =	{Neiman, Ofer and Shabat, Idan},
  title =	{{Path-Reporting Distance Oracles with Linear Size}},
  booktitle =	{19th Scandinavian Symposium and Workshops on Algorithm Theory (SWAT 2024)},
  pages =	{36:1--36:16},
  series =	{Leibniz International Proceedings in Informatics (LIPIcs)},
  ISBN =	{978-3-95977-318-8},
  ISSN =	{1868-8969},
  year =	{2024},
  volume =	{294},
  editor =	{Bodlaender, Hans L.},
  publisher =	{Schloss Dagstuhl -- Leibniz-Zentrum f{\"u}r Informatik},
  address =	{Dagstuhl, Germany},
  URL =		{https://drops.dagstuhl.de/entities/document/10.4230/LIPIcs.SWAT.2024.36},
  URN =		{urn:nbn:de:0030-drops-200760},
  doi =		{10.4230/LIPIcs.SWAT.2024.36},
  annote =	{Keywords: Graph Algorithms, Shortest Paths, Distance Oracles}
}

@article{NT12,
  title={Scale-oblivious metric fragmentation and the nonlinear {D}voretzky theorem},
  author={Naor, Assaf and Tao, Terence},
  journal={Israel Journal of Mathematics},
  volume={192},
  pages={489--504},
  year={2012},
  publisher={Springer}
}

@article{Mat96,
  title={On the distortion required for embedding finite metric spaces into normed spaces},
  author={Matou{\v{s}}ek, Ji{\v{r}}{\'\i}},
  journal={Israel Journal of Mathematics},
  volume={93},
  number={1},
  pages={333--344},
  year={1996},
  publisher={Springer}
}

@inproceedings{BLMN03,
  author    = {Yair Bartal and
               Nathan Linial and
               Manor Mendel and
               Assaf Naor},
  editor    = {Lawrence L. Larmore and
               Michel X. Goemans},
  title     = {On metric {R}amsey-type Phenomena},
  booktitle = {Proceedings of the 35th Annual {ACM} Symposium on Theory of Computing,
               June 9-11, 2003, San Diego, CA, {USA}},
  pages     = {463--472},
  publisher = {{ACM}},
  year      = {2003},
  url       = {https://doi.org/10.1145/780542.780610},
  doi       = {10.1145/780542.780610},
  timestamp = {Tue, 06 Nov 2018 11:07:06 +0100},
  biburl    = {https://dblp.org/rec/conf/stoc/BartalLMN03.bib},
  bibsource = {dblp computer science bibliography, https://dblp.org}
}

@inproceedings{Gupta01,
  author    = {Anupam Gupta},
  editor    = {S. Rao Kosaraju},
  title     = {Steiner points in tree metrics don't (really) help},
  booktitle = {Proceedings of the Twelfth Annual Symposium on Discrete Algorithms,
               January 7-9, 2001, Washington, DC, {USA}},
  pages     = {220--227},
  publisher = {{ACM/SIAM}},
  year      = {2001},
  url       = {http://dl.acm.org/citation.cfm?id=365411.365448},
  timestamp = {Wed, 18 Aug 2021 17:14:22 +0200},
  biburl    = {https://dblp.org/rec/conf/soda/Gupta01.bib},
  bibsource = {dblp computer science bibliography, https://dblp.org}
}

@inproceedings{C15,
  author    = {Shiri Chechik},
  editor    = {Rocco A. Servedio and
               Ronitt Rubinfeld},
  title     = {Approximate Distance Oracles with Improved Bounds},
  booktitle = {Proceedings of the Forty-Seventh Annual {ACM} on Symposium on Theory
               of Computing, {STOC} 2015, Portland, OR, USA, June 14-17, 2015},
  pages     = {1--10},
  publisher = {{ACM}},
  year      = {2015},
  url       = {https://doi.org/10.1145/2746539.2746562},
  doi       = {10.1145/2746539.2746562},
  timestamp = {Tue, 06 Nov 2018 11:07:05 +0100},
  biburl    = {https://dblp.org/rec/conf/stoc/Chechik15.bib},
  bibsource = {dblp computer science bibliography, https://dblp.org}
}

@inproceedings{EN19,
  author    = {Michael Elkin and
               Ofer Neiman},
  title     = {Linear-Size Hopsets with Small Hopbound, and Constant-Hopbound Hopsets
               in {RNC}},
  booktitle = {The 31st {ACM} on Symposium on Parallelism in Algorithms and Architectures,
               {SPAA} 2019, Phoenix, AZ, USA, June 22-24, 2019.},
  pages     = {333--341},
  year      = {2019},

  url       = {https://doi.org/10.1145/3323165.3323177},
  doi       = {10.1145/3323165.3323177},
  timestamp = {Wed, 19 Jun 2019 08:10:50 +0200},
  biburl    = {https://dblp.org/rec/bib/conf/spaa/ElkinN19},
  bibsource = {dblp computer science bibliography, https://dblp.org}
}

@article{AP92,
   author = {B. Awerbuch and D. Peleg},
   title = {Routing with polynomial communication-space tradeoff},
   journal = {SIAM J. Discrete Mathematics},
  volume = {5},
  pages = {151-162},
  year= {1992},
}

@inproceedings{B96,
  author = {Y. Bartal},
  title = {Probabilistic approximation of metric spaces and its algorithmic
  applications},
  booktitle = { Proceedings of the 37th IEEE Symp. on Foundations of Computer Science},
  pages = {184- 193},
  year = {1996},
}

@article{C00,
  author    = {Edith Cohen},
  title     = {Polylog-time and near-linear work approximation scheme for undirected
               shortest paths},
  journal   = {J. {ACM}},
  volume    = {47},
  number    = {1},
  pages     = {132--166},
  year      = {2000},
  url       = {http://doi.acm.org/10.1145/331605.331610},
  doi       = {10.1145/331605.331610},
  timestamp = {Thu, 20 Nov 2003 12:28:06 +0100},
  biburl    = {http://dblp.uni-trier.de/rec/bib/journals/jacm/Cohen00},
  bibsource = {dblp computer science bibliography, http://dblp.org}
}

@inproceedings{C13,
  author       = {Shiri Chechik},
  editor       = {Panagiota Fatourou and
                  Gadi Taubenfeld},
  title        = {Compact routing schemes with improved stretch},
  booktitle    = {{ACM} Symposium on Principles of Distributed Computing, {PODC} '13,
                  Montreal, QC, Canada, July 22-24, 2013},
  pages        = {33--41},
  publisher    = {{ACM}},
  year         = {2013},
  url          = {https://doi.org/10.1145/2484239.2484268},
  doi          = {10.1145/2484239.2484268},
  timestamp    = {Tue, 06 Nov 2018 11:07:19 +0100},
  biburl       = {https://dblp.org/rec/conf/podc/Chechik13.bib},
  bibsource    = {dblp computer science bibliography, https://dblp.org}
}

@article{C98,
  author    = {Edith Cohen},
  title     = {Fast Algorithms for Constructing t-Spanners and Paths with Stretch
               t},
  journal   = {{SIAM} J. Comput.},
  volume    = {28},
  number    = {1},
  pages     = {210--236},
  year      = {1998},
  url       = {http://dx.doi.org/10.1137/S0097539794261295},
  doi       = {10.1137/S0097539794261295},
  timestamp = {Mon, 12 Sep 2011 16:10:09 +0200},
  biburl    = {http://dblp.uni-trier.de/rec/bib/journals/siamcomp/Cohen98},
  bibsource = {dblp computer science bibliography, http://dblp.org}
}

@article{KS97,
  author    = {Philip N. Klein and
               Sairam Subramanian},
  title     = {A Randomized Parallel Algorithm for Single-Source Shortest Paths},
  journal   = {J. Algorithms},
  volume    = {25},
  number    = {2},
  pages     = {205--220},
  year      = {1997},
  url       = {http://dx.doi.org/10.1006/jagm.1997.0888},
  doi       = {10.1006/jagm.1997.0888},
  timestamp = {Fri, 01 Jul 2011 13:43:04 +0200},
  biburl    = {http://dblp.uni-trier.de/rec/bib/journals/jal/KleinS97},
  bibsource = {dblp computer science bibliography, http://dblp.org}
}

@inproceedings{ES23,
  title={Path-Reporting Distance Oracles with Logarithmic Stretch and Size O (n log log n)},
  author={Elkin, Michael and Shabat, Idan},
  booktitle={2023 IEEE 64th Annual Symposium on Foundations of Computer Science (FOCS)},
  pages={2278--2311},
  year={2023},
  organization={IEEE}
}

@INPROCEEDINGS{Bar96,
  author={Bartal, Y.},
  booktitle={Proceedings of 37th Conference on Foundations of Computer Science}, 
  title={Probabilistic approximation of metric spaces and its algorithmic applications}, 
  year={1996},
  volume={},
  number={},
  pages={184-193},
  keywords={Extraterrestrial measurements;Approximation algorithms;Computer science;Contracts;Distributed computing;Probability distribution;Performance analysis;Algorithm design and analysis;Polynomials;Mathematics},
  doi={10.1109/SFCS.1996.548477}
}

@article{ENWN16,
  author       = {Michael Elkin and
                  Ofer Neiman and
                  Christian Wulff{-}Nilsen},
  title        = {Space-efficient path-reporting approximate distance oracles},
  journal      = {Theor. Comput. Sci.},
  volume       = {651},
  pages        = {1--10},
  year         = {2016},
  url          = {https://doi.org/10.1016/j.tcs.2016.07.038},
  doi          = {10.1016/J.TCS.2016.07.038},
  timestamp    = {Wed, 17 Feb 2021 22:00:03 +0100},
  biburl       = {https://dblp.org/rec/journals/tcs/ElkinNW16.bib},
  bibsource    = {dblp computer science bibliography, https://dblp.org}
}

@article{EP15,
  author    = {Michael Elkin and
               Seth Pettie},
  title     = {A Linear-Size Logarithmic Stretch Path-Reporting Distance Oracle for
               General Graphs},
  journal   = {{ACM} Trans. Algorithms},
  volume    = {12},
  number    = {4},
  pages     = {50:1--50:31},
  year      = {2016},
  url       = {https://doi.org/10.1145/2888397},
  doi       = {10.1145/2888397},
  timestamp = {Tue, 06 Nov 2018 12:51:21 +0100},
  biburl    = {https://dblp.org/rec/journals/talg/ElkinP16.bib},
  bibsource = {dblp computer science bibliography, https://dblp.org}
}

@article{GPPR04,
  title={Distance labeling in graphs},
  author={Gavoille, Cyril and Peleg, David and P{\'e}rennes, St{\'e}phane and Raz, Ran},
  journal={Journal of algorithms},
  volume={53},
  number={1},
  pages={85--112},
  year={2004},
  publisher={Elsevier}
}

@inproceedings{KKS11,
  title={Linear-space approximate distance oracles for planar, bounded-genus and minor-free graphs},
  author={Kawarabayashi, Ken-ichi and Klein, Philip N and Sommer, Christian},
  booktitle={Automata, Languages and Programming: 38th International Colloquium, ICALP 2011, Zurich, Switzerland, July 4-8, 2011, Proceedings, Part I 38},
  pages={135--146},
  year={2011},
  organization={Springer}
}

@article{MN06,
  title={Ramsey partitions and proximity data structures},
  author={Mendel, Manor and Naor, Assaf},
  journal={Journal of the European Mathematical Society},
  volume={9},
  number={2},
  pages={253--275},
  year={2007},
  publisher={EUROPEAN MATHEMATICAL SOCIETY}
}

@article{LS93,
  author = {N. Linial and M. Saks},
  title = {Decomposing Graphs into regions of small diameter},
  journal = {Combinatorica},
  volume = {13},
  pages = {441-454},
  year = {1993},
}

@article{P00a,
  title={Proximity-preserving labeling schemes},
  author={Peleg, David},
  journal={Journal of Graph Theory},
  volume={33},
  number={3},
  pages={167--176},
  year={2000},
  publisher={Wiley Online Library}
}

@inproceedings{TZ01,
    author = {M. Thorup and U. Zwick},
  title = {Approximate Distance Oracles},
  booktitle = {Proc. of the 33rd ACM Symp. on Theory of Computing},
  pages = {183-192},
  year = {2001},
}

@inproceedings{TZ01-spaa,
 author = {Thorup, Mikkel and Zwick, Uri},
 title = {Compact Routing Schemes},
 booktitle = {Proceedings of the Thirteenth Annual ACM Symposium on Parallel Algorithms and Architectures},
 series = {SPAA '01},
 year = {2001},
 isbn = {1-58113-409-6},
 location = {Crete Island, Greece},
 pages = {1--10},
 numpages = {10},
 url = {http://doi.acm.org/10.1145/378580.378581},
 doi = {10.1145/378580.378581},
 acmid = {378581},
 publisher = {ACM},
 address = {New York, NY, USA},
}

@inproceedings{WN13,
  author    = {Christian Wulff{-}Nilsen},
  editor    = {Sanjeev Khanna},
  title     = {Approximate Distance Oracles with Improved Query Time},
  booktitle = {Proceedings of the Twenty-Fourth Annual {ACM-SIAM} Symposium on Discrete
               Algorithms, {SODA} 2013, New Orleans, Louisiana, USA, January 6-8,
               2013},
  pages     = {539--549},
  publisher = {{SIAM}},
  year      = {2013},
  url       = {https://doi.org/10.1137/1.9781611973105.39},
  doi       = {10.1137/1.9781611973105.39},
  timestamp = {Tue, 02 Feb 2021 17:07:49 +0100},
  biburl    = {https://dblp.org/rec/conf/soda/Wulff-Nilsen13.bib},
  bibsource = {dblp computer science bibliography, https://dblp.org}
}

\appendix

\clearpage
\pagestyle{plain}
\pagenumbering{Roman}

\section{Comparison with Constructions for General Graphs} \label{app:ComparisonGeneralGraphs}

It is instructive to compare our HST covers for $s$-vertex-separable graphs (or graphs with treewidth $s=s(n)$) with the HST covers of \cite{MN06,NT12}. For a parameter $k=1,2,...$, our HST cover has stretch $12ek$ and size $O\left(k\cdot\sum_{i=0}^{\log n}s\left(\frac{n}{2^i}\right)^{\frac{1}{k}}\right)=O(k\cdot s^{\frac{1}{k}}\cdot\log n)$. The HST cover of \cite{NT12} for general graphs has the same stretch and size $O(kn^{\frac{1}{k}})$. Hence, for any constant $k$, our HST cover has smaller size as long as $s=n^{1-o(1)}$, i.e., our result provides a graceful generalization of the result of \cite{NT12}, parametrized by $s$. This is also the case for our non-spanning and spanning tree covers for this graph family. Their stretch is $O(k)$ and $O(k\log\log n)$, respectively, with the same constants hidden by the $O$-notation as in the respective state-of-the-art non-spanning and spanning tree covers for general graphs. Note that the tree covers of \cite{MN06,NT12,ACEFN20} are Ramsey ones (for general graphs), while our tree covers are not. Recall that a lower bound of \cite{BFN22,BLMN03} rules out existence of Ramsey tree covers with stretch $k$ and size $n^{o(\frac{1}{k\log k})}$ already for series-parallel graphs (that have treewidth $2$).

For strongly $p$-path-separable graphs, our HST cover has stretch $12e(1+\epsilon)k$ and size $O_\epsilon(kp^{\frac{1}{k}}\cdot\log^2n)$. For constant $k$, this bound improves the bound $O(kn^{\frac{1}{k}})$ for general graphs for $p\leq n^{1-\frac{1}{k+1}-o(1)}$, and this is also the case when comparing our non-spanning and spanning tree covers with the state-of-the-art bounds of \cite{NT12,Gupta01} and \cite{ACEFN20}, respectively.

For weakly $(\ell,\pi)$-path-separable graphs, our bounds leave larger room for improvement. Specifically, the size of our tree covers grows linearly with $\ell$, while ideally it should be proportional to $\ell^{\frac{1}{k}}$. Note, however, that the prime example of weakly $(\ell,\pi)$-path-separable graph is the class of $H$-minor-free graphs. The separator theorem of \cite{AG06} provides these graphs with a weakly tree-like $(\ell,\pi)$-path separator with $\ell\leq|E(H)|$ (in particular, $\ell\leq r^2$ for $H=K_r$) and $\pi\leq O(\ell\cdot RS(r)\cdot(\ell+RS(r)))=O(r^{c_{AG}})$ (the total number of paths in their separator is also at most $p=O(\ell\cdot RS(r)\cdot(\ell+RS(r)))$). Observe that for this graph class, $\ell=O(r^2)$ is much smaller than $\pi=O(r^{c_{AG}})$, $c_{AG}=4602$, and thus our bound of $O(k\ell\pi^{\frac{1}{k}}\cdot\log^2n)=O(kr^{2+\frac{1}{k}}\cdot\log^2n)$ on the size of our tree cover is much smaller (for $r=\Omega(\log n)$ and $k\geq2$) than the previously known bound of $O(p\cdot\frac{\log^2n}{\epsilon^2})=O(r^{c_{AG}}\cdot\frac{\log^2n}{\epsilon^2})$ on the size of the stretch-$(1+\epsilon)$ non-spanning tree covers for these graphs \cite{BFN22}.% (recall that we also provide stretch-$(1+\epsilon)$ spanning tree covers with roughly the same size). In quantitative terms, our stretch-$O(k)$ (respectively, stretch-$O(k\log\log r)$) non-spanning (resp., spanning) tree covers for $K_r$-minor-free graphs have size $O(kr^{2+\frac{1}{k}}\cdot\log^2n)$, i.e., the dependence on $r$ here is far milder than in tree covers with stretch $1+\epsilon$. This tradeoff also compares very favorably with the non-spanning tree covers with stretch $O(r^4)$ and size $2^{O(r)}$ of \cite{BFN22}, as long as $r\geq c\cdot\log\log n$, for some small positive constant $c$. Indeed, for a constant $k$, our stretch is either constant or $O(\log\log r)$ (depending on whether one considers non-spanning or spanning tree covers), and our size is $O(r^{2+\frac{1}{k}}\cdot\log^2n)\ll2^{O(r)}$. 

%There is also a wide range of parameters $r$ and $k$ in which our result outperforms the tradeoff of \cite{MN06,NT12,ACEFN20} for general graphs. We get stretch-$12e(1+\epsilon)k$ HST covers of size $O_\epsilon(kr^{2+\frac{c_{AG}}{k}}\cdot\log^2n)$, while the state-of-the-art size bound for general graphs is $O(kn^{\frac{1}{k}})$ (for stretch $12ek$) \cite{NT12}. As a result, our bound is meaningful for $r\leq n^{(c_{AG}+2k)^{-1}}$. This is also the case for our tradeoffs for non-spanning and spanning tree covers for $K_r$-minor-free graphs.

\section{Aspect Ratio Reduction} \label{app:AspectRatioReduction}

Recall that the spanning forest covers from Lemma \ref{lemma:OneScaleSimpleTreeCover} and the covers from Corollary \ref{cor:EllPiOneScale} provide low stretch only for pairs of vertices $u,v\in V$ with $d_G(u,v)\approx\Delta$, for a given scale $\Delta>0$. In each of these results, constructing a cover for each relevant scale of the form $\Delta=2^i$, and unifying these covers, provides a cover with low stretch for \textit{all} pairs $u,v\in V$, at the cost of multiplying the size by $\log\Lambda$. For weighted graphs, the aspect ratio $\Lambda=\frac{\max_{u,v\in V}d_G(u,v)}{\min_{u\neq v\in V}d_G(u,v)}$ may be very large, thus the resulting size may be large as well.

In this appendix, we merge the covers for different scales in a more subtle way. See the discussion at the end of Section \ref{sec:TechnicalOverview} for some intuition. As a result, their size is only multiplied by a factor of $\log\frac{n}{\epsilon}$, rather than $\log\Lambda$. This improves the naive construction whenever $\Lambda=(\frac{n}{\epsilon})^{\omega(1)}$.
We start with (spanning and non-spanning) forest covers.

\pagebreak
\begin{lemma} \label{lemma:AspectRatioReductionForests}
    Let $G=(V,E)$ be an $n$-vertex undirected weighted graph, and let $\epsilon\in(0,1]$ and $\alpha\geq1$ be real parameters. Suppose that, for every $\Delta>0$, every minor $G^*=(V^*,E^*)$ of $G$ admits a forest cover $\mathcal{F}^*_{\Delta}$ with size at most $M(\epsilon,n)$, such that for every $u,v\in V^*$ with $d_{G^*}(u,v)\leq\Delta$ there is a forest $F\in\mathcal{F}^*_\Delta$ with $d_F(u,v)\leq\alpha\cdot(d_{G^*}(u,v)+\epsilon\Delta)$. Then, $G$ admits a forest cover $\mathcal{F}$ with size at most $M(\frac{\epsilon}{6},n)\cdot O(\log\frac{n}{\epsilon})$ and stretch $(1+\epsilon)\alpha$. Moreover, if for every minor $G^*$ and $\Delta$ the forest cover $\mathcal{F}^*_{\Delta}$ is spanning, then $\mathcal{F}$ is a spanning forest cover for $G$.
\end{lemma}

\begin{proof}

We assume that $\min_{u\neq v\in V}d_G(u,v)=\min_{e\in E}w(e)=1$, as otherwise we multiply the weight of each edge by $(\min_{u\neq v\in V}d_G(u,v))^{-1}$. Then, the aspect ratio $\Lambda$ is simply the diameter $\max_{u,v\in V}d_G(u,v)$.

%Additionally, in this proof we assume that $\epsilon\geq\frac{1}{n}$, as otherwise $\frac{1}{\epsilon}>n$, i.e., the required size for the forest cover is $\Omega(n)$. In this case, a spanning forest cover with stretch $1\leq(1+\epsilon)\alpha$ can be constructed simply by considering the collection of shortest paths trees of all vertices $v\in V$.

Fix some $\Delta<\frac{n^2}{\epsilon}$, and consider the sequence 
\[\Delta_i=\Delta\cdot\left(\frac{n^2}{\epsilon}\right)^i~.\]
In the proof, we will show the existence of a $\mathcal{P}_\Delta$-pairwise forest cover $\mathcal{F}_\Delta$ for $G$  (see Definition \ref{def:PairwiseForestCover}), with size at most $M(\epsilon,n)$ and stretch $(1+6\epsilon)\alpha$, for the set of pairs
\[\mathcal{P}_\Delta=\left\{(u,v)\in V^2\;\bigg|\;\exists_{i\geq0}\;\frac{\Delta_i}{2}<d_G(u,v)\leq\Delta_i\right\}~.\]
As we will prove below, unifying these pairwise forest covers, for $O(\log\frac{n}{\epsilon})$ different values of $\Delta$, produces the desired forest cover for $G$.
To construct the pairwise forest cover $\mathcal{F}_\Delta$, we prove the following claim by induction over $h\geq0$.

\begin{claim} \label{claim:InductiveForestCover}
    Given a graph $G=(V,E)$ and a positive parameter $\Delta>0$, for every integer $h\geq0$ there is a $\mathcal{P}_{\Delta,h}$-pairwise forest cover $\mathcal{F}_{\Delta,h}$ for $G$, with size $M(\epsilon,n)$ and stretch $(1+6\epsilon)\alpha$, where
    \[\mathcal{P}_{\Delta,h}=\left\{(u,v)\in V^2\;\bigg|\;\frac{\Delta_i}{2}<d_G(u,v)\leq\Delta_i\text{ for some }\boldsymbol{0\leq i\leq h}\right\}~.\]
    Moreover, if for every minor $G^*$ and $\Delta$ the forest cover $\mathcal{F}^*_{\Delta}$ is spanning, then $\mathcal{F}_{\Delta,h}$ is a spanning forest cover too.
\end{claim}

\begin{proof}

For $h=0$, consider the sub-graph $G^*_{\Delta}$, that is obtained from $G$ by removing all edges of weight greater than $\Delta=\Delta_0$. This sub-graph (which is also a minor of $G$) has a forest cover $\mathcal{F}^*_{\Delta}$ of size at most $M(\epsilon,n)$, such that whenever $d_{G^*_{\Delta}}(u,v)\leq\Delta$, there is a forest $F\in\mathcal{F}^*_{\Delta}$ with $d_F(u,v)\leq\alpha\cdot(d_{G^*_{\Delta}}(u,v)+\epsilon\Delta)$. For $(u,v)\in\mathcal{P}_{\Delta,0}$ we have $\frac{\Delta}{2}<d_G(u,v)\leq\Delta$. In particular, the edges of any shortest $u$-$v$ path in $G$ are of weight at most $\Delta$, and therefore they exist in the sub-graph $G^*_{\Delta}$, and $d_{G^*_{\Delta}}(u,v)=d_G(u,v)\leq\Delta$. Hence, there is a forest $F\in\mathcal{F}^*_{\Delta}$ with
\[d_F(u,v)\leq\alpha\cdot(d_G(u,v)+\epsilon\Delta)<\alpha\cdot(d_G(u,v)+\epsilon\cdot2\cdot d_G(u,v))=(1+2\epsilon)\alpha\cdot d_G(u,v)~.\]
This shows that $\mathcal{F}^*_{\Delta}$ is a $\mathcal{P}_{\Delta,0}$-pairwise forest cover with size at most $M(\epsilon,n)$ and stretch $(1+2\epsilon)\alpha<(1+6\epsilon)\alpha$, thus proving the induction basis.

Next, fix some $h>0$. Consider the minor $G^*_{\Delta_h}$ that is obtained from $G$ by removing edges with a weight greater than $\Delta_h$, and contracting the edges with weight at most $\frac{\epsilon\Delta_h}{n^2}=\Delta_{h-1}$. For every vertex $v^*$ in $G^*_{\Delta_h}$, let $G[v^*]=(V[v^*],E[v^*])$ be a sub-graph of $G$, where $V[v^*]$ is the set of vertices in $V$ that are contracted into $v^*$ in $G^*_{\Delta_h}$, and $E[v^*]$ are the edges of $G$ with weight at most $\frac{\epsilon\Delta_h}{n^2}=\Delta_{h-1}$ between vertices of $V[v^*]$. Note that $G[v^*]$ is a connected graph, since its vertices and edges were all contracted to the same vertex $v^*$.

Let $\mathcal{F}^*_{\Delta_h}$ be a forest cover for $G^*_{\Delta_h}$, with size at most $M(\epsilon,n)$, such that if $u^*,v^*\in V(G^*_{\Delta_h})$ satisfy $d_{G^*_{\Delta_h}}(u^*,v^*)\leq\Delta_h$, then there is a forest $F\in\mathcal{F}^*_{\Delta_h}$ with $d_F(u^*,v^*)\leq\alpha\cdot(d_{G^*_{\Delta_h}}(u^*,v^*)+\epsilon\Delta_h)$.

For every vertex $v^*$, we apply the induction hypothesis on the graph $G[v^*]$, and obtain a $\mathcal{P}_{\Delta,h-1}$-pairwise forest cover $\mathcal{F}_{\Delta,h-1}[v^*]$ with size at most $M(\epsilon,n)$ and stretch at most $(1+6\epsilon)\alpha$. By Observation \ref{obs:FromForestCoverToTreeCover}, there is also a $\mathcal{P}_{\Delta,h-1}$-pairwise \textit{tree} cover $\mathcal{T}_{\Delta,h-1}[v^*]$ for $G[v^*]$, with the same size and stretch, which is spanning if $\mathcal{F}_{\Delta,h-1}[v^*]$ is spanning. The latter happens, in particular, if for every minor $G^*$ and $\Delta$ the forest cover $\mathcal{F}^*_\Delta$ is spanning, as implied by the induction hypothesis.

We now merge the forest cover $\mathcal{F}^*_{\Delta_h}$ (for the graph $G^*_{\Delta_h}$) with the pairwise tree covers $\mathcal{T}_{\Delta,h-1}[v^*]$ (for the graphs $G[v^*]$), for every $v^*\in V(G^*_{\Delta_h})$. Under some arbitrary enumeration of the forests in these covers, let $F^j$ be the $j$'th forest of $\mathcal{F}^*_{\Delta_h}$, and let $T^j[v^*]$ be the $j$'th tree of $\mathcal{T}_{\Delta,h-1}[v^*]$, for every vertex $v^*$ in $G^*_{\Delta_h}$. Here, since the sizes of $\mathcal{F}^*_{\Delta_h}$ and $\{\mathcal{T}_{\Delta,h-1}[v^*]\}_{v^*}$ are at most $M(\epsilon,n)$, the index $j$ is at most $M(\epsilon,n)$. If there are less than $j$ forests in $\mathcal{F}^*_{\Delta_h}$, we arbitrarily assign $F^j=F^1\in\mathcal{F}^*_{\Delta_h}$. Similarly, if for some $v^*$ the size of $\{\mathcal{T}_{\Delta,h-1}[v^*]\}_{v^*}$ is less than $j$, we define $T^j[v^*]=T^1[v^*]$.

Fix some index $j\leq M(\epsilon,n)$ and consider the forest $F^j\in\mathcal{F}^*_{\Delta_h}$. Recall that $F^j$ is not necessarily a spanning forest of the minor $G^*_{\Delta_h}$. That is, every edge $\{u^*,v^*\}$ of $F^j$ may exist in $G^*_{\Delta_h}$ or not. In the former case, this edge originates at some edge $\{u,v\}$ of $G$, with the same weight $w(u,v)=w(u^*,v^*)$, such that $u$ and $v$ were contracted to $u^*$ and $v^*$, respectively (i.e., $u\in V(G[u^*])$ and $v\in V(G[v^*])$). In the latter case, fix two arbitrary vertices $u\in V[u^*]$ and $v\in V[v^*]$ and define a new edge $\{u,v\}\in\binom{V}{2}$ with weight $d_G(u,v)$.
%$w(u^*,v^*)+\frac{\epsilon\Delta_h}{n}$. By Property \ref{property:ExtendPath}, any $u$-$v$ path $P'$ in $G$ that is contracted to the shortest $u^*$-$v^*$ path in $G^*_{\Delta_h}$ has weight at most $d_{G^*_{\Delta_h}}(u^*,v^*)+\frac{\epsilon\Delta_h}{n}\leq w(u^*,v^*)+\frac{\epsilon\Delta_h}{n}$. In particular, the weight of the new edge $\{u,v\}$ is at least $d_{G^*_{\Delta_h}}(u^*,v^*)+\frac{\epsilon\Delta_h}{n}\geq w(P')\geq d_G(u,v)$.
In both cases, we say that the edge $\{u^*,v^*\}$ \textit{corresponds} to the edge $\{u,v\}$, and vice versa.

We are now ready to define the merged forest $\bar{F}^j$. This forest consists of all the corresponding edges to the edges of $F^j$, and of the edges of the trees $T^j[v^*]$, for every $v^*\in V(G^*_{\Delta_h})$. Note that $\bar{F}^j$ is a forest with vertices in $V$, i.e., a (not necessarily spanning) forest of $G$. Moreover, if $\mathcal{F}^*_{\Delta_h}$ and $\{\mathcal{T}_{\Delta,h-1}[v^*]\}_{v^*}$ are spanning, then $\bar{F}^j$ \textit{is} a spanning forest of $G$. To observe that $\bar{F}^j$ is indeed a forest, note that it cannot contain a cycle in any of the trees $\{T^j[v^*]\}_{v^*}$, so if there is a cycle in $\bar{F}^j$, a subset of its edges corresponds to a cycle in $F^j$, in contradiction to $F^j$ being a forest. The desired forest cover $\mathcal{F}_{\Delta,h}$ is now defined as 
\[\mathcal{F}_{\Delta,h}=\{\bar{F}^j\;|\;1\leq j\leq M(\epsilon,n)\}~.\]

Fix some $(u,v)\in\mathcal{P}_{\Delta,h}$. To analyze the stretch of $\mathcal{F}_{\Delta,h}$ for the pair $(u,v)$, we consider two cases. First, if $(u,v)\in\mathcal{P}_{\Delta,h-1}\subseteq\mathcal{P}_{\Delta,h}$, then it means, in particular, that $d_G(u,v)\leq\Delta_{h-1}$, and therefore there is a shortest $u$-$v$ path in $G$ that consists only of edges with weight at most $\Delta_{h-1}=\frac{\epsilon\Delta_h}{n^2}$. In the graph $G^*_{\Delta_h}$, this entire path is contracted to the same vertex $v^*$. Thus, it is fully contained in the graph $G[v^*]$, and we have
\[\frac{\Delta_i}{2}<d_{G[v^*]}(u,v)=d_G(u,v)\leq\Delta_i~,\]
for some $0\leq i\leq h-1$. Since $\mathcal{T}_{\Delta,h-1}[v^*]$ is a $\mathcal{P}_{\Delta,h-1}$-pairwise tree cover with stretch $(1+6\epsilon)\alpha$, where $\mathcal{P}_{\Delta,h-1}$ is defined over the graph $G[v^*]$, we conclude that there is a tree $T^j[v^*]\in\mathcal{T}_{\Delta,h-1}[v^*]$ (where $T^j[v^*]$ is the $j$'th tree of $\mathcal{T}_{\Delta,h-1}[v^*]$) with 
\[d_{T^j[v^*]}(u,v)\leq(1+6\epsilon)\alpha\cdot d_{G[v^*]}(u,v)=(1+6\epsilon)\alpha\cdot d_{G}(u,v)~.\]
The forest $\bar{F}^j$ contains $T^j[v^*]$, and thus $d_{\bar{F}^j}(u,v)\leq(1+6\epsilon)\alpha\cdot d_{G}(u,v)$ as well.

The second case is that $\frac{\Delta_h}{2}<d_G(u,v)\leq\Delta_h$, i.e., $(u,v)\in\mathcal{P}_{\Delta,h}\setminus\mathcal{P}_{\Delta,h-1}$. Suppose that $u$ and $v$ were contracted to vertices $u^*$ and $v^*$, respectively, in the minor $G^*_{\Delta_h}$. Since $d_G(u,v)\leq\Delta_h$, every edge on the shortest $u$-$v$ path in $G$ is either contracted or present in $G^*_{\Delta_h}$, and therefore $d_{G^*_{\Delta_h}}(u^*,v^*)\leq d_G(u,v)\leq\Delta_h$. Thus, by our assumption on the forest cover $\mathcal{F}^*_{\Delta_h}$, there is a forest $F^j\in\mathcal{F}^*_{\Delta_h}$ (which is the $j$'th forest in $\mathcal{F}^*_{\Delta_h}$) such that
\begin{equation} \label{eq:OuterPath}
    d_{F^j}(u^*,v^*)\leq\alpha\cdot(d_{G^*_{\Delta_h}}(u^*,v^*)+\epsilon\Delta_h)\leq\alpha\cdot(d_G(u,v)+\epsilon\Delta_h)~.
\end{equation}

Consider the unique $u^*$-$v^*$ path $P^*=(u^*_0,u^*_1,u^*_2,...,u^*_l)$ in $F^j$, which has weight $w(P^*)=d_{F^j}(u^*,v^*)\leq\alpha\cdot(d_G(u,v)+\epsilon\Delta_h)$. For every $1\leq t\leq l$, the edge $\{u^*_{t-1},u^*_t\}$ corresponds to some edge $\{x_{t-1},y_t\}\in\binom{V}{2}$, where $x_{t-1}\in V[u^*_{t-1}]$ and $y_t\in V[u^*_t]$. Denoting $y_0=u$ and $x_l=v$, we can say that $x_t,y_t\in V[u^*_t]$ for every $0\leq t\leq l$. For every $0\leq t\leq l$, recall that $T^j[u^*_t]$ is a tree in the tree cover $\mathcal{T}_{\Delta,h-1}[u^*_t]$. Thus, it contains a path $P_t$ between $y_t$ and $x_t$, such that every edge $\{a,b\}$ on $P_t$ is an edge of $G[u^*_t]$, or has weight $d_{G[u^*_t]}(a,b)$. In both cases, the weight of the edge $\{a,b\}$ is the weight of some simple path in $G[u^*_t]$, and thus bounded by $n\cdot\frac{\epsilon\Delta_h}{n^2}=\frac{\epsilon\Delta_h}{n}$ (since, by definition, all edges in $G[u^*_t]$ have weight at most $\frac{\epsilon\Delta_h}{n^2}$). This proves that the weight of the path $P_t$ is at most $|P_t|\cdot\frac{\epsilon\Delta_h}{n}$.

Concatenating the paths $P_t$, for every $0\leq t\leq l$, with the edges $\{x_{t-1},y_t\}$, for every $1\leq t\leq l$, we obtain a $u$-$v$ path $P_{u,v}$ in $\bar{F}^j$. To bound its weight, we analyze the weight of an edge $\{x_{t-1},y_t\}$, for some $1\leq t\leq l$. Recall that this edge corresponds to the edge $\{u^*_{t-1},u^*_t\}$ of $F^j$. If the forest $F^j$ is spanning, then $\{x_{t-1},y_t\}$ is an edge of $G^*_{\Delta_h}$, and we have $w(x_{t-1},y_t)=w(u^*_{t-1},u^*_t)$. Otherwise, this weight was defined as $w(x_{t-1},y_t)=d_G(x_{t-1},y_t)$. Consider an $x_{t-1}$-$y_t$ simple path $Q_t$ in $G$ that follows the edges that correspond to the edges in the shortest $u^*_{t-1}$-$u^*_t$ path in $G^*_{\Delta_h}$, and connects any two of them with an arbitrary simple path in an appropriate graph $G[s^*]$. These simple paths have a total of at most $n-1$ edges (of $G$), each of weight at most $\frac{\epsilon\Delta_h}{n^2}$, by the definition of the graphs $G[s^*]$. Thus, $w(Q_t)\leq d_{G^*_{\Delta_h}}(u^*_{t-1},u^*_t)+(n-1)\cdot\frac{\epsilon\Delta_h}{n^2}<d_{G^*_{\Delta_h}}(u^*_{t-1},u^*_t)+\frac{\epsilon\Delta_h}{n}$, and therefore the weight of $\{x_{t-1},y_t\}$ is $w(x_{t-1},y_t)=d_G(x_{t-1},y_t)<d_{G^*_{\Delta_h}}(u^*_{t-1},u^*_t)+\frac{\epsilon\Delta_h}{n}$. Since in non-spanning forest covers, the weight of each edge is at least the distance between its endpoints, we conclude that $w(x_{t-1},y_t)<w(u^*_{t-1},u^*_t)+\frac{\epsilon\Delta_h}{n}$.

In both cases above, of $F^j$ being spanning or non-spanning, we obtained the bound $w(x_{t-1},y_t)<w(u^*_{t-1},u^*_t)+\frac{\epsilon\Delta_h}{n}$ on the weight of every edge of the form $\{x_{t-1},y_t\}$, $1\leq t\leq l$, that participates in the path $P_{u,v}$ in the forest $\bar{F}^j$. Using this bound, we get
\begin{eqnarray*}
    w(P_{u,v})&=&\sum_{t=1}^l w(x_{t-1},y_t)+\sum_{t=0}^l w(P_t)
    <\sum_{t=1}^l(w(u^*_{t-1},u^*_t)+\frac{\epsilon\Delta_h}{n})+
    \frac{\epsilon\Delta_h}{n}\cdot\sum_{t=0}^l|P_t|\\
    &\leq&d_{F^j}(u^*,v^*)+\epsilon\Delta_h+\epsilon\Delta_h\\
    &\stackrel{(\ref{eq:OuterPath})}{\leq}&\alpha\cdot(d_G(u,v)+\epsilon\Delta_h)+2\epsilon\Delta_h
    \leq\alpha\cdot(d_G(u,v)+3\epsilon\Delta_h)\\
    &\leq&\alpha\cdot(d_G(u,v)+6\epsilon\cdot d_G(u,v))
    =(1+6\epsilon)\alpha\cdot d_G(u,v)~.
\end{eqnarray*}

In conclusion, we saw that whenever $(u,v)\in\mathcal{P}_{\Delta,h}$, the forest cover $\mathcal{F}_{\Delta,h}$, that has size at most $M(\epsilon,n)$, has a forest $\bar{F}^j\in\mathcal{F}_{\Delta,h}$ with $d_{\bar{F}^j}(u,v)\leq(1+6\epsilon)\alpha\cdot d_G(u,v)$. We conclude that $\mathcal{F}_{\Delta,h}$ is a $\mathcal{P}_{\Delta,h}$-pairwise forest cover with stretch $(1+6\epsilon)\alpha$ and size at most $M(\epsilon,n)$.

\end{proof}

Next, we note that for $h=\left\lceil\log_{n^2/\epsilon}\left(\frac{\Lambda}{\Delta}\right)\right\rceil$, every $u,v\in V$ have $d_G(u,v)\leq\Lambda\leq\Delta\cdot\left(\frac{n^2}{\epsilon}\right)^h=\Delta_h$, and therefore $\mathcal{P}_{\Delta,h}=\mathcal{P}_{\Delta}=\left\{(u,v)\in V^2\;\bigg|\;\exists_{i\geq0}\;\frac{\Delta_i}{2}<d_G(u,v)\leq\Delta_i\right\}$. By Claim \ref{claim:InductiveForestCover}, there is a $\mathcal{P}_{\Delta}$-pairwise forest cover $\mathcal{F}_{\Delta}$ with size at most $M(\epsilon,n)$ and stretch $(1+6\epsilon)\alpha$. Moreover, by Claim \ref{claim:InductiveForestCover}, this forest cover is spanning if $\mathcal{F}^*_{\Delta_h}$ is spanning.

We define
\[\mathcal{F}=\bigcup_{t=0}^{\lceil\log_2(\frac{n^2}{\epsilon})\rceil}\mathcal{F}_{2^t}~,\]
where $\mathcal{F}_{2^t}$ is the pairwise forest cover $\mathcal{F}_{\Delta}$ with $\Delta=2^t$.
Since the size of each $\mathcal{F}_{\Delta}$ is at most $M(\epsilon,n)$, the size of the forest cover $\mathcal{F}$ is $M(\epsilon,n)\cdot O(\log\frac{n^2}{\epsilon})=M(\epsilon,n)\cdot O(\log\frac{n}{\epsilon})$.

To analyze the stretch of $\mathcal{F}$, fix some $u,v\in V$. There is a unique index $h\geq0$ such that $(\frac{n^2}{\epsilon})^h<d_G(u,v)\leq(\frac{n^2}{\epsilon})^{h+1}$. Then, there is a unique index $0\leq t\leq\lceil\log_2(\frac{n^2}{\epsilon})\rceil-1$ such that $2^t\cdot(\frac{n^2}{\epsilon})^h<d_G(u,v)\leq2^{t+1}\cdot(\frac{n^2}{\epsilon})^h$. Denoting $\Delta=2^{t+1}$, we have
\[\frac{\Delta_h}{2}=\frac{\Delta}{2}\cdot\left(\frac{n^2}{\epsilon}\right)^h<d_G(u,v)\leq\Delta\cdot\left(\frac{n^2}{\epsilon}\right)^h=\Delta_h~.\]
Thus, $(u,v)\in\mathcal{P}_{\Delta}$, and there is a forest $F\in\mathcal{F}_{\Delta}\subseteq\mathcal{F}$ with $d_F(u,v)\leq(1+6\epsilon)\alpha\cdot d_G(u,v)$. 

We conclude that $\mathcal{F}$ has stretch at most $(1+6\epsilon)\alpha$ and size at most $M(\epsilon,n)\cdot O(\log\frac{n}{\epsilon})$, and that it is a spanning forest cover if $\mathcal{F}^*_\Delta$ is a spanning forest cover for every minor $G^*$ and every $\Delta>0$. Replacing $\epsilon$ by $\frac{\epsilon}{6}$, we obtain the desired result.
    
\end{proof}

Next, we prove an analogous lemma to Lemma \ref{lemma:AspectRatioReductionForests} for HST covers.

\begin{lemma} \label{lemma:AspectRatioReductionHSTs}
    Let $G=(V,E)$ be an $n$-vertex undirected weighted graph, and let $\epsilon\in(0,1]$ and $\alpha\geq1$ be real parameters. Suppose that, for every $\Delta>0$, every minor $G^*=(V^*,E^*)$ of $G$ admits an HST cover $\mathcal{H}^*_{\Delta}$ with size at most $M(\epsilon,n)$, such that for every $u,v\in V^*$ with $d_{G^*}(u,v)\leq\Delta$ there is an HST $T\in\mathcal{H}^*_\Delta$ with $\rho_T(u,v)\leq\alpha\cdot(d_{G^*}(u,v)+\epsilon\Delta)$. Then, $G$ admits an HST cover $\mathcal{H}$ with size at most $M(\frac{\epsilon}{4},n)\cdot O(\frac{\log n}{\epsilon})$ and stretch $(1+\epsilon)\alpha$.
\end{lemma}

\begin{proof}

As the proof of Lemma \ref{lemma:AspectRatioReductionForests}, we assume w.l.o.g. that $\min_{u\neq v\in V}d_G(u,v)=1$, and thus $\Lambda=diam(G)=\max_{u,v\in V}d_G(u,v)$. %We also assume that $\epsilon\geq\frac{1}{n}$, as otherwise the required size for the HST cover is $\Omega(n)$, and it can be obtained by considering, for every $v\in V$, the HST that represents the following ultrametric (see Fact \ref{fact:UM<->HST})
%\[\forall_{x,y\in V}\;\;\;\rho_v(x,y)=\max\{d_G(x,v),d_G(v,y)\}~.\]
%This is a set of $n$ HSTs, one for each $v\in V$, such that for every $u,v\in V$, the HST of $v$ provides distance $\rho_v(u,v)=d_G(u,v)$ (so the stretch is $1$).

Fix some $\Delta<\frac{n}{\epsilon}$, and consider the sequence 
\[\Delta_i=\Delta\cdot\left(\frac{n}{\epsilon}\right)^i~.\]
This sequence differs from the one in the proof of Lemma \ref{lemma:AspectRatioReductionForests} by replacing $n^2$ by $n$. We again define
\begin{eqnarray*}
    \mathcal{P}_\Delta&=&\left\{(u,v)\in V^2\;\bigg|\;\exists_{i\geq0}\;\frac{\Delta_i}{2}<d_G(u,v)\leq\Delta_i\right\}~,\\
    \mathcal{P}_{\Delta,h}&=&\left\{(u,v)\in V^2\;\bigg|\;\frac{\Delta_i}{2}<d_G(u,v)\leq\Delta_i\text{ for some }\boldsymbol{0\leq i\leq h}\right\}~.
\end{eqnarray*}
Note that $\mathcal{P}_\Delta=\mathcal{P}_{\Delta,h}$ for $h=\left\lceil\log_{n/\epsilon}\left(\frac{\Lambda}{\Delta}\right)\right\rceil$. Thus, to obtain a $\mathcal{P}_\Delta$-pairwise HST cover for $G$, it is enough to prove the following claim (by induction over $h\geq0$).

\begin{claim} \label{claim:InductiveHSTCover}
    Given a graph $G=(V,E)$ and a positive parameter $\Delta>0$, for every integer $h\geq0$ there is a $\mathcal{P}_{\Delta,h}$-pairwise HST cover $\mathcal{H}_{\Delta,h}$ for $G$, with size $M(\epsilon,n)$ and stretch $(1+4\epsilon)\alpha$.
\end{claim}

\begin{proof}

For $h=0$, consider the sub-graph $G^*_{\Delta}$, that is obtained from $G$ by removing all edges of weight greater than $\Delta=\Delta_0$. This sub-graph (which is also a minor of $G$) has an HST cover $\mathcal{H}^*_{\Delta}$ of size $M(\epsilon,n)$, such that whenever $d_{G^*_{\Delta}}(u,v)\leq\Delta$, there is an HST $T\in\mathcal{H}^*_{\Delta}$ with $\rho_T(u,v)\leq\alpha\cdot(d_{G^*_{\Delta}}(u,v)+\epsilon\Delta)$. For $(u,v)\in\mathcal{P}_{\Delta,0}$ we have $\frac{\Delta}{2}<d_G(u,v)\leq\Delta$. In particular, the edges of any shortest $u$-$v$ path in $G$ are of weight at most $\Delta$, and therefore they exist in the sub-graph $G^*_{\Delta}$, and $d_{G^*_{\Delta}}(u,v)=d_G(u,v)\leq\Delta$. Hence, there is an HST $T\in\mathcal{H}^*_{\Delta}$ with
\[\rho_T(u,v)\leq\alpha\cdot(d_G(u,v)+\epsilon\Delta)<(1+2\epsilon)\alpha\cdot d_G(u,v)<(1+4\epsilon)\alpha\cdot d_G(u,v)~.\]
This shows that $\mathcal{H}^*_{\Delta}$ is a $\mathcal{P}_{\Delta,0}$-pairwise HST cover with size $M(\epsilon,n)$ and stretch $(1+4\epsilon)\alpha$, thus concludes the induction basis.

Next, fix some $h>0$. Consider the minor $G^*_{\Delta_h}$ that is obtained from $G$ by removing edges with a weight greater than $\Delta_h$, and contracting the edges with weight at most $\frac{\epsilon\Delta_h}{n}$. Let $\mathcal{H}^*_{\Delta_h}$ be an HST cover for $G^*_{\Delta_h}$, with size $M(\epsilon,n)$, such that if $u^*,v^*\in V(G^*_{\Delta_h})$ satisfy $d_{G^*_{\Delta_h}}(u^*,v^*)\leq\Delta_h$, then there is an HST $T\in\mathcal{H}^*_{\Delta_h}$ with $\rho_T(u^*,v^*)\leq\alpha\cdot(d_{G^*_{\Delta_h}}(u^*,v^*)+\epsilon\Delta_h)$.

For every vertex $v^*$ in $G^*_{\Delta_h}$, we denote by $G[v^*]$ the induced graph of $G$ on the collection of vertices in $V$ that are contracted into $v^*$ in $G^*_{\Delta_h}$. Note that every two vertices in $G[v^*]$ have a path between them, with edges of weight at most $\frac{\epsilon\Delta_h}{n}=\Delta_{h-1}$. For this graph, we apply the induction hypothesis, and obtain a $\mathcal{P}_{\Delta,h-1}$-pairwise HST cover $\mathcal{H}_{\Delta,h-1}[v^*]$ with size $M(\epsilon,n)$ and stretch $(1+4\epsilon)\alpha$.

We now merge the HST cover $\mathcal{H}^*_{\Delta_h}$ (for the graph $G^*_{\Delta_h}$) with the pairwise HST covers $\mathcal{H}_{\Delta,h-1}[v^*]$ (for the graph $G[v^*]$), for every $v^*\in V(G^*_{\Delta_h})$. Under some arbitrary enumeration of the HSTs in these HST covers, let $T^j$ be the $j$'th HST of $\mathcal{H}^*_{\Delta_h}$, and let $T^j[v^*]$ be the $j$'th HST of $\mathcal{H}_{\Delta,h-1}[v^*]$, for every vertex $v^*$ in $G^*_{\Delta_h}$. Here, since the sizes of $\mathcal{H}^*_{\Delta_h}$ and $\{\mathcal{H}_{\Delta,h-1}[v^*]\}_{v^*}$ are at most $M(\epsilon,n)$, the index $j$ is at most $M(\epsilon,n)$. If there are less than $j$ HSTs in $\mathcal{H}^*_{\Delta_h}$, we arbitrarily assign $T^j=T^1\in\mathcal{H}^*_{\Delta_h}$. Similarly, if for some $v^*$ the size of $\{\mathcal{H}_{\Delta,h-1}[v^*]\}_{v^*}$ is less than $j$, we define $T^j[v^*]=T^1[v^*]$.

Recall that the vertices $v^*\in V(G^*_{\Delta_h})$ are exactly the leaves of the HST $T^j$. We replace each leaf $v^*$ in $T^j$ by $r^j[v^*]$ -- the root of the HST $T^j[v^*]$. In addition, we add an additive factor of $\epsilon\Delta_h$ to the label of every non-leaf vertex of $T^j$. Denote by $\bar{T}^j$ the resulting tree, and let $\mathcal{H}_{\Delta,h}=\{\bar{T}^1,\bar{T}^2,\bar{T}^3,...\}$. We next prove that each of the trees in $\mathcal{H}_{\Delta,h}$ are HSTs, and thus $\mathcal{H}_{\Delta,h}$ is an HST cover with size at most $M(\epsilon,n)$.

To verify that $\bar{T}^j\in\mathcal{H}_{\Delta,h}$ is an HST, for any $j\leq M(\epsilon,n)$, it is enough to show that the label of $r^j[v^*]$, for any $v^*\in V(G^*_{\Delta_h})$, is not larger than the labels of all the inner vertices in $T^j$, to which we added $\epsilon\Delta_h$. Fix $j$ and $v^*$. Note that without loss of generality, we may assume that the label of $r^j[v^*]$ is at most $diam(G[v^*])$, as otherwise we can replace it by $diam(G[v^*])$ without damaging the HST property or the approximation of the HST $T^j[v^*]$. Thus, it is enough to show that $diam(G[v^*])\leq\epsilon\Delta_h$. Indeed, for every two vertices $u,v$ in $G[v^*]$, since they were contracted to the same vertex $v^*$ of $G^*_{\Delta_h}$, there is a $u$-$v$ path in $G[v^*]$ with edges of weight at most $\frac{\epsilon\Delta_h}{n}$ each. In particular, $diam(G[v^*])<n\cdot\frac{\epsilon\Delta_h}{n}=\epsilon\Delta_h$. This proves that $\bar{T}^j$ is indeed an HST.

Fix some $(u,v)\in\mathcal{P}_{\Delta,h}$. We consider two cases. First, if $(u,v)\in\mathcal{P}_{\Delta,h-1}\subseteq\mathcal{P}_{\Delta,h}$, then it means, in particular, that $d_G(u,v)\leq\Delta_{h-1}$, and therefore there is a shortest $u$-$v$ path in $G$ that consists only of edges with weight at most $\Delta_{h-1}=\frac{\epsilon\Delta_h}{n}$. In the graph $G^*_{\Delta_h}$, this entire path is contracted to the same vertex $v^*$. Thus, $u$ and $v$ are both in the graph $G[v^*]$, and moreover, we have
\[\frac{\Delta_i}{2}<d_{G[v^*]}(u,v)=d_G(u,v)\leq\Delta_i~,\]
for some $0\leq i\leq h-1$. Since $\mathcal{H}_{\Delta,h-1}[v^*]$ is a $\mathcal{P}_{\Delta,h-1}$-pairwise HST cover with stretch $(1+4\epsilon)\alpha$, where $\mathcal{P}_{\Delta,h-1}$ is defined over the graph $G[v^*]$, we conclude that there is an HST $T^j[v^*]\in\mathcal{H}_{\Delta,h-1}[v^*]$ (where $T^j[v^*]$ is the $j$'th HST of $\mathcal{H}_{\Delta,h-1}[v^*]$) with 
\[\rho_{T^j[v^*]}(u,v)\leq(1+4\epsilon)\alpha\cdot d_{G[v^*]}(u,v)=(1+4\epsilon)\alpha\cdot d_{G}(u,v)~.\]
In the HST $\bar{T}^j$, the vertices $u,v$ are leaves of the same sub-tree, which is a copy of $T^j[v^*]$, thus $\rho_{\bar{T}^j}(u,v)\leq(1+4\epsilon)\alpha\cdot d_{G}(u,v)$ as well.

The second case is when $\frac{\Delta_h}{2}<d_G(u,v)\leq\Delta_h$, i.e., $(u,v)\in\mathcal{P}_{\Delta,h}\setminus\mathcal{P}_{\Delta,h-1}$. Suppose that $u$ and $v$ were contracted to vertices $u^*$ and $v^*$, respectively, in the graph $G^*_{\Delta_h}$. Since $d_G(u,v)\leq\Delta_h$, every edge on the shortest $u$-$v$ path in $G$ is either contracted or present in $G^*_{\Delta_h}$, and therefore $d_{G^*_{\Delta_h}}(u^*,v^*)\leq d_G(u,v)\leq\Delta_h$. Thus, by our assumption on the HST cover $\mathcal{H}^*_{\Delta_h}$, there is an HST $T^j\in\mathcal{H}^*_{\Delta_h}$ (which is the $j$'th HST in $\mathcal{H}^*_{\Delta_h}$) such that
\[\rho_{T^j}(u^*,v^*)\leq\alpha\cdot(d_{G^*_{\Delta_h}}(u^*,v^*)+\epsilon\Delta_h)\leq\alpha\cdot(d_G(u,v)+\epsilon\Delta_h)~.\]
Recall that $\rho_{T^j}(u^*,v^*)$ is the label of the LCA of $u^*$ and $v^*$ in $T^j$, and that the HST $\bar{T}^j$ contains a copy of $T^j$, with labels increased by $\epsilon\Delta_h$. Also, the leaves of $T^j$ are replaced by some sub-trees. Specifically, the sub-trees that replace the leaves $u^*,v^*$ contain $u,v$, respectively, and therefore 
\begin{eqnarray*}
    \rho_{\bar{T}^j}(u,v)&\leq&\rho_{T^j}(u^*,v^*)+\epsilon\Delta_h
    \leq\alpha\cdot(d_G(u,v)+\epsilon\Delta_h)+\epsilon\Delta_h\\
    &\leq&\alpha\cdot(d_G(u,v)+2\epsilon\Delta_h)<\alpha\cdot(d_G(u,v)+4\epsilon\cdot d_G(u,v))\\
    &=&(1+4\epsilon)\alpha\cdot d_G(u,v)~.
\end{eqnarray*}

In both cases we saw that whenever $(u,v)\in\mathcal{P}_{\Delta,h}$, the HST cover $\mathcal{H}_{\Delta,h}$, that has size at most $M(\epsilon,n)$, has an HST $\bar{T}^j\in\mathcal{H}_{\Delta,h}$ with $\rho_{\bar{T}^j}(u,v)\leq(1+4\epsilon)\alpha\cdot d_G(u,v)$. We conclude that $\mathcal{H}_{\Delta,h}$ is a $\mathcal{P}_{\Delta,h}$-pairwise HST cover with stretch $(1+4\epsilon)\alpha$ and size at most $M(\epsilon,n)$.

\end{proof}

For $h=\left\lceil\log_{n/\epsilon}\left(\frac{\Lambda}{\Delta}\right)\right\rceil$ we have $\mathcal{P}_{\Delta,h}=\mathcal{P}_{\Delta}$, and therefore Claim \ref{claim:InductiveHSTCover} implies that there is a $\mathcal{P}_{\Delta}$-pairwise HST cover $\mathcal{H}_{\Delta}$ with size at most $M(\epsilon,n)$ and stretch $(1+4\epsilon)\alpha$. We define
\[\mathcal{H}=\bigcup_{t=0}^{\lceil\log_2(\frac{n}{\epsilon})\rceil}\mathcal{H}_{2^t}~,\]
where $\mathcal{H}_{2^t}$ is the pairwise HST cover $\mathcal{H}_{\Delta}$ for $\Delta=2^t$.
The size of $\mathcal{H}$ is $M(\epsilon,n)\cdot O(\log\frac{n}{\epsilon})$.

For the stretch analysis, fix some $u,v\in V$. There are unique indices $h\geq0$ and $0\leq t\leq \lceil\log_2(\frac{n}{\epsilon})\rceil-1$ such that $2^t\cdot(\frac{n}{\epsilon})^h<d_G(u,v)\leq2^{t+1}\cdot(\frac{n}{\epsilon})^h$. Denoting $\Delta=2^{t+1}$, we have
\[\frac{\Delta_h}{2}=\frac{\Delta}{2}\cdot\left(\frac{n}{\epsilon}\right)^h<d_G(u,v)\leq\Delta\cdot\left(\frac{n}{\epsilon}\right)^h=\Delta_h~.\]
Thus, $(u,v)\in\mathcal{P}_{\Delta}$, and there is an HST $T\in\mathcal{H}_{\Delta}\subseteq\mathcal{H}$ with $\rho_T(u,v)\leq(1+4\epsilon)\alpha\cdot d_G(u,v)$. 

We conclude that $\mathcal{H}$ has stretch at most $(1+4\epsilon)\alpha$ and size at most $M(\epsilon,n)\cdot O(\log\frac{n}{\epsilon})$. Replacing $\epsilon$ by $\frac{\epsilon}{4}$, we obtain the desired result.
    
\end{proof}

\subsection{Completing the Proof of Theorem \ref{thm:SmallStretchImprovedQT}} \label{app:PRSpannerAspectRatioReduction1}

Recall that in Section \ref{sec:SpannersAndEmulators} we proved a weaker version of Theorem \ref{thm:SmallStretchImprovedQT}, where the size of the resulting path-reporting spanner, for $n$-vertex $p$-path-separable graphs, is $O(n\cdot p\cdot\frac{\log n\cdot\log\Lambda}{\epsilon})$, instead of $O(n\cdot p\cdot\frac{\log^2n}{\epsilon})$. In this section, we show that the latter can be achieved by essentially the same proof as of Theorem \ref{thm:SmallStretchImprovedQT}, by replacing the pairwise forest covers from Lemma \ref{lemma:OneScaleSimpleTreeCover} with the pairwise forest covers $\mathcal{F}_\Delta$ from the proof of Lemma \ref{lemma:AspectRatioReductionForests} in this appendix.

\begin{proof}[Proof of Theorem \ref{thm:SmallStretchImprovedQT}]

We assume that $\min_{u\neq v\in V}d_G(u,v)=\min_{e\in E}w(e)=1$, thus $\Lambda=diam(G)$. We also assume that the graph $G$ is connected, as otherwise we construct a path-reporting spanner for every connected component, and their union is the desired path-reporting spanner for $G$. In addition, we assume that $\epsilon>\frac{1}{n}$, as otherwise the desired size of the path-reporting spanner is at least $n^2$, and such path-reporting spanner is trivial.

Let $(S_0,D_0)$ be the path-reporting $3$-spanner that is obtained by applying Lemma \ref{lemma:TreeCoverToPREmulator} on the tree cover in Theorem \ref{thm:GKR} (by \cite{GKR04}) with $\mathcal{P}=V^2$. It has query time $O(p\cdot\log n)$ and size $O(n\cdot p\cdot\log n)$. 

For every $\Delta\in\{2^t\}_{t=1}^{\lceil\log_2\frac{n^2}{\epsilon}\rceil}$, let $\mathcal{F}_\Delta$ be the $\mathcal{P}_\Delta$-pairwise forest cover $\mathcal{F}_{\Delta}$ from the proof of Lemma \ref{lemma:AspectRatioReductionForests}. Since the graph $G$ is $p$-path-separable, it satisfies the condition of this lemma with $\alpha=1+\epsilon$ and $M(\epsilon,n)=O(p\cdot\frac{\log n}{\epsilon})$, by Lemma \ref{lemma:OneScaleSimpleTreeCover}. Thus, $\mathcal{F}_{\Delta}$ is a $\mathcal{P}_{\Delta}$-pairwise forest cover with stretch $(1+6\epsilon)(1+\epsilon)\leq1+13\epsilon$ and size $M(\epsilon,n)=O(p\cdot\frac{\log n}{\epsilon})$. Recall that $\mathcal{P}_\Delta$ is defined by
\[\mathcal{P}_\Delta=\left\{(u,v)\in V^2\;\bigg|\;\exists_{i\geq0}\;\frac{\Delta}{2}\cdot\left(\frac{n^2}{\epsilon}\right)^i<d_G(u,v)\leq\Delta\cdot\left(\frac{n^2}{\epsilon}\right)^i\right\}~.\]

By Observation \ref{obs:FromForestCoverToTreeCover}, for every $t$ there is also a $\mathcal{P}_{2^t}$-pairwise tree cover $\mathcal{T}_{2^t}$ for $G$ with stretch $1+13\epsilon$ and size $O(p\cdot\frac{\log n}{\epsilon})$.

We apply Lemma \ref{lemma:TreeCoverToPREmulator} on $\mathcal{T}_{2^t}$, and obtain a $\mathcal{P}_{2^t}$-pairwise path-reporting $(1+13\epsilon)$-spanner $(S_t,D_t)$ for $G$ with query time $O(p\cdot\frac{\log n}{\epsilon})$ and size $O(n\cdot p\cdot\frac{\log n}{\epsilon})$. In our new path-reporting spanner $(S,D)$, we define $S=\bigcup_{t=1}^{\lceil\log_2\frac{n^2}{\epsilon}\rceil}S_t$, and we store all the oracles $\{D_t\}_{t=0}^{\lceil\log_2\frac{n^2}{\epsilon}\rceil}$, and the oracle $D_0$ in $D$.

Given a query $(u,v)\in V^2$, the oracle $D$ first applies $D_0$ on $(u,v)$, to get an estimate $\hat{d}_0(u,v)\in[d_G(u,v),3d_G(u,v)]$, i.e., $\frac{1}{3}\hat{d}_0(u,v)\leq d_G(u,v)\leq\hat{d}_0(u,v)$, within time $O(p\cdot\log n)$. Then, the oracle $D$ finds all the values $t\in\{1,2,...,\lceil\log_2\frac{n^2}{\epsilon}\rceil\}$ and $i\geq0$ such that the interval $[\frac{1}{3}\hat{d}_0(u,v),\hat{d}_0(u,v)]$ intersects the interval $\mathcal{I}_{t,i}=\left(2^{t-1}\cdot\left(\frac{n^2}{\epsilon}\right)^i,2^t\cdot\left(\frac{n^2}{\epsilon}\right)^i\right]$. We claim that there are at most $2$ values of $i$ such that $[\frac{1}{3}\hat{d}_0(u,v),\hat{d}_0(u,v)]\cap\mathcal{I}_{t,i}\neq\emptyset$ (for $t$ in the relevant range), and that for every such $i$ there are at most $3$ possible values of $t$. Indeed, if $x\in\mathcal{I}_{t,i}$ and $y\in\mathcal{I}_{s,j}$ are both in $[\frac{1}{3}\hat{d}_0(u,v),\hat{d}_0(u,v)]$, for some $i<j$ and $1\leq s,t\leq\lceil\log_2\frac{n^2}{\epsilon}\rceil$, then 
\[\left(\frac{n^2}{\epsilon}\right)^j\leq2^{s-1}\cdot\left(\frac{n^2}{\epsilon}\right)^j<y\leq\hat{d}_0(u,v)\leq3x\leq3\cdot2^t\cdot\left(\frac{n^2}{\epsilon}\right)^i<6\left(\frac{n^2}{\epsilon}\right)^{i+1}<\left(\frac{n^2}{\epsilon}\right)^{i+2}~,\]
i.e., $j\leq i+1$ (here we assumed that $\frac{n^2}{\epsilon}>6$, which is true, in particular, for every $n>2$). Also, if $i=j$, we get from the same inequality that $2^{s-1}\cdot\left(\frac{n^2}{\epsilon}\right)^i<3\cdot2^t\cdot\left(\frac{n^2}{\epsilon}\right)^i<2^{t+2}\cdot\left(\frac{n^2}{\epsilon}\right)^i$, and therefore $s\leq t+2$ (and symmetrically, $t\leq s+2$). Thus, there is a set $J\subseteq\{1,2,...,\lceil\log_2\frac{n^2}{\epsilon}\rceil\}$ of at most six indices such that $(u,v)$ is in $\bigcup_{t\in J}\mathcal{P}_{2^t}$, and therefore one of the oracles $\{D_t\}_{t\in J}$ outputs a distance estimate for $(u,v)$ with stretch at most $1+13\epsilon$. The oracle $D$ applies these oracles on $(u,v)$ and finds the index $t\in J$ such that $D_t$ outputs the minimal distance estimate $\hat{d}(u,v)$ for $(u,v)$. If a path is also required, $D$ applies $D_t$ on $(u,v)$ to obtain such an approximate path. In both cases, $\hat{d}(u,v)$ has stretch at most $1+13\epsilon$.

Note that the query time of $D$ equals the query time of $D_0$ plus the query times of the (at most six) oracles $\{D_t\}_{t\in J}$. As these are all $O(p\cdot\frac{\log n}{\epsilon})$, the query time of $D$ is $O(p\cdot\frac{\log n}{\epsilon})$ as well.

The output paths of $D$ are all in $S=\bigcup_{t=1}^{\lceil\log_2\frac{n^2}{\epsilon}\rceil}S_t$. The size of $S$ and of $D$ are both at most the total size of the oracles $\{D_i\}_{i=0}^{\lceil\log_2\frac{n^2}{\epsilon}\rceil}$, which is $O(n\cdot p\cdot\frac{\log n}{\epsilon}\cdot\log\frac{n^2}{\epsilon})=O(n\cdot p\cdot\frac{\log^2n}{\epsilon})$ (here we used our assumption that $\epsilon>\frac{1}{n}$). Replacing $\epsilon$ with $\frac{\epsilon}{13}$, we get the desired result.
\end{proof}

\subsection{Completing the Proof of Theorem \ref{thm:LargeStretchImprovedQT}} \label{app:PRSpannerAspectRatioReduction2}

\begin{proof}[Proof of Theorem \ref{thm:LargeStretchImprovedQT}]

As in the proof in Appendix \ref{app:PRSpannerAspectRatioReduction1}, we assume that $\min_{u\neq v\in V}d_G(u,v)=\min_{e\in E}w(e)=1$ (thus $\Lambda=diam(G)$) that $G$ is connected, and that $\epsilon>\frac{1}{n}$. In this proof we construct the desired path-reporting spanner with stretch $O(k\log\log(\pi+4))$. The construction of the path-reporting emulator with stretch $(32e+\epsilon)k+1$ is identical, when using the (not necessarily spanning) tree covers of Corollary \ref{cor:EllPiOneScale}, instead of the spanning forest cover of the same corollary. We elaborate on this variation at the end of this proof.

Apply Theorem \ref{thm:LargeStretchDirectPRDO} on $G$, with $k=2\log\pi$ and $\epsilon=1$. We obtain a path-reporting emulator $(S_0,D_0)$ with stretch $\alpha_0=O(\log\pi)$, query time $O(\ell\cdot\log\pi\cdot\log^{3-\gamma}n)$ and size $O(\ell\cdot\log\pi\cdot n\log^{3-\gamma}n)$. 

The graph $G$ is $(\ell,\pi)$-path-separable. Therefore, by Corollary \ref{cor:EllPiOneScale}, it satisfies the condition of Lemma \ref{lemma:AspectRatioReductionForests} with $\alpha=O(k\log\log(\pi+4))$ and $M(\epsilon,n)=O(k\cdot\ell\cdot(\frac{\pi}{\epsilon}\log\frac{\pi}{\epsilon})^{\frac{1}{k}}\cdot\log^{2-\gamma}n)$, by  (these parameters hold for \textit{spanning} tree covers). Thus, the proof of Lemma \ref{lemma:AspectRatioReductionForests}, in conjunction with Observation \ref{obs:FromForestCoverToTreeCover}, shows that $G$ has a $\mathcal{P}_\Delta$-pairwise spanning tree cover $\mathcal{T}_{\Delta}$, for every $\Delta\in\{2^t\}_{t=1}^{\lceil\log_2\frac{n^2}{\epsilon}\rceil}$, with stretch $(1+6\epsilon)\alpha=O(k\log\log(\pi+4))$ and size $M(\epsilon,n)=O(k\cdot\ell\cdot(\frac{\pi}{\epsilon}\log\frac{\pi}{\epsilon})^{\frac{1}{k}}\cdot\log^{2-\gamma}n)$, where 
\[\mathcal{P}_\Delta=\left\{(u,v)\in V^2\;\bigg|\;\exists_{i\geq0}\;\frac{\Delta}{2}\cdot\left(\frac{n^2}{\epsilon}\right)^i<d_G(u,v)\leq\Delta\cdot\left(\frac{n^2}{\epsilon}\right)^i\right\}~.\]

For every $t=1,2,...,\lceil\log_2\frac{n^2}{\epsilon}\rceil$, we apply Lemma \ref{lemma:TreeCoverToPREmulator} on $\mathcal{T}_{2^t}$, and obtain a $\mathcal{P}_{2^t}$-pairwise path-reporting $O(k\log\log(\pi+4))$-spanner $(S_t,D_t)$ for $G$ with query time $O(k\cdot\ell\cdot(\frac{\pi}{\epsilon}\log\frac{\pi}{\epsilon})^{\frac{1}{k}}\cdot\log^{2-\gamma}n)$ and size $O(k\cdot\ell\cdot(\frac{\pi}{\epsilon}\log\frac{\pi}{\epsilon})^{\frac{1}{k}}\cdot n\log^{2-\gamma}n)$. We define a new path-reporting spanner $(S,D)$ by $S=\bigcup_{t=1}^{\lceil\log_2\frac{n^2}{\epsilon}\rceil}S_t$, where $D$ stores all the oracles $\{D_t\}_{t=0}^{\lceil\log_2\frac{n^2}{\epsilon}\rceil}$ and the oracle $D_0$.

Given a query $(u,v)\in V^2$, the oracle $D$ first applies $D_0$ on $(u,v)$, to get an estimate $\hat{d}_0(u,v)$ for which $\frac{1}{\alpha_0}\hat{d}_0(u,v)\leq d_G(u,v)\leq\hat{d}_0(u,v)$, within time $O(\ell\cdot\log\pi\cdot\log^{3-\gamma}n)$. Then, the oracle $D$ finds all the values $t\in\{1,2,...,\lceil\log_2\frac{n^2}{\epsilon}\rceil\}$ and $i\geq0$ such that the interval $[\frac{1}{\alpha_0}\hat{d}_0(u,v),\hat{d}_0(u,v)]$ intersects the interval $\mathcal{I}_{t,i}=\left(2^{t-1}\cdot\left(\frac{n^2}{\epsilon}\right)^i,2^t\cdot\left(\frac{n^2}{\epsilon}\right)^i\right]$. We claim that there are at most $2$ values of $i$ such that $[\frac{1}{\alpha_0}\hat{d}_0(u,v),\hat{d}_0(u,v)]\cap\mathcal{I}_{t,i}\neq\emptyset$ (for $t$ in the relevant range), and that for every such $i$ there are at most $\log_2\alpha_0+1$ possible values of $t$. Indeed, if $x\in\mathcal{I}_{t,i}$ and $y\in\mathcal{I}_{s,j}$ are both in $[\frac{1}{\alpha_0}\hat{d}_0(u,v),\hat{d}_0(u,v)]$, for some $i<j$ and $1\leq s,t\leq\lceil\log_2\frac{n^2}{\epsilon}\rceil$, then 
\[\left(\frac{n^2}{\epsilon}\right)^j\leq2^{s-1}\cdot\left(\frac{n^2}{\epsilon}\right)^j<y\leq\hat{d}_0(u,v)\leq\alpha_0\cdot x\leq\alpha_0\cdot2^t\cdot\left(\frac{n^2}{\epsilon}\right)^i<2\alpha_0\cdot\left(\frac{n^2}{\epsilon}\right)^{i+1}<\left(\frac{n^2}{\epsilon}\right)^{i+2}~,\]
i.e., $j\leq i+1$ (here we assumed that $2\alpha_0$, which is $O(\log\pi)=O(\log n)$, is at most $\frac{n^2}{\epsilon}$ -- this is true for any large enough $n$). Also, if $i=j$, we get from the same inequality that $2^{s-1}\cdot\left(\frac{n^2}{\epsilon}\right)^i<\alpha_0\cdot2^t\cdot\left(\frac{n^2}{\epsilon}\right)^i=2^{t+\log_2\alpha_0}\cdot\left(\frac{n^2}{\epsilon}\right)^i$, and therefore $s\leq t+\log_2\alpha_0$ (and symmetrically, $t\leq s+\log_2\alpha_0$). Thus, there is a set $J\subseteq\{1,2,...,\lceil\log_2\frac{n^2}{\epsilon}\rceil\}$ of indices, with size $|J|=O(\log\alpha_0)=O(\log\log\pi)$ such that $(u,v)$ is in $\bigcup_{t\in J}\mathcal{P}_{2^t}$, and therefore one of the oracles $\{D_t\}_{t\in J}$ outputs a distance estimate for $(u,v)$ with stretch $O(k\log\log(\pi+4))$. The oracle $D$ applies these oracles on $(u,v)$ and finds the index $t\in J$ such that $D_t$ outputs the minimal distance estimate $\hat{d}(u,v)$ for $(u,v)$. If a path is also required, $D$ applies $D_t$ on $(u,v)$ to obtain such an approximate path in $S=\bigcup_{t=1}^{\lceil\log_2\frac{n^2}{\epsilon}\rceil}S_t$. In both cases, $\hat{d}(u,v)$ has stretch at most $O(k\log\log(\pi+4))$.

Note that the query time of $D$ equals the query time of $D_0$ plus the query times of the $O(\log\log\pi)$ oracles $\{D_t\}_{t\in J}$. As these query times are all $O(k\cdot\ell\cdot(\frac{\pi}{\epsilon}\log\frac{\pi}{\epsilon})^{\frac{1}{k}}\cdot\log^{2-\gamma}n)$, the query time of $D$ is $O(\ell\cdot\log\pi\cdot\log^{3-\gamma}n+k\cdot\ell\cdot(\frac{\pi}{\epsilon}\log\frac{\pi}{\epsilon})^{\frac{1}{k}}\cdot\log^{2-\gamma}n\cdot\log\log\pi)$.

The size of $S$ and of $D$ are both at most the total size of the oracles $\{D_i\}_{i=0}^{\lceil\log_2\frac{n^2}{\epsilon}\rceil}$, which is 
\[\left\lceil\log_2\frac{n^2}{\epsilon}\right\rceil\cdot O\left(k\cdot\ell\cdot\left(\frac{\pi}{\epsilon}\log\frac{\pi}{\epsilon}\right)^{\frac{1}{k}}\cdot n\log^{2-\gamma}n\right)=O\left(k\cdot\ell\cdot\left(\frac{\pi}{\epsilon}\log\frac{\pi}{\epsilon}\right)^{\frac{1}{k}}\cdot n\log^{3-\gamma}n\right)~,\]
where we used our assumption that $\epsilon>\frac{1}{n}$.

For a path-reporting emulator, we use the very same proof, but with $(S_t,D_t)$ defined as the $\mathcal{P}_{2^t}$-pairwise path-reporting $(1+6\epsilon)(32ek+1)$-emulator that is implied by Lemma \ref{lemma:TreeCoverToPREmulator} when invoked on the tree cover $\mathcal{T}_{2^t}$ from the proof of Lemma \ref{lemma:AspectRatioReductionForests}. By Corollary \ref{cor:EllPiOneScale}, the tree cover $\mathcal{T}_{2^t}$ that is constructed\footnote{In fact, Lemma \ref{lemma:AspectRatioReductionForests} constructs a \textit{forest} cover $\mathcal{F}_{2^t}$, but using Observation \ref{obs:FromForestCoverToTreeCover}, we can assume that this is a tree cover.} in Lemma \ref{lemma:AspectRatioReductionForests} is not necessarily spanning, has stretch $(1+6\epsilon)(32ek+1)<(32e+528\epsilon)k+1$, and has size $O(k\cdot\ell\cdot(\frac{\pi}{\epsilon}\log\frac{\pi}{\epsilon})^{\frac{1}{k}}\cdot\log^{2-\gamma}n)$. The rest of the proof is identical, which results in a path-reporting emulator with the desired properties, when replacing $\epsilon$ by $\frac{\epsilon}{528}$.
\end{proof}

\section{Distance Oracles with Improved Stretch} \label{app:ImprovedStretch}

Theorem \ref{thm:SmallTreewidthEmulatorAndSpanner} applies Lemma \ref{lemma:TreeCoverToPREmulator} on Theorem \ref{thm:SmallTreewidthSpanTreeCover}, to obtain a path-reporting $(32ek+1)$-emulator. In this section, we show how the stretch can be improved to $4ek+1$. We note, however, that the structure we present here is not a path-reporting emulator, but only a \textit{distance oracle}. That is, given a query $(u,v)\in V^2$, this structure outputs only an estimate $\hat{d}(u,v)$ that satisfies $d_G(u,v)\leq\hat{d}(u,v)\leq(4ek+1)\cdot d_G(u,v)$, but not a path with weight $\hat{d}(u,v)$ in a fixed small-size emulator of $G$.

To construct such a distance oracle, we do not use the tree cover (or the HST cover) from Theorem \ref{thm:SmallTreewidthSpanTreeCover} as a black-box, but instead we go back to the proof of Lemma \ref{lemma:PairwiseSpanTreeCover}. We roughly describe its details here. Given a graph $G=(V,E)$, a parameter $k$, and a subset $A\subseteq V$ of size $t$, we iteratively find $I=O(k\cdot t^{\frac{1}{k}})$ disjoint subsets $S_1,S_2,...,S_I\subseteq A$ and corresponding trees $T_1,T_2,...,T_I$. Here, the vertex set of $T_i$ is $V(T_i)=V\setminus\bigcup_{j=1}^{i-1}S_j$. These trees has the property that for any $u,v\in V$, if $i$ is the minimal index such that a shortest $u$-$v$ path intersects $S_i$, then
\[d_G(u,v)\leq d_{T_i}(u,v)\leq(32ek+1)d_G(u,v)~.\footnote{In fact, the inequality $d_G(u,v)\leq d_{T_i}(u,v)$ holds for every tree $T_i$.}\]
Thus, every pair of vertices $u,v\in V$ with a shortest path that intersects $A$ has a tree in this collection, in which there is a low-stretch path between $u,v$.

The subsets $\{S_i\}_{i=1}^I$ and the trees $\{T_i\}_{i=1}^I$ are found using Theorem \ref{thm:RamseyNotSubTree}. In the proof of the next lemma, we replace each tree $T_i$ by an HST \textit{only over $S_i$}, as opposed to $V\setminus\bigcup_{j=1}^{i-1}S_j$. Instead, for every $v\in V\setminus\bigcup_{j=1}^{i}S_j$, we store the closest vertex $s_v\in S_i$ to $v$, and the distance $d_G(v,s_v)$. Using this information (for every $i$), we are able to extract an estimate of $d_G(u,v)$, for every $u,v\in V$ for which a shortest path intersects $A$.

\begin{lemma} \label{lemma:PairwiseDO}
Let $G=(V,E)$ be an undirected weighted graph on $n$ vertices, and fix some $A\subseteq V$ with size $|A|=t$. Define the set $\mathcal{P}_A\subseteq V^2$ as the set of all pairs $(u,v)$ such that there is a shortest $u$-$v$ path that contains a vertex of $A$. Given an integer parameter $k\geq1$, there exists a $\mathcal{P}_A$-pairwise distance oracle with stretch $4ek+1$, size $O(knt^{\frac{1}{k}})$ and query time $O(k\cdot t^{\frac{1}{k}})$. That is, there is an oracle $D_A$ of size $O(knt^{\frac{1}{k}})$, such that for every query $(u,v)\in\mathcal{P}_A$, it outputs an estimate $\hat{d}(u,v)$ within time $O(k\cdot t^{\frac{1}{k}})$, that satisfies
\[d_G(u,v)\leq\hat{d}(u,v)\leq(4ek+1)d_G(u,v)~.\]
\end{lemma}

\begin{proof}

Similarly to the proof of Lemma \ref{lemma:PairwiseSpanTreeCover}, our construction is recursive over $t$, the size of the \textit{demand set} $A\subseteq V$. Let $\alpha(t),\Sigma(n,t)$ and $q(t)$ be upper bounds for the stretch, the size and the query time of the resulting $\mathcal{P}_A$-pairwise distance oracle, for a graph $G$ with $n$ vertices and a set $A$ with $t$ vertices.

For $t=1$ and $A=\{r\}$, the oracle $D_A$ stores all the distances of the form $d_G(v,r)$, for every $v\in V$. Given a query $(u,v)\in\mathcal{P}_A$, the oracle $D_A$ outputs the value $\hat{d}(u,v)=d_G(u,r)+d_G(v,r)$. Recall that $(u,v)\in\mathcal{P}_A$ means that $r$ is on a shortest path between $u,v$, and thus, $\hat{d}(u,v)=d_G(u,v)$. We conclude that the stretch of $D_A$ is $1<4ek+1$. Clearly, the size and the query time of $D_A$ are $O(n)$ (words) and $O(1)$, respectively. Hence, we set $\alpha(1)=4ek+1$, $\Sigma(n,1)=O(n)$ and $q(1)=O(1)$.

If $t>1$, we use Theorem \ref{thm:RamseyUM} to find a subset $S\subseteq A$ of size $t^{1-\frac{1}{k}}$ and an ultrametric $\rho$ over $S$, such that
\begin{equation} \label{eq:UMSubset}
    d_G(u,v)\leq\rho(u,v)\leq2ek\cdot d_G(u,v)~,
\end{equation}
for every $u,v\in S$. Recall that by Fact \ref{fact:UM<->HST}, there is an HST $T=(V_T,E_T)$ with the vertices of $S$ as leaves and with labels $\ell:V_T\rightarrow\mathbb{R}_{\geq0}$ such that $\rho(u,v)=\ell(LCA_T(u,v))$, for every $u,v\in S$. Applying Theorem \ref{thm:LCAOracle} on the HST $T$, we obtain an oracle $D_T$ with size\footnote{Generally, we may assume that an HST does not contain any vertex of degree $2$, and therefore the number of vertices in an HST is at most a constant times the number of its leaves.} $O(|V_T|)=O(|S|)=O(t^{1-\frac{1}{k}})$, that can output, within a constant time, the LCA of any two vertices $u,v\in S$. Therefore, if $D_T$ stores the labels $\ell$ as well, it can output $\rho(u,v)$, for every $u,v\in S$ within $O(1)$ time.

We now define another oracle $\bar{D}_T$, that contains $D_T$, and also stores, for every vertex $v\in V\setminus S$, its closest vertex $s_v\in S$ and the distance $d_G(v,s_v)$. The size of $\bar{D}_T$ is $O(n)+O(t^{1-\frac{1}{k}})=O(n)$. Given a query $(u,v)$, the oracle $\bar{D}_T$ outputs the estimate 
\[\hat{d}(u,v)=d_G(u,s_u)+\rho(s_u,s_v)+d_G(s_v,v)~.\]
Here, the first and the third components are explicitly stored in $\bar{D}_T$, and the second is the output of the oracle $D_T$ on the query $(s_v,s_u)$. The query time is clearly still $O(1)$.

Consider the graph $G'=G[V\setminus S]$ and the new demand set $A'=A\setminus S$. Let $n'=n-|S|\leq n-t^{1-\frac{1}{k}}$. denote the number of vertices in $G'$, and let $t'=t-|S|\leq t-t^{1-\frac{1}{k}}$ denote the number of vertices in $A'$. Recursively, let $D_{A'}$ be a $\mathcal{P}_{A'}$-pairwise distance oracle with stretch $\alpha(t')$, size $\Sigma(n',t')$ and query time $q(t')$. We define the oracle $D_A$ to store both oracles $\bar{D}_T$ and $D_{A'}$. Given a query $(u,v)\in\mathcal{P}_A$, the oracle $D_A$ first gets an estimate $\hat{d}_1(u,v)$ from $\bar{D}_T$, then gets an estimate $\hat{d}_2(u,v)$ from $D_{A'}$. Then, it outputs the minimal value among $\hat{d}_1(u,v),\hat{d}_2(u,v)$.

Note that the size and the query time of the oracle $D_A$ are bounded by $O(n)+\Sigma(n',t')\leq O(n)+\Sigma(n-t^{1-\frac{1}{k}},t-t^{1-\frac{1}{k}})$ and $O(1)+q(t')\leq O(1)+q(t-t^{1-\frac{1}{k}})$. Thus, we can set
\[\Sigma(n,t)=O(n)+\Sigma(n-t^{1-\frac{1}{k}},t-t^{1-\frac{1}{k}}),\text{  and  }q(t)=O(1)+q(t-t^{1-\frac{1}{k}})~.\]
Note that these are the same recursive relations as in the proof of Lemma \ref{lemma:PairwiseSpanTreeCover} (see (\ref{eq:SizeRec}) and (\ref{eq:RecDepth})). Hence, by the same proof, we get the bounds $\Sigma(n,t)=O\left(knt^{\frac{1}{k}}\right)$ and $q(t)=O(k\cdot t^{\frac{1}{k}})$ on the size and the query time, respectively.

To analyze the stretch, let $(u,v)\in\mathcal{P}_A$ be a query. If there is a shortest $u$-$v$ path that intersects $A'$, then the oracle $D_{A'}$ provides an estimate with stretch at most $\alpha(t')$ for this query. Otherwise, by the definition of $\mathcal{P}_A$, there must be a shortest $u$-$v$ path that contains a vertex of $A\setminus A'=S$. Denote this vertex by $z\in S$. Recall that $s_u,s_v$ are the closest vertices in $S$ to $u,v$, respectively, and therefore,
\[d_G(s_u,s_v)\leq d_G(s_u,u)+d_G(u,z)+d_G(z,v)+d_G(v,s_v)\leq2(d_G(u,z)+d_G(z,v))=2d_G(u,v)~.\]
Using this inequality, we bound the output of $\bar{D}_T$, which in turn bounds the output of $D_A$.
\begin{eqnarray*}
    \bar{D}_T(u,v)&=&d_G(u,s_u)+\rho(s_u,s_v)+d_G(s_v,v)\\
    &\stackrel{(\ref{eq:UMSubset})}{\leq}&d_G(u,s_u)+2ek\cdot d_G(s_u,s_v)+d_G(s_v,v)\\
    &\leq&d_G(u,s_u)+2ek\cdot2d_G(u,v)+d_G(s_v,v)\\
    &\leq&d_G(u,z)+2ek\cdot2d_G(u,v)+d_G(z,v)\\
    &=&d_G(u,v)+2ek\cdot2d_G(u,v)=(4ek+1)d_G(u,v)~.
\end{eqnarray*}

We obtain the following bound on the stretch of $D_A$.
\[\alpha(t)\leq\max\{\alpha(t'),4ek+1\}~.\]
A simple inductive proof, similar to the one in the proof of Lemma \ref{lemma:PairwiseSpanTreeCover}, gives a bound of $4ek+1$ on the stretch of our oracle $D_A$.

\end{proof}

Applying Lemma \ref{lemma:PairwiseDO} on the separators of a vertex-separable graph, we get the following result.

\begin{theorem*}[Theorem \ref{thm:SmallTreewidthDO}]
Let $G=(V,E)$ be an $n$-vertex undirected weighted $s$-vertex-separable graph, for some non-decreasing function $s=s(\theta)$. Let $k\geq1$ be an integer parameter, and denote $S(n,k)=\sum_{i=0}^{\log n}s\left(\frac{n}{2^i}\right)^{\frac{1}{k}}$. There exists a distance oracle for $G$ with stretch at most $4ek+1$, size $O(k\cdot n\cdot S(n,k))$, and query time $O(k\cdot S(n,k))$.
\end{theorem*}

\begin{proof}

We prove the theorem by induction over $n$, the number of vertices. For any constant $n$ the theorem is trivial. For a general $n>1$, assume by induction that any graph $G'$ with $n'<n$ vertices, that satisfies the condition in the theorem, has a distance oracle $D'$ with stretch $4ek+1$, size $O(k\cdot n'\cdot S(n',k))$ and query time $O(k\cdot S(n',k))$. In particular, we assume that the size of $D'$ is at most $(\tilde{c}+1)\cdot k\cdot n'\cdot S(n',k)$, where $\tilde{c}$ is the constant hidden in the $O$-notation in the size of the pairwise distance oracle of Lemma \ref{lemma:PairwiseDO}.

Let $A\subseteq V$ be a separator of size $|A|\leq s(n)$, and let $C_1,C_2,...,C_\ell$ be the connected components of the graph $G'=G[V\setminus A]$. If $n_i$ is the number of vertices in $C_i$, then for every $i$ we have $n_i\leq\frac{n}{2}$. Let $D_A$ be a $\mathcal{P}_A$-pairwise distance oracle as in Lemma \ref{lemma:PairwiseDO}. Recall that $\mathcal{P}_A$ is the set of all pairs $(u,v)\in V^2$ for which a shortest $u$-$v$ path intersects $A$, and $D_A$ outputs a distance estimate for any such pair, with stretch $4ek+1$. The size of $D_A$ is $O(kns(n)^{\frac{1}{k}})$, and its query time is $O(ks(n)^{\frac{1}{k}})$.

For every $i=1,2,...,\ell$, the component $C_i$ are also $s(\theta)$-vertex-separable. By the induction hypothesis, there is a distance oracle $D_i$ for the sub-graph $C_i$, with stretch $4ek+1$, size $(\tilde{c}+1)\cdot k\cdot n_i\cdot S(n_i,k)$ and query time $O(k\cdot S(n_i,k))$. We define a new distance oracle $D$ for the graph $G$, that stores the oracle $D_A$ and the oracles $\{D_i\}_{i=1}^\ell$, and a variable $i(v)$, for each $v\in V\setminus A$, that indicates the unique index $i$ such that $v$ is in $C_i$. Given a query $(u,v)\in V^2$, the oracle $D$ applies the oracle $D_A$ on this query, and obtains an estimate $\hat{d}_A(u,v)$. Then, if $i(u)=i(v)=i$, it applies the oracle $D_i$ on $(u,v)$ to obtain another estimate $\hat{d}_i(u,v)$. Lastly, $D$ outputs the smaller among the estimates $\hat{d}_A(u,v),\hat{d}_i(u,v)$ (in case $i(u)\neq i(v)$, it simply outputs $\hat{d}_A(u,v)$).

To analyze the stretch of the oracle $D$, we consider two cases. If $(u,v)\in\mathcal{P}_A$, i.e., there is a shortest path between $u,v$ that intersects $A$, then by Lemma \ref{lemma:PairwiseDO} we have $\hat{d}_A(u,v)\leq(4ek+1)d_G(u,v)$, and since the resulting estimate of $D$ is always at most $\hat{d}_A(u,v)$, we conclude that this estimate has stretch at most $4ek+1$ as well. Otherwise, if every shortest path between $u,v$ do not intersect $A$, then in particular, $u,v$ are in the same connected component $C_i$ of $G'=G[V\setminus A]$. In this case, by the induction hypothesis, the oracle $D_i$ outputs an estimate $\hat{d}_i(u,v)\leq(4ek+1)d_G(u,v)$. The resulting estimate of $D$ is at most $\hat{d}_i(u,v)$, and therefore it has stretch at most $4ek+1$ as well.

The size of the oracle $D$ consists of the size of $D_A$, which is at most $\tilde{c}\cdot k\cdot n\cdot s(n)^{\frac{1}{k}}$ by Lemma \ref{lemma:PairwiseDO}, of the sum of the sizes of the oracles $D_i$, which (by the induction hypothesis) is
\[\sum_{i=1}^{\ell}(\tilde{c}+1)\cdot k\cdot n_i\cdot S(n_i,k)\leq (\tilde{c}+1)\cdot k\sum_{i=1}^{\ell}n_i\cdot S\left(\frac{n}{2},k\right)<(\tilde{c}+1)\cdot k\cdot n\cdot S\left(\frac{n}{2},k\right)~,\]
and of the variables $\{i(v)\}_{v\in V\setminus A}$, which have a total size of less than $n$ words. We conclude that the total size of $D$ is at most
\begin{eqnarray*}
    \tilde{c}\cdot k\cdot n\cdot s(n)^{\frac{1}{k}}+(\tilde{c}+1)\cdot k\cdot n\cdot S\left(\frac{n}{2},k\right)+n&<&(\tilde{c}+1)\cdot k\cdot n\cdot\left(s(n)^{\frac{1}{k}}+S\left(\frac{n}{2},k\right)\right)\\
    &=&(\tilde{c}+1)\cdot k\cdot n\cdot S(n,k)~.
\end{eqnarray*}
The last equality is due to the definition of $S(n,k)=\sum_{i=0}^{\log n}s(\frac{n}{2^i})^\frac{1}{k}$.

Note that for every query $(u,v)\in V^2$, the oracle $D$ queries $D_A$, which takes $O(k\cdot s(n)^{\frac{1}{k}})$ time, by Lemma \ref{lemma:PairwiseDO}. Then, if $i(u)=i(v)=i$, the oracle $D$ queries $D_i$ as well, which takes $O(k\cdot S(n_i,k))=O\left(k\cdot S\left(\frac{n}{2},k\right)\right)$ time, by the induction hypothesis.Again, by the definition of $S(n,k)$, we conclude that the query time of $D$ is at most 
\[O\left(k\cdot s(n)^{\frac{1}{k}}+k\cdot S\left(\frac{n}{2},k\right)\right)=O(k\cdot S(n,k))~.\]
    
\end{proof}

\begin{corollary}
Let $G$ be an $n$-vertex $s$-vertex-separable graph, for $s=s(\theta)=\theta^\delta$, for some $\delta\in(0,1]$, and let $k\geq1$ be an integer. Then, $G$ has a distance oracle with stretch $4ek+1$, size $O\left(\frac{k^2}{\delta}n^{1+\frac{\delta}{k}}\right)$, and query time $O\left(\frac{k^2}{\delta}n^{\frac{\delta}{k}}\right)$.
\end{corollary}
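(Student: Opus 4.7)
The plan is to derive this corollary as a direct specialization of the preceding theorem (the distance labeling scheme for graphs with recursive separators of size $t(\theta)$), by substituting $t(\theta)=\theta^{\delta}$ and bounding the quantity $T(n,k)=\sum_{i=0}^{\log n} t(n/2^i)^{1/k}$ explicitly. The argument is essentially identical to how Corollary \ref{cor:TWN^delta} was obtained from Theorem \ref{thm:SmallTreewidthSpanTreeCover}, so the only real work is the geometric-sum estimate.

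First, I would plug $t(\theta)=\theta^{\delta}$ into the definition of $T(n,k)$, obtaining a geometric series
\[
T(n,k)=\sum_{i=0}^{\log n}\left(\frac{n}{2^{i}}\right)^{\delta/k}\;<\;\sum_{i=0}^{\infty}\left(\frac{n}{2^{i}}\right)^{\delta/k}=n^{\delta/k}\cdot\frac{1}{1-2^{-\delta/k}}.
\]
Next, I would apply the elementary inequality $2^{-x}\leq 1-x/2$, valid for $x\in[0,1]$, with $x=\delta/k\in(0,1]$, which gives $1-2^{-\delta/k}\geq\delta/(2k)$, and therefore
\[
T(n,k)\;\leq\;\frac{2k}{\delta}\cdot n^{\delta/k}.
\]

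Finally, I would invoke the preceding theorem with this bound on $T(n,k)$. The stretch is unchanged at $O(k)$, while both the maximum label size and the query time become
\[
O\!\left(k\cdot T(n,k)\cdot\log n\right)\;=\;O\!\left(\tfrac{k^{2}}{\delta}\,n^{\delta/k}\,\log n\right),
\]
which is precisely the claimed bound.

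There is no substantive obstacle: the entire content is the geometric-series estimate above, which is the same one already carried out for Corollary \ref{cor:TWN^delta}. The corollary is a clean specialization of the general theorem to the power-function regime of recursive separator sizes.
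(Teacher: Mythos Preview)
Your proposal is correct and matches the paper's approach exactly: the paper does not even write out a proof for this corollary, merely noting that it is derived analogously to Corollaries~\ref{cor:TWN^deltaNonSpanningHST} and~\ref{cor:FlatTWNonSpanningHST}, which in turn reuse the geometric-sum estimate from Corollary~\ref{cor:TWN^delta} that you have reproduced.
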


\begin{corollary}
Let $G$ be an $n$-vertex graph with treewidth $t=t(n)$, and let $k\geq1$ be an integer. Then, $G$ admits a distance oracle with stretch $4ek+1$, query time $O\left(k\cdot t^{\frac{1}{k}}\log n\right)$ and size $O\left(k\cdot t^{\frac{1}{k}}\cdot n\log n\right)$.
\end{corollary}

\section{Distance Labeling and Routing Schemes} \label{app:DLandRS}

While a distance oracle provides a centralized data structure, that can approximately compute distances or shortest paths in a graph, an alternative goal is to \textit{distribute} this information among the graph vertices. That is, with only a few words of information stored in each vertex of a graph $G$, we want to be able to approximately compute $d_G(u,v)$, or a shortest $u$-$v$ path in $G$, given only the information stored in the vertices $u,v$. If we are only interested in the distance $d_G(u,v)$ itself, this approach is called a \textit{distance labeling scheme}. In case we are interested in a shortest $u$-$v$ path, we use the notion of \textit{routing schemes}.

\subsection{Distance Labeling Schemes} \label{sec:DistanceLabeling}

A distance labeling scheme of a graph is an assignment of a short \textbf{label} to each vertex, such that the (possibly approximate) distance between any two vertices can be computed efficiently by considering only their labels. The label of a vertex may contain any data, and its size is measured in words. Results on distance labeling schemes include \cite{Mat96,P00a,GPPR04,TZ01}. Usually, the main focus while constructing distance labeling schemes, is to reduce the stretch and the maximal, or sometimes average, size of a label. Next, we formally define distance labeling schemes.

\begin{definition} \label{def:DistanceLabelingSchemes}
Given an $n$-vertex undirected weighted graph $G=(V,E)$, a \textbf{distance labeling scheme} $\mathcal{L}(G)$ is a collection of labels $\{l(v)\}_{v\in V}\subseteq\{0,1\}^*$, and an algorithm $\mathcal{Q}$ that receives the labels $l(u),l(v)$ of two vertices $u,v\in V$, and returns an estimate $\hat{d}(u,v)$. We say that $\mathcal{L}(G)$ has \textbf{stretch} $\alpha$, for some real number $\alpha>0$, if for every $u,v\in V$,
\[d_G(u,v)\leq\hat{d}(u,v)\leq\alpha\cdot d_G(u,v)~.\]
We denote by $|l(v)|$ the length (in words) of the label $l(v)$. Then, we define the \textbf{(maximum) label size} of $\mathcal{L}(G)$ as $\max_{v\in V}|l(v)|$, and the \textbf{average label size} of $\mathcal{L}(G)$ as $\frac{1}{n}\sum_{v\in V}|l(v)|$.
%Lastly, the \textbf{query time} of $\mathcal{L}(G)$ is the maximum time required for the algorithm $\mathcal{Q}$ to return a distance estimate, given two labels.
\end{definition}

Consider the following result of Peleg \cite{P00b}, regarding distance labeling schemes in trees.

\begin{theorem}[\cite{P00a,P00b}] \label{thm:TreeLabels}
Let $T=(V,E)$ be an undirected weighted tree with $n$ vertices. There exists a distance labeling scheme for $T$ with stretch $1$, maximum label size $O(\log n)$.
\end{theorem}

Theorem \ref{thm:TreeLabels}, and Theorem \ref{thm:LCAOracle}, imply a direct way to construct a distance labeling scheme for a graph $G=(V,E)$ that is accompanied by a (not necessarily spanning) tree cover $\mathcal{T}$ or an HST cover $\mathcal{T}^{HST}$, respectively. Namely, the label of each vertex $v\in V$ consists of its labels in $T$, for every $T\in\mathcal{T}$ or for every $T\in\mathcal{T}^{HST}$. Given the labels of $u,v\in V$, we can simply iterate over all such trees $T$, and using the labels of $u,v$ in $T$, compute their distance estimate in $T$. Finally, we return the minimum estimate we found. The following lemma formalizes this scheme.

\begin{lemma} \label{lemma:DistanceLabelingFromTreeCover}
Suppose that an undirected weighted $n$-vertex graph $G=(V,E)$ has a tree cover $\mathcal{T}$, or an HST cover $\mathcal{T}^{HST}$, with stretch $\alpha$ and $t$ trees. Then, there is a distance labeling scheme for $G$ with stretch $\alpha$ and maximum label size $O(t\log n)$. Moreover, using the labels of two vertices $u,v\in V$, it is possible to find a tree $T\in\mathcal{T}$, or an HST $T\in\mathcal{T}^{HST}$, such that $d_T(u,v)\leq\alpha\cdot d_G(u,v)$, or respectively, $\rho_T(u,v)\leq\alpha\cdot d_G(u,v)$.
\end{lemma}

Applying Lemma \ref{lemma:DistanceLabelingFromTreeCover} on our new results for tree covers and HST covers, we obtain the following result (for each result, the tree cover or HST cover that it relies on is specified).

\begin{theorem} \label{thm:PathSeparableDistanceLabeling}
Let $G=(V,E)$ be an $n$-vertex undirected weighted graph. Let $k\geq1$ be an integer parameter, and let $\epsilon\in(0,1]$ be a real parameter.

\begin{enumerate}
    \item If $G$ is $s$-vertex-separable, where $s=s(\theta)=\theta^\delta$, for some $\delta\in(0,1]$, then $G$ has a distance labeling scheme with stretch $12ek$ and maximum label size $O\left(\frac{k^2}{\delta}n^{\frac{\delta}{k}}\log n\right)$ (Corollary \ref{cor:TWN^delta}).

    \item If $G$ has treewidth $t=t(n)$, then there exists a distance labeling scheme for $G$ with stretch $12ek$ and maximum label size $O\left(k\cdot t^{\frac{1}{k}}\cdot\log^2n\right)$ (Corollary \ref{cor:FlatTW}).

    \item If $G$ is $p$-path-separable, then there exists a distance labeling scheme for $G$ with stretch $1+\epsilon$ and maximum label size $O(p\cdot\frac{\log^3n}{\epsilon})$ (Theorem \ref{thm:SpanTreeCoverForPathSeparators}).

    \item If $G$ is $(\ell,\pi)$-path-separable, then there exists a distance labeling scheme for $G$ with stretch $(12e+\epsilon)k$ and maximum label size $O\left(k\cdot\ell\cdot\left(\frac{\pi}{\epsilon}\log\frac{\pi}{\epsilon}\right)^\frac{1}{k}\cdot\log^4n\right)$ (Theorem \ref{thm:LargeStretchTreeCoverForPathSeparators}).

    \item If $G$ is tree-like $(\ell,\pi)$-path-separable, then there exists a distance labeling scheme for $G$ with stretch $(12e+\epsilon)k$ and maximum label size $O\left(k\cdot\ell\cdot\left(\frac{\pi}{\epsilon}\right)^\frac{1}{k}\cdot\log^3n\right)$ (Theorem \ref{thm:LargeStretchTreeCoverForPathSeparators}).
\end{enumerate}

\end{theorem}

Recall that $K_r$-minor-free graphs are (weakly) tree-like $(\ell,\pi)$-path-separable, for $\ell=O(r^2)$ and $\pi=O(r^{c_{AG}})$, where $c_{AG}=4602$, and that graphs with bounded genus $g$ are strongly tree-like $O(g)$-path-separable (see Items (2) and (3) in Theorem \ref{thm:PathSeparableFamilies}). Thus, the following corollary is derived by Theorem \ref{thm:PathSeparableDistanceLabeling}.

\begin{corollary}
    Let $G=(V,E)$ be an $n$-vertex undirected weighted graph. Let $k\geq1$ be an integer parameter, and let $\epsilon\in(0,1]$ be a real parameter.

    \begin{enumerate}
        \item If $G$ is $K_r$-minor-free, then there exists a distance labeling scheme for $G$ with stretch $1+\epsilon$ and maximum label size $O(r^{c_{AG}}\cdot\frac{\log^3n}{\epsilon})$.
        
        \item If $G$ has a bounded genus $g$, then there exists a distance labeling scheme for $G$ with stretch $1+\epsilon$ and maximum label size $O(g\cdot\frac{\log^3n}{\epsilon})$.

        \item If $G$ is $K_r$-minor-free, then there exists a distance labeling scheme for $G$ with stretch $O(k)$ and maximum label size $O\left(kr^{2+\frac{1}{k}}\cdot\log^3n\right)$.

        \item If $G$ has a bounded genus $g$, then there exists a distance labeling scheme for $G$ with stretch $O(k)$ and maximum label size $O\left(kg^\frac{1}{k}\cdot\log^3n\right)$.
\end{enumerate}

\end{corollary}

\subsection{Routing Schemes} \label{sec:RoutingSchemes}

A routing scheme $\mathcal{R}$ for an undirected and weighted graph $G=(V,E)$ is a distributed method to pass messages from one vertex to another. It consists of several objects. First, each vertex $v\in V$ is assigned a \textbf{label} $l(v)$, that contains required data for sending a message to destination $v$. Additionally, every vertex $v\in V$ is provided with a \textbf{routing table} $R(v)$, which is a data structure that helps $v$ route messages. When a vertex $u$ wants to send a message to a destination $v$, it accepts as an input its own routing table $R(u)$ and its label $l(u)$, as well as the label $l(v)$ of the destination $v$. Given this information, it is allowed to create a short \textbf{header} $h=h(u,v)$ and attach it to the message. Most importantly, it needs to compute the \textit{next hop}, i.e., the neighbor $u'$ of $u$ to which it will send the message (possibly with the header attached to it). The vertex $u'$, and any subsequent vertex which receives the message, acts similarly: based on its routing table $R(u')$, the destination label $l(v)$, and the header $h$, it computes the next neighbor $u''$ of $u'$, recomputes the header, and sends the updated message to $u''$.

This process creates a \textit{routing path} 
\[P_{\mathcal{R}}(u,v)=(u,u',u'',...)~.\] 
The routing scheme $\mathcal{R}$ is said to be \textbf{correct} for the graph $G=(V,E)$, if for every origin $u\in V$ and destination $v\in V$, the routing path $P_{\mathcal{R}}(u,v)$ ends in $v$. It is said to have \textbf{stretch} $\alpha$ if
\[w(P_{\mathcal{R}}(u,v))\leq\alpha\cdot d_G(u,v)~,\]
where $w(P_{\mathcal{R}}(u,v))=\sum_{e\in P_{\mathcal{R}}(u,v)}w(e)$.
\begin{comment}
This label is given to a source vertex $u\in V$, possibly together with a \textbf{header} $h(v)$, that contains additional information that assists in passing the message to $v$. Lastly, every vertex $u\in V$ stores a \textbf{routing table} $R(u)$, which is some data structure. The source vertex $u$ is given both the label $l(v)$ and the header $h(v)$ of the destination $v$, then query the table $R(u)$ to decide the next vertex $u'$ on the route. Next, $u$ passes the message to $u'$, only accompanied by the header $h(v)$, which $u$ can modify. Each vertex on the routing path may modify $h(v)$ as well, before passing it to the next vertex.

The goal in routing schemes is that the path of a message from the source $u\in V$ to the destination $v\in V$ will not be much heavier than $d_G(u,v)$. As usual, we say that the routing scheme has stretch $\alpha$ if this path has weight at most $\alpha\cdot d_G(u,v)$.
\end{comment}
We are also interested in minimizing the sizes of the labels $\{l(v)\}_{v\in V}$, the headers $\{h(u,v)\}_{u,v\in V}$ and the routing tables $\{R(u)\}_{u\in V}$ (all measured in words).
%, and minimizing the time required for computing the next hop. We will refer to it as the \textbf{routing time} of the scheme $\mathcal{R}$.

Next, we show that spanning tree covers imply routing schemes. For this purpose, we use the following routing scheme for trees.

\begin{theorem}[\cite{TZ01-spaa}] \label{thm:TreeRoutingScheme}
Every weighted tree $T=(V,E)$ on $n$ vertices has a routing scheme with stretch $1$, labels of size $O(\log n)$ and tables of size $O(1)$, all measured in words (there are no headers). %The routing time is $O(1)$.
\end{theorem}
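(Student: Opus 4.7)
The plan is to combine heavy-light decomposition with a DFS-based interval labeling. Root $T$ at an arbitrary vertex $r$, and at each internal vertex designate the child with the largest subtree as its \emph{heavy child} (ties broken arbitrarily), with the remaining edges to children being \emph{light edges}. A standard fact is that the number of light edges on any root-to-leaf path is $O(\log n)$. Now perform a DFS that visits the heavy child first at every internal vertex, producing entry/exit times $in(v), out(v)$ such that the subtree of $v$ corresponds exactly to the DFS indices in $[in(v), out(v)]$, and each maximal heavy path occupies a contiguous block of $in$-values.

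Each vertex $u$ stores in its table $R(u)$, in $O(1)$ words, the ports to its parent and to its heavy child (if any), the intervals $[in(u), out(u)]$ and $[in(h), out(h)]$ where $h$ denotes $u$'s heavy child, and $u$'s \emph{light depth} $\ell(u)$, defined as the number of light edges on the root-to-$u$ path. The label $l(v)$ contains $(in(v), out(v))$ along with, for each $j = 1, 2, \ldots, \ell(v)$, a record storing the port used at the $j$-th light edge on the root-to-$v$ path and the DFS interval of its lower endpoint. Indexing these records as a plain array of length $\ell(v) = O(\log n)$ yields a label of $O(\log n)$ words from which the $j$-th record is retrievable in $O(1)$ time.

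The forwarding rule at $u$ with destination label $l(v)$ is: if $in(v) \notin [in(u), out(u)]$, send the message to the parent of $u$; else if $in(v) \in [in(h), out(h)]$, forward to the heavy child $h$; else the unique $u$-to-$v$ path in $T$ leaves the current heavy path at $u$ through a light edge, so retrieve from $l(v)$ the record indexed by $\ell(u) + 1$ and forward along the stored port (after a constant-time sanity check that this light edge descends into $u$'s subtree). In every case the message advances along the unique $u$-to-$v$ tree path, which is the shortest path in $T$, so the scheme has stretch $1$.

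The main obstacle is ensuring that the correct light-edge record in $l(v)$ can be located in $O(1)$ time without any search; this is resolved by the observation that when $u$ must forward along a light edge on the way to $v$, the next light edge on the root-to-$v$ path is precisely the one numbered $\ell(u) + 1$, which is directly indexable in the array-structured label. Correctness, $O(\log n)$ label size, $O(1)$ table size, and $O(1)$ routing time then all follow from heavy-light decomposition's $O(\log n)$ bound on light depths and the constant-time nature of DFS-interval ancestry tests.
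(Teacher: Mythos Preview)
The paper does not prove this theorem; it is quoted from \cite{TZ01-spaa} and used as a black box. Your construction is correct and is essentially the classical Thorup--Zwick tree routing scheme: DFS-interval labeling combined with heavy--light decomposition, where the destination label records the $O(\log n)$ light edges on its root path so that each forwarding vertex can determine in $O(1)$ time whether to ascend, descend along the heavy edge, or branch to a light child via direct array lookup at index $\ell(u)+1$. The key invariant you identify --- that when $u$ is an ancestor of $v$ and must leave the heavy path, the required light edge is exactly the $(\ell(u)+1)$-th one on the root-to-$v$ path --- is precisely what makes the scheme headerless with $O(1)$ decision time, and your argument for it is sound.
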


\begin{lemma} \label{lemma:RoutingSchemeFromTreeCover}
Suppose that an undirected weighted graph $G=(V,E)$ has $n$ vertices and a spanning tree cover $\mathcal{T}$ with stretch $\alpha$ and overlap $t$.% and average overlap $s$. 
Then, $G$ has a routing scheme with stretch $\alpha$, labels of size $O(t\log n)$, tables of size $O(t)$ and headers of size $O(1)$. %The average size of the labels and tables is $O(s\log n)$ and $O(s)$, respectively. %The routing time is $O(t\log n)$ for the origin, and $O(1)$ for any other vertex along the routing path.
\end{lemma}

\begin{proof}

Given the graph $G=(V,E)$, for every vertex $v\in V$ and a tree $T\in\mathcal{T}$ that contains $v$, let $l_T(v)$ and $R_T(v)$ be the label and the routing table of $v$, in the routing scheme of the tree $T$ from Theorem \ref{thm:TreeRoutingScheme}. In addition, let $l'(v)$ be the distance label of $v$ from Lemma \ref{lemma:DistanceLabelingFromTreeCover}, when applied to the tree cover $\mathcal{T}$.

We define a routing scheme for the graph $G$ as follows. For every $v\in V$, the label $l(v)$ consists of the label $l'(v)$ and all the labels $l_T(v)$, for every $T\in\mathcal{T}$ that contains $v$. The table $R(v)$ consists of all tables $R_T(v)$, again for all $T\in\mathcal{T}$ that contain $v$.

Given an origin $u\in V$ and a destination $v\in V$, assume that $u$ is given the label $l(v)$. Note that this label contains $l'(v)$. Also, the origin $u$ knows $l'(u)$, as it is a part of its own label $l(u)$. Using these two labels, the vertex $u$ finds the tree $T\in\mathcal{T}$ that satisfies $d_T(u,v)\leq\alpha\cdot d_G(u,v)$. Next, $u$ creates a header $h(u,v)$ that consists of the identity number of the tree $T$. Using the table $R_T(u)$, the vertex $u$ passes the message to the next vertex on the $u$-$v$ path in $T$. Each such vertex $u'$ proceeds to route the message, using the routing scheme of the tree $T$. Note that $u'$ has the identity of $T$ and the suitable table $R_T(u')$ (in $R(u')$).

Since the tree $T\in\mathcal{T}$ that was found at the first stage of the routing satisfies $d_T(u,v)\leq\alpha\cdot d_G(u,v)$, and since after that the routing is performed using the routing scheme from Theorem \ref{thm:TreeRoutingScheme} (which has stretch $1$) on the tree $T$, we conclude that the routing path has stretch $\alpha$. The header $h(u,v)$ consists of only one word (the identity of the tree $T$), and therefore the headers $h(u,v)$ are of size $O(1)$.

To analyze the size of the labels $\{l(v)\}_{v\in V}$, note that by Theorem \ref{thm:TreeRoutingScheme}, each label $l_T(v)$ is of size $O(\log n)$, for each of the trees $T\in\mathcal{T}$ that contain $v$. There are at most $t$ such trees, with an average of $s$ trees per vertex $v$. In addition, by Lemma \ref{lemma:DistanceLabelingFromTreeCover}, the size of the label $l'(v)$ is at most $O(t\log n)$, but $O(s\log n)$ on average. We get a total size of $O(t\log n)$ for the label $l(v)$, and $O(s\log n)$ on average.

As for the routing tables $\{R(u)\}_{u\in V}$, recall that the size of the table $R_T(u)$, for each of the $t$ trees ($s$ on average) $T\in\mathcal{T}$ that contain $u$, is $O(1)$, by Theorem \ref{thm:TreeRoutingScheme}. We conclude that the size of the table $R(u)$ is at most $O(t)$, but $O(s)$ on average.

%Lastly, the routing time of the origin $u$ is $O(t\log n)$ for finding the tree $T$ (by Lemma \ref{lemma:DistanceLabelingFromTreeCover}), plus $O(1)$ time for the routing on the tree $T$ (by Theorem \ref{thm:TreeRoutingScheme}). The average routing time for the origin is $O(s\log n)$. Then, every vertex $u'$ on the routing path already has the identity of $T$. Hence, its routing time is just $O(1)$.

\end{proof}

Applying Lemma \ref{lemma:RoutingSchemeFromTreeCover} on the tree cover from Theorem \ref{thm:SmallTreewidthSpanTreeCover}, and on Corollaries \ref{cor:TWN^delta} and \ref{cor:FlatTW}, we obtain the following result for vertex-separable graphs.

\begin{theorem} \label{thm:SmallTreewidthRouting}
Let $G=(V,E)$ be an $n$-vertex undirected weighted $s$-vertex-separable graph, for some non-decreasing function $s=s(\theta)$. Let $k\geq1$ be an integer parameter, and denote $S(n,k)=\sum_{i=0}^{\log n}s\left(\frac{n}{2^i}\right)^{\frac{1}{k}}$. There exists a routing scheme for $G$ with stretch $O(k\log\log s(n))$, with labels, tables and headers of sizes $O(k\cdot S(n,k)\log n)$, $O(k\cdot S(n,k))$ and $O(1)$, respectively. In particular,
\begin{enumerate}
    \item if $s=s(\theta)=\theta^\delta$, for some $\delta\in(0,1]$, the stretch is $O(k\log\log n)$, with labels and tables of sizes $O(\frac{k^2}{\delta}n^{\frac{\delta}{k}}\log n)$ and $O(\frac{k^2}{\delta}n^{\frac{\delta}{k}})$, respectively.
    \item if $G$ has treewidth $t=t(n)$, the stretch is $O(k\log\log n)$, with labels and tables of sizes $O(k\cdot t^{\frac{1}{k}}\cdot\log^2n)$ and $O(k\cdot t^{\frac{1}{k}}\cdot\log n)$, respectively.
\end{enumerate}
\end{theorem}

For $t(n)\leq2^{\frac{c\log n}{\log\log n}}$, for some sufficiently small constant $c>0$, the tradeoff between the stretch and the aggregate size of the routing table and label stored at each individual vertex is better in our scheme than in the state-of-the-art schemes \cite{TZ01-spaa,C13}. However, our label size is worse than the label size in \cite{TZ01-spaa,C13}.

For path-separable graphs, we apply Lemma \ref{lemma:RoutingSchemeFromTreeCover} on the spanning tree covers from Theorem \ref{thm:SpanTreeCoverForPathSeparators} and Theorem \ref{thm:LargeStretchTreeCoverForPathSeparators}.

\begin{theorem} \label{thm:PathSeparableRouting}
Let $G=(V,E)$ be a connected $n$-vertex undirected weighted graph, let $k\geq1$ be an integer parameter, and let $0<\epsilon\leq1$ be a real parameter.
\begin{enumerate}
    \item If $G$ is $p$-path-separable, then $G$ has a routing scheme with stretch $1+\epsilon$, labels of size $O(p\cdot\frac{\log^3n}{\epsilon})$, tables of size $O(p\cdot\frac{\log^2n}{\epsilon})$ and headers of size $O(1)$.
    \item If $G$ is $(\ell,\pi)$-path-separable, then $G$ has a routing scheme with stretch $O(k\log\log(\pi+4))$, labels of size 
    $O(k\cdot\ell\cdot\pi^{\frac{1}{k}}\cdot\log^4n)$, tables of size $O(k\cdot\ell\cdot\pi^{\frac{1}{k}}\cdot\log^3n)$ and headers of size $O(1)$.
    \item If $G$ is tree-like $(\ell,\pi)$-path-separable, then the sizes of the labels and tables improve to $O(k\cdot\ell\cdot\pi^{\frac{1}{k}}\cdot\log^3n)$ and $O(k\cdot\ell\cdot\pi^{\frac{1}{k}}\cdot\log^2n)$, respectively.
\end{enumerate}
\end{theorem}

Recall that $K_r$-minor-free graphs are (weakly) tree-like $(\ell,\pi)$-path-separable, for $\ell=O(r^2)$ and $\pi=O(r^{c_{AG}})$, where $c_{AG}=4602$, and that graphs with bounded genus $g$ are strongly tree-like $O(g)$-path-separable (see Items (2) and (3) in Theorem \ref{thm:PathSeparableFamilies}). Thus, the following corollary is derived by Theorem \ref{thm:PathSeparableRouting}.

\begin{corollary}
    Let $G=(V,E)$ be a connected $n$-vertex undirected weighted graph, let $k\geq1$ be an integer parameter, and let $0<\epsilon\leq1$ be a real parameter.
\begin{enumerate}
    \item If $G$ is $K_r$-minor-free, then $G$ has a routing scheme with stretch $1+\epsilon$, labels of size $O(r^{c_{AG}}\cdot\frac{\log^3n}{\epsilon})$, tables of size $O(r^{c_{AG}}\cdot\frac{\log^2n}{\epsilon})$ and headers of size $O(1)$.
    \item If $G$ has a bounded genus $g$, then $G$ has a routing scheme with stretch $1+\epsilon$, labels of size $O(g\cdot\frac{\log^3n}{\epsilon})$, tables of size $O(g\cdot\frac{\log^2n}{\epsilon})$ and headers of size $O(1)$.
    \item If $G$ is $K_r$-minor-free, then $G$ has a routing scheme with stretch $O(k\log\log r)$, labels of size 
    $O(kr^{\frac{1}{k}}\cdot\log^3n)$, tables of size $O(kr^{\frac{1}{k}}\cdot\log^2n)$ and headers of size $O(1)$.
    \item If $G$ has a bounded genus $g$, then $G$ has a routing scheme with stretch $O(k\log\log g)$, labels of size 
    $O(kg^{\frac{1}{k}}\cdot\log^3n)$, tables of size $O(kg^{\frac{1}{k}}\cdot\log^2n)$ and headers of size $O(1)$.
\end{enumerate}

\end{corollary}

\end{document}